\documentclass[letterpaper, 12pt]{article}

\usepackage[english]{babel}
\usepackage[utf8]{inputenc}
\usepackage[T1]{fontenc}

\usepackage[a4paper,top=3cm,bottom=2cm,left=3cm,right=3cm,marginparwidth=1.75cm]{geometry}

\usepackage{amsmath, amssymb, amsthm}
\usepackage[dvipsnames]{xcolor}
\usepackage{graphicx}
\usepackage{amsfonts}
\usepackage{optidef}
\usepackage{dsfont}
\usepackage{booktabs}
\usepackage{adjustbox}
\usepackage{subcaption}
\usepackage{tikz-cd}

\usepackage{natbib}
\newtheorem{thm}{Theorem}[section]
\newtheorem{lem}[thm]{Lemma}
\newtheorem{assumption}[thm]{Assumption}
\newtheorem{prop}[thm]{Proposition}

\newtheorem{cor}[thm]{Corollary}

\theoremstyle{definition}
\newtheorem{defn}[thm]{Definition}
\newtheorem{ex}[thm]{Example}
\newtheorem{remark}[thm]{Remark}

\numberwithin{equation}{section}

\newcommand{\mr}{\mathbb{R}}
\newcommand{\wh}{\widehat}
\newcommand{\lag}{\langle}
\newcommand{\rag}{\rangle}
\newcommand{\op}{o_p}
\newcommand{\Op}{O_p}
\newcommand{\one}{\mathds{1}}

\newcommand{\sub}{\subseteq}
\newcommand{\rootn}{\sqrt{n}}
\newcommand{\negrootn}{n^{-1/2}}

\newcommand{\convp}{\overset{p}{\to}}
\newcommand{\convd}{\Rightarrow}
\newcommand{\normal}{\mathcal{N}}
\newcommand{\mc}{\mathcal}

\newcommand{\indep}{\perp \!\!\! \perp}
\newcommand{\est}{\wh{\theta}}

\newcommand{\thetatrue}{\theta_0}

\newcommand{\hksd}{\sigma}

\newcommand{\Xn}{X_{1:n}}

\newcommand{\Di}{D_i}

\newcommand{\Dj}{D_j}

\newcommand{\en}{E_n}

\newcommand{\ei}{e_i}

\newcommand{\bi}{b_i}

\newcommand{\ai}{a_i}

\newcommand{\corrg}{\corr_G}

\newcommand{\dig}{D_{ig}}
\newcommand{\djg}{D_{jg}}

\newcommand{\aig}{a_{ig}}
\newcommand{\ajg}{a_{jg}}

\DeclareMathOperator*{\argmin}{argmin}
\DeclareMathOperator*{\argmax}{argmax}

\DeclareMathOperator{\kl}{KL}

\DeclareMathOperator{\supp}{Supp}

\DeclareMathOperator{\bern}{Bernoulli}
\DeclareMathOperator{\unif}{Unif}

\DeclareMathOperator{\cov}{Cov}
\DeclareMathOperator{\corr}{Corr}

\DeclareMathOperator{\var}{Var}

\DeclareMathOperator{\linearspan}{span}

\DeclareMathOperator{\image}{Image}

\DeclareMathOperator{\disp}{Disp}

\DeclareMathOperator{\releff}{Efficiency}

\usepackage{enumitem}
\usepackage{booktabs}
\usepackage[colorlinks=true, allcolors=blue]{hyperref}
\usepackage{setspace}
\usepackage[format=plain,font={small,stretch=1},labelfont=bf,textfont=up]{caption}

\newcommand{\figwfull}{\textwidth}       % 1x3 strips from the decided regens: fig:multivariate-uniform-draws, fig:rs_projections, fig:net_setup
\newcommand{\figwquad}{0.9\textwidth}    % 2x2 result grids: fig:frontier_efficiency, fig:net_efficiency
\newcommand{\figwproj}{0.85\textwidth}   % eigenspace 2x2: fig:lhs_projections
\newcommand{\figwstrip}{0.8\textwidth}   % sparse 1x3 strips: fig:intro-preview, fig:multivariate-net-lattice

\newcommand{\transportsym}{\Pi_k}
\newcommand{\objn}{Q_n}
\newcommand{\objest}{\wh Q_n}
\newcommand{\eps}{\epsilon}

\newcommand{\sfn}{s}

\newcommand{\snbar}{\bar s_n}

\newcommand{\yi}{Y_i}
\newcommand{\yj}{Y_j}

\newcommand{\done}{D_1}
\newcommand{\dtwo}{D_2}

\newcommand{\ef}{E_F}
\newcommand{\varf}{\var_F}
\newcommand{\covf}{\cov_F}

\newcommand{\suppf}{\mc D}
\newcommand{\suppfpre}{\mc D_{pre}}

\newcommand{\covg}{\cov_G}
\newcommand{\si}{\sfn_i}
\newcommand{\sj}{\sfn_j}
\newcommand{\sig}{\sfn_{ig}}
\newcommand{\sjg}{\sfn_{jg}}
\newcommand{\sitilde}{\tilde \sfn_i}
\newcommand{\vark}{\var_k}

\newcommand{\match}{\tau}
\newcommand{\matchvar}{\pi}

\newcommand{\gs}{G^c}

\newcommand{\sgbar}{\bar s_g}

\newcommand{\eg}{E_G}
\newcommand{\eh}{E_h}

\newcommand{\giid}{G_{iid}}

\newcommand{\dispg}{\disp_G}
\newcommand{\thetan}{\theta_n}
\newcommand{\thetablp}{\theta_\text{BLP}}

\newcommand{\inv}{^{-1}}

\newcommand{\matchcoeffk}{Q_k}

\newcommand{\pii}{\pi_i}

\newcommand{\opone}{\op(1)}

\newcommand{\ri}{r_i}
\newcommand{\rig}{r_{ig}}
\newcommand{\rjg}{r_{jg}}

\newcommand{\evalm}{\lambda_m}
\newcommand{\eval}{\lambda}
\newcommand{\lf}{\lag}
\newcommand{\rf}{\rag_F}
\newcommand{\ltwonot}{L_0^2(F)}

\newcommand{\rim}{\ri^m}
\newcommand{\rigm}{\rig^m}
\newcommand{\rjgm}{\rjg^m}
\newcommand{\grs}{G_{R}}
\newcommand{\hm}{h_m}
\newcommand{\viid}{v_{iid}}
\newcommand{\vmatch}{v_\Delta}

\newcommand{\vg}{v_{g}}

\newcommand{\betan}{\beta_n}
\newcommand{\sih}{\si^H}

\newcommand{\sh}{s^H}

\newcommand{\varp}{\var_P}
\newcommand{\ep}{E_P}

\newcommand{\ci}{c_i}

\newcommand{\weightm}{w_m}
\newcommand{\mui}{\mu_i}
\newcommand{\muig}{\mu_{ig}}
\newcommand{\mujg}{\mu_{jg}}

\newcommand{\varn}{\var_n}
\newcommand{\covn}{\cov_n}

\newcommand{\elltwo}{L^2}
\newcommand{\ehist}{E_{hist}}
\newcommand{\phist}{P_{hist}}
\newcommand{\whist}{w_{hist}}

\newcommand{\ecomp}{E_c}
\newcommand{\eeven}{E_{even}}
\newcommand{\eodd}{E_{odd}}
\newcommand{\podd}{P_{odd}}
\newcommand{\peven}{P_{even}}
\newcommand{\ecyclic}{E_{c}}
\newcommand{\eacyclic}{E_{a}}
\newcommand{\pcyclic}{P_{c}}
\newcommand{\pacyclic}{P_{a}}
\newcommand{\ynbar}{\bar Y_n}

\newcommand{\tstar}{T^*}
\DeclareMathOperator{\TV}{TV}
\newcommand{\etatv}{\eta_{\TV}}
\newcommand{\maxraten}{R_n}
\newcommand{\infdispg}{\mathrm{ID}_G}
\newcommand{\supdispg}{\mathrm{SD}_G}
\newcommand{\smom}{\mc S_{\mathrm{m}}}
\newcommand{\sreg}{\mc S_{\mathrm{r}}}

\newcommand{\pg}{{\Pr}_G}
\newcommand{\imb}{\mc I}

\date{\today}
\title{Coupling Designs for Randomized Experiments with Complex Treatments\thanks{\linespread{1}\selectfont We are grateful to seminar and conference participants at Boston College, Emory, EuroCIM, Columbia, Harvard, Northwestern, NYU, University of Pittsburgh, Princeton, and Stanford for comments that greatly improved the paper.}}
\author{%
Max Cytrynbaum\thanks{Department of Economics, Yale University.} \and
Fredrik S{\"a}vje\thanks{Department of Economics, Uppsala University.}%
}

\usepackage{etoolbox}
\AtBeginEnvironment{abstract}{\singlespacing}
\AtBeginEnvironment{thebibliography}{\singlespacing}

\begin{document}
\begingroup
\renewcommand{\thefootnote}{\fnsymbol{footnote}}
\maketitle
\endgroup
\setcounter{footnote}{0}

\begin{abstract}
We describe a new family of experimental designs that extends the principle of stratified randomization to settings with continuous, constrained multivariate, and other irregular treatment spaces.
Our approach is to first match units into homogeneous groups, then use Monte Carlo couplings to assign within-group treatments to be highly dispersed over the treatment space.
We show that ensuring similar units receive dissimilar treatments improves estimation efficiency.
The efficiency gains are proportional to the product of dispersion and match quality, where dispersion measures how spread out the assignments are relative to independent randomization.
We develop a new spectral analysis showing how efficiency depends on alignment between the smoothness and shape of the estimator's influence function and the coupling's principal directions.
We illustrate these designs with examples from development, behavioral, and labor economics.  
In particular, our empirical application uses data from a real experiment allocating savings monitors using their position within village social networks.
\end{abstract}

\clearpage

\section{Introduction}\label{section:introduction}

Consider assigning treatments $\Di$ from a distribution $F$ to units $i \in [n]$ in a randomized experiment.
It is well known that for simple discrete distributions $F$, experimenters can improve estimation efficiency by stratifying on covariates in the experimental design.
For example, if $\Di \in \{0, 1\}$ and $F = \bern(1/2)$, a matched pairs design would match units with similar baseline covariates $X_i \approx X_j$ into pairs, then set $\Di=1$ and $\Dj=0$ or vice-versa, with equal probability.
Doing so balances covariates between treatment and control groups, improving precision for standard treatment-effect estimators \citep{bruhn2009,bai2023efficiency}.
Similarly, if there are $k$ treatments and $F = \unif([k])$, one can improve efficiency by matched $k$-tuples randomization: match units into groups of $k$ with similar baseline covariates, then randomly assign one unit to each treatment \citep{cochran1957experimental}.

Moving beyond these well understood situations, consider a researcher interested in the effect of cash grants $d \in [0, u]$ on future household consumption $\yi(d)$.
They want to estimate features of the average dose-response $d \mapsto n\inv \sum_i \yi(d)$ for grant amounts $d \in [0, u]$.
Identification of the full curve requires continuous randomization of the treatment, e.g.\ with $F = \unif[0, u]$.
However, this choice of distribution makes conventional stratification impossible, since there are infinitely many treatment levels.

One simple design for this treatment space is just to assign treatments independently, but this can be highly inefficient.
Alternatively, one could first discretize the treatment, say into $k=20$ treatment levels, then apply matched $k$-tuples randomization.
However, such discretization generally requires changing the causal estimand, since the dose-response is non-identified at intermediate points.
Even leaving such identification issues aside, match quality deteriorates rapidly as $k$ increases, reducing the efficiency gains from stratified randomization.
Indeed, in moderately sized experiments, it can even be challenging to find well-matched pairs of $k=2$ units for relatively low-dimensional covariates, and will be much more difficult to find well-matched groups of $k=20$ units.

The challenge of improving efficiency in experiments with complex treatments is not limited to continuous treatments.
Experiments with irregular, multivariate, or constrained treatments arise in a variety of applications.
Examples include cash transfer experiments in development \citep{haushofer2016}, tax and information experiments in behavioral economics \citep{reesjones2019,masatlioglu2023}, product experiments on digital platforms \citep{sahni2020}, and correspondence studies of discrimination \citep{evsyukova2025,bertrand2004emily}.
In these settings the treatment is often too rich for conventional stratification to randomize efficiently.
We develop a new family of \emph{coupling designs} to balance covariates in such applications, without forced discretization.
We make three main contributions in this paper:
\begin{enumerate}
	\item In Sections~\ref{section:overview} and \ref{section:coupling-designs}, we introduce a new family of coupling designs that enables efficient randomization in experiments with complex treatments, illustrating our method with a variety of applications.
	The key idea is to first match units into homogeneous groups, then use coupling techniques to assign treatments within these groups to be highly dispersed over the treatment space.
    % (a coupling is a joint distribution for the treatments $(\Di)_{i=1}^k$ within a group of $k$ units, with each marginal fixed at $F$).
	We construct such couplings for treatment spaces $\suppf \sub \mr^m$ by combining techniques from the Monte Carlo integration literature with tools from optimal transport.

	\item In Section~\ref{section:dispersion-match-quality}, we introduce the concepts of dispersion and match quality, showing that the efficiency gain from a coupling design is proportional to the product of these two key forces.
    %, exactly in a benchmark parametric model and eigenspace by eigenspace in general.
    Section~\ref{section:efficiency-analysis} contains our main theoretical results. 
    We develop a novel spectral analysis that shows how efficiency depends on both the smoothness and shape of the estimator's influence function as well as the principal directions of the implemented coupling.
	Section~\ref{section:coupling-analysis} applies this analysis to compare various choices of coupling designs.

	\item In Section~\ref{section:asymptotics}, we develop asymptotic theory for the family of coupling designs.
	We show asymptotic normality of a large family of estimators under coupling design randomization, and develop conservative variance estimators enabling asymptotically valid inference on general causal parameters.
\end{enumerate}

We illustrate how the designs work in practice with an application in Section~\ref{section:illustrative-applications}.
The application is grounded in an experiment in which social monitors are assigned to savers based on their position in village social networks \citep{breza2019}.

\subsection{Related Literature}

Experimental design for causal inference problems goes back to at least \citet{fisher1926}, and has been an active area of research in statistics and econometrics for decades \citep{athey2017survey,bai2025primer}.
%Recent surveys include \citet{athey2017survey} and \citet{bai2025primer}.
Several approaches to improving efficiency in randomized experiments have been developed in this literature, including rerandomization \citep{morgan2012,li2018rerandomization}, pure optimization \citep{kasy2016,kallus2017balance}, as well as approaches based on discrepancy minimization \citep{harshaw2024}.
However, these approaches have been developed for discrete, typically binary, treatments and do not directly address complex treatment spaces.

Complex treatments have been considered in superpopulation settings when the treatment assignment mechanism is fixed and externally given.
\citet{hirano2004continuous} and \citet{imai2004general} generalize the propensity score to continuous treatment regimes, and \citet{kennedy2017continuous} and \citet{colangelo2025dml} develop doubly-robust estimators of dose-response curves.
%\citet{egami2022text} provide a framework for causal inference with text-based treatments through low-dimensional representations.
In recent work, \citet{kramer2026highdim} study causal inference with high-dimensional treatments, deriving estimators and minimax rates for mean potential outcomes.
%Related analyses of high-dimensional and multivariate treatments include \citet{goplerud2025heterogeneous} and \citet{hsiao2026mixtures}.
However, none of these papers consider experimental design in such settings.

There is a large literature on improving efficiency by stratified randomization for discrete treatments, which is closely related to our work.
For example, matched pairs designs assign opposite treatments to pairs of similar units \citep{fisher1935,Greevy2004Optimal,bruhn2009}. 
The efficiency properties of matched pairs have been studied by \citet{imai2008}, \citet{fogarty2018}, \citet{bai2020pairs}, \citet{bai2021inference}, \citet{pashley2021}, among others.
Generalizations to matched $k$-tuples and other stratified designs have been considered by several authors \citep{cochran1957experimental,higgins2016blocking,bugni2018inference,bugni2019,cytrynbaum2023,bai2023tuples,bai2023efficiency}.
\citet{koo2026incomplete} provide a modern, potential outcomes based analysis of incomplete block designs \citep{yates1936incomplete}, which is stratified randomization when the number of distinct (discrete) treatments is larger than the stratum size.
%Similarly, in our setting it will either be highly inefficient or outright impossible to implement every distinct treatment within each matched group.  

The coupling designs we introduce in this paper extend the basic mechanism of stratified randomization to complex treatment spaces by replacing treatment permutations within strata with general negatively dependent couplings that disperse treatment assignments across the treatment space $\suppf \sub \mr^m$.
To generate highly dispersed treatments, we combine coupling techniques from the Monte Carlo integration literature \citep{robert2004monte,Owen2013} with maps from optimal transport theory \citep{brenier1991polar,merigot2011multiscale,carlier2010knothe} that approximately preserve geometry, in the sense that nearby random inputs map to nearby treatments in the target space $\suppf$.
In addition to enabling covariate-balancing randomization in a much broader class of experiments, coupling designs also provide new insights into conventional stratified designs, and can even improve on such designs in classical settings by more efficiently trading off between the key forces of dispersion and match quality.

\section{Overview and Illustrative Applications}\label{section:overview}

\subsection{Generalizing Matched Pairs}\label{paragraph:antithetic-variates}

Coupling designs extend the basic mechanism of stratification to allow efficient randomization from any distribution $F$ within tightly matched groups of units.
To illustrate this idea, consider extending matched pairs designs to allow randomization of continuous, univariate treatments $\Di \in \mr$.
The conventional matched pairs design can be understood as drawing $(\Di, \Dj) \sim G$ from a coupling $G$ with fixed marginals $G_i = G_j = \bern(1/2)$ and $\Di = 1-\Dj$, which achieves maximal negative correlation: $\corrg(\Di, \Dj) = -1$.
For more general distributions $F$, this perspective suggests first matching pairs of similar units $i$ and $j$, then drawing $(\Di, \Dj) \sim G$ from a coupling with fixed marginals $G_i = G_j = F$ and strong pairwise negative correlation: $\corrg(\Di, \Dj) \ll 0$.
One way to construct such a coupling for any $F$ is via a classic idea from Monte Carlo integration theory known as \emph{antithetic variates} sampling \citep{hammersley1956antithetic}.

\paragraph{Antithetic Matched Pairs.}

Antithetic variates sampling generates assignments $\Di^* = F\inv(U)$ and $\Dj^* = F\inv(1-U)$ using a common uniform variate $U \sim \unif[0, 1]$.
Since both $U$ and $1-U$ are uniform, the quantile transform produces $\Di^*, \Dj^* \sim F$ marginally.
By drawing opposing quantiles $U$ and $1-U$, the coupling $(\Di^*, \Dj^*) \sim G$ induces strong negative correlation between the treatments.
For example, the results of \citet{hoeffding1940masstab} show that for any monotone function $y(\cdot)$ of the treatment, antithetic variates achieves minimal correlation:
\begin{equation} \label{eqn:antithetic-correlation}
\corr(y(\Di^*), y(\Dj^*)) = \min_{G_i=G_j=F} \corrg(y(\Di), y(\Dj)).
\end{equation}

We can construct an \emph{antithetic matched pairs} design by assigning treatments $(\Di^*, \Dj^*) \sim G$ within tightly matched pairs, inducing strong negative correlation while also implementing the chosen marginal $\Di^* \sim F$ for each unit $i \in [n]$. 
When $F = \bern(1/2)$, the design yields $\Di^* = 1 - \Dj^*$, exactly recovering the conventional matched pairs design. 
However, this construction can be used more generally for any univariate distribution $F$, for example $F = \unif[0, u]$.

Antithetic variates were developed to improve efficiency in Monte Carlo integration problems, such as estimating $\thetatrue = \ef[y(D)]$ using $\est = n\inv \sum_i y(\Di)$.
In contrast to classic Monte Carlo integration, in causal inference units generally have heterogeneous responses to the treatment with $\yi(\cdot) \neq \yj(\cdot)$.
We can use matching to enforce approximate homogeneity at the group level.
After successful matching, $\yi(\cdot) \approx \yj(\cdot)$ within matched pairs, allowing the design to leverage the efficiency improvements from antithetic variates as if the responses were homogeneous.

Simplifying for illustration, consider an estimator $\est = (1/2)(\yi(\Di) + \yj(\Dj))$ of the average outcome $\thetatrue = (1/2)(\ef[\yi(D)] + \ef[\yj(D)])$.
If units are perfectly matched $\yi(\cdot) = \yj(\cdot) = y(\cdot)$, the variance relative to independent assignment is
\begin{equation}\label{eqn:relative-efficiency-toy}
\frac{\var_G(\est)}{\var_{\giid}(\est)} - 1 = \corr_G(\yi(\Di), \yj(\Dj)) = \corr_G(y(\Di), y(\Dj)).
\end{equation}
If $y(\cdot)$ is monotone, then by Equation \eqref{eqn:antithetic-correlation} this antithetic pairs design can significantly improve efficiency.
Indeed, under perfect matching and monotone $y(\cdot)$, it minimizes variance of the estimator $\est$ among all pair-wise couplings $G$.
Note that this example is only for illustration, and in what follows we do not impose monotonicity of the outcome functions $\yi(\cdot)$.

\subsection{Coupling Designs}

The simple example above illustrates how matched pairs randomization can be extended to randomize efficiently from any univariate distribution $F$ when the response function is monotone.
In what follows, we generalize this construction to provide coupling-based randomization methods for general treatment spaces $\suppf \sub \mr^m$.
We show that these methods can improve efficiency under weak smoothness conditions on the potential outcomes $\yi(\cdot)$, given successful matching.

We construct general coupling designs by first matching the experimental units into homogeneous groups of size $k \ge 2$ using covariates.
Next, treatments are drawn within each group of $k$ units from a coupling $(\Di)_{i=1}^k \sim G$ with marginals $G_i = F$ for all $i \in [k]$, for a fixed distribution $F$ over the treatment space $\suppf = \supp(F)$.
We can view this as a matched $k$-tuples design with coupling-based randomization within each group.
The family of coupling designs can accommodate very general marginal distributions $F$ and spaces $\suppf$.
When $F$ is discrete, we also recover conventional stratified randomization for appropriate choices of $k$ and $G$.
There are many possible couplings that can be used to randomize within groups.
We discuss several examples and provide a general coupling construction in Section~\ref{section:coupling-designs} below.

\textbf{Dispersed Treatments.}
As suggested by Equation \eqref{eqn:relative-efficiency-toy}, we would like to produce negatively correlated treatments $(\Di)_{i=1}^k$ within matched groups of $k$.
For multivariate $D \in \mr^m$, this can be achieved by sampling $(\Di)_{i=1}^k$ to be highly dispersed or ``spread out'' over the treatment space $\suppf$.
We show that making treatments dispersed in this way implies that $\corrg(\phi(\Di), \phi(\Dj)) < 0$ for $i \neq j$ and sufficiently smooth functions $\phi: \suppf \to \mr$.

When tuple size $k$ is large, it is possible to achieve higher dispersion, since this allows us to coordinate randomization of $(\Di)_{i=1}^k$ to cover more of the treatment space.
However, as tuple size $k$ increases, match quality becomes worse and the response functions $\yi(\cdot)$ and $\yj(\cdot)$ of matched units $i$ and $j$ are less similar on average.
Using a formalization of these concepts introduced in Section~\ref{section:dispersion-match-quality}, we show that the efficiency gain from a coupling design relative to independent assignment is
\begin{equation*}
\text{Efficiency Gain} = \text{Dispersion} \times \text{Match Quality}.
\end{equation*}

Intuitively, by assigning matched $k$-tuples of similar units to highly dissimilar treatments $(\Di)_{i=1}^k$, we prevent spurious in-sample correlations from arising between the treatment assignments and unit-specific heterogeneity.
We demonstrate that this generalizes classical notions of covariate balance for binary treatments to much more complex treatment spaces and distributions.

To show our main result in full generality, Section~\ref{section:efficiency-analysis} defines a \emph{coupling operator} $U_G$ acting on the space of influence functions $\si(\cdot)$ in $\elltwo(F)$, whose eigenspaces can be viewed as the principal directions of the coupling $G$ with respect to random sampling.
The dispersion $\dispg(\phi)$ of any $\phi \in \elltwo(F)$ decomposes orthogonally over these eigenspaces, so the overall efficiency gain depends on how well the estimator's influence functions $\si(\cdot)$, defined in Section~\ref{subsection:causal-estimands}, align with the high-dispersion directions of $G$.
This yields a general decomposition of the variance reduction from coupling designs into a weighted sum of eigenspace-specific dispersion $\times$ match quality terms, where the weights reflect the approximation quality of $\si(\cdot)$ on each eigenspace.

%For an experiment of size $n$, if $R > n$ then conventional stratified randomization is infeasible. 
%Even if $R \le n$, stratified randomization may not meaningfully improve precision due to poor match quality, since it is difficult to find $R$ almost identical users to match together.  

\subsection{Illustrative Applications}\label{ex:irregular-uber-eats}

To make the ideas of the paper concrete, we describe several applications of coupling designs to experiments with complex treatments, where conventional stratification would either be impossible or have very poor performance.

In the leading application, treatments are drawn from a discrete treatment catalog $\suppf = \{d_1, \dots, d_R\}$, where each $d_r \in \mr^m$ is a vector of features describing the treatment itself.
The features are aspects of the treatments related to their hypothesized effects, while baseline covariates $X_i$ describing the units are used separately for matching.
This structure arises in several settings:
\begin{itemize}[itemsep=2pt, topsep=2pt]
\item \textbf{Product promotions \citep{sahni2020}.}
A restaurant-search platform shows restaurant promotions to its users and wishes to learn which type of restaurants its users would respond to.
Featurizing each restaurant by attributes such as cuisine type, average price, and rating gives a treatment catalog $\suppf \subset \mr^m$, an irregular point cloud.
The treatment $\Di \in \suppf$ is the restaurant promoted to user $i$, and the outcome $\yi(d) \in \{0, 1\}$ records whether the user contacts the restaurant when shown $d$.
The platform approximates the average choice surface $\ynbar(d) = n\inv \sum_{i=1}^n \yi(d)$ with a Logit model $\ynbar(d) \approx L(\beta'd)$, where $L(x) = 1/(1 + e^{-x})$.

\item \textbf{Targeted cancer therapies.}
Cancer patients are treated with one of many targeted or biologic agents.
Featurizing each agent by molecular features such as its target-binding profile gives a catalog of therapies, with the outcome $\yi(d)$ the tumor's response to the assigned agent $d$.
For example, the NCI-MATCH trial assigned patients across dozens of agents based on their tumors' genomic alterations \citep{flaherty2020}, and the I-SPY 2 platform trial randomized patients among more than 20 candidate agents within tumor-subtype groups \citep{barker2009}.
\end{itemize}

In each case, the experimenter wants to study features of the average response surface $d \mapsto \ynbar(d)$ over the catalog, possibly approximating it over a parametric class such as linear or logit.
Coupling designs improve precision for such parameters by drawing the treatments $(\Di)_{i=1}^k$ within each matched $k$-tuple from a coupling $G$ that is highly dispersed over the catalog (Figure~\ref{fig:intro-preview}).
Section~\ref{subsection:design-construction} details our construction of such couplings, which combines tools from Monte Carlo integration theory and optimal transport (OT).

\begin{figure}[ht]
    \centering
    \includegraphics[width=\figwstrip]{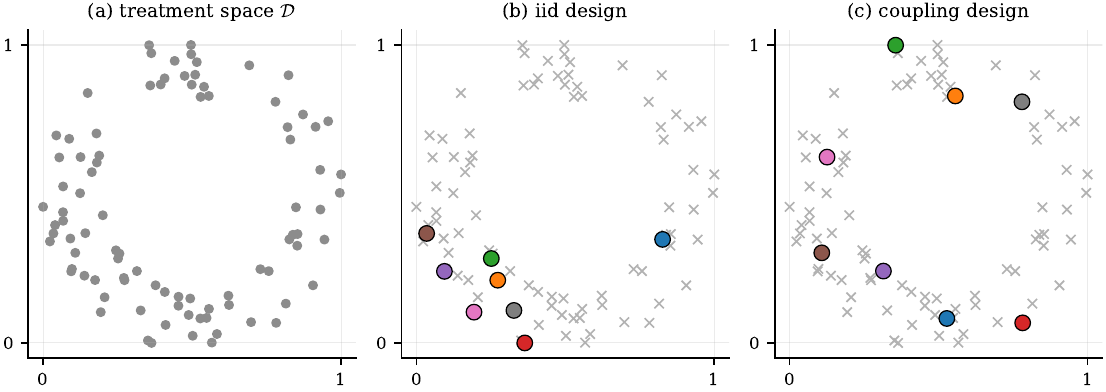}
    \caption{A coupling design over a discrete treatment space.
    (a) A catalog of possible treatments as a point cloud in feature space.
    (b) Under iid randomization, the treatments assigned to one matched $k$-tuple (colored circles, $k = 8$) can cluster on near-duplicate treatments, while under a coupling design (c) the same units' treatments are highly dispersed.}
    \label{fig:intro-preview}
\end{figure}

By contrast, classical stratified randomization would assign each of the $R$ distinct treatments without replacement within groups of $k = R$ units.
This is impossible if the experiment size is $n < R$. More generally, if $R$ is large, stratification will behave similarly to iid randomization, due to poor match quality within groups of $k \ge R$ units.
Coupling designs instead draw treatments within tightly matched $k$-tuples for any $k \ge 2$, producing dispersed treatments subject to a fixed constraint on match quality.
Similar challenges, where the treatment space is too complex for stratification, arise in many other experiments:

\begin{ex}[Continuous Treatments] \label{ex:continuous-dose}
Consider a univariate continuous dose $\Di \in [0, u]$ with marginal $F = \unif[0, u]$, where the researcher wants to estimate the average dose-response $\ynbar(d)$, or its best linear approximation, over the interval.
For example, \citet{haushofer2016} study the effect of unconditional cash transfers on the consumption and well-being of poor households in Kenya, and \citet{kreindler2024} studies how commuters in Bangalore respond to per-kilometer road congestion charges.
In both experiments, the inherently continuous treatment was implemented at only a handful of levels, so the dose-response is identified at just those points.
A coupling design can instead efficiently randomize the dose continuously from $F$ over the whole interval, identifying the full dose-response curve.
\end{ex}

\begin{ex}[Factorial and Constrained Treatments] \label{ex:labor-factorial}\label{ex:development-constrained}
Many experiments cross-randomize several treatment factors, which may be either continuous or discrete, producing a multivariate factorial treatment $\Di = (D_{i1}, \dots, D_{im})$ over a product space $\suppf = \times_{j=1}^m \suppf_j$, with a product distribution such as $F = \unif(\suppf)$.
For example, \citet{banerjee2025} study immunization-nudges by cross-randomizing monetary incentives, reminder messages, and network ``ambassadors'' into $75$ distinct treatment bundles, and \citet{bertrand2010} cross-randomize interest rates and creative-content features of consumer-credit offers.
In other experiments, logistical, fairness, budgetary, or other considerations restrict which treatments can be implemented, leading to a constrained treatment space $\suppf = \{d \in \suppfpre : C(d) \leq B\}$, where $\suppfpre$ may be a product set.
For example, \citet{ashraf2010} randomize a household's offer price $p$ and discounted transaction price $\tilde p \le p$ for a health product, so the treatments lie in a triangle, \citet{reesjones2019} randomize a schedule's average and marginal tax rates over a product grid with the strongly progressive schedules removed, and \citet{masatlioglu2023} assign information structures $\Di = (p, q)$ lying on a one-dimensional contour of constant informativeness in $[0, 1]^2$.
In each case the number of distinct treatments is large or infinite, and the space $\suppf$ itself may be irregular.
\end{ex}

\begin{ex}[Text and Image Treatments] \label{ex:text-images}
\citet{evsyukova2025} run a field experiment on LinkedIn in which fictitious profiles send connection requests, signaling race through AI-generated profile photographs while holding the rest of the profile fixed.
We can view the treatment as $\Di = (C_i, Z_i)$, where $C_i \in \{0, 1\}$ is the race cue and $Z_i$ collects high-dimensional, featurized content such as text and listed skills.
Letting $\Di \sim F$ with $C_i \indep Z_i$, regressing $\yi \sim 1 + C_i$ estimates the average effect of the race cue, generalizing the classic resume correspondence studies of \citet{bertrand2004emily} and \citet{kline2022}.
Here a coupling design disperses both the race cue and the content features $Z_i$ within matched $k$-tuples of similar recipients.
\end{ex}

\textbf{Exploiting Geometry and Smoothness.}
In each application, the treatment space is embedded in $\mr^m$, directly or through the featurization, and carries a geometry, with treatments $d_r$ and $d_l$ similar when $|d_r - d_l|_2$ is small.
It is often reasonable to assume that the outcome functions have some form of smoothness in this geometry, with units responding similarly to nearby treatments: $\yi(d_r) \approx \yi(d_l)$ when $|d_r - d_l|_2 \approx 0$.
For instance, a user decides similarly whether to contact two promoted restaurants with nearly identical cuisine, price, and rating.
Because of this, assigning units with similar response functions $\yi(\cdot) \approx \yj(\cdot)$ to nearby treatments effectively wastes a sample: we learn the same thing from observing $\yi(d_r)$ and $\yj(d_l)$.
This geometry is what a coupling design exploits.
By assigning matched groups of similar units to highly dissimilar treatments, coupling designs make each observation informative about a distinct part of the response surface.
Sections~\ref{section:efficiency-analysis} and \ref{section:coupling-analysis} connect the efficiency gain from coupling designs to the smoothness of the responses $\yi(\cdot)$, using a general notion of smoothness based on total variation that accommodates discrete treatments.

\section{Coupling Designs}\label{section:coupling-designs}

\subsection{Causal Estimands and Estimators} \label{subsection:causal-estimands}

We consider estimators $\est = n\inv \sum_i \si(\Di)$, where the functions $\si(\cdot)$ are non-random in a design-based framework.
For example, we could have $\si(d) = s(\yi(d), d, X_i)$ for a fixed function $s(\cdot)$ of the data.
The corresponding finite-population estimand is
\begin{equation} \label{eqn:finite-estimand}
\thetan \equiv n \inv \sum_i \ef[\si(D)].
\end{equation}
We can view $\thetan$ as a ``fully heterogeneous'' version of the classic Monte Carlo estimand $\thetatrue = \ef[s(D)]$. 
In fact, these types of estimators and estimands are ubiquitous in causal inference problems \citep{harshaw2025general}.
Thus, experimental design in causal inference can be viewed as a Monte Carlo integration problem with fully heterogeneous test functions $\si(\cdot)$ for $i = 1, \dots, n$.
For brevity, in what follows we will denote $\en[\ai] \equiv n\inv \sum_i \ai$ for any array $(\ai)_{i=1}^n$.

\begin{ex}[Dose-Response BLP] \label{ex:blp}
Define the average dose-response function $\ynbar(d) = \en[\yi(d)]$ for $d \in \mr^m$.
The best linear approximation (BLP) coefficient is
\begin{equation} \label{eqn:average-dose-response-blp}
\thetablp = \argmin_{\theta} \min_{\alpha} \ef[(\ynbar(D) - \alpha - \theta'D)^2].
\end{equation}
For binary treatments, this is just the sample average treatment effect (SATE): $\thetablp = \en[\yi(1) - \yi(0)]$.
More generally, see \citet{yitzhaki1996} for an interpretation of $\thetablp$ as a weighted average of the marginal effects $\partial \ynbar(d) / \partial d$ for $d \in \mr$.
For estimation, define the weights $H(d) = \varf(D)\inv(d - \ef[D])$ and form an estimator $\est = \en[\yi(\Di)H(\Di)]$, which is of the form above for $\si(d) = \yi(d)H(d)$.
This can be understood as a generalization of the classic Horvitz-Thompson estimator. 
See \cite{harshaw2023bipartite} for a related construction in the context of bipartite experiments.
A simple calculation shows that this estimator recovers the BLP coefficient:
\begin{equation*}
n\inv \sum_i \ef[\si(D)] = \varf(D)\inv \, n\inv \sum_i \covf(\yi(D), D) = \thetablp.
\end{equation*}
\end{ex}

\paragraph{Influence Functions.}

A variety of estimators admit the design-based asymptotic linearization $\wh \beta - \betan = \en[\si(\Di)] + \op(n^{-1/2})$.
Because of this, our efficiency analysis of the quantity $\en[\si(\Di)]$ under coupling designs also characterizes the first-order efficiency of significantly more general parametric estimators.
For example, let $\wh \beta$ be the coefficient from the OLS regression $\yi \sim 1 + \Di$.
Then, under weak conditions on the design,
\begin{equation} \label{eqn:ols-influence-function}
\wh \beta - \thetablp = \en[\si(\Di)] + \Op(n\inv).
\end{equation}
We have $\si(d) = e_i(d)H(d)$ for residual $e_i(d) = \yi(d) - \ef[\ynbar(D)] - \thetablp'(d - \ef[D])$.
See Appendix~\ref{subsection:estimator-asymptotics} for more details.
Slightly abusing terminology, in what follows we refer to $\si$ as unit $i$'s \emph{influence function}.

\begin{ex}[Discrete Choice] \label{ex:logit}
Recall the binary choice setting in Section~\ref{ex:irregular-uber-eats}.
The dose-response function $\ynbar(d) \equiv \en[\yi(d)]$ for potential outcomes $\yi(d) \in \{0, 1\}$ is the proportion of units in the finite population that choose $\yi(d) = 1$ when given treatment $d \in \mr^m$.
Suppose that we approximate $\ynbar(d)$ using the logit model $L(\beta'd)$ by maximum likelihood, with $L(x) = 1/(1 + e^{-x})$.
Let $\wh \beta$ be the MLE estimator of the coefficients of the logit model.
We show $\wh \beta \convp \betan$, where $\betan$ is the best logistic approximation to the dose-response $\ynbar(\cdot)$.
This approximation minimizes the expected KL divergence between a Bernoulli outcome with mean $\ynbar(D)$ and one with mean $L(\beta'D)$.
In particular, letting $\kl(p \,\|\, q)$ denote the KL divergence between two Bernoulli distributions with means $p$ and $q$, we show
\begin{equation} \label{eqn:logit-kl-projection}
\betan = \argmin_{\beta} \; \ef\big[\kl\big(\ynbar(D) \; \|\; L(\beta'D)\big)\big].
\end{equation}

Define Jacobian $J_n = \ef[L(\betan'D)(1-L(\betan'D))DD']$ and prediction residual $e_i(d) = \yi(d) - L(\betan'd)$.
We show in Appendix~\ref{subsection:estimator-asymptotics} that $\wh \beta - \betan = \en[\si(\Di)] + \Op(n\inv)$ for influence functions $\si(d) =  e_i(d) \cdot J_n\inv d$.
Thus, our efficiency theory for simple estimators $\est = \en[\si(\Di)]$ also characterizes the first-order efficiency of the logit MLE $\wh \beta$ and other parametric M-estimators, for example.
\end{ex}

Examples \ref{ex:blp} and \ref{ex:logit} can be extended to accommodate regressions $\yi \sim 1 + t(\Di)$ or logit with a set of basis functions $t(d)$.
For example, if $D \in \mr^2$, we could choose $t(d) = (d_1, d_2, d_1 d_2)$ for the regression $\yi \sim 1 + D_{i1} + D_{i2} + D_{i1}D_{i2}$, as in Example~\ref{ex:labor-factorial} above.
Similar to the previous results, this provides the best approximation of $\ynbar(d)$ among all linear functions with both main effects and two-way interactions.

%\begin{remark}[Marginal Distribution] \label{rem:marginal-distribution}
%We treat the marginal distribution $F$ as given.
%For motivation, observe in Example \ref{ex:blp} that changing $F$ changes the estimand $\thetan = \thetan(F)$ naturally recovered by OLS.
%Alternatively, one could view $F$ as a free parameter to be optimized.
%To do so, one would fix a target estimand $\thetan(Q)$, then use importance sampling $\est = \en[\si(\Di)w(\Di)]$ with $w(d) = (dQ/dF)(d)$ to recover $\thetan(Q)$ for any suitable choice of $F$.
%This generalizes the classic Neyman allocation, which optimizes treatment probability $p = P(D=1)$ to improve estimation efficiency.
%In this paper, we view any such optimization of the marginal $F$, if desired, as already fixed and instead focus on providing efficient randomization procedures to implement this choice of $F$.
%See Appendix (cite) for further discussion.
%\end{remark}

\subsection{Design Construction}\label{subsection:design-construction}

Consider a target distribution $D \sim F$ over the treatment space $\suppf \sub \mr^m$. A coupling design can be implemented using the following three steps:

\begin{enumerate}[label=(\arabic*), itemsep=0.1pt]
	\item \textbf{Match.}
Match units into homogeneous groups $g$ of size $|g| = k \ge 2$ using covariates.
	We refer to the groups as matched $k$-tuples, assuming $k$ divides $n$ for simplicity.
	The matching is a bijection $\match: [n] \to [k] \times [n/k]$ that assigns each unit $i$ to $\match(i) = (j,g)$, a unique position $j \in [k]$ in group $g \in [n/k]$.
   \item \textbf{Disperse.}
Independently for each group $g$, draw samples $(U_{ig})_{i=1}^k \sim G_U$ from an exchangeable coupling $G_U$ with each marginal $U_{ig} \sim \unif[0, 1]^m$, such that the collection $(U_{ig})_{i=1}^k$ is highly dispersed over the unit cube $[0, 1]^m$.
   \item \textbf{Transport.}
Set treatments $D_{ig} = T(U_{ig})$ for all units $i \in [k]$ in group $g$, where $T: [0, 1]^m \to \suppf$ is a measurable, geometry-preserving transport map such that $T(U) \sim F$ for $U \sim \unif[0, 1]^m$.
\end{enumerate}

The first matching step is common to all stratified designs, while the second and third steps are unique to the coupling designs we introduce in this paper.
One strength of our design-based theory is to remain largely agnostic about the specific details of the matching algorithm.
The efficiency gain from coupling designs will be larger to the extent that we are able to match groups of units with very similar response functions $\yi(\cdot)$.
One possible proxy for this objective is to minimize a covariate discrepancy:
\begin{equation} \label{eqn:covariate-matching-discrepancy}
\sum_{g} \sum_{i, j \in [k]} |X_{ig} - X_{jg}|^2.
\end{equation}
There are several algorithms for this problem \citep{Greevy2004Optimal,bai2021inference,cytrynbaum2023}.
Due to the curse of dimensionality in matching, this should be done using a small set of covariates expected to be highly predictive for endline outcomes $\yi$.
%For a more detailed technical discussion about the role of covariates, see Section \ref{subsection:match-quality} below.

A coupling is a joint distribution $G$ over $\suppf^k$ with fixed marginals.
For tractability and expositional clarity, we consider couplings $G$ that are exchangeable and have identical marginals $G_i = F$ for $i \in [k]$.
A coupling is exchangeable if the joint distribution of $(D_i)_{i=1}^k$ is invariant to permutations of the indices $i \in [k]$.
We define the set of feasible couplings $\transportsym(F)$ to be the exchangeable joint distributions $G$ over $\suppf^k$ with fixed marginals $G_i = F$.

%By drawing highly dispersed $(\Di)_{i=1}^k$ within matched $k$-tuples of similar units, coupling designs prevent spurious in-sample correlations from arising between the treatments $\Di$ and unit-specific heterogeneity in $\si(\cdot)$.

\subsection{Uniform Couplings}\label{subsection:matching-couplings}

There are many strategies for sampling highly dispersed uniform random variables $(U_i)_{i=1}^k \sim G_U$.
Inspired by the Monte Carlo integration literature, we consider three canonical examples based on the Gaussian copula, binning, and randomly shifted lattices.

% Univariate-couplings figure removed in Pass 07 to meet the page limit.
% Restore by uncommenting the following line.
%\input{inputs/paper_univariate_couplings}

\begin{figure}[ht]
    \centering
    \includegraphics[width=\figwfull]{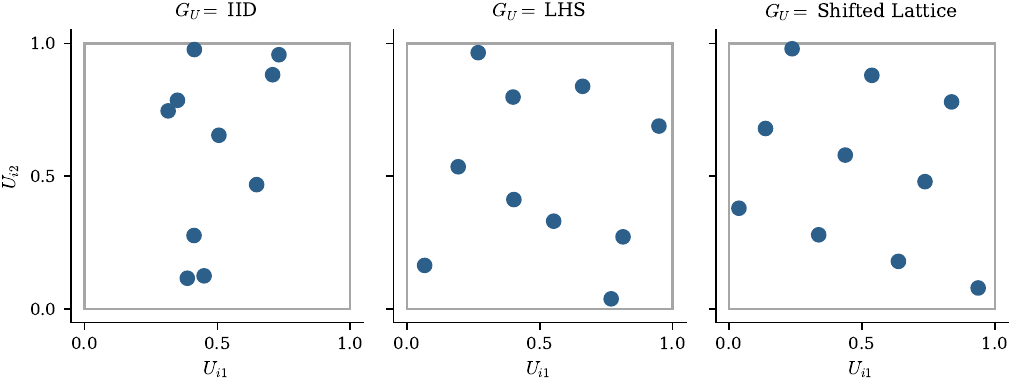}
    \caption{Uniform draws $(U_i)_{i=1}^k$ in $[0,1]^2$ with tuple size $k=10$ under the iid, Latin hypercube, and shifted lattice couplings, in increasing order of joint dispersion.}
    \label{fig:multivariate-uniform-draws}
\end{figure}

\begin{ex}[Gaussian Copula]\label{ex:gaussian}
For univariate treatments $m=1$, draw $Z \sim \normal(0, \Sigma)$ with correlation matrix $\Sigma_{ii} = 1$ and maximal negative correlation $\Sigma_{ij} = -(k-1)\inv$ for $i \neq j$.
Let $\Phi$ be the standard normal CDF and rank transform $U_i = \Phi(Z_i)$ for $i \in [k]$ so that each $U_i \sim \unif[0, 1]$.
For $\suppf \sub \mr^m$ with general $m \ge 1$, independently draw $(U_{ij})_{i=1}^k$ for each dimension $j \in [m]$ using this procedure.
\end{ex}

\begin{ex}[Latin Hypercube]\label{ex:latin-hypercube}
For $m=1$, partition $[0, 1]$ into $k$ bins of width $1/k$, with bin $J_r = [(r-1)/k, r/k]$ for $r \in [k]$.
Draw a random permutation $\pi$ of $[k]$ uniformly and assign unit $i$ to bin $J_{\pi(i)}$.
Conditional on $\pi$, draw $U_i \sim \unif(J_{\pi(i)})$, independently for $i \in [k]$.
For general $m \ge 1$, sample $(U_{ij})_{i=1}^k$ as above independently for each dimension $j \in [m]$ \citep{mckay1979comparison}.
This is sometimes referred to as a k-rooks design.
To see why, note if $m = 2$, each point $U_i$ occupies a unique row and column of the $k \times k$ grid of bins on $[0,1]^2$, so the $k$ points can be viewed as non-attacking rooks on a chessboard \citep{shirley1991discrepancy}.
\end{ex}

Latin hypercube produces samples $(U_i)_{i=1}^k \sim G_U$ such that the coordinate projections $(U_{ij})_{i=1}^k$ are spread out over the interval $[0, 1]$ for each dimension $j \in [m]$.
However, even if the one-dimensional projections $(U_{ij})_{i=1}^k$ are highly dispersed, the samples $(U_i)_{i=1}^k$ may not be jointly well-dispersed through the hypercube $[0, 1]^m$ if $m > 1$.
Because of this, Latin hypercube produces strong negative correlations $\corrg(\phi(\Di), \phi(\Dj)) < 0$ only for univariate functions like $\phi(d_1, \dots, d_m) = d_1$, but tends to have weaker effects for jointly varying functions like $\phi(d) = d_1 d_2$.
Figure~\ref{fig:multivariate-uniform-draws} illustrates this by comparing draws in $[0, 1]^2$ under iid sampling, the Latin hypercube, and the more dispersed shifted lattice coupling introduced next.

This problem can be solved with more advanced multivariate coupling constructions.
Some examples include orthogonal array Latin hypercube sampling \citep{owen1992orthogonal, tang1993orthogonal}, scrambled digital nets \citep{owen1995scrambled}, and shifted rank-1 lattice rules \citep{cranley1976randomization, sloan1994lattice}.
For brevity, we only formally describe shifted lattice rules.
In the univariate case, such couplings draw $(U_i)_{i=1}^k \sim G_U$ by adding a uniform random shift to a regular grid $(l/k)_{l=0}^{k-1} \sub [0, 1]$.
For more general $m \ge 1$, they apply a random shift $S \sim \unif[0, 1]^m$ to a dispersed lattice of points determined by a number-theoretic construction.

\begin{ex}[Exchangeable Shifted Lattice] \label{ex:shifted-lattice}
Pick integers $z_j$ with $\gcd(z_j, k) = 1$ for each $j \in [m]$ and let $z = (z_1, \dots, z_m)$.
Draw a random permutation $\pi$ and a shift $S \sim \unif[0, 1]^m$ with $\pi \indep S$.
Then, for each $i \in [k]$ set
\begin{equation} \label{eqn:shifted-lattice}
    U_i = \left( \frac{\pi(i)}{k} z + S \right) \pmod 1.
\end{equation}
\end{ex}
For $m=1$ and $z=1$, we have $(U_i)_{i=1}^k = (\pi(i)/k + S)_{i=1}^k \pmod 1$, a random shift and permutation of the regular grid $(l/k)_{l=0}^{k-1}$. This is known as rotation sampling \citep{fishman1983antithetic}.
For $m \ge 2$, the restriction $\gcd(z_j, k) = 1$ guarantees that the projections $(U_{ij})_{i=1}^k$ onto each coordinate $j \in [m]$ are themselves rotation samples.
In addition to being marginally dispersed on each dimension, a well-chosen generating vector $z$ ensures that the $(U_i)_{i=1}^k$ are also jointly well-dispersed through $[0, 1]^m$.
For a theoretical analysis of the choice of generating vector $z$, see \cite{kuo2003component}.
We slightly modify the standard construction of \citet{sloan1994lattice}, adding a random permutation $\pi$ for exchangeability due to heterogeneity of the units in causal inference problems.

\begin{figure}[ht]
    \centering
    \includegraphics[width=\figwstrip]{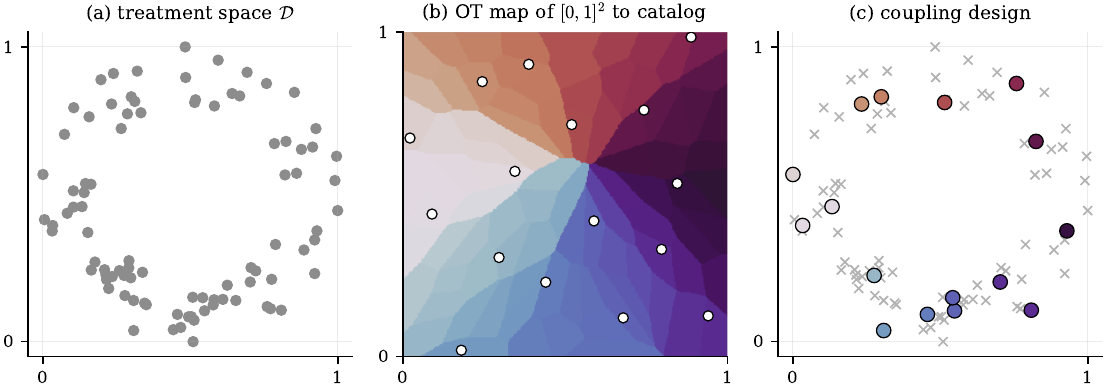}
    \caption{A coupling design over a discrete catalog. (a) The catalog $\suppf$ in feature space. (b) The Brenier map partitions $[0,1]^2$ into one cell per treatment, with draws from a scrambled digital net in white. (c) The assignments $\Di = \tstar(U_i)$ are dispersed over the catalog.}
    \label{fig:multivariate-net-lattice}
\end{figure}

\subsection{Transport Maps}\label{subsection:transport}

In general, we must map the uniform samples $(U_i)_{i=1}^k$ to treatments $\Di = T(U_i)$ that are both dispersed over $\suppf$ and have the correct marginal distribution $F$.
For univariate treatments $m=1$, we can use the transport map $T(u) = F\inv(u)$, setting $\Di = F\inv(U_i)$, so that $\Di \sim F$ by the properties of the quantile function.
In the multivariate case, if $F = \otimes_{j=1}^m F_j$ has independent components $D_{ij} \indep D_{il}$ for $j \neq l \in [m]$, then we can similarly enforce $T(U_i) \sim F$ by applying the quantile transform componentwise:
\begin{equation}\label{eqn:product-quantile-map}
T(U_i) = \big(F_1\inv(U_{i1}), \, \dots, \, F_m\inv(U_{im})\big).
\end{equation}
The componentwise quantile transform can only be used when the treatment space is a product set $\suppf = \times_{j=1}^m \suppf_j$ and the desired marginal distribution $F$ has independent components.
This will not work for general multivariate distributions $F$ with dependent components, since $\Di = T(U_i)$ would not have the correct joint distribution.
Note that several of the applications described in Section~\ref{section:overview} do not have such an independent structure and require more general transport maps.

\textbf{Geometry Preservation.} The quantile transform preserves the ordering of the samples $(U_i)_{i=1}^k$ and maps well-separated quantile levels to treatments that are far apart in $F$-probability.
In particular, if $F$ is non-atomic and $U_i$ and $U_j$ are far apart in $[0, 1]$, then $F\inv(U_i)$ and $F\inv(U_j)$ will also be far apart in $\suppf$.
To extend this construction to complex treatment spaces with dependent components, we require a geometry-preserving map $T: [0, 1]^m \to \suppf$ with $T(U) \sim F$.
This requirement naturally suggests the \emph{Brenier maps} from optimal transport \citep{brenier1991polar}:
\begin{equation}\label{eqn:monge-ot-intro}
\tstar = \argmin_{T:\, T(U) \sim F} \int_{[0,1]^m} |u - T(u)|_2^2 \, du.
\end{equation}
For $F$ with a finite second moment, a minimizer exists and is unique up to almost-everywhere equivalence \citep{brenier1991polar, mccann1995existence}.
In the univariate case and when $F = \otimes_{j=1}^m F_j$, the Brenier map recovers the componentwise quantile transform in Equation \eqref{eqn:product-quantile-map} above.
However, optimal transport can be used to construct geometry-preserving maps $T^*(U) \sim F$ for much more general spaces $\suppf$.
For discrete spaces $\suppf \sub \mr^m$, such maps can also be computed efficiently by semi-discrete optimal transport \citep{merigot2011multiscale}.
Figure~\ref{fig:multivariate-net-lattice} illustrates this construction for a discrete treatment catalog.
We discuss the geometry-preserving property motivating this definition and further computational details in Appendix~\ref{section:ot-maps}.

\section{Dispersion and Match Quality} \label{section:dispersion-match-quality}

This section describes the key mathematical objects for quantifying the efficiency of coupling design randomization, defining appropriate measures of \emph{match quality} and \emph{sample dispersion}.
This allows us to show a simple fundamental relation between these objects and the efficiency gain from a coupling design:
\begin{equation*}
\text{Efficiency Gain} = \text{Dispersion} \times \text{Match Quality}.
\end{equation*}

\subsection{Smoothness and Dispersion} \label{subsection:dispersion}

Above, we constructed couplings that ``spread out'' treatments over the treatment space $\suppf$.
For intuition about how this improves precision, recall the restaurant promotion application in Section~\ref{ex:irregular-uber-eats}, where we argued that coupling designs can exploit smoothness of the decision $y(d) \in \{0,1\}$ in restaurant features $d \in \mr^m$.
Let $n = k$ and suppose preferences are homogeneous with $\yi(d) = y(d)$.
We can use the Horvitz-Thompson estimator to estimate the best linear approximation of $y(\cdot)$ as in Example~\ref{ex:blp}.
Then $\est = k\inv \sum_{i=1}^k \phi(\Di)$ with influence function $\phi(d) = y(d)H(d)$ for weights $H(d) = \varf(D)\inv(d - \ef[D])$, a homogeneous version of the more general heterogeneous estimation problems in Section~\ref{subsection:causal-estimands}.
When $y(\cdot)$ is smooth, so is $\phi(\cdot)$.
If by random chance we sample treatments $\Di \approx \Dj$ close together in the space $\suppf$, the samples $\phi(\Di) \approx \phi(\Dj)$ will be quite similar, which effectively wastes an experimental sample.
By contrast, if $(\Di)_{i=1}^k$ are dispersed over $\suppf$, then we learn more about the function $\phi(\cdot)$ from a given fixed sample size $k$.

To formalize this intuition, let sample variance $\var_k(a_i) \equiv (k-1)^{-1} \sum_{i=1}^k (a_i - \bar a)^2$ for any $(\ai)_{i=1}^k$.
We define dispersion as a normalized measure of how spread out the samples $(\phi(\Di))_{i=1}^k$ are in expectation over $G$.

\begin{defn}[Dispersion] \label{defn:dispersion}
For coupling $G$ with margins $F$, if $\varf(\phi) > 0$, define
\begin{equation} \label{eqn:dispersion-def}
\dispg(\phi) \equiv (k-1) \left( \frac{\eg \vark(\phi(\Di))}{\varf(\phi)} - 1 \right).
\end{equation}
If $\varf(\phi) = 0$, define $\dispg(\phi) \equiv 0$.
\end{defn}
For the iid design $\giid = \otimes_{i=1}^k F$, we have $E_{\giid} \vark(\phi(\Di)) = \varf(\phi)$ by unbiasedness of the sample variance, so $\disp_{\giid}(\phi) = 0$.
If $\dispg(\phi) > 0$, then the samples $(\phi(\Di))_{i=1}^k$ are more spread out in expectation under $G$ than under iid randomization.
For homogeneous estimation problems and smooth $\phi(\cdot)$, this improves efficiency by the mechanism described above.
Indeed, the relative efficiency for the homogeneous problem above is
\begin{equation}\label{eqn:pure-dispersion}
1 - \frac{\var_G(\est)}{\var_{\giid}(\est)} = \dispg(\phi).
\end{equation}

Because $\var_G(\est) \geq 0$, we have $\dispg(\phi) \leq 1$ for all $\phi(\cdot)$ and $G \in \transportsym(F)$.
We also have the lower bound $\dispg(\phi) \ge -(k-1)$, which is attained for any $\phi(\cdot)$ under a clustered coupling with $D_i = D_j$ for all $i, j \in [k]$.
For exchangeable couplings, the dispersion can be interpreted as a normalized measure of negative correlation between the samples $\phi(\Di)$ and $\phi(\Dj)$ for $i \neq j$, as shown in the following proposition.

\begin{prop} \label{prop:dispersion-interpretation}
Let $0 < \varf(\phi) < \infty$ and $G \in \transportsym(F)$.
Then we have
\begin{equation} \label{eqn:dispersion-prop}
\dispg(\phi) = -(k-1)\corrg(\phi(\Di), \phi(\Dj)), \quad \quad i \not = j.
\end{equation}
\end{prop}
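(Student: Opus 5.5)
The plan is to compute $\eg \vark(\phi(\Di))$ directly by expanding the sample variance, and then use exchangeability to reduce it to one variance and one pairwise covariance. Write $a_i = \phi(\Di)$ and $\bar a = k\inv \sum_{i=1}^k a_i$. Start from the identity
\begin{equation*}
\vark(a_i) = \frac{1}{k-1}\Big(\sum_{i=1}^k a_i^2 - k \bar a^2\Big) = \frac{1}{k}\sum_{i=1}^k a_i^2 - \frac{1}{k(k-1)}\sum_{i \neq j} a_i a_j .
\end{equation*}
Since $\vark$ is invariant to adding a constant to every $a_i$, we may center $\phi$ and assume $\ef[\phi] = 0$. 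Integrability is not an issue: $\varf(\phi) < \infty$ and Cauchy--Schwarz give $\eg|a_ia_j| < \infty$.

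Next, take expectations under $G \in \transportsym(F)$. Each marginal is $F$, so $\eg[a_i^2] = \varf(\phi)$. By exchangeability, $\eg[a_i a_j] = \covg(\phi(\Di), \phi(\Dj)) \equiv c$ is the same for every pair $i \neq j$. There are $k(k-1)$ ordered pairs, so
\begin{equation*}
\eg \vark(\phi(\Di)) = \varf(\phi) - c.
\end{equation*}

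To finish, divide by $\varf(\phi) > 0$ and subtract $1$, which gives $-c/\varf(\phi)$. Both $\phi(\Di)$ and $\phi(\Dj)$ have variance $\varf(\phi)$, because their marginals are $F$. Hence $c/\varf(\phi) = \corrg(\phi(\Di), \phi(\Dj))$, and multiplying by $(k-1)$ gives Equation~\eqref{eqn:dispersion-prop}.

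No step here is a real obstacle; the work is bookkeeping. The main points to handle carefully are the counting of ordered versus unordered pairs in the cross term and the justification that the covariance is the same for all pairs. Exchangeability gives the latter, because the joint law of $(\Di, \Dj)$ equals that of $(D_1, D_2)$.
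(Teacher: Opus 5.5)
Your proposal is correct and follows essentially the same route as the paper: expand $\vark$ into a diagonal term and an ordered-pairs cross term, then use fixed marginals and exchangeability to get $\eg \vark(\phi(\Di)) = \varf(\phi) - \covg(\phi(\done), \phi(\dtwo))$. The only cosmetic difference is that you center $\phi$ first, whereas the paper adds and subtracts $\ef[\phi(D)]^2$.
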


Negatively correlated samples tend to ``repel'' each other, making them more spread out.
We work primarily with this formulation in what follows.
In Section~\ref{section:efficiency-analysis}, we develop the technical machinery to describe exactly how $\dispg(\phi)$ is determined by the smoothness and shape of $\phi(\cdot)$.

\subsection{Match Quality} \label{subsection:match-quality}

The direct connection between dispersion and efficiency in Equation \eqref{eqn:pure-dispersion} above only holds for homogeneous problems with $\est = k\inv \sum_{i=1}^k \phi(\Di)$.
For realistic causal estimation problems, we also need to account for heterogeneity of the functions $\si(\cdot)$ within matched $k$-tuples of units, due to imperfect matching on covariates that are only partially predictive of influence-function heterogeneity.
To do so, next we define an appropriate measure of within-group match quality.

\begin{defn}[Match Quality]\label{defn:match-quality-coefficient}
Let $\viid(s) \equiv \en \varf(\si(D))$ denote the average marginal variance of the influence functions $\si(\cdot)$.
The match quality coefficient is defined as $\matchcoeffk(s) \equiv 1 - \vmatch(s)/\viid(s)$ for discrepancy
\begin{equation} \label{eqn:influence-discrepancy}
\vmatch(s) \equiv \frac{1}{2n(k-1)} \sum_g \sum_{i\neq j \in [k]} \varf(\sig(D) - \sjg(D)).
\end{equation}
We set $\matchcoeffk(s) \equiv 1$ by convention when $\viid(s) = 0$.
\end{defn}

The term $\vmatch(s)$ is a design-based matching discrepancy, with $\vmatch(s) = 0$ under perfect matching, $\sig = \sjg$ for all $i, j$ and groups $g$.
The match coefficient $\matchcoeffk(s)$ measures how homogeneous the functions $\sig(\cdot)$ are within each matched $k$-tuple, with $\matchcoeffk(s) = 1$ under perfect matching.
More generally, we have lower and upper bounds $-(k-1)\inv \le \matchcoeffk(s) \le 1$ for all populations $s = (\si)_{i=1}^n$ and matching procedures.
If units are matched at random, a short calculation shows that the expected match quality is $E_{\match}[\matchcoeffk(s)] \ge -(n-1)\inv$ in the worst case.
In theory, it is possible to approach the lower bound by purposefully matching units into $k$-tuples to be as dissimilar as possible, but we do not expect this to arise in practice.

\paragraph{Covariate Power.}

Recall units are matched into $k$-tuples using observed baseline covariates $(X_i)_{i=1}^n$.
The match quality coefficient $\matchcoeffk(s)$ will be large if both:
\begin{enumerate}[label=(\alph*), itemsep=0.1pt]
    \item Covariates $X_i$ are highly predictive of heterogeneity in $\si(\cdot)$.
    \item Matching discrepancy on $X_i$ within $k$-tuples (Equation \eqref{eqn:covariate-matching-discrepancy}) is small.
\end{enumerate}
Under standard iid sampling assumptions, one can show that $\matchcoeffk(s) = R^2_{s|X} \cdot \matchcoeffk(\mu) + \op(1)$ as $n \to \infty$, where $\matchcoeffk(\mu)$ is the match quality on features $\mu$ of the covariates alone and $R^2_{s|X}$ measures the predictive power of the covariates for the heterogeneity in $\si$.

%In principle, one could try to estimate such quantities using pilot data to decide exactly which covariates to collect and match on, similar to the approach in \cite{bai2020pairs}.
%This could work well if the experimental design already calls for a large pilot or several experimental waves.
%If not, it may not be worth running a whole extra experiment to estimate such quantities.
%We leave such optimized matching and multi-wave considerations to future work.

%\textbf{Practical Recommendations.}
%In this paper, we take a simple, practical approach to matching, focusing on the case of single-wave experiments.
%We recommend experimenters just match on a small set of covariates $X_i$ they believe will be highly predictive of important endline outcomes $Y_i$.
%One natural choice is to include baseline outcomes in $X_i$, as well as a few extra covariates expected to be highly predictive from researcher intuition, if available.
%The match quality on influence functions $\matchcoeffk(s)$ and corresponding efficiency gains will reflect whatever predictive power these covariates end up providing, without trying to optimize such quantities at design time when data is not yet available.

\subsection{Efficiency from Dispersion and Match Quality} \label{subsection:efficiency}

Our main theoretical result is that the efficiency gain from coupling designs is proportional to the product of dispersion $\dispg(\phi)$ and match quality $\matchcoeffk(s)$.
To state this result in full generality requires additional technical machinery, which we develop in Section \ref{section:efficiency-analysis} below.
To build intuition, we first state the result in a simple univariate parametric model.

\begin{thm}[Relative Efficiency] \label{thm:variance-parametric}
For $G \in \transportsym(F)$ and $0 < \varf(\phi) < \infty$, let $\si(d) = \ci + \ai \phi(d)$ for $i \in [n]$.
Then the variance of estimator $\est = \en[\si(\Di)]$ relative to the iid design is given by
\begin{equation} \label{eqn:parametric-variance}
1 - \frac{\var_G(\est)}{\var_{\giid}(\est)} = \dispg(\phi) \times \matchcoeffk(s).
\end{equation}
\end{thm}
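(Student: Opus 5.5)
Throughout, write $\sigma^2 = \varf(\phi)$ and $\rho = \corrg(\phi(\Di),\phi(\Dj))$ for $i \neq j$. The plan is a direct variance computation, using the independence of the assignments across groups and exchangeability within groups.

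\textbf{Step 1: the iid variance.} Since $\si(d) = \ci + \ai\phi(d)$, the constants $\ci$ drop out of every variance. Under $\giid$ all $\Di$ are independent, so
\begin{equation*}
\var_{\giid}(\est) = n^{-2}\sum_i \ai^2 \sigma^2 .
\end{equation*}
The same quantity appears as $\viid(s) = \en\varf(\si(D)) = \en[\ai^2]\,\sigma^2$, so $\var_{\giid}(\est) = \viid(s)/n$.

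\textbf{Step 2: the variance under $G$.} Groups are drawn independently, so only within-group pairs contribute covariance. Since $G \in \transportsym(F)$ is exchangeable with margins $F$, we have $\covg(\phi(D_{ig}),\phi(D_{jg})) = \rho\,\sigma^2$ for every $i \neq j$. Hence
\begin{equation*}
\var_G(\est) = n^{-2}\Big(\sum_i \ai^2\sigma^2 + \rho\,\sigma^2 \sum_g \sum_{i\neq j\in[k]} \aig\ajg\Big).
\end{equation*}
Dividing by Step 1 gives
\begin{equation*}
1 - \frac{\var_G(\est)}{\var_{\giid}(\est)} = -\rho\,\frac{\sum_g\sum_{i\neq j}\aig\ajg}{\sum_i \ai^2}.
\end{equation*}
By Proposition~\ref{prop:dispersion-interpretation}, $-\rho = \dispg(\phi)/(k-1)$.

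\textbf{Step 3: identifying match quality.} It remains to show that
\begin{equation*}
\matchcoeffk(s) = \frac{\sum_g\sum_{i\neq j}\aig\ajg}{(k-1)\sum_i\ai^2}.
\end{equation*}
Start from $\varf(\sig - \sjg) = (\aig - \ajg)^2\sigma^2$. For each $g$, a counting argument gives
\begin{equation*}
\sum_{i\neq j}(\aig-\ajg)^2 = 2(k-1)\sum_i \aig^2 - 2\sum_{i\neq j}\aig\ajg .
\end{equation*}
Therefore
\begin{equation*}
\vmatch(s) = \frac{\sigma^2}{n(k-1)}\Big((k-1)\sum_i\ai^2 - \sum_g\sum_{i\neq j}\aig\ajg\Big).
\end{equation*}
Dividing by $\viid(s) = \sigma^2 n^{-1}\sum_i\ai^2$ and subtracting from one yields the claimed expression, and combining with Step 2 gives the theorem.

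\textbf{Degenerate case.} If all $\ai = 0$, then both variances vanish and the ratio is undefined. In this case one adopts the stated convention $\matchcoeffk(s) = 1$, or simply assumes $\sum_i\ai^2 > 0$.

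\textbf{Main obstacle.} There is no deep step here. The only delicate point is the bookkeeping in Step 3, where the normalization $1/(2n(k-1))$ and the ordered-pair double sum have to match exactly so that the factor $k-1$ cancels against the one in the dispersion.
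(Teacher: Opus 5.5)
Your proposal is correct and follows essentially the same route as the paper: you expand $n\var_G(\est)$ into the iid term plus the within-group covariances $\aig\ajg\covg(\phi(\dig),\phi(\djg))$, convert the correlation to dispersion via Proposition~\ref{prop:dispersion-interpretation}, and identify the remaining factor as match quality. The only difference is that you verify the identity $\matchcoeffk(s) = (k-1)^{-1}c(s)/\viid(s)$ directly from the definition in Step 3, whereas the paper cites Lemma~\ref{lem:var-components} for it.
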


For intuition, note that in the homogeneous special case $\si(d) = c + a \phi(d)$ for all units $i \in [n]$, relative efficiency is exactly given by $\dispg(\phi)$.
In general heterogeneous problems, the relative efficiency is dampened by imperfect matching, captured by the match quality coefficient $\matchcoeffk(s) < 1$.

Another interesting special case occurs when $\dispg(\phi) = 1$, so that relative efficiency is exactly equal to match quality $\matchcoeffk(s) = 1 - \vmatch(s)/\viid(s)$. 
Rearranging, we find $n\var_G(\est) = \vmatch(s)$, the matching discrepancy from Definition \ref{defn:match-quality-coefficient}. 
This shows how the matching discrepancy $\vmatch(s)$ can also be interpreted as the ideal variance under perfect dispersion, where efficiency is only limited by imperfect matching.

\textbf{Tuple Size Tradeoff.}
Dispersion $\dispg(\phi)$ is generally increasing in tuple size $k$, since for larger $k$ it becomes easier to jointly correlate the treatments $(\Di)_{i=1}^k$ to be spread out over $\suppf$, while match quality $\matchcoeffk(s)$ is generally decreasing in $k$, since it becomes harder to find many similar units to match together.
For general $F$ and $\phi(\cdot)$, perfect dispersion $\dispg(\phi) \to 1$ is only possible in the limit as $k \to \infty$, destroying match quality.
We study this tradeoff in detail in Sections \ref{section:efficiency-analysis} and \ref{section:coupling-analysis} below.

%Before moving on to our general efficiency analysis in Section \ref{section:efficiency-analysis}, next we briefly illustrate how classical notions of covariate balance can be extended to experiments with complex treatments $D \in \suppf$, showing how coupling designs prevent imbalances.

\paragraph{Illustration of Tradeoff.}
Consider a researcher estimating the dose-response of various welfare measures to the amount $D \ge 0$ of an unconditional cash grant, randomizing $D \sim F$ for $F = \text{Exp}(1)$ to provide many small grants while also trying some larger amounts.
Suppose the treatment has no effect on potential outcomes, so $\yi(d) = \one'X_i$ and $\si(d) = (\one'X_i) H(d)$. Then efficiency under a coupling $G$ is the product of $\dispg(H)$ and $\matchcoeffk(s)$ by Theorem~\ref{thm:variance-parametric}.
Figure~\ref{fig:frontier_efficiency} plots both quantities, the feasible frontier, and the resulting efficiency as functions of $k$ for the Latin hypercube (LHS), rotation sampling, and Gaussian copula couplings of Section~\ref{subsection:matching-couplings}.
Note that match quality is the same for all couplings at any given $k$.
The Gaussian copula, which is equivalent to antithetic variates at $k = 2$ (Remark~\ref{rem:gaussian-equivalence}), is the most efficient for small tuple sizes $k$; as $k$ increases, the LHS coupling overtakes it, producing higher dispersion and peaking at $k = 10$ with the highest efficiency; this reflects the optimal tradeoff in $k$ of reduced match quality for higher dispersion.

\begin{figure}[ht]
    \centering
    \includegraphics[width=\figwquad]{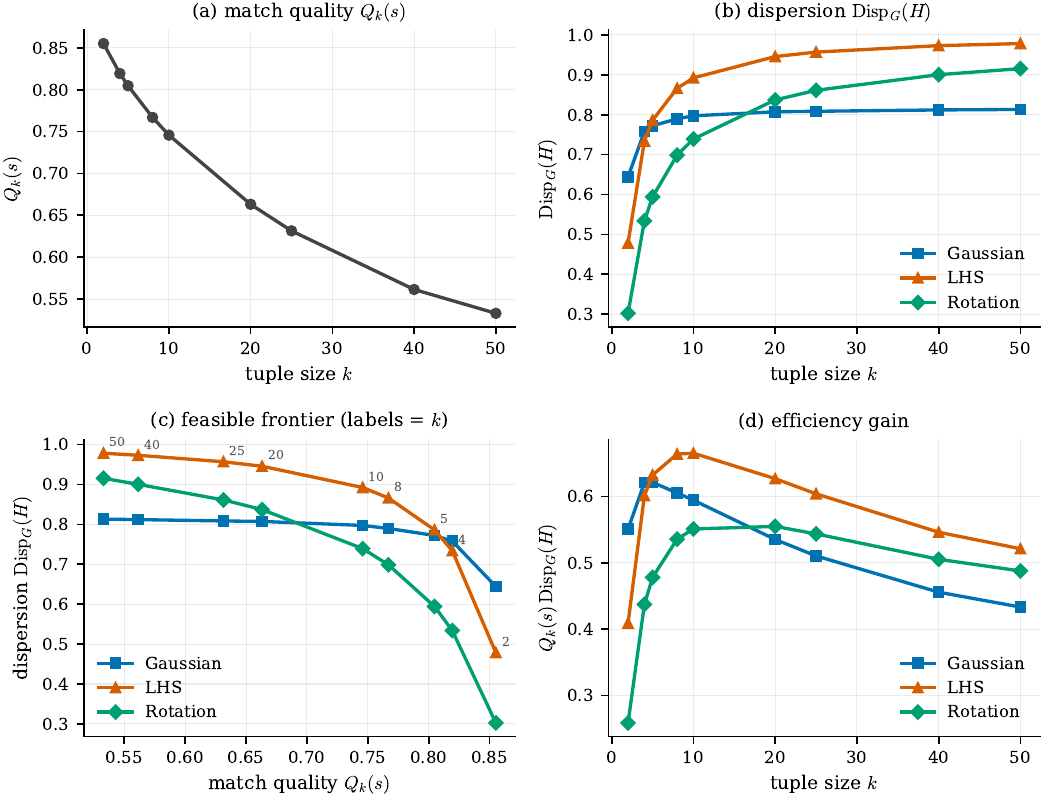}
    \caption{Tuple size tradeoff with $F = \text{Exp}(1)$, $n=1200$, and $\dim(X) = 4$.
    Panels show match quality (a), dispersion (b), the feasible (match quality, dispersion) frontier with labels indicating tuple size $k$ (c), and efficiency (d), for the Gaussian copula, Latin hypercube, and rotation sampling couplings.}
    \label{fig:frontier_efficiency}
\end{figure}

\begin{ex}[Stratified Randomization] \label{ex:stratified-randomization}
For discrete treatments, conventional stratified randomization is an example of a design that lexicographically emphasizes dispersion over match quality.
By completely randomizing within each stratum, stratified randomization ensures treatments are perfectly dispersed over the discrete space $\suppf$, achieving $\dispg(\phi) = 1$ for all $\phi(\cdot)$.
However, this is only possible for large enough stratum size $k$.
In particular, if $P(\Di = j) = p_j$, conventional stratified randomization is possible if and only if $p_j \cdot k \in \mathbb{N}$ for all $j$.  
In cases with many treatments or irregular assignment probabilities, this can impose a severe cost in match quality. 
Because efficiency is proportional to the product of both dispersion and match quality, such a lexicographic preference for dispersion will often be inefficient. 
See Appendix~\ref{subsection:dispersion-stratified} for a detailed discussion.
\end{ex}

\subsection{Covariate Balance for Complex Treatments} \label{subsection:covariate-balance-section}

An important motivation for conventional stratified randomization is that it reduces covariate imbalances between the different treatment groups.
These imbalances are equivalent to spurious in-sample correlations between the treatment $\Di$ and covariates $X_i$.
Here, we show that by assigning similar groups of units to highly dissimilar treatments, coupling designs prevent such spurious correlations from arising, enabling covariate-balancing randomization over complex treatment spaces.

For binary $\Di \in \{0, 1\}$, covariate balance is commonly assessed using the t-statistics from a regression $X_i \sim 1 + D_i$.
Up to a normalization, this is equivalent to checking the magnitude of the sample covariance $\cov_n(\Di, X_i)$.
If this covariance is small, then treatments are approximately independent of unit-specific heterogeneity in-sample.
Thus, ensuring covariate balance is equivalent to randomizing in a way that enforces $\eg[\cov_n(\Di, X_i)^2] \approx 0$.

To extend this balance measure to complex treatment spaces $\suppf$, let $\phi: \suppf \to \mr$ and $b: \mr^p \to \mr$ be basis functions.
Write $\var_n(b_i) = n\inv \sum_{i=1}^n (b_i - \bar b)^2$ for $b_i = b(X_i)$ and define the imbalance under a coupling $G$ as the mean-squared sample covariance 
\begin{equation} \label{eqn:covariate-balance}
\imb_G(\phi, b) \equiv \eg[ \, \cov_n(\phi(\Di), b(X_i))^2 \,].
\end{equation}

We can view $\Di$ as being approximately independent of covariates $X_i$ if $\imb_G(\phi, b) \approx 0$ for a rich set of basis functions $\phi(\cdot)$ and $b(\cdot)$.
In particular, the next corollary shows that the imbalance measure $\imb_G(\phi, b)$ is decreasing in the product $\dispg(\phi) \times \matchcoeffk(b)$ of dispersion and match quality.

\begin{cor}[Covariate Imbalance] \label{cor:covariate-balance}
If $0 < \varf(\phi) < \infty$ and $\var_n(b_i) > 0$, then the covariate imbalance under a coupling design $G$ relative to the iid design is 
\begin{equation}\label{eqn:covariate-balance-result}
\frac{\imb_G(\phi, b)}{\imb_{\giid}(\phi, b)} = 1- \dispg(\phi) \times \matchcoeffk(b).
\end{equation}
Here, the term $\matchcoeffk(b) \equiv 1 - (n/k)\inv \sum_g \vark(b_{ig}) / \var_n(b_i)$ denotes within-group match quality on $b_i = b(X_i)$.
\end{cor}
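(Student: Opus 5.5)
Our plan is to write the sample covariance as a linear form in the treatments and then apply the same variance computation as in Theorem~\ref{thm:variance-parametric}. Set $\tilde b_i = b_i - \bar b$, so that $\cov_n(\phi(\Di), b_i) = n\inv \sum_i \tilde b_i \phi(\Di)$. Replacing $\phi$ by $\phi - \ef[\phi]$ does not change this quantity, because $\sum_i \tilde b_i = 0$. We may therefore assume $\ef[\phi] = 0$, which gives $\eg[\cov_n(\phi(\Di), b_i)] = 0$. Hence $\imb_G(\phi,b) = \var_G\big(n\inv \sum_i \tilde b_i \phi(\Di)\big)$, the variance of $\en[\si(\Di)]$ with $\si(d) = \tilde b_i \phi(d)$. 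This is exactly the parametric model of Theorem~\ref{thm:variance-parametric} with $\ci = 0$ and $\ai = \tilde b_i$. Theorem~\ref{thm:variance-parametric} then gives $\imb_G/\imb_{\giid} = 1 - \dispg(\phi)\matchcoeffk(s)$. It remains to show that $\matchcoeffk(s) = \matchcoeffk(b)$ with $\matchcoeffk(b)$ as defined in the corollary.

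The second step is to check the match quality identity. First, $\viid(s) = \en[\tilde b_i^2]\varf(\phi) = \var_n(b_i)\varf(\phi)$, which is positive under the hypotheses. Second, $\varf(\sig - \sjg) = (\tilde b_{ig} - \tilde b_{jg})^2 \varf(\phi)$. We use the identity $\sum_{i \neq j\in[k]} (b_{ig}-b_{jg})^2 = 2k(k-1)\vark(b_{ig})$, where the centering by $\bar b$ cancels in the differences. This gives
\[
\vmatch(s) = \frac{1}{2n(k-1)} \sum_g 2k(k-1)\vark(b_{ig})\varf(\phi) = \varf(\phi)\,(n/k)\inv \sum_g \vark(b_{ig}).
\]
Taking the ratio, the factor $\varf(\phi)$ cancels, and $\matchcoeffk(s) = 1 - (n/k)\inv\sum_g \vark(b_{ig})/\var_n(b_i) = \matchcoeffk(b)$.

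If one prefers not to rely on the internal form of Theorem~\ref{thm:variance-parametric}, the variance can be computed directly instead. Treatments are independent across groups, so only within-group covariances contribute. Within a group the variance is $\varf(\phi)\big[\sum_i a_{ig}^2 + \rho\sum_{i\neq j} a_{ig}a_{jg}\big]$, where $\rho = \corrg(\phi(\Di),\phi(\Dj)) = -\dispg(\phi)/(k-1)$ by Proposition~\ref{prop:dispersion-interpretation}. We then rewrite $\sum_{i\ne j} a_{ig}a_{jg} = (k-1)\sum_i a_{ig}^2 - \tfrac12\sum_{i\ne j}(a_{ig}-a_{jg})^2$. Substituting this yields $\var_{\giid}\cdot[1 - \dispg(\phi)(1 - \vmatch/\viid)]$.

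The only step we expect to need care is the centering. The weights are $\tilde b_i$ rather than $b_i$, and the imbalance is a second moment rather than a variance. Both issues are handled by the observation that $\sum_i \tilde b_i = 0$ kills the mean, and this argument uses the common marginal $F$ across all units.
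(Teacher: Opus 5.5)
Your proof is correct and follows the paper's argument: you center $\phi$ and $b$, recognize the imbalance as the variance of $\en[\si(\Di)]$ with $\si = \tilde b_i(\phi - \ef[\phi])$, apply Theorem~\ref{thm:variance-parametric}, and then verify $\matchcoeffk(s) = \matchcoeffk(b)$. The only cosmetic difference is that you evaluate $\vmatch(s)$ through the pairwise identity $\sum_{i \neq j}(b_{ig} - b_{jg})^2 = 2k(k-1)\vark(b_{ig})$, whereas the paper uses the within-group-mean form of $\vmatch$ from Lemma~\ref{lem:var-components}.
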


\section{Efficiency Theory} \label{section:efficiency-analysis}

%This technical analysis allows us to characterize how dispersion, and thus overall estimation efficiency, is driven by a tight alignment between the shape of the influence functions $\si(\cdot)$ and the high dispersion directions of the chosen coupling $G$.
%For example, in the case of $G = $ Latin hypercube, the high dispersion subspace consists of piecewise constant histogram functions on a certain partition of $\suppf$, with partition fineness increasing in $k$.
%If the influence functions $\si(\cdot)$ are sufficiently smooth, they will be well approximated on this subspace of histogram functions, leading to high precision of $\est = \en[\si(\Di)]$ under the coupling $G$.

\subsection{Dispersion Basis} \label{subsection:dispersion-basis}

Our main efficiency result holds in full generality, without the simplifying parametric assumption we imposed when previewing the results in Section \ref{section:dispersion-match-quality}.
To show this, we first develop the technical machinery to decompose the space of influence functions into a basis of orthogonal subspaces on which $\dispg(\cdot)$ is constant. We view these as the principal directions of the coupling $G$, and they are central to our efficiency analysis.

Define the square-integrable functions $\elltwo(F) = \{\phi: \ef[\phi(D)^2] < \infty\}$.
We assume $\si(\cdot) \in \elltwo(F)$ for $i \in [n]$ throughout.
Also denote the mean zero subspace $\ltwonot \equiv \{ \phi \in L^2(F) : \ef[\phi(D)] = 0 \}$.
We define the principal directions of a coupling $G$ to be the eigenspaces of the following linear operator, which captures the pairwise dependence structure of the design.

\begin{defn}[Coupling Operator] \label{defn:coupling-operator}
For $G \in \transportsym(F)$, let $U_G: L^2(F) \to L^2(F)$
\begin{equation} \label{eqn:coupling-operator}
(U_G \phi)(d) = E_G[\phi(D_i) \mid D_j = d \,], \quad \; i \neq j.
\end{equation}
\end{defn}

This operator is well-defined for any choice of $i \ne j$ due to exchangeability of $G$.
Since dispersion $\dispg(\phi)$ is invariant to constant shifts, it is convenient to work with the mean zero subspace $\ltwonot$ in what follows. 
Note that if $\phi \in \ltwonot$, then by tower law $\ef[(U_G \phi)(D)] = 0$, so $U_G$ also maps $\ltwonot$ to itself.
The coupling operator is self-adjoint and linear on the Hilbert space $\ltwonot$, so it has a real spectrum.  
We additionally require the following condition:  

\begin{assumption}[Discrete Spectrum] \label{assump:direct-sum}
There exist eigenspaces $(E_m)_{m \ge 1}$ of $U_G$ such that $\ltwonot = \oplus_{m \ge 1} E_m$.
\end{assumption}

Assumption \ref{assump:direct-sum} holds for any coupling $G \in \transportsym(F)$ if the marginal $F$ has finite support.
For continuous $F$, Lemma \ref{lem:direct-sum-satisfied} in the appendix shows that it holds for all of the univariate couplings described in Section \ref{subsection:matching-couplings}.
More generally, by the spectral theorem Assumption \ref{assump:direct-sum} holds whenever the operator $U_G$ is compact.

A key insight for our analysis is that $\dispg(\phi)$ decomposes orthogonally over the eigenspaces of $U_G$ for any $\phi \in L^2(F)$.

\begin{thm}[Dispersion Basis] \label{thm:dispersion-decomposition}
Let $G \in \transportsym(F)$.
\begin{enumerate}[label=(\alph*), itemsep=0.1pt]
\item If $E \subseteq \ltwonot$ is an eigenspace with $U_G \phi = \eval \phi$ for $\phi \in E$, then the dispersion $\dispg(\phi) = -(k-1)\lambda$ for all nonzero $\phi \in E$.
In particular, $\dispg(\phi) = \dispg(\psi)$ for all nonzero $\phi, \psi \in E$.
\item Impose Assumption \ref{assump:direct-sum}.
For $\phi \in L^2(F)$, let $P_m \phi = \argmin_{f \in E_m} \varf(\phi - f)$ be the orthogonal projection onto eigenspace $E_m$.
Then
\begin{equation}
\dispg(\phi) = \sum_{m \ge 1} \frac{\varf(P_m \phi)}{\varf(\phi)} \dispg(E_m).
\end{equation}
\end{enumerate}
\end{thm}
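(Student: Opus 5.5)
Both parts follow from Proposition~\ref{prop:dispersion-interpretation}, which writes dispersion as a normalized pairwise covariance, together with the identity that expresses that covariance through the coupling operator. Take $\phi \in L^2(F)$ with $0<\varf(\phi)<\infty$ and write $\bar\phi = \phi - \ef[\phi(D)] \in \ltwonot$. Dispersion is invariant to constant shifts, so we may replace $\phi$ by $\bar\phi$. Fix $i\neq j$. By exchangeability both margins equal $F$, so the tower law gives
\begin{equation*}
\covg(\phi(\Di),\phi(\Dj)) = \eg[\bar\phi(\Di)\bar\phi(\Dj)] = \eg\big[\bar\phi(\Dj)\,\eg[\bar\phi(\Di)\mid \Dj]\big] = \lag \bar\phi, U_G\bar\phi\rag_F,
\end{equation*}
where $\lag f,g\rag_F = \ef[f(D)g(D)]$. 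Dividing by $\varf(\phi)=\lag\bar\phi,\bar\phi\rag_F$ and applying Proposition~\ref{prop:dispersion-interpretation} yields the key formula
\begin{equation*}
\dispg(\phi) = -(k-1)\,\frac{\lag \bar\phi, U_G\bar\phi\rag_F}{\lag\bar\phi,\bar\phi\rag_F},
\end{equation*}
so dispersion is $-(k-1)$ times the Rayleigh quotient of $U_G$ at $\bar\phi$.

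\textbf{Part (a).} Take a nonzero $\phi\in E\subseteq\ltwonot$. Then $\bar\phi=\phi$ and $\varf(\phi)=\|\phi\|^2>0$. Substituting $U_G\phi=\lambda\phi$ into the Rayleigh quotient gives exactly $\lambda$, hence $\dispg(\phi)=-(k-1)\lambda$. This value depends only on $E$, which gives the second claim and justifies the notation $\dispg(E_m)$ used in part (b).

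\textbf{Part (b).} Under Assumption~\ref{assump:direct-sum}, decompose $\bar\phi=\sum_m P_m\bar\phi$ as an $L^2$-convergent sum. I would read $P_m\phi$ in the statement as the orthogonal projection of $\bar\phi$: minimizing $\varf(\phi-f)$ over $f\in E_m$ is the same as projecting $\bar\phi$, since $E_m$ consists of mean-zero functions. Two facts are then needed.
\begin{itemize}
\item $\varf(\phi)=\sum_m \|P_m\bar\phi\|^2$ by Parseval.
\item $\lag \bar\phi,U_G\bar\phi\rag_F=\sum_m \lambda_m\|P_m\bar\phi\|^2$.
\end{itemize}
For the second, $U_G$ is bounded: by Jensen $\|U_G\psi\|\le\|\psi\|$, since conditional expectation contracts in $L^2$. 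So $U_G$ commutes with the convergent sum, and $U_G\bar\phi=\sum_m\lambda_m P_m\bar\phi$. Orthogonality of the $E_m$ then kills the cross terms. Each $\lambda_m$ equals $-\dispg(E_m)/(k-1)$ by part (a), so dividing the two series gives the stated weighted sum. Terms with $P_m\bar\phi=0$ contribute zero, so the value assigned to $\dispg(E_m)$ for such terms does not matter.

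\textbf{Main obstacle.} The one nontrivial point is that distinct eigenspaces must be mutually orthogonal and the series must converge. Orthogonality follows because $U_G$ is self-adjoint: $\lag U_G f,g\rag_F=\eg[f(\Dj)g(\Di)]$ is symmetric in $f$ and $g$ by exchangeability. I read the direct sum in Assumption~\ref{assump:direct-sum} as an orthogonal Hilbert direct sum with distinct eigenvalues, merging any eigenspaces that share an eigenvalue. Under that reading, boundedness of $U_G$ justifies exchanging the operator with the infinite sum, and the rest is bookkeeping.
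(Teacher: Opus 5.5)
Your argument is correct. Part (a) is the same computation as in the paper. Part (b) takes a different route.

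\textbf{How the paper proves (b).} It does not expand the Rayleigh quotient directly. Instead it specializes the Efficiency Theorem (Theorem~\ref{thm:eigenspace-decomposition}) to the homogeneous problem $n=k$, $\si=\phi$. There every match-quality coefficient equals one and the weights reduce to $\varf(P_m\phi)/\varf(\phi)$. The left side is then identified with $\dispg(\phi)$ through Equation~\eqref{eqn:pure-dispersion}. This forces the paper to state the dependency order explicitly, (a) $\Rightarrow$ Theorem~\ref{thm:eigenspace-decomposition} $\Rightarrow$ (b), to avoid circularity.

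\textbf{What each route buys.} The paper's route is economical in context. The spectral bookkeeping (Parseval, boundedness of $U_G$ to pass it through the series, orthogonality killing cross terms) is done once for heterogeneous $\si$ and reused. Your route does that bookkeeping in the simplest case, is self-contained, and avoids any detour through estimator variances. Your identity $\dispg(\phi) = -(k-1)\lag\bar\phi,U_G\bar\phi\rag_F/\lag\bar\phi,\bar\phi\rag_F$ also makes Corollary~\ref{cor:extremal-directions} and the PCA analogy immediate. Through the spectral measure of $\bar\phi$ for the bounded self-adjoint $U_G$, it would express dispersion as an average over the spectrum even without Assumption~\ref{assump:direct-sum}.

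\textbf{A minor point on merging.} Once you read $\oplus$ as an orthogonal Hilbert sum, which the paper also implicitly does in proving Theorem~\ref{thm:eigenspace-decomposition}, repeated eigenvalues cause no difficulty. Your remark about merging eigenspaces that share an eigenvalue is therefore unnecessary. Merging would only matter if the given $E_m$ were non-orthogonal, and in that case it would change the projections $P_m$ that appear in the statement.
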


We view the eigenspaces of $U_G$ as the principal directions of the coupling $G$ in $\elltwo(F)$ with respect to sampling, since they control how much dispersion is produced when sampling from $G$.
In particular, $\dispg(\phi)$ will be large if $\phi(\cdot)$ is well approximated on the high dispersion eigenspaces, e.g.\ with $\dispg(E_m) \approx 1$.
The following corollary is immediate. 

\begin{cor}[Principal Directions] \label{cor:extremal-directions}
Suppose $\ltwonot = E \oplus E^\perp$ are eigenspaces of the operator $U_G$ with $\dispg(E) > \dispg(E^\perp)$.
Then
\begin{equation}
E = \argmax_{\substack{\phi \in \ltwonot \\ \phi \neq 0}} \, \dispg(\phi), \quad \quad E^\perp = \argmin_{\substack{\phi \in \ltwonot \\ \phi \neq 0}} \, \dispg(\phi).
\end{equation}
\end{cor}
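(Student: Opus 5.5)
The plan is to reduce Corollary~\ref{cor:extremal-directions} to the two parts of Theorem~\ref{thm:dispersion-decomposition}. We only need part (a), which says dispersion is constant on each eigenspace, and the orthogonal decomposition of part (b) applied to the two-term sum $\ltwonot = E \oplus E^\perp$. Assumption~\ref{assump:direct-sum} holds trivially here, with $E_1 = E$ and $E_2 = E^\perp$. Write $\delta_1 = \dispg(E)$ and $\delta_2 = \dispg(E^\perp)$, with $\delta_1 > \delta_2$. By part (a) these are well-defined constants, namely $-(k-1)$ times the respective eigenvalues. If one of the spaces is $\{0\}$, the strict inequality is vacuous, so I assume both are nontrivial.

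First, take nonzero $\phi \in \ltwonot$ and split it as $\phi = P_1\phi + P_2\phi$. Since $\phi$ has mean zero, $\varf(\phi) = \ef[\phi^2] > 0$, and Pythagoras gives $\varf(\phi) = \varf(P_1\phi) + \varf(P_2\phi)$. Part (b) then yields
\begin{equation*}
\dispg(\phi) = w\,\delta_1 + (1-w)\,\delta_2, \qquad w = \varf(P_1\phi)/\varf(\phi) \in [0,1].
\end{equation*}
So $\dispg(\phi)$ is a convex combination of $\delta_1$ and $\delta_2$. It is at most $\delta_1$, with equality if and only if $w = 1$, because $\delta_1 > \delta_2$. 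Equality holds exactly when $P_2\phi = 0$, that is, when $\phi \in E$. Every nonzero element of $E$ attains $\delta_1$ by part (a), so the maximum is achieved and the argmax set is exactly $E \setminus \{0\}$, which is what $E$ means in the statement. The symmetric argument with $w = 0$ shows the argmin set is $E^\perp \setminus \{0\}$.

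The only point needing care is the characterization of equality. I need that $w\delta_1 + (1-w)\delta_2 = \delta_1$ forces $w = 1$, which follows from the strict inequality. I also need that $\varf(P_2\phi) = 0$ together with mean zero forces $P_2\phi = 0$ in $\ltwonot$. I expect no real obstacle beyond this bookkeeping, and the corollary is immediate as the paper asserts.
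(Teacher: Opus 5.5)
Your proposal is correct and matches the paper's proof. Both apply Theorem~\ref{thm:dispersion-decomposition}(b) to the two-term decomposition, writing $\dispg(\phi)$ as a convex combination of $\dispg(E)$ and $\dispg(E^\perp)$, and then use the strict inequality to force $w=1$ at the maximum (and $w=0$ at the minimum). Your explicit Pythagoras step for the weight $1-w$ and your symmetric treatment of the argmin fill in details that the paper leaves implicit.
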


More generally, if $\ltwonot = \oplus_{m=1}^M E_m$, then each eigenspace can be obtained by maximizing $\dispg(\phi)$ subject to orthogonality to previously found eigenspaces.
This is analogous to principal components analysis (PCA), where each principal component can be found by maximizing data variance subject to orthogonality to previous components.

\textbf{Coupling Analysis.} 
Theorem~\ref{thm:dispersion-decomposition} also provides a simple recipe for computing $\dispg(\phi)$ by analyzing the eigenspaces and eigenvalues of the operator $U_G$. 
We illustrate this by computing the exact dispersion for the univariate Latin hypercube coupling.
Our analysis highlights the role played by smoothness of the function $\phi(\cdot)$ in guaranteeing high dispersion $\dispg(\phi)$ under this coupling.
We provide a detailed comparison with other couplings in Section~\ref{section:coupling-analysis} below.

\begin{ex}[Latin Hypercube, Dispersion] \label{ex:lhs-dispersion}
Let $F = \unif[0, 1]$ and define the histogram space $\ehist \equiv \{ \phi \in \ltwonot : \phi(d) = \sum_l \alpha_l \cdot \one(d \in J(l)) \}$ on bins $J(l) = [(l-1)/k, l/k)$ for $l \in [k]$.
We show that $\ltwonot = \ehist \oplus \ehist^\perp$, eigenspaces of $U_G$ with dispersions $1$ and $0$ respectively (Lemma \ref{lem:lhs-operator}).
See Figure~\ref{fig:lhs_projections} for a graphical illustration of these spaces.
Theorem \ref{thm:dispersion-decomposition} implies that for $G = $ Latin hypercube sampling (LHS), we have:
\begin{equation} \label{eqn:lhs-dispersion-val}
\dispg(\phi) = \frac{\varf(\phist \phi)}{\varf(\phi)}.
\end{equation}
Here, $\phist: L^2(F) \to \ehist$ is the orthogonal projection onto $\ehist$, which is the de-meaned best histogram approximation of $\phi(\cdot)$:
\begin{equation} \label{eqn:histogram-projection-disp}
(\phist \phi)(d) = \sum_l \ef[\phi(D) | D \in J(l)] \cdot \one(d \in J(l)) - \ef[\phi(D)].
\end{equation}

Dispersion is large under $G = \text{LHS}$ to the extent that $\phi(\cdot)$ is well approximated by histograms on the partition $(J(l))_{l=1}^k$.
For fixed tuple size $k$, well-approximation on the space $\ehist = \ehist(k)$ requires the function $\phi(\cdot)$ to be relatively smooth.
As $k$ increases, the partition becomes finer and the histogram space is more expressive, so we obtain high dispersion even for ``rougher'' functions $\phi$.
In particular, consider tuple sizes $k \le r$ and suppose $k$ divides $r$, i.e.\ $k \, | \, r$.
Then $\ehist(k) \sub \ehist(r)$, so the projection weight on $\ehist(r)$ is larger.
It follows that dispersion is increasing in tuple size:
\begin{equation}
\disp_{G_k}(\phi) \le \disp_{G_r}(\phi) \quad \; \quad \text{$k \leq r$ with $k \, | \, r$.}
\end{equation}
\end{ex}

\begin{figure}[t]
    \centering
    \includegraphics[width=\figwproj]{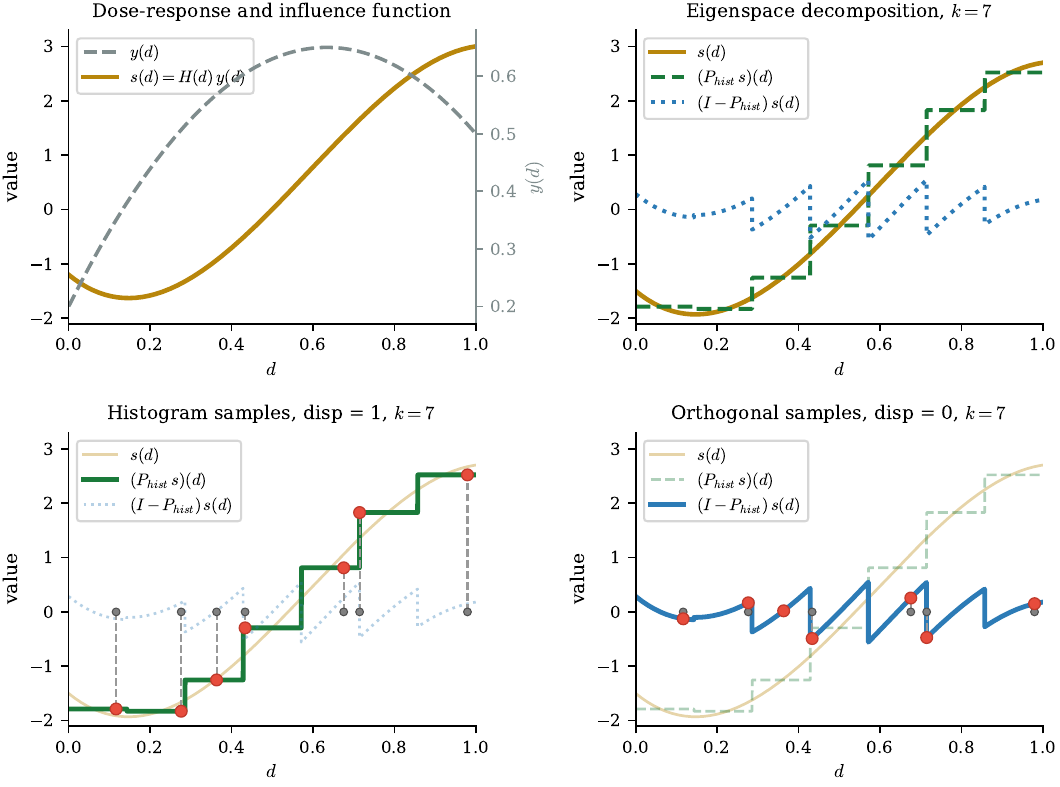}
    \caption{Eigenspace illustration for the LHS coupling with $k=7$, for the influence function $\si(d) = H(d)\, \yi(d)$ of $\thetablp$ (Example~\ref{ex:blp}) with $F = \unif[0,1]$.}
    \label{fig:lhs_projections}
\end{figure}

%\paragraph{Principal Directions.}

%Next, we show that the eigenspaces of $U_G$ are in fact the directions of maximal and minimal dispersion in $\ltwonot$.
%This motivates us to view these eigenspaces as the ``principal directions'' of $G$ in $\ltwonot$ for the purpose of random sampling.
%For $\ltwonot = E \oplus E^\perp$, without loss suppose the ordering $\dispg(E) \ge \dispg(E^\perp)$.
%In the corollary statement, we also implicitly let $\phi \ne 0$.

%\begin{cor} \label{cor:extremal-directions}
%If $\ltwonot = E \oplus E^\perp$ eigenspaces of $U_G$ with distinct eigenvalues,
%\begin{equation}
%E = \argmax_{\phi \in \ltwonot} \, \dispg(\phi), \quad \quad E^\perp = \argmin_{\phi \in \ltwonot} \, \dispg(\phi).
%\end{equation}
%\end{cor}

%More generally, if $\ltwonot = \oplus_{m=1}^M E_m$, then each eigenspace can be viewed as the set of directions that maximizes $\dispg(\phi)$ and is orthogonal to previously found eigenspaces.
%This is analogous to principal components analysis (PCA), where each principal component can be found by maximizing data variance subject to orthogonality to previous components.

\begin{remark}[Canonical Marginals]\label{rem:canonical-marginals}
It may appear that we need to separately solve for the eigenspace decomposition $\ltwonot = \oplus_{j \ge 1} E_j$ for each choice of marginal $F$ and coupling operator $U_G$.
In fact, it generally suffices to solve for the eigenspaces once for a canonical choice of marginal, e.g.\ $F = \unif[0, 1]^m$.
To see this, recall we constructed general $G \in \transportsym(F)$ by sampling high dispersion $(U_i)_{i=1}^k \sim G_U$ with each $U_i \sim \unif[0, 1]^m$, then setting $\Di = T(U_i)$ for a transport map $T: [0,1]^m \to \suppf$.
Then $\si(\Di) = \si(T(U_i)) \equiv \tilde \sfn_i(U_i)$ for the effective influence function $\tilde \sfn_i = \si \circ T$, so we can apply our results under the canonical marginal $\unif[0,1]^m$ with the influence functions $\si(\cdot)$ replaced by $\tilde \sfn_i(\cdot)$.
In particular, our analysis of LHS in Example \ref{ex:lhs-dispersion} with $F = \unif[0, 1]$ is without loss of generality.
\end{remark}

\subsection{Efficiency} \label{subsection:efficiency-theorem}

The dispersion basis in Theorem \ref{thm:dispersion-decomposition} allows us to generalize the simple relationship $\text{Efficiency} = \text{Dispersion} \times \text{Match Quality}$ from Theorem \ref{thm:variance-parametric} to general influence functions $\si(\cdot) \in L^2(F)$.
Impose Assumption \ref{assump:direct-sum} and let $\si^m(d) = (P_m \si)(d)$ be the projection of $\si(\cdot)$ onto $E_m$.
Define approximation weights
\begin{equation} \label{eqn:weight-def}
\weightm(s) \equiv \frac{n\inv \sum_i \varf(P_m \si)}{n\inv \sum_i \varf(\si)}.
\end{equation}
The weights $w_m(s)$ quantify how well the influence functions can be approximated using functions in eigenspace $E_m$, on average over the experimental units.
In what follows, denote $s = (\si)_{i=1}^n$ and $s^m = (\si^m)_{i=1}^n$.
We also write $\dispg(m)$ as the common dispersion on $E_m$.

\begin{thm}[Efficiency] \label{thm:eigenspace-decomposition}
Impose Assumption \ref{assump:direct-sum}.
Then for $G \in \transportsym(F)$
\begin{align} \label{eqn:efficiency-main}
1- \frac{\var_G(\est)}{\var_{\giid}(\est)} &= \sum_{m \ge 1} w_m(s) \cdot \disp_G(m) \matchcoeffk(s^m).
\end{align}
\end{thm}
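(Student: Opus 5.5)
The plan is to write both variances as quadratic forms in the influence functions and then diagonalize the cross terms using the eigenspaces of $U_G$. Independence across groups and exchangeability within groups give
\begin{equation*}
n^2\var_G(\est) = \sum_g \sum_{i\in[k]} \varf(\sig) + \sum_g \sum_{i\neq j\in[k]} \cov_G(\sig(D_{ig}), \sjg(D_{jg})),
\end{equation*}
while $n^2 \var_{\giid}(\est) = \sum_i \varf(\si) = n\viid(s)$. Constants do not affect variances, so we may replace each $\si$ by its centered version in $\ltwonot$. For centered $\phi,\psi$ and $i\neq j$, the tower law and Definition~\ref{defn:coupling-operator} give $\cov_G(\phi(D_i),\psi(D_j)) = \ef[\psi(D)\,(U_G\phi)(D)] = \lag \psi, U_G\phi\rag_F$. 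Hence
\begin{equation*}
1 - \frac{\var_G(\est)}{\var_{\giid}(\est)} = -\frac{1}{n\viid(s)}\sum_g\sum_{i\neq j}\lag \sjg, U_G\sig\rag_F.
\end{equation*}

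Next I decompose each centered $\sig = \sum_m \sig^m$ using Assumption~\ref{assump:direct-sum}. Since the $E_m$ are mutually orthogonal and $U_G$-invariant, with $U_G\sig^m = \lambda_m\sig^m$, the cross-eigenspace terms vanish. This gives $\lag \sjg, U_G\sig\rag_F = \sum_m \lambda_m \lag \sjg^m,\sig^m\rag_F$. Justifying the interchange of the infinite sum needs continuity of the inner product and boundedness of $U_G$, which holds because $U_G$ is a conditional expectation, a contraction on $\elltwo(F)$. By Theorem~\ref{thm:dispersion-decomposition}(a), $\lambda_m = -\dispg(m)/(k-1)$. Therefore
\begin{equation*}
1 - \frac{\var_G(\est)}{\var_{\giid}(\est)} = \sum_m \dispg(m)\,\frac{1}{n(k-1)\viid(s)}\sum_g\sum_{i\neq j}\lag \sjg^m,\sig^m\rag_F .
\end{equation*}

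The remaining step is to identify the inner sum with $w_m(s)\matchcoeffk(s^m)$; this is the same algebra as in Theorem~\ref{thm:variance-parametric}, applied eigenspace by eigenspace. Using $2\lag a,b\rag = \varf(a)+\varf(b)-\varf(a-b)$ for centered functions and counting that each $i$ appears in $k-1$ ordered pairs within its group,
\begin{equation*}
\frac{1}{n(k-1)}\sum_g\sum_{i\neq j}\lag \sjg^m,\sig^m\rag_F = \viid(s^m) - \vmatch(s^m) = \viid(s^m)\matchcoeffk(s^m).
\end{equation*}
Dividing by $\viid(s)$ and noting $\viid(s^m)/\viid(s) = w_m(s)$ gives the claim. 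The degenerate case $\viid(s^m)=0$ is consistent with the convention $\matchcoeffk=1$, since then both sides' $m$-th terms vanish.

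I expect the main obstacle to be bookkeeping rather than any conceptual step: the group-covariance expansion and the normalization constants must match Definition~\ref{defn:match-quality-coefficient} exactly. The infinite-sum exchange is routine by $L^2$ convergence of $\sum_m \sig^m$.
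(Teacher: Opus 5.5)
Your proof is correct and takes essentially the same route as the paper: center the influence functions, use $\cov_G(\phi(D_i),\psi(D_j)) = \lag U_G\phi,\psi\rag_F$, diagonalize the cross terms over the eigenspaces with $\lambda_m = -\dispg(m)/(k-1)$ from Theorem~\ref{thm:dispersion-decomposition}(a), and turn the within-group inner products into $\viid(s^m)-\vmatch(s^m)$ by polarization. The only difference is cosmetic: you work directly with the ratio of cross terms to $n\viid(s)$, whereas the paper first derives the nominal-variance identity of Corollary~\ref{cor:efficiency-convex} and then uses Parseval ($\viid(s)=\sum_m \viid(s^m)$) to divide through.
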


The weights $w_m(s)$ are non-negative and sum to one.
Thus, the theorem shows that the efficiency gain from coupling design randomization is a convex combination of products of dispersion $\dispg(m)$ and match quality $\matchcoeffk(s^m)$ across eigenspaces $E_m$.
The efficiency gain is therefore large when the influence functions $\si(\cdot)$ align well with the high dispersion eigenspaces $\dispg(m) \approx 1$ and good match quality $\matchcoeffk(s^m) \approx 1$.
We show that many of the couplings introduced above achieve high dispersion generically for smooth functions $\si(\cdot)$.

\begin{ex}[Latin Hypercube, Efficiency] \label{ex:lhs-analysis}
As in Example \ref{ex:lhs-dispersion}, for $G = $ LHS, we have $\ltwonot = \ehist \oplus \ehist^\perp$ with dispersions $1$ and $0$.
Then by Theorem \ref{thm:eigenspace-decomposition},
\begin{equation} \label{eqn:lhs-variance}
1- \frac{\var_G(\est)}{\var_{\giid}(\est)} = \whist(s) \cdot \matchcoeffk(s^{hist}).
\end{equation}
In particular, LHS is efficient when the influence functions $\si(\cdot)$ are smooth enough to be well-approximated on the histogram space $\ehist$, and when units are well-matched on the projection $s^{hist}$ of the influence functions onto this space.
Equation \eqref{eqn:lhs-variance} also shows the dispersion versus match quality tradeoff discussed in Section \ref{subsection:efficiency}.
As tuple size $k$ grows, the space $\ehist(k)$ becomes more expressive and $\whist(s) = \whist^k(s)$ generally increases, with $\whist^k(s) \le \whist^r(s)$ whenever $k \mid r$ as in Example~\ref{ex:lhs-dispersion}.
By contrast, match quality $\matchcoeffk(s^{hist})$ is generally decreasing in $k$.
\end{ex}

\paragraph{Approximate Stratification.}

Recall the definition of the match quality coefficient $\matchcoeffk(s) = 1 - \vmatch(s) / \viid(s)$ from Definition \ref{defn:match-quality-coefficient}, where $\viid(s)$ is the iid variance and $\vmatch(s)$ is the average within-group variance of the influence functions.
When units are well-matched, we have $\matchcoeffk(s) \approx 1$ so that $\vmatch(s) \ll \viid(s)$.

\begin{cor}[Nominal Variance] \label{cor:efficiency-convex}
Under the conditions of Theorem \ref{thm:eigenspace-decomposition},
\begin{equation} \label{eqn:efficiency-convex}
n \var_G(\est) = \sum_{m \ge 1}  \dispg(m) \cdot \vmatch(s^m) + [1- \dispg(m)] \cdot \viid(s^m).
\end{equation}
\end{cor}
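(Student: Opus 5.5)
The plan is to derive Corollary~\ref{cor:efficiency-convex} directly from Theorem~\ref{thm:eigenspace-decomposition} by multiplying through by the iid variance and unpacking the match quality coefficients. The first step is to compute the iid variance. Under $\giid$ the $\Di$ are independent with marginal $F$, so
\[
n\var_{\giid}(\est) = n \cdot n^{-2}\sum_i \varf(\si) = \viid(s).
\]
Multiplying the identity in Theorem~\ref{thm:eigenspace-decomposition} by $\viid(s)$ then gives
\[
n\var_G(\est) = \viid(s) - \sum_{m\ge 1} \viid(s)\, w_m(s)\, \dispg(m)\, \matchcoeffk(s^m).
\]

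Next, rewrite each summand. By the definition of the weights in \eqref{eqn:weight-def}, $\viid(s) w_m(s) = \en \varf(P_m \si) = \viid(s^m)$. By Definition~\ref{defn:match-quality-coefficient},
\[
\viid(s^m)\matchcoeffk(s^m) = \viid(s^m) - \vmatch(s^m).
\]
This identity also holds under the convention $\matchcoeffk \equiv 1$ when $\viid(s^m)=0$. In that case every $\si^m$ is $F$-a.s.\ constant, hence zero in $\ltwonot$, so $\vmatch(s^m)=0$ and both sides vanish.

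The remaining step is to decompose $\viid(s)$ over eigenspaces. Since the constants drop out of the variance and $\ltwonot=\oplus_m E_m$ is orthogonal (Assumption~\ref{assump:direct-sum}), Parseval gives $\varf(\si)=\sum_m \varf(P_m\si)$ for each $i$. Averaging over $i$ yields $\viid(s)=\sum_m \viid(s^m)$. Substituting both identities,
\[
n\var_G(\est) = \sum_m \Big[\viid(s^m) - \dispg(m)\big(\viid(s^m)-\vmatch(s^m)\big)\Big],
\]
which rearranges to $\sum_m \dispg(m)\vmatch(s^m) + [1-\dispg(m)]\viid(s^m)$, as claimed.

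The only delicate point is exchanging the finite average over $i$ with the possibly infinite sum over $m$, and splitting the combined series into two. All terms $\viid(s^m)$ are nonnegative and summable, with total $\viid(s)<\infty$. The dispersions are bounded, $-(k-1)\le \dispg(m)\le 1$, and $0\le \vmatch(s^m)\le 2k\,\viid(s^m)$ by $\varf(a-b)\le 2\varf(a)+2\varf(b)$. Hence every series converges absolutely and the rearrangement is justified. A degenerate $\viid(s)=0$ case needs no division: both sides are zero.
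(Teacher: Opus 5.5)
Your proof is correct, but it runs in the opposite direction from the paper.

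\textbf{How the paper proves it.} The paper does not derive this identity from Theorem~\ref{thm:eigenspace-decomposition}. The identity is the primary output of the spectral computation, and the theorem is then read off from it. Writing $r_i = \si - \ef[\si] = \sum_m r_i^m$ and using $\covg(\phi(D_1),\psi(D_2)) = \langle U_G\phi,\psi\rangle_F$ together with $U_G r^m = \lambda_m r^m$, the paper obtains
$n\var_G(\est) = \sum_m \bigl[\viid(s^m) + \lambda_m n^{-1}\sum_g\sum_{i\neq j}\langle r^m_{ig}, r^m_{jg}\rangle_F\bigr]$.
The polarization identity and $\lambda_m = -\dispg(m)/(k-1)$ from Theorem~\ref{thm:dispersion-decomposition}(a) then turn each bracket into $(1-\dispg(m))\viid(s^m) + \dispg(m)\vmatch(s^m)$. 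The theorem follows by dividing by $\viid(s)$, using exactly your identities: $n\var_{\giid}(\est)=\viid(s)$, $\viid(s)w_m(s)=\viid(s^m)$, $\viid(s^m)\matchcoeffk(s^m)=\viid(s^m)-\vmatch(s^m)$, and Parseval. So your argument is the paper's final step read backwards.

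\textbf{What each route buys.} If you treat the theorem as an already established result, your derivation is legitimate and shorter, and your convergence bookkeeping is adequate. The bound $\vmatch(s^m)\le 2k\,\viid(s^m)$ is looser than the $\vmatch\le 2\viid$ of Lemma~\ref{lem:var-components}, but it suffices. However, your route carries none of the spectral content. In the paper's own logical order it would be circular, since the theorem's proof passes through this identity. The direct computation also never divides. It therefore handles the cases $\viid(s)=0$ and $\viid(s^m)=0$ automatically. You must treat those cases by hand, as you correctly do, because the theorem is stated as a variance ratio with weights normalized by $\viid(s)$.
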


If $\dispg(m) = 1$, we obtain the variance $\vmatch(s^m)$.
For simple discrete treatments, $\dispg(\cdot) = 1$ for classic stratified randomization for large enough tuple size $k$ (Example \ref{ex:stratified-randomization}), so we can regard $\vmatch(s)$ as the perfectly stratified variance.
For general complex treatment spaces $\suppf \sub \mr^m$, perfect stratification is impossible, but the corollary shows a sense in which we can approximate it by using high dispersion couplings.

\begin{ex}[Latin Hypercube, Approximate Stratification] \label{ex:lhs-analysis-2}
Applying Corollary \ref{cor:efficiency-convex} to $G = $ LHS, we obtain $n \var_G(\est) = \vmatch(s^{hist}) + \viid(s - s^{hist})$.
%\begin{equation} %\label{eqn:lhs-variance-2}
%n \var_G(\est) = \vmatch(s^{hist}) + \viid(s - s^{hist}).
%\end{equation}
The LHS coupling behaves like perfect stratification $\vmatch(s^{hist})$ on the histogram projection $\si^{hist}(\cdot)$ of each influence function, but behaves like iid randomization on the residual $(\si - \si^{hist})(\cdot)$.
See Figure \ref{fig:lhs_projections} for a visualization of this effect. 
\end{ex}

\section{Coupling Analysis and Comparisons} \label{section:coupling-analysis}

%The analysis shows that RS also generically produces high dispersion for large enough $k$, but is less robust to adversarial influence function shapes $\si(\cdot)$ than LHS.
%By contrast, for moderate $k$ the Gaussian copula only produces high dispersion for approximately linear functions, a strong parametric restriction.

\subsection{Rotation Sampling} \label{subsection:rs}

We illustrate how the theory developed in the previous section can be applied to compare the efficiency and robustness of designs based on LHS with designs based on rotation sampling (RS) and the Gaussian copula, starting with rotation sampling.

Recall from Example \ref{ex:shifted-lattice} that a rotation sample $(U_i)_{i=1}^k \sim G_U$ lies on a randomly shifted equispaced grid, so that $U_i = U_j \oplus l/k$ for some $l \in [k]$ and $i, j \in [k]$, where $a \oplus b \equiv a + b \pmod 1$ for $a, b \in \mr$.
The canonical marginal for univariate rotation sampling is thus $F = \unif[0, 1]$.
We show that this coupling is efficient if influence functions $\si(\cdot)$ are smooth, in a sense defined below.

Suppose a function $\phi(\cdot)$ is perfectly $1/k$-cyclic, with $\phi(x) = \phi(x \oplus 1/k)$ for all $x \in [0, 1]$.
Then under rotation sampling, the samples $\phi(U_i) = \phi(U_j)$ for $i, j \in [k]$ are perfectly correlated, effectively yielding a clustered sample.
The low dispersion eigenspace turns out to be exactly this space of perfectly cyclic functions:
\begin{equation} \label{eqn:cyclic-subspace}
E_{c} \equiv \{\phi \in \ltwonot: \phi(x) = \phi(x \oplus 1/k), \forall x \in [0, 1] \}.
\end{equation}

The acyclic subspace is defined as the orthogonal complement $E_{a} \equiv E_{c}^\perp$ in $\ltwonot$.
Our analysis shows that $\ltwonot = E_c \oplus E_a$, which are eigenspaces of the coupling operator $U_G$ with dispersions $\disp_G(E_a) = 1$ and $\disp_G(E_c) = -(k-1)$.
Projections on $E_c$ and $E_a$ have closed forms, with $P_c \phi(d) = k\inv \sum_{l=1}^k \phi(d \oplus l/k) - \ef[\phi(D)]$ the de-meaned cyclic average and residual $P_a \phi(d) = \phi(d) - k\inv \sum_{l=1}^k \phi(d \oplus l/k)$.
%See Figure~\ref{fig:rs_projections} for an illustration of these projections.
Then by Theorem \ref{thm:dispersion-decomposition}, for any $\phi \in \elltwo(F)$,
\begin{equation} \label{eqn:dispersion-rs}
\disp_G(\phi) = \frac{\varf(P_a \phi)}{\varf(\phi)} - (k-1) \frac{\varf(P_c \phi)}{\varf(\phi)}.
\end{equation}

Recall that for LHS, the low dispersion subspace $\ehist^\perp$ behaves like an iid design with $\dispg(\ehist^\perp) = 0$ (Example \ref{ex:lhs-analysis}).
By contrast, the low dispersion space under RS has $\dispg(E_c) = -(k-1)$, actually harming relative efficiency through the negative term in Equation \eqref{eqn:dispersion-rs}.
Thus, the RS coupling is less robust to adversarial influence function shapes than LHS.
However, for this worst case to arise in practice, the influence functions $\si(\cdot)$ must be strongly cyclic with high frequency, which will be rare in typical social science applications.

\begin{remark}[Smoothness and Shape]
Equation \eqref{eqn:dispersion-rs} shows that $\dispg(\phi)$ is large under the RS coupling if $\phi(\cdot)$ is well-approximated on the acyclic space $E_a$.
This amounts to a combination of smoothness and shape restrictions on $\phi(\cdot)$.
To see this, expand $E_c$ in a Fourier basis, noting $E_c = \linearspan \{ \sin(2\pi l k x), \cos(2\pi l k x) : l \ge 1 \}$ consists of frequencies that are exact multiples of $k$, while $E_a$ contains all remaining frequencies.
Since $E_a$ includes all frequencies lower than $k$, smooth functions are well-approximated on $E_a$ for moderate $k$.
As $k$ increases, the cyclic space $E_c$ shrinks, reducing the degree of smoothness needed for high dispersion.
See also \cite{lecuyer2000variance} for a related Fourier analysis perspective on the variance of (non-exchangeable) shifted lattice designs.
\end{remark}

Let $w_a(s)$ and $w_c(s)$ denote the approximation weights (Equation \eqref{eqn:weight-def}) on $E_a$ and $E_c$ respectively, so that $w_c(s)$ quantifies how cyclic the influence functions $\si(\cdot)$ are, on average over $i \in [n]$.
We apply Theorem \ref{thm:eigenspace-decomposition} to $G =$ RS, using the facts above.

\begin{thm}[Rotation Sampling]\label{thm:rs-variance}
Let $G$ be the RS coupling.
Then
\begin{equation} \label{eqn:efficiency-rs}
1 - \frac{\var_G(\est)}{\var_{\giid}(\est)} = w_a(s) \cdot \matchcoeffk(s^a) - w_c(s) \cdot (k-1) \cdot \matchcoeffk(s^c).
\end{equation}
\end{thm}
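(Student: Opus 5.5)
The plan is to apply Theorem~\ref{thm:eigenspace-decomposition} directly. To do that, I first need to check that the RS coupling satisfies Assumption~\ref{assump:direct-sum} with exactly two eigenspaces, $E_c$ and $E_a$. I then need their dispersions, $\dispg(E_a)=1$ and $\dispg(E_c)=-(k-1)$. By Remark~\ref{rem:canonical-marginals} it is enough to work with $F=\unif[0,1]$ and the effective influence functions $\si\circ T$.

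The key computation is the coupling operator. Under rotation sampling, conditional on $U_j=d$, the shift $l$ with $U_i = U_j\oplus l/k$ is uniform on $\{1,\dots,k-1\}$ for $i\neq j$. This follows from the random permutation $\pi$ in Example~\ref{ex:shifted-lattice}: $\pi(i)-\pi(j) \bmod k$ is uniform on the nonzero residues and independent of $U_j$, because $U_j$ is uniform for any fixed $\pi$ thanks to the independent shift $S$. Hence
\begin{equation*}
(U_G\phi)(d)=\frac{1}{k-1}\sum_{l=1}^{k-1}\phi(d\oplus l/k)=\frac{1}{k-1}\Big(\sum_{l=1}^{k}\phi(d\oplus l/k)-\phi(d)\Big).
\end{equation*}
Writing $A\phi(d)=k\inv\sum_{l=1}^k\phi(d\oplus l/k)$, this is $U_G=\frac{k}{k-1}A-\frac{1}{k-1}I$.

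Next I use the structure of $A$. On $\ltwonot$, $A$ is the orthogonal projection onto $E_c$. It is idempotent because $A\phi$ is $1/k$-cyclic and $A$ fixes cyclic functions. It is self-adjoint because the shifts $x\mapsto x\oplus l/k$ preserve Lebesgue measure and the family of shifts is closed under inverses. It also preserves the mean, so it maps $\ltwonot$ into itself. Consequently $P_c=A$ on $\ltwonot$, which matches the closed form in the text, and $P_a=I-A$.

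This gives the spectrum. On $E_c$, $U_G$ has eigenvalue $\frac{k}{k-1}-\frac{1}{k-1}=1$, so Theorem~\ref{thm:dispersion-decomposition}(a) yields $\dispg(E_c)=-(k-1)$. On $E_a$, where $A=0$, the eigenvalue is $-\frac{1}{k-1}$, giving $\dispg(E_a)=1$. Since $\ltwonot=E_c\oplus E_a$ as the range and kernel of an orthogonal projection, Assumption~\ref{assump:direct-sum} holds.

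Substituting into Theorem~\ref{thm:eigenspace-decomposition}, with $m\in\{a,c\}$, weights $w_a(s),w_c(s)$, and projections $s^a,s^c$, gives Equation~\eqref{eqn:efficiency-rs} directly.

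The only delicate step is the conditional law of the shift given $U_j$. Everything else is linear algebra. That step relies on exchangeability (the permutation $\pi$) together with the independence of $S$. Without $\pi$, the offset $\pi(i)-\pi(j)$ would be deterministic, and $U_G$ would not be self-adjoint in the same symmetric form.
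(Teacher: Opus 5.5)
Your proposal is correct and follows the paper's route. The paper proves this by combining Lemma~\ref{lem:rs-operator} with Theorem~\ref{thm:eigenspace-decomposition}: the lemma gets the operator formula from the bivariate law in Lemma~\ref{lem:rs-distribution}, shows the cyclic average is a self-adjoint idempotent projecting onto $E_c$, and writes $U_G = (k-1)^{-1}(kS - I)$ to obtain eigenvalues $1$ on $E_c$ and $-(k-1)^{-1}$ on $E_a$, exactly as you do.
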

The low dispersion cyclic space $E_c = E_c(k)$, where the design performs poorly, typically shrinks as $k$ increases.
In particular, we have $E_c(r) \sub E_c(k)$ for $k \mid r$, so the weights satisfy $w_c^r(s) \le w_c^k(s)$.
This shows a sense in which dispersion is generally increasing in tuple size $k$ for rotation sampling.

\subsection{Gaussian Copula} \label{subsection:gaussian}

We show that when $k$ is moderate, the Gaussian copula produces high dispersion only for approximately linear influence functions $\si(\cdot)$, a restrictive condition relative to the LHS and RS couplings, which produce high dispersion generically for smooth enough functions.
This suggests caution when using the Gaussian copula to generate dispersion in experimental design.

For the Gaussian copula, it is convenient to use canonical measure $F = \normal(0, 1)$.
In this case, the coupling operator $U_G$ coincides with the Mehler kernel operator \citep{mehler1866}.
We use the eigenbasis expansion $\ltwonot = \oplus_{m \ge 1} \linearspan(\hm)$ for $U_G$, where $(\hm)_{m \ge 1}$ are the normalized probabilist's Hermite polynomials \citep{thangavelu1993}.
Each $\hm(x)$ is a polynomial of order $m$, for example, $h_1(x) = x$ and $h_2(x) = (x^2 - 1)/\sqrt{2}$.
As shown in Theorem~\ref{thm:dispersion-decomposition}, the dispersion of a polynomial $h_m$ can be obtained from its eigenvalue $\lambda_m$ under $U_G$. Thus one calculates that $\dispg(\hm) = [-1/(k-1)]^{m-1}$.
The projections onto $E_m = \linearspan(\hm)$ are $\si^m(\cdot) = \covf(\si, \hm) \cdot \hm(\cdot)$.
An application of Theorem \ref{thm:eigenspace-decomposition} yields the following result.

\begin{thm}[Gaussian]\label{thm:gaussian-efficiency}
Let $G$ be the Gaussian coupling.
For $k \ge 3$
\begin{equation}
1 - \frac{\var_G(\est)}{\var_{\giid}(\est)} = \sum_{m \ge 1} w_m(s) \cdot [-1/(k-1)]^{m-1} \cdot \matchcoeffk(s^m).
\end{equation}
\end{thm}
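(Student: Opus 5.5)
The plan is to apply Theorem~\ref{thm:eigenspace-decomposition} directly. Two facts are needed: Assumption~\ref{assump:direct-sum} holds with $E_m = \linearspan(\hm)$, and $\dispg(E_m) = [-1/(k-1)]^{m-1}$. By Remark~\ref{rem:canonical-marginals}, we may work with the canonical marginal $F = \normal(0,1)$. In that case each $\Di = Z_i$ and $Z \sim \normal(0,\Sigma)$ with $\Sigma_{ij} = \rho \equiv -(k-1)\inv$ for $i \neq j$. The influence functions are replaced by $\si \circ T$, where $T = F\inv \circ \Phi$ (or the relevant transport map). The construction only needs $\Sigma$ positive semidefinite, and its eigenvalues are $1+(k-1)\rho = 0$ and $1 - \rho > 0$, so $G \in \transportsym(F)$.

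\textbf{Step 1: identify the coupling operator.} For $i \neq j$, $(Z_i, Z_j)$ is bivariate normal with unit variances and correlation $\rho$. Hence $Z_i \mid Z_j = d \sim \normal(\rho d, 1-\rho^2)$, and
\[
(U_G\phi)(d) = E[\phi(\rho d + \sqrt{1-\rho^2}\, W)], \qquad W \sim \normal(0,1).
\]
This is the Ornstein--Uhlenbeck/Mehler operator with parameter $\rho$. The classical Mehler formula (equivalently, the generating function $e^{tx - t^2/2} = \sum_m h_m(x) t^m/\sqrt{m!}$) gives $U_G h_m = \rho^m h_m$. To verify this quickly, compute $E[e^{t(\rho d + \sqrt{1-\rho^2}W) - t^2/2}] = e^{(t\rho) d - (t\rho)^2/2}$ and match the coefficients of $t^m$. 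Because the normalized Hermite polynomials form a complete orthonormal basis of $L^2(\normal(0,1))$, and $h_0 = 1$ spans the constants, we get $\ltwonot = \oplus_{m\ge1}\linearspan(h_m)$. So Assumption~\ref{assump:direct-sum} holds, and the assumed eigenspaces are exactly these spans.

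\textbf{Step 2: dispersion on each eigenspace.} Theorem~\ref{thm:dispersion-decomposition}(a) gives
\[
\dispg(h_m) = -(k-1)\rho^m = -(k-1)\left(-\frac{1}{k-1}\right)^m = \left[-\frac{1}{k-1}\right]^{m-1}.
\]
For $k \ge 3$ we have $|\rho|<1$, so the eigenvalues $\rho^m$ are pairwise distinct across $m \ge 1$, and each eigenspace $E_m$ is exactly one-dimensional. For $k=2$, $\rho = -1$ and the operator degenerates to $\phi \mapsto \phi(-\cdot)$, with only two eigenspaces (even and odd). I suspect this is why the statement requires $k \ge 3$, though the formula would remain true after grouping terms.

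\textbf{Step 3: conclude.} Substituting $\dispg(m) = [-1/(k-1)]^{m-1}$ into Theorem~\ref{thm:eigenspace-decomposition}, with projections $\si^m = \covf(\si, h_m) h_m$, yields the displayed identity. The only potential obstacle is rigor in Step 1: the eigenrelation must hold in $L^2$ and must extend to the whole space. This follows because $U_G$ is a bounded (contractive) self-adjoint operator. It is diagonal on an orthonormal basis, and therefore it acts on any $\phi = \sum_m c_m h_m$ termwise by continuity.
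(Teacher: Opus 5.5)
Your proposal is correct and follows the paper's route. Like the paper, you establish $U_G h_m = \rho^m h_m$ with $\rho = -(k-1)^{-1}$ on the canonical $\normal(0,1)$ space (Lemma~\ref{lem:gaussian-operator}) and substitute $\dispg(m) = -(k-1)\rho^m = [-1/(k-1)]^{m-1}$ into the eigenspace decomposition; the paper does the same substitution through the identity $n\var_G(\est) = \viid(s) + \sum_m \lambda_m c(s^m)$ together with $c(s^m) = (k-1)\viid(s^m)\matchcoeffk(s^m)$. The only difference is how the eigenrelation is verified. You use the Hermite generating function, whereas the paper uses the Mehler-kernel expansion with a Hilbert--Schmidt truncation bound, which is where the paper actually needs $k \ge 3$ (it requires $\sum_m \rho^{2m} < \infty$, and at $\rho = -1$ there is no kernel). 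So the paper's reason for $k \ge 3$ is not the degenerate eigenspaces you suspected, and your argument also covers $\rho = -1$, consistent with your remark that the formula persists at $k = 2$.
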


Define the space of linear functions $E_L = \{\phi : \phi(d) = a + bd\}$.
For any non-constant $\phi \in E_L$, we have $\dispg(\phi) = \dispg(h_1) = 1$, so this is a high dispersion subspace.
By contrast, $|\dispg(\hm)| \le (k-1)^{-(m-1)}$ for any Hermite polynomial of order $m \ge 2$, which is rapidly decreasing as tuple size $k$ increases.
For larger $k$, the Gaussian copula produces high dispersion only for the linear component of $\si(\cdot)$, performing no better than iid randomization on $E_L^\perp$.

\begin{cor}[Linear Influences] \label{cor:gaussian-limit}
Let $\si^L(\cdot)$ denote the orthogonal projection of $\si(\cdot)$ onto $E_L$ in $\elltwo(F)$.
The variance $n \var_G(\est) = \vmatch(s^L) + \viid(s - s^L) + O(k\inv)$ and
\begin{equation}
1 - \frac{\var_G(\est)}{\var_{\giid}(\est)} = w_L \cdot \matchcoeffk(s^L) + O(k\inv).
\end{equation}
\end{cor}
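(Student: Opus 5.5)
The plan is to start from Corollary~\ref{cor:efficiency-convex}, specialized to the Gaussian coupling with eigenspaces $E_m = \linearspan(\hm)$ and dispersions $\dispg(m) = [-1/(k-1)]^{m-1}$ from the discussion preceding Theorem~\ref{thm:gaussian-efficiency}. Since $\hm$ has degree $m$, the first eigenspace $E_1 = \linearspan(h_1)$ is exactly the mean-zero part of $E_L$. The projection $\si^L$ then agrees with $\si^1$ up to the constant $\ef[\si(D)]$, and constants affect neither $\vmatch$ nor $\viid$. For $m=1$ we have $\dispg(1)=1$, so the $m=1$ term of Corollary~\ref{cor:efficiency-convex} is exactly $\vmatch(s^L)$.

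For $m \ge 2$, I would write each term as $\viid(s^m) + \dispg(m)[\vmatch(s^m) - \viid(s^m)]$. Two facts handle the sum:
\begin{itemize}
\item Summing $\viid(s^m)$ over $m\ge2$ gives $\viid(s-s^L)$, by orthogonality of the eigenspaces: $\varf(\si - \si^L) = \sum_{m\ge 2}\varf(P_m\si)$.
\item The remainder satisfies $\big|\sum_{m\ge2}\dispg(m)[\vmatch(s^m)-\viid(s^m)]\big| \le (k-1)^{-1}\sum_{m\ge 2}|\vmatch(s^m)-\viid(s^m)|$, because $|\dispg(m)| \le (k-1)^{-(m-1)} \le (k-1)^{-1}$ for $m \ge 2$.
\end{itemize}

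To bound the remainder I would use that $\vmatch(\phi) \le C_k\,\viid(\phi)$. This follows from $\varf(a-b)\le 2\varf(a)+2\varf(b)$ and the normalization in \eqref{eqn:influence-discrepancy}, and in fact gives $\vmatch \le \frac{k}{k-1}\viid$, which also matches the bound $\matchcoeffk \ge -(k-1)^{-1}$ stated in the paper. Hence $|\vmatch(s^m)-\viid(s^m)| \lesssim \viid(s^m)$, and the remainder is at most a constant times $(k-1)^{-1}\viid(s-s^L) = O(k^{-1})$. Here the $O(k^{-1})$ is understood relative to $\viid(s)$, which I take as fixed or bounded; this interpretation of the $O(\cdot)$ is the main point needing care. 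Interchanging the infinite sum with the bounds is justified by absolute summability, since $\sum_m \viid(s^m) = \viid(s) < \infty$.

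For the efficiency statement, I divide the variance identity by $n\var_{\giid}(\est) = \viid(s)$; this equality is the $\dispg\equiv 0$ case of Corollary~\ref{cor:efficiency-convex}. This gives
\[
1 - \frac{\var_G(\est)}{\var_{\giid}(\est)} = \frac{\viid(s^L) - \vmatch(s^L)}{\viid(s)} + O(k^{-1}),
\]
and the first term equals $w_L\,\matchcoeffk(s^L)$ by the definitions of $w_m$ and $\matchcoeffk$, using $\viid(s) = \viid(s^L) + \viid(s-s^L)$. The degenerate case $\viid(s^L)=0$ is consistent with the convention $\matchcoeffk = 1$, since then $w_L = 0$. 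Alternatively, this can be read directly off Theorem~\ref{thm:gaussian-efficiency}: bound the $m\ge2$ terms by $w_m|\matchcoeffk(s^m)|(k-1)^{-1}$ with $|\matchcoeffk|\le 1$, using $\matchcoeffk \ge -(k-1)^{-1}\ge -1$, and $\sum w_m \le 1$. This second route is cleaner and I would present it for the ratio, then multiply back by $\viid(s)$ to get the variance form. The step requiring the most attention is the uniform bound $|\matchcoeffk(s^m)|\le 1$, together with the identification of $E_1$ with the mean-zero part of $E_L$.
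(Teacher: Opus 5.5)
Your proposal is correct and follows essentially the paper's proof. The paper also isolates the $h_1$ term as $\vmatch(s^1)=\vmatch(s^L)$ (using $\si^L=\ef[\si]+\si^1$) and sums $\viid(s^m)$ over $m\ge 2$ by Parseval to get $\viid(s-s^L)$. It then bounds the remainder $\sum_{m\ge 2}\lambda_m c(s^m)$ (which is exactly your $\sum_{m\ge2}\dispg(m)[\vmatch(s^m)-\viid(s^m)]$) by $(k-1)^{-1}\viid(s-s^L)$ via $|c(s^m)|\le (k-1)\viid(s^m)$, and divides by $\viid(s)$.

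One small slip: $\varf(a-b)\le 2\varf(a)+2\varf(b)$ alone only gives $\vmatch\le 2\viid$. The sharper $\frac{k}{k-1}\viid$ comes from the within-group identity in Lemma~\ref{lem:var-components}. Either bound yields $|\vmatch(s^m)-\viid(s^m)|\le\viid(s^m)$, so nothing else in your argument changes.
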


Corollary \ref{cor:gaussian-limit} shows that, for moderate $k$, the Gaussian copula is efficient only if the influence functions $\si(\cdot)$ are approximately linear.
Note that $\si(\cdot)$ may not be linear even if the potential outcomes $\yi(\cdot)$ are linear in the treatment, since each $\si(\cdot)$ also depends on the estimand and the design.
For example, in the cash transfer experiment of Example \ref{ex:continuous-dose}, if the responses are linear in the grant amount, $\yi(d) = a_i + b_i d$, the influence function for the best linear approximation coefficient (Example \ref{ex:blp}) is $\si(d) = \yi(d)H(d)$ for $H(d)=(d-\ef[D]) / \varf(D)$, which is quadratic in $d$.

\begin{remark}[Antithetic Variates]
The results above characterized the performance of the Gaussian copula for moderate to large $k$.
At the opposite extreme $k=2$, the Gaussian copula is equivalent to antithetic variates.
In Appendix \ref{subsection:av}, we apply our core efficiency results to study antithetic variates in more detail, extending the original discussion in Section \ref{paragraph:antithetic-variates}.
\end{remark}

\subsection{Parametric vs.\ Nonparametric Couplings} \label{subsection:coupling-comparisons}

Our analysis of the various couplings above suggests that they can be sorted into two categories: non-parametric couplings like LHS and RS, which produce high dispersion under weak smoothness conditions on $\si(\cdot)$, and parametric couplings like the Gaussian copula, which produce high dispersion only for a restricted class of influence functions with specific shapes.
Here we formalize this claim.

For any $\phi: [0,1] \to \mr$, define \emph{total variation} $V_{[0,1]}(\phi) = \sup_\Pi \sum_{j=1}^r |\phi(t_j) - \phi(t_{j-1})|$, where the supremum is over all finite partitions $(t_j)_{j=0}^r$ of $[0,1]$, and similarly $V_J(\phi)$ over any subinterval $J \sub [0,1]$.
Define the class of \emph{bounded variation} functions $\mc H(b) \equiv \{\phi: V_{[0,1]}(\phi) \le b\}$.
This provides a weak notion of smoothness for functions on $[0,1]$, allowing for discontinuous functions (e.g.\ histograms) provided they don't oscillate too much over the interval $[0, 1]$.
This condition is meaningful for both discrete and continuous treatments.
For example, if $F$ is discrete, the effective influence function $\sitilde(\cdot)$ on $[0, 1]$ defined by $\si(D) = \si(F\inv(U)) = \sitilde(U)$ will be discontinuous, but may have small total variation.

\begin{thm}[Dispersion Limits] \label{thm:dispersion-limits}
Let $\phi \in L^2(F)$.
As $k \to \infty$, if $G = $ Gaussian, then $\dispg(\phi) \to \varf(P_L \phi) / \varf(\phi)$ for canonical marginal $F = \normal(0, 1)$.
If $G \in \{\text{LHS}, \text{RS}\}$ with canonical marginal $F = \unif[0, 1]$, then for any $\eps > 0$,
\begin{equation} \label{eqn:dispersion-limit-uniform}
\inf_{\substack{\phi \in \mc H(b) \\ \varf(\phi) > \eps}} \dispg(\phi) = 1 + o(1).
\end{equation}
\end{thm}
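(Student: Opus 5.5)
The plan is to reduce every case to the explicit spectral formulas from Theorem~\ref{thm:dispersion-decomposition}, and then control the low-dispersion projection weights.

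\textbf{Gaussian coupling.} With $F = \normal(0,1)$, the mean-zero part of $\phi$ expands as $\phi - \ef[\phi] = \sum_{m\ge1} c_m h_m$ with $\sum_m c_m^2 = \varf(\phi)$. By Theorem~\ref{thm:dispersion-decomposition}(b) and the eigenvalue computation of Section~\ref{subsection:gaussian},
\begin{equation*}
\dispg(\phi) = \frac{c_1^2 + \sum_{m\ge2} c_m^2\,[-1/(k-1)]^{m-1}}{\varf(\phi)}.
\end{equation*}
The tail term is bounded in absolute value by $(k-1)^{-1}\sum_{m\ge 2}c_m^2/\varf(\phi) \le (k-1)^{-1}$. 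Since $c_1^2 = \varf(P_L\phi)$, this gives $\dispg(\phi) = \varf(P_L\phi)/\varf(\phi) + O(k^{-1})$, and the first limit follows.

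\textbf{LHS.} By Example~\ref{ex:lhs-dispersion}, $1-\dispg(\phi) = \varf(\phi - \phi_{hist})/\varf(\phi)$, where $\phi_{hist}$ is the bin-average approximation, so that $\phi - \phi_{hist} = \phi - \phist\phi - \ef\phi$. On each bin $J(l)$ we have $|\phi(x) - \ef[\phi\mid J(l)]| \le V_{J(l)}(\phi)$. Hence
\begin{equation*}
\varf(\phi-\phi_{hist}) \le \sum_l \frac{1}{k}\, V_{J(l)}(\phi)^2 \le \frac{1}{k}\Big(\max_l V_{J(l)}(\phi)\Big)\sum_l V_{J(l)}(\phi) \le \frac{b^2}{k}.
\end{equation*}
On $\{\varf(\phi)>\eps\}$ this yields $1 - \dispg(\phi) \le b^2/(k\eps) \to 0$ uniformly. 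Since $\dispg\le 1$, the infimum equals $1+o(1)$.

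\textbf{RS.} By equation~\eqref{eqn:dispersion-rs}, $1-\dispg(\phi) = k\,\varf(P_c\phi)/\varf(\phi)$, so I need $\varf(P_c\phi) = o(1/k)$ uniformly over $\mc H(b)$. This is the main obstacle, because the factor $k$ amplifies the cyclic weight.

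Let $\bar\phi(x) = k^{-1}\sum_l \phi(x\oplus l/k)$, which is $1/k$-periodic, so $P_c\phi = \bar\phi - \ef\phi$. Its variance equals its variance on $[0,1/k)$, since $\ef\bar\phi = \ef\phi$.

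For $x,y \in [0,1/k)$, the difference $|\bar\phi(x)-\bar\phi(y)|$ is at most $k^{-1}\sum_l V_{J(l)}(\phi)$, with the wrap-around handled by the circle variation (at most $2b$). This gives $|\bar\phi(x)-\bar\phi(y)| \le 2b/k$. Therefore $\var(P_c\phi) \le 4b^2/k^2$, and
\begin{equation*}
k\,\varf(P_c\phi) \le \frac{4b^2}{k}.
\end{equation*}
Dividing by $\eps$ gives $1-\dispg(\phi)\le 4b^2/(k\eps)\to0$ uniformly, which completes the argument.

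I would double-check the periodic variation bound, namely that the oscillation of $\bar\phi$ over one period is at most the total circular variation divided by $k$. This is the delicate step, and the wrap-around at the endpoint $1$ is the only technical nuisance.
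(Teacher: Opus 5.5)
Your proposal is correct and follows essentially the paper's route. For the Gaussian case you use the Hermite eigen-expansion with the $(k-1)^{-1}$ tail bound, which the paper packages as Corollary~\ref{cor:gaussian-limit} with $n=k$. For LHS you use bin-wise variation, as in Lemma~\ref{lem:lhs-uniform-bound}. For RS you use $1-\dispg(\phi)=k\,w_c$ together with a uniform $O(b/k)$ bound on the cyclic average, as in Lemma~\ref{lem:rs-uniform-bound}.

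The wrap-around step you flagged is not actually delicate. For $x,y\in[0,1/k)$, the points $x\oplus l/k$ and $y\oplus l/k$ always lie in the same bin. So you get oscillation at most $k^{-1}\sum_l V_{J(l)}(\phi)\le b/k$ directly, with no need for the circle variation.
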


The theorem shows that the Gaussian copula produces high dispersion only for approximately linear functions $\phi(\cdot)$, while the non-parametric LHS and RS couplings produce high dispersion for any function $\phi(\cdot)$ with bounded total variation.

To relate the smoothness of the influence functions $\si(\cdot)$ to the variance of the estimator $\est$, define a smoothness coefficient $\etatv(s) \equiv  \en[V_{[0,1]}(\si)^2] / \viid(s)$, which is a normalized measure of average total variation of the $\si(\cdot)$.

\begin{thm}[Efficiency from Smoothness]\label{thm:structured-worst-case}
Let $G \in \{\text{LHS}, \text{RS}\}$ with canonical marginal $F = \unif[0, 1]$.
Then
\begin{equation}\label{eqn:smoothness-efficiency-bound}
1 - \frac{\var_G(\est)}{\var_{\giid}(\est)} \;\geq\; \matchcoeffk(s) - \frac{\etatv(s)}{k}.
\end{equation}
\end{thm}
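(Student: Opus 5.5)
The plan is to start from the nominal-variance decomposition in Corollary~\ref{cor:efficiency-convex}, specialized to the two-eigenspace structure of each coupling. We then reduce the claim to a single approximation inequality controlled by total variation.

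Two facts are used throughout. First, $n\var_{\giid}(\est) = \viid(s)$. Second, the orthogonal projections $P_m$ are linear, so $P_m(\sig - \sjg) = \sig^m - \sjg^m$. Together with orthogonality of the eigenspaces, this gives the Pythagorean splits
\begin{equation*}
\vmatch(s) = \sum_m \vmatch(s^m), \qquad \viid(s) = \sum_m \viid(s^m).
\end{equation*}
The target inequality $1 - \var_G(\est)/\var_{\giid}(\est) \ge \matchcoeffk(s) - \etatv(s)/k$ is then equivalent to
\begin{equation*}
n\var_G(\est) \le \vmatch(s) + \en[V_{[0,1]}(\si)^2]/k.
\end{equation*}

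\textbf{LHS.} Example~\ref{ex:lhs-analysis-2} gives $n\var_G(\est) = \vmatch(s^{hist}) + \viid(s - s^{hist})$. Let $r = s - s^{hist}$. Using $\vmatch(s) = \vmatch(s^{hist}) + \vmatch(r)$ and $\vmatch(r) \ge 0$, it suffices to show
\begin{equation*}
\varf(\si - \phist \si) \le V_{[0,1]}(\si)^2/k \qquad \text{for each } i.
\end{equation*}
By Equation~\eqref{eqn:histogram-projection-disp}, $\si - \phist\si$ equals $\si$ minus its bin mean on each bin $J(l)$, shifted by a constant. Hence
\begin{equation*}
\varf(\si - \phist \si) \le \sum_l \tfrac1k\, \ef\big[(\si - \ef[\si \mid J(l)])^2 \mid J(l)\big] \le \tfrac1k \sum_l V_{J(l)}(\si)^2.
\end{equation*}
The last step holds because a function deviates from its mean on an interval by at most its oscillation there, which is bounded by its variation on that interval. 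Finally, $\sum_l V_{J(l)}^2 \le (\sum_l V_{J(l)})^2 \le V_{[0,1]}^2$, which gives the bound.

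\textbf{RS.} Corollary~\ref{cor:efficiency-convex} with $\dispg(E_a) = 1$ and $\dispg(E_c) = -(k-1)$ gives
\begin{equation*}
n\var_G(\est) = \vmatch(s^a) - (k-1)\vmatch(s^c) + k\,\viid(s^c).
\end{equation*}
Subtracting $\vmatch(s) = \vmatch(s^a) + \vmatch(s^c)$ leaves the residual $k[\viid(s^c) - \vmatch(s^c)] \le k\,\viid(s^c)$. So it suffices to show
\begin{equation*}
\varf(P_c\si) \le V_{[0,1]}(\si)^2/k^2.
\end{equation*}
The cyclic average $\psi(x) = k\inv\sum_l \si(x \oplus l/k)$ is $1/k$-periodic, so its variance is bounded by the squared oscillation of $\psi$ on one period $[0,1/k)$. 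On that period,
\begin{equation*}
|\psi(x) - \psi(y)| \le k\inv \sum_l |\si(x + l/k) - \si(y + l/k)| \le k\inv \sum_l V_{J(l)}(\si) \le V_{[0,1]}(\si)/k.
\end{equation*}
This gives the required bound.

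The main obstacle is the bookkeeping rather than any deep step. One must verify that the match-discrepancy $\vmatch$ really splits orthogonally across eigenspaces; this holds since each $\varf(\sig - \sjg)$ splits and $\vmatch$ is a sum of such terms. One must also handle bin endpoints and the wraparound under $\oplus$: the translates $x + l/k$ for $x, y \in [0,1/k)$ stay in $J(l+1)$, so no wrap occurs. Finally, the conventions for $\viid(s) = 0$ need checking, but in that case both sides are trivial.
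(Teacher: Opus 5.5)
Your proposal is correct and follows the paper's proof. Both arguments start from Corollary~\ref{cor:efficiency-convex}, split $\vmatch(s)$ orthogonally across the two eigenspaces, and drop a nonnegative $\vmatch$ term. The remaining residual, $\viid(s - s^{hist})$ for LHS and $k\,\viid(s^c)$ for RS, is then bounded by $k\inv \en[V_{[0,1]}(\si)^2]$ using bin-wise total variation, as in Lemmas~\ref{lem:lhs-uniform-bound} and~\ref{lem:rs-uniform-bound}.

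The differences are minor. For RS, your residual $k[\viid(s^c) - \vmatch(s^c)]$ is algebraically the paper's $k\,\vg(s^c) - \vmatch(s^c)$. You bound the oscillation of the $1/k$-periodic cyclic average over one period, whereas the paper bounds $\sup_x |P_c \si(x)|$ directly. Both routes give $\varf(P_c \si) \le V_{[0,1]}(\si)^2/k^2$.
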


In particular, the theorem guarantees that LHS and RS improve precision relative to iid randomization when $\matchcoeffk(s) \ge \etatv(s) / k$, so that the match quality is sufficiently high relative to the roughness of the influence functions $\si(\cdot)$.
This lower bound on efficiency can be improved by using stronger notions of smoothness.
For example, by imposing uniform Lipschitz continuity on the $\si(\cdot)$, the lower bound for LHS can be improved to $\matchcoeffk(s) - O(1/k^2)$.
However, such conditions typically rule out discrete treatments, motivating the weaker total variation notion of smoothness that we use here.

\section{Asymptotics and Inference}\label{section:asymptotics}

\subsection{Consistency}

We consider an asymptotic regime with a sequence of experimental populations and coupling designs indexed by $n$ with $n \to \infty$.
Thus, in this section all variables are implicitly indexed by $n$, denoting their place in the sequence.
For example, influence functions $\si(\cdot) = \si^n(\cdot)$ for $i \in [n]$, and similarly the design parameters $G = G(n)$ and $k = k(n)$.
We often suppress the indexing for brevity.

%Coupling designs are meant to improve efficiency relative to iid randomization.
%The iid design is $\rootn$-consistent under weak moment conditions, so if a coupling design is to be useful, it must not perform worse than this.
%While $\rootn$-consistency is a low bar, the experimental design literature has noted a trade-off between potential efficiency gains from balancing and robustness \citep{efron1971coin,harshaw2024}, which can manifest as slower rates of consistency.
%The results in Section~\ref{subsection:coupling-comparisons} implies that the nonparametric couplings considered there perform well under weak smoothness conditions.
%We here provide a more general characterization of consistency for coupling designs.

The experimental design literature has noted an important tradeoff between efficiency and robustness, showing how balancing covariates to improve efficiency can entail a loss of robustness in adverse experimental settings \citep{efron1971coin,harshaw2024}.
Motivated by this, we study a robust notion of consistency, requiring that coupling designs perform well uniformly over a range of possible empirical settings \citep{harshaw2025general}.
Given a family of influence functions $\mathcal{S}$, the uniform mean square error $\maxraten(G, \mc S)$ for a coupling $G$ is
\begin{equation} \label{eqn:uniform-consistency-rate}
\maxraten(G, \mathcal{S}) \equiv \sup_{s \in \mathcal{S}} n \eg\big[(\est - \thetan)^2\big].
\end{equation}
If $\maxraten(G, \mathcal{S}) = O(1)$, we say that $\est$ is \emph{uniformly $\rootn$-consistent} under the family $\mathcal{S}$ and the coupling sequence $G(n)$.

We can study the worst-case performance of our designs by requiring uniform consistency under weak restrictions on $\mc S$.
For example, we can set $\smom$ to be the set of all $\si(\cdot)$ with bounded average moments: $\smom = \{ s_1, \dotsc, s_n \in L^2(F) : \en[\varf(\si)] \le 1 \}$.
This allows for settings with highly non-smooth influence functions and negative match quality, requiring good performance even if an adversary matched units into groups that are maximally dissimilar. 
We show that coupling designs are still uniformly $\rootn$-consistent over $\smom$ under weak conditions on the design parameters, providing a strong robustness guarantee.

\begin{thm} \label{thm:uniform-consistency-rates}
Let $\infdispg = \inf_{\phi \neq c} \dispg(\phi)$ and $\supdispg = \sup_{\phi \neq c} \dispg(\phi)$ be the extremal dispersions of $G$ over non-constant $\phi \in \elltwo(F)$.
Impose Assumption \ref{assump:direct-sum}.
For any $G \in \transportsym(F)$,
\begin{equation} \label{eqn:maxrate-sup-inf}
\maxraten(G, \smom) = 1 + \max \left(-\infdispg, \frac{\supdispg}{k-1} \right).
\end{equation}
In particular, $1 \le \maxraten(G, \smom) \le k$.
Thus, if $k = O(1)$, then uniform $\rootn$-consistency is attained for any coupling sequence $G(n)$.
\end{thm}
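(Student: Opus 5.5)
Because $G\in\transportsym(F)$ has marginals $F$, the estimator is unbiased: $\eg[\est]=\thetan$. So the uniform MSE is $\sup_{s\in\smom} n\var_G(\est)$. Constant parts of each $\si$ affect neither the variance nor the constraint, so I will replace each $\si$ by $\si-\ef[\si(D)]\in\ltwonot$. Groups are independent, hence
\begin{equation*}
n\var_G(\est)=\frac1n\sum_g \Big[\sum_{i\in[k]}\|\sig\|_F^2+\sum_{i\neq j}\lag \sig, U_G\sjg\rag_F\Big].
\end{equation*}
Here I use that $\covg(\phi(D_i),\psi(D_j))=\ef[\phi(D)(U_G\psi)(D)]$ for $i\neq j$, which follows from the tower law and Definition~\ref{defn:coupling-operator}.

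Next I diagonalize. Under Assumption~\ref{assump:direct-sum}, write $\sig=\sum_m \sig^m$ with $\sig^m=P_m\sig\in E_m$. Orthogonality of the $E_m$ and $U_G\sjg^m=\eval_m\sjg^m$ make the bracket separate over $m$. For a fixed group and eigenspace, put $a_i=\sig^m$. The group-$g$, eigenspace-$m$ term is then
\begin{equation*}
(1-\eval_m)\sum_i\|a_i\|^2+\eval_m\Big\|\sum_i a_i\Big\|^2 .
\end{equation*}
Since $0\le\|\sum_i a_i\|^2\le k\sum_i\|a_i\|^2$, this term is at most $c_m\sum_i\|a_i\|^2$, where
\begin{equation*}
c_m=\begin{cases}1+(k-1)\eval_m, & \eval_m\ge 0,\\ 1-\eval_m, & \eval_m<0.\end{cases}
\end{equation*}
By Theorem~\ref{thm:dispersion-decomposition}(a), $\eval_m=-\dispg(m)/(k-1)$. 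Substituting gives $c_m=1+\max\big(-\dispg(m),\dispg(m)/(k-1)\big)$. To check this, the $\eval_m<0$ branch is the case $\dispg(m)>0$, where $\dispg(m)/(k-1)>-\dispg(m)$. The $\eval_m\ge 0$ branch is $\dispg(m)\le 0$, where the reverse inequality holds.

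For the upper bound, sum over $m$ and $g$. This gives $n\var_G(\est)\le \sup_m c_m\cdot \en[\varf(\si)]\le \sup_m c_m$. Theorem~\ref{thm:dispersion-decomposition}(b) writes every $\dispg(\phi)$ as a convex combination of the $\dispg(m)$, and each $\dispg(m)$ is attained by a nonzero element of $E_m$. Hence $\infdispg=\inf_m\dispg(m)$ and $\supdispg=\sup_m\dispg(m)$. It follows that $\sup_m c_m=1+\max(-\infdispg,\supdispg/(k-1))$.

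For the matching lower bound, I construct near-extremal populations supported on a single group $g$; all other $\si\equiv 0$. Fix $m$ and a unit-variance $e\in E_m$, scaled so that $\sum_{i\in g}\varf(\sig)=n$; this saturates the constraint in $\smom$. If $\eval_m\ge0$, take all $\sig=\sqrt{n/k}\,e$. This makes $\|\sum_i a_i\|^2=k\sum_i\|a_i\|^2$, and $n\var_G(\est)=c_m$. If $\eval_m<0$, take $\sig=\sqrt{n/2}\,e$, $\sjg=-\sqrt{n/2}\,e$ and the rest zero, which uses $k\ge2$. Then $\sum_i a_i=0$ and $n\var_G(\est)=1-\eval_m=c_m$. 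Taking the supremum over $m$ gives equality in \eqref{eqn:maxrate-sup-inf}.

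The bounds $1\le\maxraten\le k$ follow from $-(k-1)\le\dispg\le1$ (Section~\ref{subsection:dispersion}). For the lower bound: if $\infdispg\le0$ then $-\infdispg\ge0$, and otherwise $\supdispg\ge\infdispg>0$; either way the max is $\ge0$. For the upper bound, $-\infdispg\le k-1$ and $\supdispg/(k-1)\le 1\le k-1$.

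The main obstacle is getting the per-eigenspace quadratic-form bound exactly, including the sign case split. A related point is that the suprema are attained only per eigenspace, so the sup over $s$ must be taken over eigenspaces rather than over arbitrary $\phi$.
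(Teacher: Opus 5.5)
Your proof is correct and follows essentially the same route as the paper. It diagonalizes $n\var_G(\est)$ over the eigenspaces and bounds each eigenspace's cross term by $-\viid(s^m)\le c(s^m)\le (k-1)\viid(s^m)$, which is exactly your inequality $0\le\|\sum_i a_i\|^2\le k\sum_i\|a_i\|^2$. It identifies $\infdispg,\supdispg$ with the extremal eigenspace dispersions via Theorem~\ref{thm:dispersion-decomposition}(b), and attains the bound with identical versus zero-sum influence functions in a near-extremal eigenspace.

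The differences are cosmetic. You concentrate the extremal populations on a single group, and you split cases by the sign of $\eval_m$ rather than by which of $-\infdispg$ and $\supdispg/(k-1)$ is larger. The final $k=O(1)$ claim, which you leave implicit, follows immediately from $\maxraten(G,\smom)\le k$.
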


\begin{remark}[Minimaxity] \label{rem:minimaxity}
We have $\maxraten(\giid, \smom) = 1$, so the iid design is minimax optimal over $\smom$.
Suppose $F$ is continuous.
If $G \in \{\text{LHS}, \text{Gaussian}\}$, then $\maxraten(G, \smom) = k/(k-1) \to 1$ as $k \to \infty$, so these couplings are asymptotically minimax optimal over $\smom$.
Note also that if $G = $ RS then $\maxraten(G, \smom) = k$, attaining the worst case rate.
This reflects the fact that rotation sampling can be exploited by highly non-smooth, perfectly cyclic influence functions $\si(\cdot)$.
\end{remark}

The influence function configurations that attain the worst case rate in the proof of Theorem \ref{thm:uniform-consistency-rates} are generally pathological, with either highly non-smooth influences $\si(\cdot)$ or negative match quality, which might not be relevant for empirical practice.
For example, for $G = $ LHS, $\maxraten(G, \smom)$ is attained by placing perfectly negatively correlated influence functions $\si(\cdot)$ within each group, which assumes that our matching was not only ineffectual, but actually much worse than random.
We can place mild regularity conditions on the set $\smom$ to rule out such pathological settings, obtaining bounds that are more informative about the performance of coupling designs in typical applications.

To illustrate this, we study the nonparametric couplings $G \in \{\text{LHS}, \text{RS}\}$ under a restricted family of influence functions $\sreg$ that have reasonable match quality and bounded average total variation.
The next result shows that under such conditions, these couplings uniformly dominate the iid design. 

\begin{thm} \label{thm:consistency-under-regularity}
Let $\sreg = \{s \in \smom : \matchcoeffk(s) \ge q_0 > 0 \;\text{ and }\;  \etatv(s) \le \bar \eta \}$ be the family of influence functions with match quality at least $q_0$ and smoothness coefficient at most $\bar \eta$, with $\etatv$ measured for the effective influence functions $\si \circ F\inv$ on $[0,1]$ (Remark~\ref{rem:canonical-marginals}).
Impose Assumption \ref{assump:direct-sum}.
Then for $G \in \{\text{LHS}, \text{RS}\}$,
\begin{equation} \label{eqn:restricted-uniform-rate}
\maxraten(G, \sreg) \le 1 - q_0 + \bar \eta / k.
\end{equation}
\end{thm}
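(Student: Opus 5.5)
The bound follows by combining the efficiency lower bound of Theorem~\ref{thm:structured-worst-case} with the normalization $\en[\varf(\si)] \le 1$ built into $\smom$. Design-based unbiasedness comes first. Each $D_{ig} \sim F$ marginally under any $G \in \transportsym(F)$, so $\eg[\est] = \en[\ef[\si(D)]] = \thetan$. Hence $n\eg[(\est - \thetan)^2] = n\var_G(\est)$, and $\maxraten(G,\sreg)$ is the supremum of $n\var_G(\est)$ over $s \in \sreg$. Under the iid design, independence across all units gives $n\var_{\giid}(\est) = \en[\varf(\si)] = \viid(s)$.

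Next I rewrite the variance relative to iid. For $G \in \{\text{LHS},\text{RS}\}$ I apply Theorem~\ref{thm:structured-worst-case} to the effective influence functions $\si \circ F\inv$ on $[0,1]$. This is legitimate by Remark~\ref{rem:canonical-marginals}, since variances, match quality and $\etatv$ are all defined through these effective functions. The theorem gives
\begin{equation*}
n\var_G(\est) = \viid(s)\Big(1 - \big[1 - \tfrac{\var_G(\est)}{\var_{\giid}(\est)}\big]\Big) \le \viid(s)\Big(1 - \matchcoeffk(s) + \tfrac{\etatv(s)}{k}\Big).
\end{equation*}
This uses $\viid(s) > 0$. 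If $\viid(s) = 0$, each $\si$ is a.s.\ constant under $F$, so $\var_G(\est) = 0$. The bound $1 - q_0 + \bar\eta/k \ge 0$ then holds trivially, assuming $q_0 \le 1$, which is automatic since $\matchcoeffk \le 1$.

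Finally I bound the bracket over $\sreg$. On $\sreg$ we have $\matchcoeffk(s) \ge q_0$ and $\etatv(s) \le \bar\eta$, so the bracket is at most $1 - q_0 + \bar\eta/k$. This quantity is nonnegative, and $\viid(s) \le 1$ on $\smom$. Therefore $n\var_G(\est) \le 1 - q_0 + \bar\eta/k$, and taking the supremum over $s \in \sreg$ gives \eqref{eqn:restricted-uniform-rate}.

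The only delicate point is bounding the product by the bracket: I need the bracket to be nonnegative before replacing $\viid(s)$ by its upper bound $1$. This holds because the bracket is at least $n\var_G(\est)/\viid(s) \ge 0$, so no extra assumption is needed. Assumption~\ref{assump:direct-sum} is imposed only so that Theorem~\ref{thm:structured-worst-case}, which relies on the eigenspace decomposition of Theorem~\ref{thm:eigenspace-decomposition}, is applicable.
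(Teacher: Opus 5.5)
Your proposal is correct and follows the paper's proof: invoke Theorem~\ref{thm:structured-worst-case}, rearrange using $n\var_{\giid}(\est) = \viid(s)$, then bound using $\viid(s) \le 1$, $\matchcoeffk(s) \ge q_0$ and $\etatv(s) \le \bar\eta$ on $\sreg$. Your extra steps fill in points the paper leaves implicit: unbiasedness (so the mean squared error equals $\var_G(\est)$), the degenerate case $\viid(s) = 0$, and nonnegativity of the bracket before replacing $\viid(s)$ by $1$.
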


For the iid coupling $\giid$, we have $\maxraten(\giid, \sreg) = 1$, so the LHS and RS coupling designs dominate the iid design in a minimax sense over $\sreg$ when $\bar \eta / k < q_0$, which holds for large enough $k$.
Intuitively, this is because iid randomization has no way of exploiting the structure imposed on the influence functions by $\sreg$.
Moreover, after ruling out arbitrarily non-smooth functions $\si(\cdot)$, rotation sampling $G=$ RS is uniformly $\rootn$-consistent over $\sreg$ even as $k \to \infty$.

\subsection{Asymptotic Normality}

We provide conditions under which the estimator $\est$ is asymptotically normal.

\begin{assumption}[CLT]\label{assumption:clt}
The following hold:
\begin{enumerate}[label=(\arabic*)., itemsep=0.1pt]
\item (Bounded Fourth Moments) $M_{4,n} = \en\bigl[\ef[\si(D)^4]\bigr] = O(1)$.
\item (Not Superefficient) $\eg[(\est - \thetan)^2] = \Omega(n^{-1})$.
\item (Group Size) $k = o(n^{1/3})$.
\end{enumerate}
\end{assumption}

Part (2) of Assumption \ref{assumption:clt} is a high level condition ruling out certain degenerate estimation problems.
For example, under perfect homogeneity $\si(d) = s(d)$ with $s(\cdot)$ Lipschitz continuous, one can show that super-efficient estimation is possible using the Latin hypercube coupling with $k=n$.
Such perfectly homogeneous settings are not empirically relevant.

\begin{prop}[CLT]\label{prop:clt}
Impose Assumption~\ref{assumption:clt}.
Then for $\hksd_n^2 = \var_G(\est)$,
\begin{equation} \label{eqn:clt}
(\est - \thetan) / \hksd_n \convd \normal(0, 1),
\end{equation}
\end{prop}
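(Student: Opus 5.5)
The estimator decomposes into independent group sums, so the plan is a Lyapunov (or Lindeberg) CLT for a triangular array of independent, mean-zero, non-identically distributed group contributions. Write
\begin{equation*}
\est - \thetan = \sum_{g=1}^{n/k} Z_g, \qquad Z_g \equiv n\inv \sum_{i=1}^k \big(\sig(D_{ig}) - \ef[\sig(D)]\big).
\end{equation*}
Because the draws $(D_{ig})_{i=1}^k \sim G$ are independent across groups and each $D_{ig} \sim F$, the $Z_g$ are independent with $\eg[Z_g] = 0$ and $\sum_g \var_G(Z_g) = \hksd_n^2$. It therefore suffices to check Lyapunov's condition with fourth moments:
\begin{equation*}
\hksd_n^{-4} \sum_g \eg[Z_g^4] \to 0 .
\end{equation*}

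\textbf{Bounding the fourth moments.} For the numerator I would not try to exploit the dependence structure of $G$, since that would require more than the marginal information. I would instead use the crude bound $(\sum_{i=1}^k x_i)^4 \le k^3 \sum_{i=1}^k x_i^4$, which holds for any coupling. Together with the marginals being $F$, this gives
\begin{equation*}
\eg[Z_g^4] \le n^{-4} k^3 \sum_{i \in g} \ef\big[(\sig(D) - \ef \sig)^4\big] \le 16\, n^{-4} k^3 \sum_{i\in g} \ef[\sig(D)^4].
\end{equation*}
The centering is handled by $(a-b)^4 \le 8(a^4+b^4)$ and Jensen. Summing over groups gives
\begin{equation*}
\sum_g \eg[Z_g^4] \le 16\, k^3 n^{-3} M_{4,n} = O(k^3 n^{-3})
\end{equation*}
by Assumption~\ref{assumption:clt}(1).

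\textbf{Lower bound on the variance.} The denominator is controlled by Assumption~\ref{assumption:clt}(2): since $\eg[\est - \thetan] = 0$, we have $\hksd_n^2 = \eg[(\est-\thetan)^2] = \Omega(n\inv)$, so $\hksd_n^{-4} = O(n^2)$. The Lyapunov ratio is then
\begin{equation*}
O(k^3 n^{-3} \cdot n^2) = O(k^3/n) = o(1)
\end{equation*}
by Assumption~\ref{assumption:clt}(3), since $k = o(n^{1/3})$. The Lyapunov CLT for triangular arrays then yields~\eqref{eqn:clt}.

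The main obstacle is the step where the within-group dependence must be controlled with nothing beyond the marginals. The $k^3$ inflation from the crude bound is exactly why the rate condition $k = o(n^{1/3})$ is needed, and I expect no sharper general bound to be available for arbitrary $G \in \transportsym(F)$. The other delicate point is the non-degeneracy of $\hksd_n$. Theorem~\ref{thm:uniform-consistency-rates} only bounds the variance from above, so the lower bound has to come entirely from assumption (2), and that is why the condition is imposed as a high-level assumption.
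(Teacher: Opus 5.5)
Your proposal is correct and follows essentially the same argument as the paper. The paper phrases it as a Lindeberg check, bounding $\eg[S_g^2\one(|S_g|>\varepsilon\hksd_n)] \le \eg[S_g^4]/(\varepsilon^2\hksd_n^2)$, which amounts to your fourth-moment Lyapunov condition. It then uses the same power-mean bound with constant $16$, the same $\Omega(n^{-1})$ lower bound on $\hksd_n^2$, and reaches the same $O(k^3/n)$ rate.
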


Since each of the $n/k$ groups has independent treatments, and each group's contribution to $\est$ is bounded, the CLT follows from standard Lindeberg-Feller arguments provided the number of groups $n/k$ grows sufficiently fast.
Assumption~\ref{assumption:clt} ensures this by requiring $k = o(n^{1/3})$.

Combining Proposition~\ref{prop:clt} with the linearizations of Section~\ref{subsection:causal-estimands} shows that more general parametric estimators are also asymptotically normal under coupling designs.

\begin{cor}[Parametric Estimators]\label{cor:parametric-clt}
Suppose $\wh\beta - \betan = \en[\si(\Di)] + \Op(n\inv)$ with $\en \ef[\si(D)] = 0$. This holds for the OLS and logit estimators of Section~\ref{subsection:causal-estimands} under the regularity conditions of Propositions~\ref{prop:ols-linearization} and~\ref{prop:logit-linearization}.
Fix $c \in \mr^m$ and impose Assumption~\ref{assumption:clt} for the influence functions $(c'\si)_{i=1}^n$.
Then for $\hksd_n^2 = \var_G(\en[c'\si(\Di)])$,
\begin{equation} \label{eqn:parametric-clt}
c'(\wh\beta - \betan) / \hksd_n \convd \normal(0, 1).
\end{equation}
\end{cor}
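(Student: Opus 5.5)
The plan is to reduce Corollary~\ref{cor:parametric-clt} to Proposition~\ref{prop:clt} applied to the scalar influence functions $c'\si$, and then absorb the linearization remainder by a Slutsky argument. The first step writes
\begin{equation*}
c'(\wh\beta - \betan) = \en[c'\si(\Di)] + c'R_n, \qquad R_n = \Op(n\inv),
\end{equation*}
which is exactly the assumed linearization. The cited regularity results, Propositions~\ref{prop:ols-linearization} and~\ref{prop:logit-linearization}, verify this linearization for OLS and logit, so those cases need no separate argument.

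Next, I set $\tilde s_i = c'\si \in L^2(F)$ and $\tilde\theta = \en[\tilde s_i(\Di)]$. The corresponding estimand is $\tilde\theta_n = \en\ef[\tilde s_i(D)] = c'\en\ef[\si(D)] = 0$ by hypothesis. By assumption, Assumption~\ref{assumption:clt} holds for $(\tilde s_i)_{i=1}^n$. Proposition~\ref{prop:clt} then gives $\tilde\theta/\hksd_n \convd \normal(0,1)$ with $\hksd_n^2 = \var_G(\tilde\theta) = \var_G(\en[c'\si(\Di)])$. This is the variance appearing in the statement, and since $\tilde\theta_n = 0$, centering is not an issue.

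The final step, and the only place where care is needed, is controlling the remainder after normalization: I need $c'R_n/\hksd_n = \op(1)$. Since $\tilde\theta_n = 0$, part (2) of Assumption~\ref{assumption:clt} gives $\hksd_n^2 = \eg[(\tilde\theta - \tilde\theta_n)^2] = \Omega(n\inv)$, so $\hksd_n^{-1} = O(n^{1/2})$. Hence
\begin{equation*}
c'R_n/\hksd_n = \Op(n\inv)\cdot O(n^{1/2}) = \Op(n^{-1/2}) = \op(1).
\end{equation*}
Slutsky's lemma then yields $c'(\wh\beta-\betan)/\hksd_n \convd \normal(0,1)$. The main subtlety is exactly this non-superefficiency lower bound on $\hksd_n$: without it the $\Op(n\inv)$ remainder could dominate the leading term. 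It is supplied directly by Assumption~\ref{assumption:clt}(2) applied to $c'\si$.
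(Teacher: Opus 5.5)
Your proposal is correct and follows the paper's proof essentially step for step. You apply Proposition~\ref{prop:clt} to the scalar influence functions $c'\si$ with zero estimand, use Assumption~\ref{assumption:clt}(2) to get $\hksd_n^{-1} = O(n^{1/2})$ so that the normalized $\Op(n\inv)$ remainder is $\op(1)$, and conclude by Slutsky's theorem.
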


\subsection{Variance Estimation and Inference}

We construct a variance estimator for $\var_G(\est)$ using a collapsed strata approach \citep{hansen1953}.
See also \citet{bai2025finely} for a recent application of this approach to classic stratified randomization.
Let $\matchvar: [n / k] \to [n / k]$ be a permutation with no fixed points, $\matchvar(g) \neq g$ for all $g \in [n / k]$, which associates each group $g$ with a paired group $\matchvar(g)$.
Typically, this is a matching $\matchvar(\matchvar(g)) = g$, but we allow for non-matching permutations to accommodate, for example, an odd number of groups.

Let $\est_{g} = k^{-1} \sum_{i = 1}^{k} \sig(\Di)$ and $\theta_{g} = \ef[\est_g] = k^{-1} \sum_{i = 1}^{k} \theta_{ig}$ for $\theta_{ig} = \ef[\sig(D)]$.
The variance estimator is the scaled squared difference between the mean outcomes in the paired groups:
\begin{equation} \label{eqn:variance-estimator}
\widehat{\hksd}_n^2 = \frac{k^2}{2n^2} \sum_{g = 1}^{n / k} \big(\est_{g} - \est_{\matchvar(g)}\big)^2.
\end{equation}
Let $\Delta_n^2 = (n/k)\inv \sum_{g = 1}^{n/k} (\theta_{g} - \theta_{\matchvar(g)})^2$ be the average squared difference in group effects across the paired groups.
Then we have the following result.

\begin{prop}\label{prop:var-est-expectation-variance}
The variance estimator has expectation and variance given by
\begin{equation} \label{eqn:var-est-moments}
\eg\biggl[\frac{\widehat{\hksd}_n^2}{\hksd_n^2}\biggr] = 1 + \frac{k \Delta_n^2}{2 n \hksd_n^2},
\qquad\quad
\var_G\biggl(\frac{\widehat{\hksd}_n^2}{\hksd_n^2}\biggr) \leq \frac{12 k^3}{n^3 \hksd_n^4} M_{4,n}.
\end{equation}
Then
\begin{enumerate}[label=(\alph*), itemsep=0.1pt]
\item The variance estimator is conservative: $\inf_{n} \eg[\widehat{\hksd}_n^2 / \hksd_n^2] \geq 1$.
\item The variance of the normalized variance estimator vanishes given Assumption~\ref{assumption:clt}: $\var_G(\widehat{\hksd}_n^2 / \hksd_n^2) = o(1)$.
\end{enumerate}
\end{prop}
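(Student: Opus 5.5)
Write $Z_g = \est_g - \theta_g$, so that $\eg[Z_g]=0$, the $Z_g$ are independent across groups (treatments are drawn independently for each matched $k$-tuple), and $\var_G(\est) = \hksd_n^2 = (k/n)^2 \sum_g \eg[Z_g^2]$, since $\est - \thetan = (k/n)\sum_g Z_g$. The plan is to expand each summand of \eqref{eqn:variance-estimator} as
\[
\est_g - \est_{\matchvar(g)} = (Z_g - Z_{\matchvar(g)}) + \delta_g, \qquad \delta_g \equiv \theta_g - \theta_{\matchvar(g)},
\]
and then compute the first and second moments of $\widehat{\hksd}_n^2$ directly.

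\textbf{Expectation.} Because $\matchvar$ has no fixed points, $Z_g$ and $Z_{\matchvar(g)}$ are independent and mean zero. Hence
\[
\eg\bigl(\est_g - \est_{\matchvar(g)}\bigr)^2 = \eg Z_g^2 + \eg Z_{\matchvar(g)}^2 + \delta_g^2 .
\]
Now sum over $g$. Since $\matchvar$ is a bijection, $\sum_g \eg Z_{\matchvar(g)}^2 = \sum_g \eg Z_g^2$. This gives
\[
\eg \widehat{\hksd}_n^2 = \frac{k^2}{2n^2}\Bigl(2\sum_g \eg Z_g^2 + \sum_g \delta_g^2\Bigr) = \hksd_n^2 + \frac{k^2}{2n^2}\cdot\frac{n}{k}\Delta_n^2 = \hksd_n^2 + \frac{k\,\Delta_n^2}{2n}.
\]
Dividing by $\hksd_n^2$ yields the first identity in \eqref{eqn:var-est-moments}. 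Part (a) follows immediately because $\Delta_n^2 \ge 0$.

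\textbf{Variance bound.} Write $\widehat{\hksd}_n^2 = \frac{k^2}{2n^2}\sum_g A_g$ with $A_g = (W_g + \delta_g)^2$ and $W_g = Z_g - Z_{\matchvar(g)}$. The term $A_g$ depends only on the groups $g$ and $\matchvar(g)$. So $\cov(A_g, A_h) = 0$ unless $\{g,\matchvar(g)\}\cap\{h,\matchvar(h)\}\neq\emptyset$. Since $\matchvar$ is a bijection, each $g$ has at most three such partners $h$ besides itself, namely $h = \matchvar(g)$, $h = \matchvar^{-1}(g)$, and the $h$ with $\matchvar(h)=\matchvar(g)$, which is $g$ itself. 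By Cauchy--Schwarz, $|\cov(A_g,A_h)| \le \tfrac12(\var A_g + \var A_h)$. This gives $\var(\sum_g A_g) \le c\sum_g \var A_g$ for a small constant $c$ (at most $3$ or $4$).

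Next I bound $\var A_g$. The deterministic shift $\delta_g$ must be handled carefully, since $\delta_g$ is not controlled by $M_{4,n}$. Expanding,
\[
\var A_g = \var\bigl(W_g^2 + 2\delta_g W_g\bigr) \le 2\eg W_g^4 + 8\delta_g^2\,\eg W_g^2 .
\]
If the cross term survives, it produces an extra contribution of order $k\Delta_n^2\hksd_n^2/n$ after normalization, which is not of the claimed form. I expect this to be the main obstacle. I would first try to absorb it. Since $\eg[W_g \mid \cdot]$ does not vanish conditionally, I would instead use $\delta_g^2 \le 2\theta_g^2 + 2\theta_{\matchvar(g)}^2$ together with $\theta_g^2 \le k^{-1}\sum_i \ef[\sig^2]$ and Jensen/Hölder to bound $\delta_g^2\, \eg W_g^2$ by fourth moments of the $\sig$. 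For $\eg W_g^4$, I would use $W_g^4 \le 8(Z_g^4 + Z_{\matchvar(g)}^4)$ and Jensen within the group, $\eg Z_g^4 \le 16\,k^{-1}\sum_i \ef[\sig(D)^4]$, which uses exchangeability so that each $D_{ig}\sim F$. Summing over $g$ then gives $\sum_g \var A_g \lesssim \sum_g k^{-1}\sum_i \ef \sig^4 = (n/k)M_{4,n}$. Multiplying by $k^4/(4n^4)$ and dividing by $\hksd_n^4$ gives the rate $k^3 M_{4,n}/(n^3\hksd_n^4)$. I would then track the constants to reach $12$, possibly by centering more economically (for instance working with $\sig - \theta_{ig}$ and using $\ef[(\sig-\theta_{ig})^4] \le \ef \sig^4$-type bounds).

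\textbf{Part (b).} By Assumption~\ref{assumption:clt}(2), $\hksd_n^2 = \Omega(n^{-1})$. The bound is therefore $O(k^3 M_{4,n}/n)$, which is $o(1)$ because $k = o(n^{1/3})$ and $M_{4,n} = O(1)$.
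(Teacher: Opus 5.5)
Your expectation computation and parts (a) and (b) match the paper's argument. Your dependency-graph step does too. Note that the neighbour set of $g$ is exactly $\{g,\matchvar(g),\matchvar^{-1}(g)\}$, including $g$ itself, so reindexing gives the factor $3$, and the stated constant needs exactly $3$.

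\textbf{The gap: the constant $12$.} The statement asserts the constant $12$, and your centered decomposition does not deliver it. Using your own inequalities, the $W_g^4$ term alone contributes $\frac{3k^4}{4n^4}\cdot 2\cdot 8\cdot 2\cdot 16\cdot\frac{n}{k}M_{4,n}=384\,k^3M_{4,n}/n^3$, before the $\delta_g$ cross term is added. Tracking constants more carefully inside this decomposition does not obviously recover a factor of $32$. The losses come from two steps, neither of which arises if you never center:
\begin{itemize}
\item your split $\var(W_g^2+2\delta_gW_g)\le 2\var(W_g^2)+8\delta_g^2\var(W_g)$;
\item the centering step $\ef[(\sig(D)-\theta_{ig})^4]\le 16\,\ef[\sig(D)^4]$.
\end{itemize}

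\textbf{The missing idea: do not center.} Since $A_g=(\est_g-\est_{\matchvar(g)})^2$, bound
\[
\var_G(A_g)\le\eg[A_g^2]=\eg[(\est_g-\est_{\matchvar(g)})^4]\le 8\bigl(\eg[\est_g^4]+\eg[\est_{\matchvar(g)}^4]\bigr).
\]
Then apply Jensen to the uncentered group mean, using only the fixed marginals: $\eg[\est_g^4]\le k^{-1}\sum_{i=1}^k\ef[\sig(D)^4]$. With the factor $3$ from the dependency count and the factor $2$ from reindexing $\matchvar$, this gives
\[
\var_G(\widehat{\hksd}_n^2)\le\frac{k^4}{4n^4}\cdot 3\cdot 8\cdot 2\sum_g\eg[\est_g^4]\le\frac{12k^3}{n^4}\sum_{i=1}^n\ef[\si(D)^4]=\frac{12k^3}{n^3}M_{4,n}.
\]
This is the paper's proof.

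\textbf{Your "main obstacle" is an artifact of centering.} Your assertion that $\delta_g$ is not controlled by $M_{4,n}$ is incorrect. $M_{4,n}=\en[\ef[\si(D)^4]]$ is an uncentered moment, and $\theta_{ig}^4\le\ef[\sig(D)^4]$, so the deterministic shift is absorbed automatically. Your detour through $\delta_g^2\le 2\theta_g^2+2\theta_{\matchvar(g)}^2$ and AM--GM does give the correct order $k^3M_{4,n}/(n^3\hksd_n^4)$. So part (b) is unaffected, but your route proves a weaker inequality than the one displayed.
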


The magnitude of the bias depends on the tuple size $k$ and the heterogeneity in group effects.
Under $\rootn$-consistency, $\hksd_n^2 \asymp n^{-1}$, the normalized bias is $O(k \Delta_n^2)$.
We generally expect $\Delta_n^2$ to be asymptotically bounded but not to vanish, so the variance estimator is typically biased upwards also asymptotically.
Because the variance estimator is conservative, confidence intervals constructed using $\widehat{\hksd}_n^2$ are asymptotically valid.

\begin{prop}\label{prop:ci}
Let $\widehat{\hksd}_n = \sqrt{\widehat{\hksd}_n^2}$ be the standard error estimator, and impose Assumption~\ref{assumption:clt}.
Then, for any $\alpha \in (0, 1)$, the confidence interval
$$
\mathrm{CI}_{1 - \alpha} = \big[\est - z_{1 - \alpha/2}\, \widehat{\hksd}_n,\;\; \est + z_{1 - \alpha/2}\, \widehat{\hksd}_n\big],
$$
where $z_{1 - \alpha/2}$ is the $(1 - \alpha/2)$-quantile of the standard normal distribution, satisfies
$$
\liminf_{n \to \infty} \pg(\thetan \in \mathrm{CI}_{1 - \alpha}) \geq 1 - \alpha.
$$
\end{prop}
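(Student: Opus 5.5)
The plan is to derive Proposition~\ref{prop:ci} from Proposition~\ref{prop:clt} and Proposition~\ref{prop:var-est-expectation-variance} through a Slutsky-type argument. The complication is that the ratio $\widehat{\hksd}_n^2/\hksd_n^2$ need not converge to $1$; it can converge to something larger because of the bias term $k\Delta_n^2/(2n\hksd_n^2)$.

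The first step is to write
\[
R_n \equiv \widehat{\hksd}_n^2/\hksd_n^2 = b_n + (R_n - b_n), \qquad b_n \equiv \eg[R_n] = 1 + \frac{k\Delta_n^2}{2n\hksd_n^2} \ge 1.
\]
Proposition~\ref{prop:var-est-expectation-variance}(b) gives $\var_G(R_n) = o(1)$ under Assumption~\ref{assumption:clt}. Chebyshev's inequality then yields $R_n - b_n = \op(1)$. Because $b_n \ge 1$, for every $\eps>0$ we get $\pg(R_n < 1-\eps) \to 0$, so $\widehat{\hksd}_n/\hksd_n \ge 1 - \eps$ with probability tending to one.

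The second step bounds the non-coverage probability. Let $T_n = (\est - \thetan)/\hksd_n$, which satisfies $T_n \convd \normal(0,1)$ by Proposition~\ref{prop:clt}. The event $\thetan \notin \mathrm{CI}_{1-\alpha}$ is the event $|T_n| > z_{1-\alpha/2}\,\widehat{\hksd}_n/\hksd_n$. This event is contained in
\[
\{|T_n| > z_{1-\alpha/2}(1-\eps)\} \cup \{\widehat{\hksd}_n/\hksd_n < 1-\eps\}.
\]
The normal limit is continuous, so the probability of the first event converges to $2(1-\Phi(z_{1-\alpha/2}(1-\eps)))$. The probability of the second event goes to zero by the first step. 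Hence
\[
\limsup_n \pg(\thetan\notin \mathrm{CI}_{1-\alpha}) \le 2\bigl(1-\Phi(z_{1-\alpha/2}(1-\eps))\bigr).
\]
Letting $\eps \downarrow 0$ gives the limit $\alpha$, which proves the claim.

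The main subtlety is that $b_n$ may diverge or fail to converge, so we cannot appeal to a limit of $R_n$; the one-sided argument above sidesteps this. A second point needs checking: the bound $\var_G(R_n)=o(1)$ controls fluctuations of $R_n$ in absolute terms, not relative to $b_n$. That is exactly what the first step uses, so divergence of $b_n$ causes no problem. If $b_n$ diverges, the interval only widens further, and the containment above still holds.
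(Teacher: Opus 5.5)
Your proof is correct and follows essentially the same route as the paper: Chebyshev's inequality applied to the normalized variance estimator, together with $\eg[\widehat{\hksd}_n^2/\hksd_n^2] \ge 1$ and $\var_G(\widehat{\hksd}_n^2/\hksd_n^2) = o(1)$, gives a one-sided lower bound on the ratio with probability tending to one, which is then combined with the CLT before letting $\eps \downarrow 0$. The only difference is presentational: the paper lower-bounds coverage through an intersection of events (with threshold $\sqrt{1-\delta}$), while you upper-bound non-coverage through a union (with threshold $1-\eps$).
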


\section{Application: Savings Monitors in Village Networks}\label{section:illustrative-applications}\label{subsection:savings-monitors-application}

% Cash-transfer application (formerly Section 8.1) removed in Pass 07 to meet the page limit.
% Restore by uncommenting the following line.
%\input{inputs/paper_application_cash}

We illustrate the theory in an application based on the savings-monitor experiment of \citet{breza2019}, conducted in a set of villages in Karnataka, India.
Each saver in the experiment is paired with a ``monitor'' who is periodically informed of the saver's progress toward a self-set savings goal, creating a reputational commitment device.
\citet{breza2019} show that who the monitor is matters: pairing a saver with a monitor raises savings by over a third on average, and the effect is larger when the monitor is more central in the village social network and more socially proximate to the saver.
The goal in our application is precisely to learn how a saver's behavior responds to their monitor's social position.
We summarize each candidate monitor by a vector of network-position features, so that the treatment is a featurized monitor drawn from a discrete catalog $\suppf = \{d_1, \dots, d_R\} \sub \mr^m$, the setting of Section~\ref{ex:irregular-uber-eats}.
Monitors are assigned within a saver's own village, so each village supplies its own catalog, an irregular point cloud with no product structure.

\paragraph{A catalog of monitors from real networks.}
We construct the catalogs from the social-network data of \citet{banerjee2013} for $75$ Karnataka villages, the same data used by \citet{breza2019}, which record household-to-household ties formed by the union of several types of relationships (borrowing, lending, advice, kinship, and others).
We treat each household with at least one tie as a candidate monitor and featurize it by two standard statistics of its network position, degree centrality (its number of ties, on a logarithmic scale) and clustering coefficient (the fraction of pairs of its neighbors that are themselves tied).
Each feature is normalized to the unit interval over the pooled households of all villages, so that a given point in feature space represents similar network positions in every village.
Villages range from $77$ to $356$ households and differ markedly in the shape and location of the resulting point cloud in $[0,1]^2$.
From each village we draw $R = 64$ candidate monitors uniformly at random (Figure~\ref{fig:net_setup}, left column) and endow the catalog with an inverse-density target marginal $F$ that makes the design roughly equally likely to assign a monitor from any populated region of the feature space.
Because the support is non-convex and the two features are dependent, implementing $F$ requires the semi-discrete optimal-transport map of Section~\ref{subsection:transport} rather than a coordinatewise quantile transform.

\begin{figure}[ht]
\centering
\includegraphics[width=\textwidth]{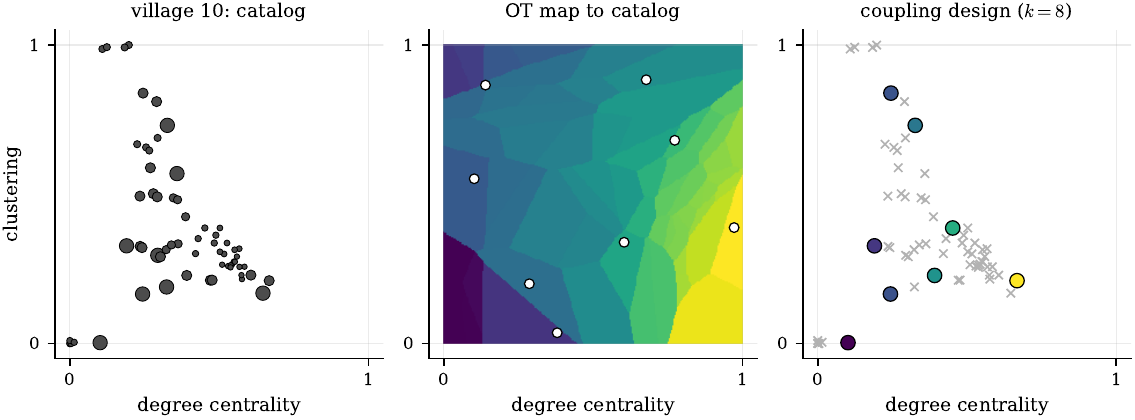}
\caption{Coupling construction for one village.
Left: the catalog of $R = 64$ monitors, marker area proportional to $F$.
Middle: the Laguerre cells of the transport map, one illustrative tuple's uniform draws in white.
Right: the monitors assigned to that tuple ($k = 8$), colored by centrality, gray crosses unassigned.}
\label{fig:net_setup}
\end{figure}

\paragraph{Design and estimand.}
We give each village $64$ savers ($n = 4800$ in total), matching the number of monitors so that stratification is feasible for comparison, though the coupling design itself needs no such restriction.
The binary outcome $\yi(d) \in \{0, 1\}$ records whether the saver reaches their savings goal, with heterogeneous logistic response $P(\yi(d) = 1) = L(\beta_i' t(d))$, where $L$ is the logistic link of Example~\ref{ex:logit} and the basis $t(d)$ collects an intercept, a linear term in the monitor's centrality, and a radial curvature term.
The centrality coefficient is positive on average, as in \citet{breza2019}, and the curvature makes the surface nonlinear, so that dispersing monitors over the feature space is more valuable.
Each saver carries two baseline covariates $X_i$, noisy proxies for the latent coefficients $\beta_i$, and the design matches on the covariates alone.
We investigate two regimes that differ only in covariate predictive power. In a weakly predictive regime, covariates account for $20\%$ of the variance of each coefficient, while in the highly predictive regime they account for $60\%$.
The estimand is the $F$-weighted best logistic approximation of the dose-response surface, estimated by the pooled logit MLE of Example~\ref{ex:logit} over all $75$ villages.
What matters is the profile of this surface, how expected savings vary with a monitor's network position.
All reported variances are design variances over assignment draws for a fixed finite population.

We compare several assignments of monitors to the savers.
Coupling designs first match savers into $k$-tuples on their covariates by the algorithm of \citet{cytrynbaum2023}, for $k \in \{2, 4, 8, 16, 32\}$, under one of two procedures: within-village matching forms tuples among the savers of each village separately, while pooled matching forms tuples from all $4800$ savers at once, so that a tuple may span villages.
Within each matched tuple the design draws jointly dispersed uniforms using various couplings (Latin hypercube, shifted lattice, or scrambled digital net) and pushes each member's uniform onto that member's own village catalog by the village-specific Brenier map, so that similar savers receive monitors dispersed over the network-feature space (Figure~\ref{fig:net_setup}).
The supplementary online appendix describes the data-generating process and both matching procedures in detail.
We also investigate iid randomization, which draws assignments independently from the non-uniform $F$, and stratification.
Stratification uses matched groups of $k = R = 64$ savers, one group per village, assigning each monitor once per group, and because it assigns monitors uniformly rather than with the marginal $F$, its estimator is reweighted to target the same estimand.

\paragraph{Efficiency and inference.}
Figure~\ref{fig:net_efficiency} reports the efficiency gain $1 - \var_G/\var_{\giid}$ for the estimated dose-response surface as a function of the tuple size $k$, with coverage for the centrality coefficient given in Table~\ref{tab:net}.\footnote{For the estimated profile, $\var_G$ centers each assignment draw's fitted surface at its $F$-weighted mean and averages the design variance of the centered surface over the pooled catalog under $F$. Each design cell uses $25{,}000$ Monte-Carlo assignment draws ($5{,}000$ for coverage).}
In the highly predictive regime the coupling designs deliver gains of up to roughly $28\%$ at moderate-to-large tuple sizes under pooled matching, versus far less under within-village matching.
In the weakly predictive regime the gains are smaller throughout, around $8\%$ under pooled matching and only a few percent within villages.
Pooled matching dominates because of match quality: it draws each tuple's members from all $4800$ savers rather than from the $64$ savers of a single village, so tuples remain tight in covariate space even at large $k$, where a village's own pool is exhausted.
Forced stratification does far worse than iid randomization, inflating the variance by about half (Table~\ref{tab:net}), because the implicit marginal distribution it implements deviates from $F$, necessitating reweighting.

\begin{figure}[ht]
\centering
\includegraphics[width=\figwquad]{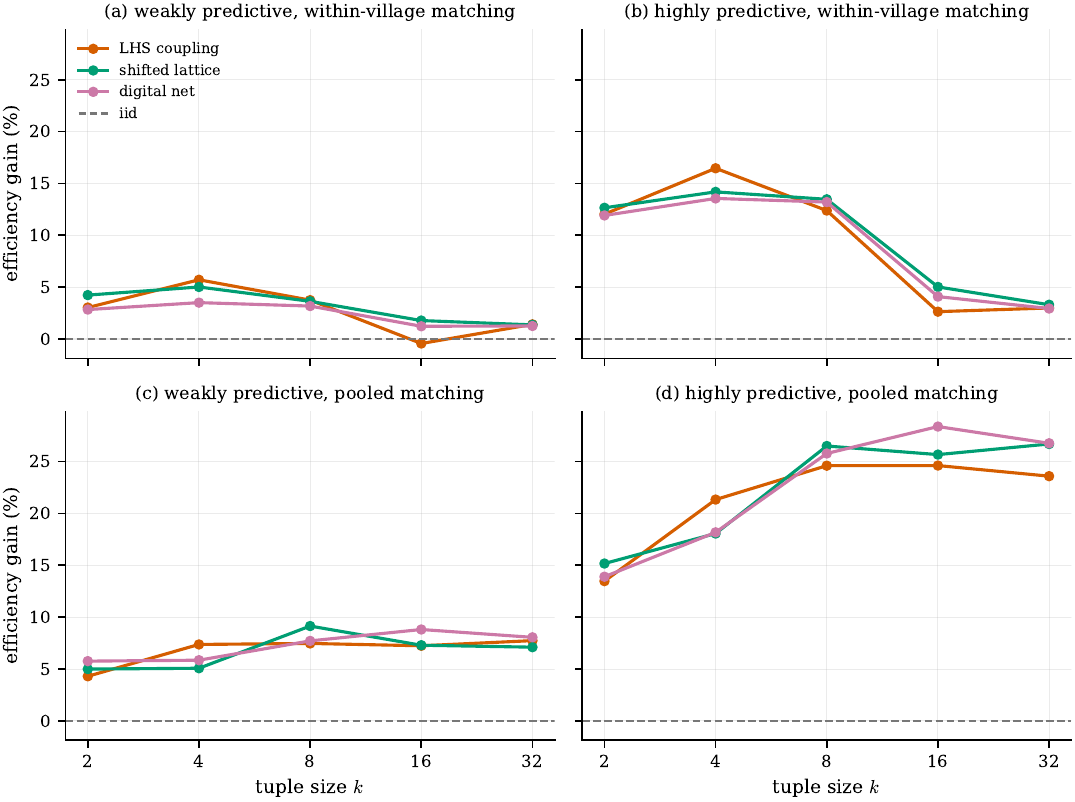}
\caption{Efficiency gain by tuple size $k$, in the weakly predictive (left) and highly predictive (right) regimes, under within-village (top) and pooled (bottom) matching.}
\label{fig:net_efficiency}
\end{figure}

Confidence intervals built from the collapsed-strata variance estimator, with a critical value from a $t$ distribution whose degrees of freedom equal half the total number of matched groups (the finite-sample refinement of Proposition~\ref{prop:ci}), stay at or above the nominal $95\%$ level under both matching procedures (Table~\ref{tab:net}).
Coverage is essentially flat in $k$ under pooled matching and grows conservative at the largest tuple sizes under within-village matching.
The iid benchmark, whose intervals use the same matched groups, shows the same pattern, so the conservativeness reflects the variance estimator rather than the couplings.

\begin{table}[ht]
\centering
\caption{Efficiency gain and coverage of the confidence interval for the centrality coefficient, by matching procedure and tuple size, for the scrambled digital net coupling \citep{owen1995scrambled}. Forced stratification included for comparison.}
\label{tab:net}
\begin{tabular}{llcccc}
\toprule
 & & \multicolumn{2}{c}{efficiency gain (\%)} & \multicolumn{2}{c}{coverage (\%)} \\
\cmidrule(lr){3-4}\cmidrule(lr){5-6}
matching & $k$ & weak & predictive & weak & predictive \\
\midrule
Within village & 2 & 2.8 & 11.9 & 96.1 & 96.0 \\
 & 4 & 3.5 & 13.6 & 96.6 & 96.4 \\
 & 8 & 3.2 & 13.2 & 96.2 & 97.1 \\
 & 16 & 1.2 & 4.1 & 96.9 & 97.9 \\
 & 32 & 1.3 & 2.9 & 98.6 & 99.8 \\
\addlinespace
Pooled & 2 & 5.8 & 13.9 & 96.0 & 95.7 \\
 & 4 & 5.9 & 18.2 & 96.5 & 95.6 \\
 & 8 & 7.7 & 25.8 & 96.3 & 96.1 \\
 & 16 & 8.8 & 28.4 & 96.0 & 96.6 \\
 & 32 & 8.1 & 26.8 & 95.9 & 97.0 \\
\addlinespace
Stratification & 64 & -52.3 & -52.8 & -- & -- \\
\bottomrule
\end{tabular}

\end{table}

The application traces out the trade-off between dispersion and match quality at the heart of the paper's theory (Theorem~\ref{thm:eigenspace-decomposition}).
Larger tuples disperse monitors more evenly over each village's catalog, raising dispersion, but match quality deteriorates as tuples grow, at a rate governed by the pool from which the tuples are formed.
Under within-village matching that pool is a single village's $64$ savers, so match quality erodes quickly and the gains peak at small $k$ before fading.
Under pooled matching each tuple draws on all $4800$ savers, match quality stays high through $k = 32$, and the gains rise and then plateau, mirroring the frontier of Section~\ref{subsection:efficiency}.
Forced stratification is the lexicographic extreme of Example~\ref{ex:stratified-randomization}: it attains perfect dispersion, but only with groups as large as the catalog and therefore excessively poor match quality, and its uniform assignment cannot accommodate the non-uniform marginal $F$, forcing an ex post reweighting that further inflates the variance.

\section{Concluding Remarks}

Coupling designs provide a powerful approach for improving the efficiency of experimentation over complex treatment spaces. 
We showed that such treatments naturally arise in a wide variety of economic applications.
Some examples covered in Section \ref{section:overview} include cash transfer and network experiments in development economics, behavioral experiments with multivariate information structures, and discrete choice experiments in marketing and industrial organization. 

The key insight of our work is that the mechanism underlying conventional stratification can be extended by first matching similar units into groups, then assigning within-group treatments to be highly dispersed over the treatment space $\suppf$. 
The efficiency gain from coupling designs is proportional to the product of sample dispersion and match quality, with a tradeoff where dispersion is increasing and match quality decreasing in the tuple size $k$.
The attained dispersion depends on well-approximation of the influence functions $\si(\cdot)$ on the high dispersion eigenspaces of the coupling operator $U_G$, generally amounting to a mild smoothness requirement.

Several directions for future work remain.
First, we focused on exchangeable couplings with fixed marginals for tractability and exposition.
Alternative designs that use non-exchangeable couplings or allow for flexible marginal distributions may offer further efficiency improvements, but require new tools to analyze their properties.
Second, the mechanism of producing high sample dispersion among similar experimental units applies very broadly, and the stratified structure of coupling designs is not essential to leverage this insight.
Alternative designs that do not partition units into groups, but still assign highly dispersed treatments to similar units, may be possible and could offer further efficiency improvements.
Finally, integrating coupling designs with response-adaptive methods that update the treatment distribution over successive experimental waves is a promising avenue for combining the efficiency gains from both approaches.

{\small
\bibliography{design_references.bib}}

\clearpage

\pagenumbering{arabic}\renewcommand{\thepage}{\arabic{page}}

\appendix
\renewcommand{\thesection}{\Alph{section}}

\section{Appendix} \label{section:extra-results}

\subsection{Multivariate Transport Maps} \label{section:ot-maps}

This appendix formalizes the geometry-preservation property motivating the Brenier map $\tstar$ in Equation \eqref{eqn:monge-ot-intro} and discusses its computation.

\textbf{Rank Preservation.}
In the univariate case, the quantile transform $T = F\inv$ is the unique nondecreasing map with $T(U) \sim F$, up to almost-everywhere equivalence.
Being nondecreasing is equivalent to $(u - v)(T(u) - T(v)) \ge 0$ for all $u, v \in [0, 1]$.
For $m \ge 1$, we can generalize this by requiring that $T$ satisfies $(u - v)'(T(u) - T(v)) \ge 0$ for all $u, v \in [0, 1]^m$, a geometric constraint requiring $T(\cdot)$ to preserve the relative orientation of input points.
A slightly stronger condition is \emph{cyclic monotonicity}.
For any finite set $\{u_1, \dots, u_L\} \sub [0,1]^m$ and permutation $\sigma$ of $[L]$, we can require that
\begin{equation} \label{eqn:cyclic-monotonicity}
\sum_{l \in [L]} u_l' \, T(u_l) \ge \sum_{l \in [L]} u_l' \, T(u_{\sigma(l)}).
\end{equation}
This requires that each output $T(u)$ is the ``most aligned'' with $u$ in the sense that no reassignment among finitely many points can increase the total inner product.
The Brenier map $\tstar$ in Equation \eqref{eqn:monge-ot-intro} is the unique map satisfying both the marginal condition $\tstar(U) \sim F$ and cyclic monotonicity, up to almost-everywhere equivalence \citep{brenier1991polar, mccann1995existence}, and can be viewed as a multivariate generalization of the quantile transform $F\inv$, see e.g.\ \cite{chernozhukov2017monge}.
For $F = \otimes_{j=1}^m F_j$, the componentwise quantile map in Equation \eqref{eqn:product-quantile-map} is the gradient of the separable convex function $\psi(u) = \sum_{j=1}^m \int_0^{u_j} F_j\inv(t) \, dt$ and pushes $\unif[0, 1]^m$ forward to $F$, so it coincides with the Brenier map.

\textbf{Computation.}
For the discrete treatment catalogs $\suppf = \{d_1, \dots, d_R\} \sub \mr^m$ of Section \ref{ex:irregular-uber-eats}, and for the constrained spaces of Example \ref{ex:development-constrained} when the pre-space $\suppfpre$ is discrete, finding $\tstar$ in Equation \eqref{eqn:monge-ot-intro} is a \emph{semi-discrete} optimal transport problem, efficiently solvable by convex optimization \citep{merigot2011multiscale}.
The map has a simple form in this case, partitioning $[0, 1]^m$ into convex regions $[0, 1]^m = \cup_j C_j$, disjoint up to null sets, with each region mapped to a unique point $\tstar(C_j) = d_j$ in $\suppf$.
This partition is known as a Laguerre tessellation.
If $\suppf$ is fully continuous, one can construct a fine discrete approximation $\suppf_{approx} = \{d_1, \dots, d_N\}$ by rejection sampling, proposing $d \sim \unif(\suppfpre)$ and accepting if $C(d) \le B$, then compute the semi-discrete Brenier map onto the accepted points with mass proportional to $F$.
Since the discretization preserves the underlying geometry of $\suppf$, we still expect $\corrg(\phi(\Di), \phi(\Dj)) < 0$ for smooth functions $\phi: \suppf \to \mr$.
If the constraint functions $C(d) \le B$ are simple enough, it is sometimes possible to analytically construct a transport map $T: [0, 1]^m \to \suppf$ using Knothe-Rosenblatt transport \citep{carlier2010knothe} instead.

\subsection{Antithetic Variates} \label{subsection:av}

In this section, we show how our framework characterizes the exact efficiency of the antithetic variates coupling for $k=2$, generalizing the discussion in Section \ref{paragraph:antithetic-variates} beyond monotone influence functions.
Throughout this subsection, let $F = \unif[0, 1]$. The results extend to general univariate $F$ by applying them to the effective influence functions $\si \circ F\inv$ on $[0, 1]$ (Remark~\ref{rem:canonical-marginals}).
Define the even and odd functions in $\ltwonot$ by $E_{even} = \{\phi \in \ltwonot: \phi(d) = \phi(1-d) \}$ and $E_{odd} = E_{even}^\perp$ in $\ltwonot$ to be functions with $\phi(d) = -\phi(1-d)$.
We show the decomposition $\ltwonot = E_{even} \oplus E_{odd}$, which are eigenspaces of $U_G$ with dispersions $\disp_G(\eodd) = 1$ and $\dispg(\eeven) = -1$.
The projections have closed forms, with $\phi^{even}(d) = (1/2)(\phi(d) + \phi(1-d)) - \ef[\phi(D)]$ and $\phi^{odd}(d) = (1/2)(\phi(d) - \phi(1-d))$.
Applying Theorem \ref{thm:eigenspace-decomposition} yields the following exact efficiency of antithetic variates, proved in the supplementary online appendix.

\begin{thm}[AV] \label{thm:av-variance}
The variance $n \var_{G}(\est) = \vmatch(s^{odd}) + 2 \cdot \vg(s^{even})$ and
\begin{equation} \label{eqn:av-efficiency}
1 - \frac{\var_G(\est)}{\var_{\giid}(\est)} = w_{odd} \cdot \matchcoeffk(s^{odd}) - w_{even} \cdot \matchcoeffk(s^{even}).
\end{equation}
\end{thm}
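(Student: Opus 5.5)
The plan is to establish the eigenspace decomposition stated just before the theorem, and then read off both displays from Theorem~\ref{thm:eigenspace-decomposition} and Corollary~\ref{cor:efficiency-convex}.

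\textbf{Step 1: the coupling operator.} With $k=2$ and canonical marginal $F=\unif[0,1]$, antithetic variates sets $D_j = 1-D_i$ almost surely. The coupling operator is therefore explicit:
\[
(U_G\phi)(d) = E_G[\phi(D_i)\mid D_j=d] = \phi(1-d).
\]
This is the reflection operator $R\phi(d)=\phi(1-d)$. It is an involution ($R^2=I$) and preserves $F$, so it is unitary and self-adjoint on $\ltwonot$.

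\textbf{Step 2: the eigenspaces.} Since $R^2=I$, the maps $\tfrac12(I\pm R)$ are complementary orthogonal projections onto $\{R\phi=\pm\phi\}$. This gives $\ltwonot=\eeven\oplus\eodd$, with eigenvalue $+1$ on $\eeven$ and $-1$ on $\eodd$. Assumption~\ref{assump:direct-sum} therefore holds. The projection formulas follow directly: $\phi^{even}=\tfrac12(\phi+R\phi)-\ef[\phi(D)]$ and $\phi^{odd}=\tfrac12(\phi-R\phi)$, where the $\ef[\phi(D)]$ term handles the constant component. By Theorem~\ref{thm:dispersion-decomposition}(a), $\dispg(E)=-(k-1)\lambda=-\lambda$, so $\disp_G(\eodd)=1$ and $\disp_G(\eeven)=-1$.

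\textbf{Step 3: the efficiency identity.} Theorem~\ref{thm:eigenspace-decomposition}, with $m$ ranging over $\{odd, even\}$, gives
\[
1-\frac{\var_G(\est)}{\var_{\giid}(\est)} = w_{odd}\,\matchcoeffk(s^{odd}) - w_{even}\,\matchcoeffk(s^{even}),
\]
which is \eqref{eqn:av-efficiency}.

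\textbf{Step 4: the variance formula.} For the variance, Corollary~\ref{cor:efficiency-convex} gives
\[
n\var_G(\est)=\vmatch(s^{odd}) + \bigl[2\viid(s^{even})-\vmatch(s^{even})\bigr].
\]
It remains to identify the bracket with $2\vg(s^{even})$. Here I read $\vg(s)$ as the average within-pair variance of group sums,
\[
\vg(s)=\frac{1}{2n}\sum_g \varf\bigl(s_{1g}+s_{2g}\bigr),
\]
which is natural because even functions are clustered under AV: $\phi(D_i)=\phi(D_j)$. For $k=2$, Definition~\ref{defn:match-quality-coefficient} gives
\[
\vmatch(s)=\frac1n\sum_g\varf(s_{1g}-s_{2g}), \qquad \viid(s)=\frac1n\sum_g\bigl[\varf(s_{1g})+\varf(s_{2g})\bigr].
\]
The parallelogram identity
\[
\varf(a+b)+\varf(a-b)=2\varf(a)+2\varf(b)
\]
then yields $2\vg(s)=2\viid(s)-\vmatch(s)$. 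Substituting this into the bracket completes the proof.

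\textbf{Expected obstacle.} The only real difficulty is pinning down the intended definition of $\vg$, which appears to live in the supplementary appendix. If it is defined differently, for instance via a covariance form, the same parallelogram computation should reconcile the two. As a cross-check, I would compute $\var_G$ of the pair sum directly, $\var(s_{1g}^{even}(D)+s_{2g}^{even}(D))$, and confirm that the odd-even cross terms vanish. They do: odd and even parts are orthogonal in $L^2(F)$, and under $D_j=1-D_i$ each cross covariance reduces to an inner product between an even and an odd function.
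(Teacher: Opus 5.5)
Your proposal is correct and follows the paper's proof. Like the paper, you first establish $\ltwonot = \eeven \oplus \eodd$ with $U_G\phi(d) = \phi(1-d)$ (the paper's Lemma~\ref{lem:av-operator}) and then read off both displays from Theorem~\ref{thm:eigenspace-decomposition} and Corollary~\ref{cor:efficiency-convex}. Your guessed $\vg$ is exactly the paper's definition $\vg(s) = (n/k)\inv \sum_g \varf(\sgbar)$ specialized to $k=2$, and your parallelogram step is the same algebra the paper carries out with the $k=2$ identities $\vmatch(s) = \viid(s) - c(s)$ and $2\vg(s) = \viid(s) + c(s)$ from Lemma~\ref{lem:var-components}.
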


Antithetic variates performs well if $w_{odd}$ is close to $1$ and units are well-matched.
Similar to rotation sampling, efficiency is penalized if the $\si(\cdot)$ have large weight on the negative dispersion eigenspace.
To see why $\dispg(E_{even}) = -1$, note that $\done = U$ and $\dtwo = 1-U$, even functions produce perfectly correlated samples $\phi(\done) = \phi(\dtwo)$.
The term $2 \cdot \vg(s^{even})$ in the nominal variance reflects this clustering effect, which reduces effective sample size by a factor of $2$.

\begin{remark}[Gaussian Equivalence] \label{rem:gaussian-equivalence}
For $k=2$, the Gaussian coupling is equivalent to antithetic variates randomization.
To see this, note if $(Z_1, Z_2) \sim G$ with $\corrg(Z_1, Z_2)=-1$ then $Z_1 = -Z_2$.
The rank transform satisfies $\Phi(z) = 1-\Phi(-z)$ for $z \in \mr$, so for $U_i = \Phi(Z_i)$ we have $U_1 = 1 - U_2$.
\end{remark}

\subsection{Dispersion of Stratified Randomization} \label{subsection:dispersion-stratified}

Stratified randomization is a particular example of a coupling design, where the coupling is given by complete randomization.
By definition, complete randomization assigns equal probability to all treatment allocations $d_{1:k} \in \suppf^k$ that exactly reproduce the target distribution $F$.
We can formalize this as follows:

\begin{defn}[Complete Randomization]\label{defn:stratification}
Let $\suppf = [m]$ and $f_j = F(D=j)$.
For an allocation $d_{1:k} = (d_1, \dots, d_k)$, define the realized proportion of units $i \in [k]$ assigned to $d=j$ by $\wh f_j(d_{1:k}) = k\inv \sum_{i=1}^k \one(d_i = j)$.
Then define the complete randomization coupling $\gs \equiv \unif \{d_{1:k}: \wh f(d_{1:k}) = f \}$.
\end{defn}

As the distribution $F$ becomes more complex, we may need very large group sizes $k$ to satisfy this condition, as the next result shows:

\begin{thm}[Stratified Randomization] \label{thm:classic-stratification}
The coupling $\gs \in \transportsym(F)$ exists if and only if $k \cdot f_j \in \mathbb{N}$ for all $j \in [m]$.
If $\gs$ exists, then $\disp_{\gs}(\phi) = 1$ for any $0 < \varf(\phi) < \infty$.
\end{thm}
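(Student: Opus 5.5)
The proof has two parts: when $\gs$ exists and lies in $\transportsym(F)$, and what its dispersion is.

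\textbf{Existence.} The support set $A \equiv \{d_{1:k} \in \suppf^k : \wh f(d_{1:k}) = f\}$ is nonempty if and only if every $k f_j$ is a nonnegative integer. Indeed, $\wh f_j$ takes values in $k\inv \mathbb{N}$, so necessity is immediate. For sufficiency, since $\sum_j k f_j = k$, the allocation that assigns the first $k f_1$ positions to treatment $1$, the next $k f_2$ to treatment $2$, and so on lies in $A$.

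\textbf{Membership in $\transportsym(F)$.} The set $A$ is invariant under permutations of the coordinates, because permuting positions does not change the realized proportions $\wh f$. Hence the uniform distribution on $A$ is exchangeable. Exchangeability makes every marginal $G_i$ the same, so
\[
G_i(j) = k\inv \sum_{l=1}^k \gs(D_l = j) = \eg[\wh f_j] = f_j,
\]
since $\wh f_j = f_j$ holds surely on $A$. Thus $\gs \in \transportsym(F)$.

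\textbf{Dispersion.} I would use Proposition~\ref{prop:dispersion-interpretation}, which reduces the claim to showing $\corrg(\phi(D_i), \phi(D_j)) = -1/(k-1)$ for $i \neq j$. Under $\gs$ the multiset $\{D_1, \dots, D_k\}$ is fixed, so the sum $S \equiv \sum_{i=1}^k \phi(D_i) = k \sum_j f_j \phi(j)$ is deterministic and $\var_G(S) = 0$. Expanding by exchangeability gives
\[
0 = \var_G(S) = k \varf(\phi) + k(k-1)\covg(\phi(D_1), \phi(D_2)).
\]
Solving yields $\covg(\phi(D_1), \phi(D_2)) = -\varf(\phi)/(k-1)$, and therefore $\dispg(\phi) = 1$ whenever $0 < \varf(\phi) < \infty$.

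An equivalent route works directly from Definition~\ref{defn:dispersion}. The sample variance $\vark(\phi(D_i))$ is deterministic under $\gs$ and equals $k\varf(\phi)/(k-1)$, because the empirical distribution is exactly $F$. Substituting gives $(k-1)(k/(k-1) - 1) = 1$.

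The only delicate point is that the marginal computation depends on exchangeability of the uniform law on $A$; nothing else is hard.
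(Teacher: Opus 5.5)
Your proposal is correct and follows essentially the same route as the paper: existence via integrality of the counts $k f_j$, exchangeability from permutation invariance of $\wh f$, marginals from $\eg[\wh f_j] = f_j$, and dispersion from the fact that $\sum_i \phi(D_i)$ is deterministic under $\gs$. The only cosmetic difference is that the paper turns the zero variance into $\disp_{\gs}(\phi) = 1$ through the homogeneous identity \eqref{eqn:pure-dispersion}, whereas you go through Proposition~\ref{prop:dispersion-interpretation} (or directly through the definition); these rest on the same variance expansion.
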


The theorem shows that for discrete $F$, complete randomization is a simple heuristic to attain perfect dispersion $\dispg(\cdot) = 1$ within groups.
However, in settings with many treatments or irregular treatment probabilities, this may only exist for very large $k$-tuples, potentially destroying match quality.
By Theorem \ref{thm:variance-parametric}, stratified randomization will generally be suboptimal in such settings since efficiency increases with the product dispersion $\times$ match quality, not dispersion alone.

%\begin{ex}[Response-Adaptive Treatments] \label{ex:response-adaptive}
%There is a large literature on response adaptive designs that optimize assignment probabilities $F_t(\cdot)$ over time using data from previous experimental waves.
%Some recent papers include \cite{kasy2021adaptive}, \cite{tabord-meehan2020}, \cite{cytrynbaum2023} and \cite{zhao2023adaptive}.
%Such estimated probabilities often take irregular forms, e.g.\ $F_t(\cdot) = (0.23, 0.417, \dots)$.
%To balance covariates using classic stratification in such settings, one would need to either (1) round estimated optimal probabilities to rational values with a small common denominator or (2) use very large tuple size $k$ to stratify exactly.
%By contrast, coupling designs allow randomization within tightly matched $k$-tuples for \emph{any} choice of $k \ge 2$ and any distribution $F_t(\cdot)$.
%For a simple example, if $F = \bern(p)$ for $p=0.23$, then $k \cdot p \in \mathbb{N}$ in Theorem \ref{thm:classic-stratification} requires $k \ge 100$.
%By contrast, if we use Latin hypercube (Example \ref{ex:latin-hypercube}) with $k=4$, we will have $\dispg(H) \approx 0.9$ for the Horvitz-Thompson weights $H(d)$, with radically improved match quality.
%\end{ex}

\subsection{Estimator Asymptotic Theory} \label{subsection:estimator-asymptotics}

We first record a simple pointwise consistency lemma, which follows from the uniform consistency of Theorem~\ref{thm:uniform-consistency-rates}.

\begin{lem}[Pointwise Consistency] \label{lem:generic-rootn-consistency}
Let $G \in \transportsym(F)$ with $k = O(1)$.
If $\viid(s) = O(1)$, then $\en[\si(\Di)] = \thetan + \Op(\negrootn)$.
\end{lem}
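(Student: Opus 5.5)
The plan is to bound the mean squared error by Chebyshev using the uniform rate in Theorem~\ref{thm:uniform-consistency-rates}. The estimator $\est = \en[\si(\Di)]$ is unbiased for $\thetan$ under any $G \in \transportsym(F)$, since each $\Di \sim F$ marginally and hence $\eg[\si(\Di)] = \ef[\si(D)]$. It therefore suffices to show $n \var_G(\est) = n\eg[(\est - \thetan)^2] = O(1)$, after which Chebyshev's inequality gives $\est - \thetan = \Op(\negrootn)$.

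To control $n\var_G(\est)$, rescale. If $\viid(s) = 0$, then every $\si$ is $F$-a.s.\ constant, so $\est = \thetan$ almost surely and there is nothing to prove. Otherwise, set $\tilde s_i = \si / \viid(s)^{1/2}$. Then $\en[\varf(\tilde s_i)] = 1$, so $\tilde s \in \smom$, and by homogeneity of the variance,
\begin{equation*}
n\var_G(\est) = \viid(s) \cdot n \var_G(\en[\tilde s_i(\Di)]) \le \viid(s)\, \maxraten(G, \smom).
\end{equation*}

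Theorem~\ref{thm:uniform-consistency-rates} gives $\maxraten(G,\smom) \le k$, and we have $k = O(1)$ and $\viid(s) = O(1)$. Hence $n\var_G(\est) = O(1)$, which completes the argument.

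The only delicate point is that Theorem~\ref{thm:uniform-consistency-rates} is stated under Assumption~\ref{assump:direct-sum}, which the lemma does not impose. The bound $\maxraten \le k$ should hold without it, and I would first try to prove it directly. Since groups are independent,
\begin{equation*}
n\var_G(\est) = n^{-1} \sum_g \var_G\Bigl(\textstyle\sum_{i\in[k]} \sig(D_{ig})\Bigr).
\end{equation*}
By Cauchy--Schwarz, $\var(\sum_{i\in[k]} X_i) \le k \sum_{i\in[k]} \var(X_i)$, and each $D_{ig} \sim F$, so
\begin{equation*}
n\var_G(\est) \le n^{-1} \sum_g k \sum_{i\in[k]} \varf(\sig) = k\,\viid(s).
\end{equation*}
This yields the same conclusion directly, so the spectral assumption is not actually needed.
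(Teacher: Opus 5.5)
Your proposal is correct, and your fallback argument is essentially the paper's proof. The paper also combines independence across groups with the fixed marginals to get $n\var_G(\est) \le k\,\viid(s)$, then applies Chebyshev. It expands into pairwise covariances and bounds each by Cauchy--Schwarz and Young, which is equivalent to your inequality $\var(\sum_i X_i) \le k \sum_i \var(X_i)$. You are right that going through Theorem~\ref{thm:uniform-consistency-rates} would need Assumption~\ref{assump:direct-sum}. That is why the paper proves the bound directly, despite the main text saying the lemma follows from that theorem.
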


The proof of the following proposition follows by Lemma~\ref{lem:generic-rootn-consistency} and standard algebraic manipulations.

\begin{prop}[OLS Linearization] \label{prop:ols-linearization}
Suppose $k = O(1)$, $\ef[|D|_2^4] < \infty$, $\en \ef[\yi(D)^4] = O(1)$, and $\varf(D) \succ 0$.
Then the OLS coefficient $\wh \beta$ of Example~\ref{ex:blp} satisfies the linearization in Equation~\eqref{eqn:ols-influence-function}.
\end{prop}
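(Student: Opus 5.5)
We must prove Proposition (OLS Linearization): under $k=O(1)$, fourth moments, and $\varf(D)\succ 0$, the OLS coefficient $\wh\beta$ from $\yi \sim 1 + \Di$ satisfies $\wh\beta - \thetablp = \en[\si(\Di)] + \Op(n\inv)$ with $\si(d) = \ei(d)H(d)$.

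The plan is to write OLS in centered form and reduce every random quantity to a sample average of the type $\en[\si(\Di)]$ covered by Lemma~\ref{lem:generic-rootn-consistency}. Write $\mu = \ef[D]$, $\Sigma = \varf(D)$, $\bar D = \en[\Di]$ and $\wh\Sigma = \en[(\Di - \bar D)(\Di - \bar D)']$. Because the intercept absorbs constants, we may write $\yi(\Di) = \alpha_n + \thetablp'(\Di - \mu) + \ei(\Di)$ with $\alpha_n = \ef[\ynbar(D)]$. Then
\begin{equation*}
\wh\beta - \thetablp = \wh\Sigma\inv \, \en[(\Di - \bar D)\ei(\Di)].
\end{equation*}

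\textbf{Step 1: first-order consistency.} Apply Lemma~\ref{lem:generic-rootn-consistency} coordinatewise and entrywise to the deterministic (unit-independent) functions $d$ and $dd'$. Their $\viid$ is $O(1)$ by $\ef|D|_2^4<\infty$. This gives $\bar D - \mu = \Op(\negrootn)$ and $\en[\Di\Di'] - \ef[DD'] = \Op(\negrootn)$. Hence $\wh\Sigma - \Sigma = \Op(\negrootn)$, and since $\Sigma \succ 0$, $\wh\Sigma\inv - \Sigma\inv = \Op(\negrootn)$.

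\textbf{Step 2: the numerator.} Apply the same lemma to the functions $\ei(d)$ and $(d-\mu)\ei(d)$. Their average $L^2(F)$ norms are bounded by Cauchy--Schwarz from $\en\ef[\yi(D)^4]=O(1)$ and $\ef|D|^4<\infty$. Here $\thetablp$ and $\alpha_n$ are $O(1)$, being projections of bounded-moment quantities. By the normal equations defining $\thetablp$, the estimand terms vanish: $\en\ef[\ei(D)] = 0$ and $\en\ef[(D-\mu)\ei(D)] = 0$. Consequently $\en[\ei(\Di)] = \Op(\negrootn)$ and $\en[(\Di-\mu)\ei(\Di)] = \Op(\negrootn)$. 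Decompose
\begin{equation*}
\en[(\Di-\bar D)\ei(\Di)] = \en[(\Di-\mu)\ei(\Di)] - (\bar D - \mu)\en[\ei(\Di)].
\end{equation*}
The second term is a product of two $\Op(\negrootn)$ factors, hence $\Op(n\inv)$.

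\textbf{Step 3: combine.} We have
\begin{equation*}
\wh\beta - \thetablp = \Sigma\inv \en[(\Di-\mu)\ei(\Di)] + (\wh\Sigma\inv - \Sigma\inv)\,\Op(\negrootn) + \Op(n\inv).
\end{equation*}
The middle term is $\Op(n\inv)$. Since $H(d) = \Sigma\inv(d-\mu)$, the leading term equals $\en[\ei(\Di)H(\Di)] = \en[\si(\Di)]$, which proves the claim.

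The main obstacle is the bookkeeping needed to make sure each application of Lemma~\ref{lem:generic-rootn-consistency} has $\viid = O(1)$ uniformly in $n$. This requires bounding $\thetablp$, which is immediate from $\Sigma\succ0$ and the Cauchy--Schwarz bound on $\en\covf(\yi(D),D)$. The other subtlety is that the design's dependence within groups does not matter, because the lemma already handles it for $k=O(1)$.
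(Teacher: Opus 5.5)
Your proposal is correct and follows essentially the same route as the paper. Both arguments reduce $\wh\beta - \thetablp$ to $\varf(D)\inv \en[(\Di - \ef[D])\ei(\Di)]$ plus products of two $\Op(\negrootn)$ factors, using Lemma~\ref{lem:generic-rootn-consistency} and the normal equations $\en\ef[(D-\ef[D])\ei(D)] = 0$; the differences are organizational: you substitute the residual representation into the OLS formula to get the exact identity $\wh\beta - \thetablp = \wh\Sigma\inv\en[(\Di - \bar D)\ei(\Di)]$ where the paper splits $\covn(\Di,\yi)$ into $A_n^1 + A_n^2 + B_n$, and you make explicit the boundedness of $\thetablp$ and $\ef[\ynbar(D)]$ that the paper uses implicitly in its $c_r$-inequality step.
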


The asymptotic linearization for the logit estimator also follows by standard arguments that upgrade the pointwise consistency in Lemma~\ref{lem:generic-rootn-consistency} to a uniform law of large numbers.

\begin{prop}[Logit Linearization] \label{prop:logit-linearization}
Consider the discrete-choice setting of Example~\ref{ex:logit}, with binary outcomes $\yi(d) \in \{0, 1\}$ and logit MLE $\wh \beta$.
Suppose $k = O(1)$, $\ef[|D|_2^4] < \infty$, $\sup_n |\betan|_2 < \infty$, and $\ef[DD'] \succ 0$.
Then Equation~\eqref{eqn:logit-kl-projection} holds and $\wh \beta$ satisfies the linearization $\wh \beta - \betan = \en[\si(\Di)] + \Op(n\inv)$ with influence function $\si(\cdot)$ as in Example~\ref{ex:logit}.
\end{prop}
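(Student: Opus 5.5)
The statement to prove is Proposition~\ref{prop:logit-linearization}. The plan is the standard M-estimation argument, carried out in the design-based setting. The only stochastic input is Lemma~\ref{lem:generic-rootn-consistency}: it gives $\en[f_i(\Di)] = n\inv\sum_i \ef[f_i(D)] + \Op(\negrootn)$ for any array $(f_i)$ with $\en\varf(f_i) = O(1)$.

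\textbf{Identification.} Define the population criterion
\[
M_n(\beta) = \en \ef[\yi(D)\log L(\beta'D) + (1-\yi(D))\log(1-L(\beta'D))].
\]
Since $\yi$ enters linearly, $M_n(\beta) = \ef[\ynbar(D)\log L(\beta'D) + (1-\ynbar(D))\log(1-L(\beta'D))]$. This equals $-\ef[\kl(\ynbar(D)\,\|\,L(\beta'D))]$ plus a term free of $\beta$ (the Bernoulli entropy of $\ynbar(D)$). So the maximizer $\betan$ of $M_n$ is the KL projection in Equation~\eqref{eqn:logit-kl-projection}.

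Existence and uniqueness follow from strict concavity. The Hessian is $-\ef[L(1-L)(\beta'D)DD']$, which is negative definite because $\ef[DD']\succ 0$ and $L(1-L)>0$. Coercivity holds because $0<\ynbar<1$ on a set of positive mass; otherwise I treat $\sup_n|\betan|_2<\infty$ as assumed existence. The first-order condition is $\en\ef[e_i(D)D]=0$, which gives $\en\ef[\si(D)]=0$.

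\textbf{Consistency.} Let $\wh M_n(\beta) = \en[\yi\log L(\beta'\Di) + (1-\yi)\log(1-L(\beta'\Di))]$. Fix a compact ball $B$ containing all $\betan$ with margin. For each fixed $\beta$, the summands satisfy $|\ell_i(\beta,d)|\le C(1+|\beta||d|)$, so the moment condition $\ef|D|^2<\infty$ and Lemma~\ref{lem:generic-rootn-consistency} give $\wh M_n(\beta)-M_n(\beta)=\op(1)$.

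To make this uniform on $B$, use that the summands are Lipschitz in $\beta$ with constant $|\Di|$ and that $\en|\Di| = \Op(1)$ (Lemma again, applied to $|d|$). A finite net argument then gives $\sup_B|\wh M_n - M_n| = \op(1)$. Alternatively, concavity lets one use the convexity lemma, where pointwise convergence already suffices.

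Uniform strict concavity on $B$ gives a well-separated maximum, uniformly in $n$. The minimum eigenvalue of the Hessian is bounded below on $B$ since $L(1-L)\ge c_B$ whenever $|D|\le R$, with $R$ chosen so that $\ef[DD'\one(|D|\le R)]\succ 0$. Because $\wh M_n$ is concave, its maximizer lies in $B$ with probability tending to one, and $\wh\beta-\betan=\op(1)$.

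\textbf{Linearization.} Write the score as $\wh S_n(\beta)=\en[(\yi-L(\beta'\Di))\Di]$ and expand $0=\wh S_n(\wh\beta)=\wh S_n(\betan)-\wh J_n(\wh\beta-\betan)+R_n$, where $\wh J_n=\en[L(1-L)(\betan'\Di)\Di\Di']$.

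The three pieces are controlled as follows:
\begin{itemize}
\item $\wh S_n(\betan)=\en[e_i(\Di)\Di]$ has mean zero and is $\Op(\negrootn)$ by the Lemma, using $\ef|D|^2<\infty$ and $|e_i|\le 1$.
\item $\wh J_n - J_n=\Op(\negrootn)$ by the Lemma, which needs $\ef|D|^4<\infty$.
\item The remainder satisfies $|R_n|\le C\,\en|\Di|^3\,|\wh\beta-\betan|^2$, because $L''$ is bounded. I plan to control the third moment through $\en|\Di|^3\le(\en|\Di|^4)^{3/4}=\Op(1)$.
\end{itemize}

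From these, $\wh\beta-\betan=\Op(\negrootn)$ first. Then
\[
\wh\beta-\betan = J_n\inv\wh S_n(\betan)+\Op(n\inv) = \en[\si(\Di)]+\Op(n\inv),
\]
with $\si(d)=e_i(d)J_n\inv d$ as in Example~\ref{ex:logit}. This uses $\|J_n\inv\|=O(1)$, which follows from the bounded-$\betan$ argument above.

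\textbf{Main obstacle.} I expect the hardest step to be the uniform ($n$-independent) control of the curvature and of the consistency step. Under the design, $\betan$ drifts with $n$, so well-separatedness and the invertibility of $J_n$ must hold uniformly. This is exactly where $\sup_n|\betan|<\infty$ and the truncation argument for $\ef[L(1-L)DD']$ enter.
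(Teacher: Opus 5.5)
Your proposal is correct and follows essentially the paper's route. The shared steps are: a pointwise law of large numbers from Lemma~\ref{lem:generic-rootn-consistency}; uniform convergence on a fixed ball, where the paper uses Pollard's convexity lemma and your net argument works equally well; a curvature bound uniform in $n$ from $\sup_n |\betan|_2 < \infty$ and $\ef[DD'] \succ 0$; concavity to pass from the constrained to the global maximizer; and a score expansion whose remainder is controlled by $\en[|\Di|_2^3] = \Op(1)$. Your second-order remainder with the Jacobian at $\betan$ is interchangeable with the paper's integrated Jacobian plus resolvent identity.

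One aside is false as stated: $0 < \ynbar < 1$ on a set of positive mass does not give coercivity. Separation can still occur along a direction $v$ with $v'D = 0$ on that set, for instance an atom, while $v'D$ separates the region where $\ynbar \in \{0,1\}$. This is not load-bearing, however. The hypothesis $\sup_n |\betan|_2 < \infty$ already presupposes that $\betan$ exists, and the paper treats existence the same way.
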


The proofs of both Proposition~\ref{prop:ols-linearization} and Proposition~\ref{prop:logit-linearization} can be found in the supplementary online appendix.

\section{Proofs}

\subsection{Proofs for Section \ref{section:dispersion-match-quality}} \label{proofs:dispersion-match-quality}

\begin{proof}[Proof of Theorem \ref{thm:variance-parametric}]
For $\est = \en[\si(\Di)]$ with $\si(d) = \ci + \ai \phi(d)$, we have
\begin{align*}
n \var_G(\est)
&= \viid(s) + n \inv \sum_{g} \sum_{i \neq j \in [k]} \covg(\sig(\dig), \sjg(\djg)) \\
&= \viid(s) + n \inv \sum_{g} \sum_{i \neq j \in [k]} \aig \ajg \covg(\phi(\dig), \phi(\djg)).
\end{align*}
The first equality is by definition of coupling designs, with $\est = n\inv \sum_g \sum_i \sig(\dig)$.
The second equality is by our parametric assumption.
By Proposition \ref{prop:dispersion-interpretation}, this equals $\viid(s) - \dispg(\phi) \cdot (k-1)\inv n \inv \sum_{g} \sum_{i \neq j \in [k]} \aig \ajg \varf(\phi)$.
By Lemma~\ref{lem:var-components}, match quality $\matchcoeffk(s) = (k-1)\inv c(s) / \viid(s)$ for $c(s) = n\inv \sum_g \sum_{i\neq j \in [k]} \covf(\sig, \sjg) = \varf(\phi) \cdot n\inv \sum_g \sum_{i\neq j \in [k]} \aig \ajg$.
Then continuing the calculation above, this is
\begin{align*}
&= \viid(s) - \dispg(\phi) \cdot (k-1)\inv c(s) = \viid(s) - \viid(s) \dispg(\phi) \matchcoeffk(s) \\
&= \viid(s)(1 - \dispg(\phi) \matchcoeffk(s)).
\end{align*}
Under the iid design the covariance terms vanish, so the same calculation gives $n \var_{\giid}(\est) = \viid(s)$.
Dividing the two expressions finishes the proof.
\end{proof}

\begin{proof}[Proof of Theorem \ref{thm:classic-stratification}]
First, suppose such $\gs$ exists.
Then there exists $z \in [m]^k$ with $\wh f(z) = f$.
In particular, $\wh f_j(z) = k\inv \sum_{i \in [k]} \one(z_i = j) = f_j$, so $k \cdot f_j = \sum_{i \in [k]} \one(z_i = j) \in \mathbb N$.
Conversely, if $n_j \equiv k \cdot f_j \in \mathbb{N}$ for all $j$, we have $\sum_j n_j = k$.
Then we can construct an allocation $z \in [m]^k$ with $z_i = j$ for exactly $n_j$ indices $i$.
This allocation has $\wh f_j(z) = n_j / k = f_j$ for all $j \in [m]$, so $z \in \{z: \wh f(z) = f \}$.
Then the set $\{z: \wh f(z) = f \}$ is non-empty and $\gs = \unif \{z: \wh f(z) = f \}$ exists.

Next, we show $\gs \in \transportsym(F)$.
Claim $\gs$ is exchangeable.
We must show $P_{\gs}(Z_{\sigma} = z) = P_{\gs}(Z = z)$ for all $z \in [m]^k$ and any permutation $\sigma \in S_k$.
Note $P_{\gs}(Z_{\sigma} = z) = P_{\gs}(Z = z_{\sigma\inv})$.
Since $\gs$ is uniform on $S \equiv \{z: \wh f(z) = f\}$, we have $P_{\gs}(Z = z) = |S|\inv \one(z \in S)$.
Then it suffices to show $z \in S \iff z_{\sigma} \in S$ for all $z \in [m]^k$ and $\sigma \in S_k$, noting $\{\sigma: \sigma \in S_k\} = \{\sigma \inv: \sigma \in S_k\}$.
For $z \in [m]^k$ and $\sigma \in S_k$,
\[
\wh f_j(z_{\sigma}) = k\inv \sum_{i \in [k]} \one(z_{\sigma(i)} = j) = k\inv \sum_{l \in [k]} \one(z_l = j) = \wh f_j(z)
\]
where the second equality is by the substitution $\sigma\inv(l) = i$.
Then $\wh f(z_{\sigma}) = f$ if and only if $\wh f(z) = f$, so $z \in S \iff z_{\sigma} \in S$.
This proves the claim.
Next, we show the marginal $\gs_l = F$.
Fix $j \in [m]$ and note that
\begin{align*}
f_j &= E_{\gs} \bigl [\wh f_j(Z) \bigr ] = k\inv \sum_{i=1}^k P_{\gs}(Z_i=j) = P_{\gs}(Z_l=j) \quad \text{for any} \; l \in [k].
\end{align*}
The first equality holds by definition of $\gs$, the final equality by exchangeability.
This finishes the proof of the first result.
Next consider the claim about dispersion.
Let $(\Di)_{i \in [k]} \sim \gs$.
By definition,
\[
k\inv \sum_{i \in [k]} \phi(\Di) = \sum_{j \in [m]} \phi(j) \cdot \wh f_j(D) = \sum_{j \in [m]} \phi(j) \cdot f_j = \ef[\phi(D)].
\]
Then by Equation \eqref{eqn:pure-dispersion}, we have $0 = k \var_{\gs}(\est) = \varf(\phi)(1-\disp_{\gs}(\phi))$, so $\disp_{\gs}(\phi) = 1$.
This finishes the proof.
\end{proof}

\begin{proof}[Proof of Corollary \ref{cor:covariate-balance}]
Define $\psi(d) = \phi(d) - \ef[\phi(D)]$ and $z_i = b_i - \bar b$.
Set $\sfn_i(d) = z_i \cdot \psi(d)$, so that $\cov_n(\phi(\Di), b_i) = \en[\psi(\Di) \cdot z_i] = \en[\si(\Di)]$.
This has the form studied in Theorem \ref{thm:variance-parametric}.
Since $\ef[\psi(D)] = 0$, we have $\ef[\sfn_i(\Di)] = 0$ for $i \in [n]$, so $\eg[\cov_n(\phi(\Di), b_i)^2] = \eg[(\en[\si(\Di)])^2] = \var_G(\en[\sfn_i(\Di)])$.
By Theorem~\ref{thm:variance-parametric},
\begin{equation} \label{eqn:proof-balance-calculation}
n \cdot \eg[\cov_n(\phi(\Di), b_i)^2] = \viid(s)\big(1 - \dispg(\phi) \cdot \matchcoeffk(s)\big).
\end{equation}
The iid variance is $\viid(s) = \en[z_i^2 \varf(\psi)] = \varf(\phi) \cdot \var_n(b_i)$.
For match quality, by the last identity in Lemma~\ref{lem:var-components},
$\vmatch(s) = (k-1)\inv (n/k)\inv \sum_g \sum_{i \in [k]} \varf(\sig - \sgbar)$.
Since $\sig(d) - \sgbar(d) = (z_{ig} - \bar z_g) \psi(d)$ and $z_{ig} - \bar z_g = b_{ig} - \bar b_g$, we have $\varf(\sig - \sgbar) = (b_{ig} - \bar b_g)^2 \varf(\phi)$.
Then by equality above $\vmatch(s) = \varf(\phi) \cdot (n/k)\inv \sum_g \vark(b_{ig})$.
Then
\[
\matchcoeffk(s) = 1 - \frac{\vmatch(s)}{\viid(s)} = 1 - \frac{(n/k)\inv \sum_g \vark(b_{ig})}{\var_n(b_i)} = \matchcoeffk(b).
\]
By definition, $\imb_G(\phi, b) = \eg[\cov_n(\phi(\Di), b_i)^2]$, so Equation~\eqref{eqn:proof-balance-calculation} and $\matchcoeffk(s) = \matchcoeffk(b)$ give
$n \cdot \imb_G(\phi, b) = \viid(s)(1 - \dispg(\phi) \cdot \matchcoeffk(b))$.
Setting $G = \giid$ and using $\disp_{\giid}(\phi) = 0$, we obtain $n \cdot \imb_{\giid}(\phi, b) = \viid(s)$.
Dividing, $\imb_G(\phi, b) / \imb_{\giid}(\phi, b) = 1 - \dispg(\phi) \cdot \matchcoeffk(b)$, which finishes the proof.
\end{proof}

\subsection{Proofs for Section \ref{section:efficiency-analysis}}

\begin{lem}[Operator Properties]\label{lem:operator-properties}
For $G \in \transportsym(F)$, the coupling operator $U_G$ is a self-adjoint contraction on $\elltwo(F)$.
If $E$ is an eigenspace of $U_G$, then $E$ is closed in $\elltwo(F)$.
If $E, E'$ are eigenspaces of $U_G$ with $\lambda_E \neq \lambda_{E'}$, then $E \perp E'$.
\end{lem}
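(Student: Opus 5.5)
The plan is to derive every property from one identity: for $\phi, \psi \in \elltwo(F)$ and any fixed $i \neq j$,
\begin{equation*}
\lag U_G \phi, \psi \rag_F = \ef\big[(U_G\phi)(D)\psi(D)\big] = \eg\big[\eg[\phi(\Di) \mid \Dj]\, \psi(\Dj)\big] = \eg[\phi(\Di)\psi(\Dj)].
\end{equation*}
The second equality uses that $\Dj \sim F$, since the marginals of $G$ equal $F$. The third is the tower law, which applies because $\phi(\Di)\psi(\Dj)$ is integrable by Cauchy--Schwarz and the fact that both marginals equal $F$. Before this I would check that $U_G$ is well defined as a map into $\elltwo(F)$. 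The conditional expectation is a measurable function of $\Dj$, so it is determined $F$-a.e. Exchangeability makes it independent of the choice of the pair $(i,j)$, as the paper already notes after Definition~\ref{defn:coupling-operator}.

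\textbf{Self-adjointness.} Exchangeability means $(\Di, \Dj) \overset{d}{=} (\Dj, \Di)$. Therefore $\eg[\phi(\Di)\psi(\Dj)] = \eg[\phi(\Dj)\psi(\Di)] = \lag \phi, U_G\psi\rag_F$, and $U_G$ is self-adjoint.

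\textbf{Contraction.} Conditional Jensen gives $(U_G\phi)(\Dj)^2 \le \eg[\phi(\Di)^2 \mid \Dj]$. Taking expectations, $\|U_G\phi\|_F^2 \le \eg[\phi(\Di)^2] = \|\phi\|_F^2$, again because $\Di \sim F$. So $U_G$ is bounded with norm at most one, and the same bound shows that $U_G$ maps $\elltwo(F)$ into itself.

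\textbf{Closed eigenspaces.} An eigenspace is $E = \ker(U_G - \lambda I)$, or its intersection with the closed subspace $\ltwonot$ if the eigenspace is taken inside $\ltwonot$. Since $U_G$ is bounded, $U_G - \lambda I$ is continuous, so its kernel is closed, and an intersection of closed subspaces is closed.

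\textbf{Orthogonality.} Take $\phi \in E$ and $\psi \in E'$. By self-adjointness,
\begin{equation*}
\lambda_E \lag \phi, \psi\rag_F = \lag U_G\phi, \psi\rag_F = \lag \phi, U_G\psi\rag_F = \lambda_{E'}\lag\phi,\psi\rag_F.
\end{equation*}
Because $\lambda_E \neq \lambda_{E'}$, this forces $\lag \phi, \psi\rag_F = 0$. The eigenvalues are real by self-adjointness, so no complex conjugation issues arise.

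\textbf{Main obstacle.} The only delicate step is the well-definedness and integrability bookkeeping for the conditional expectation: a.e. uniqueness, and showing the operator does not depend on the chosen pair $(i,j)$. Both follow from the fixed marginals together with exchangeability, so no real difficulty is expected.
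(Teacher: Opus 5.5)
Your proposal is correct and follows essentially the same argument as the paper: conditional Jensen plus the tower law for the contraction, exchangeability of the pair $(D_1, D_2)$ for self-adjointness, closedness of the kernel of the bounded operator $U_G - \lambda I$, and the standard self-adjointness computation for orthogonality of eigenspaces with distinct eigenvalues. Your extra bookkeeping on well-definedness, integrability, and intersecting with the mean-zero subspace refines the paper's proof without changing the route.
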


\begin{proof}[Proof of Lemma \ref{lem:operator-properties}]
Write $U = U_G$.
First, we show $U$ is a contraction on $\elltwo(F)$.
For any $\phi \in L^2(F)$, by conditional Jensen's inequality and tower law
\begin{align*}
E_F[(U\phi)(D)^2] &= E_F[E_G[\phi(D_1)|D_2]^2] \le E_F[E_G[\phi(D_1)^2|D_2]] \\
&= E_G[\phi(D_1)^2] = E_F[\phi(D)^2].
\end{align*}
Next we show self-adjointness.
For $\phi, \psi \in L^2(F)$,
\begin{align*}
\langle U\phi, \psi \rangle_F &= E_F[(U\phi)(D)\psi(D)] = \ef[\eg[\phi(\done) | \dtwo] \psi(\dtwo)] = \eg[\phi(\done) \psi(\dtwo)] \\
&= \eg[\phi(\dtwo) \psi(\done)] = \lf U \psi, \phi \rf.
\end{align*}
The second equality follows since $\dtwo \sim F$.
The fourth equality since by exchangeability, $(\done, \dtwo) \sim (\dtwo, \done)$, so $(\phi(\done), \psi(\dtwo)) \sim (\phi(\dtwo), \psi(\done))$.
For closedness, since $U$ is a contraction it is bounded, and the eigenspace $E = \ker(U - \lambda I)$ is the kernel of a bounded linear operator, hence closed.
For orthogonality, let $\phi \in E$ and $\psi \in E'$ with $\lambda_E \neq \lambda_{E'}$.
Then by self-adjointness, $\lambda_E \lf \phi, \psi \rf = \lf U\phi, \psi \rf = \lf \phi, U\psi \rf = \lambda_{E'} \lf \phi, \psi \rf$.
Since $\lambda_E \neq \lambda_{E'}$, we have $\lf \phi, \psi \rf = 0$.
\end{proof}

\begin{proof}[Proof of Theorem \ref{thm:dispersion-decomposition}]
First consider (a).
For nonzero $\phi \in E \sub \ltwonot$, we have $\varf(\phi) > 0$, so
\begin{align*}
\dispg(\phi) &= -(k-1)\corrg(\phi(\done), \phi(\dtwo)) = -(k-1) \frac{\eg[\phi(\done)\phi(\dtwo)]}{\varf(\phi)} \\
&= -(k-1) \frac{\lf U_G \phi, \phi \rf}{\varf(\phi)} = -(k-1) \frac{\varf(\phi)}{\varf(\phi)} \cdot \eval = -(k-1) \cdot \eval.
\end{align*}
The third equality by calculations in Lemma \ref{lem:operator-properties}.
The fourth equality since $\phi \in E \sub \ltwonot$.
This proves the claim.
Next consider (b).
First, we show the projection $P_m$ exists and has the required form.
By Lemma \ref{lem:operator-properties}, each eigenspace $E_m$ is closed in $\elltwo(F)$, so the projection $P_m \phi = \argmin_{f \in E_m} \ef[(\phi(D) - f(D))^2]$ exists and is unique.
Moreover, recall $\ef[f(D)] = 0$ for any $f \in E_m \subset \ltwonot$.
Let $\phi' = \phi - \ef[\phi]$.
Then
\begin{align*}
&\ef[(\phi(D) - f(D))^2] = \ef\big[(\phi'(D) + \ef[\phi] - f(D))^2 \big] \\
&= \ef[\phi]^2 + \ef\big[(\phi'(D) - f(D))^2 \big] = \varf(\phi - f) + \ef[\phi]^2.
\end{align*}
The second equality since $\ef[\phi' - f] = 0$, so the cross-term cancels.
This shows that $P_m \phi = \argmin_{f \in E_m} \varf(\phi - f)$.
For the main result, let $n=k$ and define $\si = \phi$ for each $i \in [k]$.
We use Theorem \ref{thm:eigenspace-decomposition}, whose proof below relies only on part (a) of the present theorem, so there is no circularity: the dependency order is part (a), then Theorem \ref{thm:eigenspace-decomposition}, then part (b).
Combining Equation \eqref{eqn:pure-dispersion} and Theorem \ref{thm:eigenspace-decomposition},
\begin{equation}
\dispg(\phi) = 1- \frac{\var_G(\est)}{\var_{\giid}(\est)} = \sum_{m \ge 1} w_m \cdot \disp_G(E_m) \matchcoeffk(\phi^m).
\end{equation}
Since $\sfn_i = \phi$ for all $i$, the match quality $\matchcoeffk(\phi^m) = 1$.
The projection weight $w_m = \en \varf(P_m \si) / \en \varf(\si) = \varf(P_m \phi) / \varf(\phi)$.
This finishes the proof.
\end{proof}

\begin{proof}[Proof of Corollary~\ref{cor:extremal-directions}]
Let $\ltwonot = E \oplus E^\perp$ be eigenspaces of $U_G$ with dispersions $\dispg(E)$ and $\dispg(E^\perp)$.
Without loss, suppose $\dispg(E) > \dispg(E^\perp)$.
Let $\phi \in \elltwo(F)$ with $\varf(\phi) > 0$.
By Theorem~\ref{thm:dispersion-decomposition}, we have $\dispg(\phi) = w \cdot \dispg(E) + (1-w) \cdot \dispg(E^\perp)$, for $w = \varf(P_E \phi) / \varf(\phi) \in [0,1]$.
Then apparently $\dispg(E^\perp) \le \dispg(\phi) \le \dispg(E)$.
To show the upper bound is achieved, take $\phi \in E$ with $\varf(\phi) > 0$.
Then $P_E \phi = \phi$ and $P_{E^\perp} \phi = 0$, so $w = \varf(\phi)/\varf(\phi) = 1$, so $\dispg(\phi) = \dispg(E)$ is achieved.
Conversely, suppose $\dispg(\phi) = \dispg(E)$ for some $\phi \neq c$.
Then $w \cdot \dispg(E) + (1-w) \cdot \dispg(E^\perp) = \dispg(E)$, which requires $w = 1$ since $\dispg(E) \ne \dispg(E^\perp)$.
Then $P_{E^\perp} \phi = 0$ and $\phi \in E$.
This completes the proof.
\end{proof}

\begin{proof}[Proof of Theorem \ref{thm:eigenspace-decomposition}, Corollary \ref{cor:efficiency-convex}]
Define $\ri \equiv \si - \ef[\si] \in \ltwonot$ and let $\rim \equiv P_m \ri$ and $\si^m = P_m \si$.
We have $P_m \ef[\si] = 0$ since $\ef[\si] \perp E_m$.
Then $P_m \ri = P_m \si$ by linearity.
Since $\rim \in E_m$, by definition $U \rim = \evalm \rim$ for all $m$, and (b) $\rim \perp \ri^{l}$ for $m \not =l$, by orthogonality of eigenspaces.
By Assumption \ref{assump:direct-sum}, $\ri = \sum_m P_m \ri = \sum_m \rim$.
Note the key fact $\covg(\phi(\done), \psi(\dtwo)) = \eg[\phi(\done) \psi(\dtwo)] = \eg[\eg[\phi(\done)|\dtwo]\psi(\dtwo)] = \ef[(U\phi)(D)\psi(D)] = \lf U \phi, \psi \rf$ for any $\phi, \psi \in \ltwonot$.
Denote $U_G = U$ and calculate
\begin{align*}
&n \var_G(\est) = n\inv \sum_i  \varf(\si(D)) + n \inv \sum_{g} \sum_{i \neq j \in [k]} \covg(\sig(\done), \sjg(\dtwo)) \\
&= n\inv \sum_i  \varf(\ri(D)) + n \inv \sum_{g} \sum_{i \neq j \in [k]} \covg(\rig(\done), \rjg(\dtwo)).
\end{align*}
Continuing, this is
\begin{align*}
&n\inv \sum_i  |\sum_m \ri^m|_F^2 + n \inv \sum_{g} \sum_{i \neq j \in [k]} \covg \big (\sum_m \rigm(\done), \sum_l \rjg^l(\dtwo) \big ) \\
&= n\inv \sum_i  \big |\sum_m \rim \big |_F^2 + n \inv \sum_{g} \sum_{i \neq j \in [k]} \big \lf U \sum_m \rigm, \sum_l \rjg^l \big \rf.
\end{align*}
The first equality by definition of the design.
The second equality since $G \in \transportsym(F)$ so $\sig(\done) - E_G[\sig(\done)] = \sig(\done) - \ef[\sig(D)] = \rig(\done)$.
The third equality since $\ri = \sum_m \ri^m$ by Assumption \ref{assump:direct-sum}.
The fourth equality by the key fact.
Continuing,
\begin{align*}
&= n\inv \sum_i  \big |\sum_m \rim \big |_F^2 + n \inv \sum_{g} \sum_{i \neq j \in [k]} \big \lf \sum_m \evalm \rigm, \sum_l \rjg^l \big \rf \\
&= n\inv \sum_i \sum_m |\rim|_F^2 + n \inv \sum_{g} \sum_{i \neq j \in [k]} \sum_m \evalm \lf \rigm, \rjgm \rf.
\end{align*}
This equals $n\inv \sum_m \sum_i |\rim|_F^2 + n \inv \sum_m \evalm \sum_{g} \sum_{i \neq j \in [k]} \lf \rigm, \rjgm \rf$.
The first equality since $U \sum_m \rigm = \sum_m U \rigm = \sum_m \evalm \rigm$ by continuity of $U$, since $U$ is a linear contraction by Lemma \ref{lem:operator-properties}.
The first term in the second equality follows by Parseval's theorem.
The second term follows by continuity of the inner product map $(\phi, \psi) \to \lf \phi, \psi \rf$ and orthogonality.
To see this, note $\lf \sum_m \evalm \rigm, \sum_l \rjg^l \rf = \lim_{K \to \infty} \lf \sum_{m=1}^K \evalm \rigm, \sum_{l=1}^K \rjg^l \rf = \lim_{K \to \infty} \sum_{m=1}^K \evalm \lf \rigm, \rjgm \rf$.
For the third equality above, the exchange of sums can be rigorously justified by a standard Fubini argument, using $\en \varf(\si) < \infty$.
Continuing from above, expand $\lf \rigm, \rjgm \rf = -(1/2)(|\rigm - \rjgm|_F^2 - |\rigm|_F^2 - |\rjgm|_F^2)$.
Then $n \inv \sum_m \evalm \sum_{g} \sum_{i \neq j \in [k]} \lf \rigm, \rjgm \rf$ is
\begin{align*}
&- (2n)\inv \sum_m \evalm \sum_g \sum_{i \neq j \in [k]} (|\rigm - \rjgm|_F^2 - |\rigm|_F^2 - |\rjgm|_F^2) \\
&=  \sum_m \evalm \big [(k-1) n\inv \sum_i |\rim|_F^2  - (2n)\inv \sum_g \sum_{i \neq j \in [k]} |\rigm - \rjgm|_F^2 \big].
\end{align*}
This equals $\sum_m \dispg(m) \big [(2n(k-1))\inv \sum_g \sum_{i \neq j \in [k]} |\rigm - \rjgm|_F^2 - \en |\rim|_F^2 \big]$.
The last equality substitutes $\evalm = -\dispg(m)/(k-1)$, by Theorem~\ref{thm:dispersion-decomposition}(a).
Note that $|\rim|_F^2 = \varf(\si^m)$.
Then putting both terms together, we get
\begin{align*}
n\var_G(\est) &= \sum_m (1- \dispg(m)) \cdot \en \varf(\si^m) \\
&+ \dispg(m) \cdot (2n(k-1))\inv \sum_g \sum_{i \neq j \in [k]} \varf(\sig^m - \sjg^m) \\
&= \sum_m (1- \dispg(m)) \cdot \viid(s^m) + \dispg(m) \cdot \vmatch(s^m).
\end{align*}
This proves Corollary \ref{cor:efficiency-convex}.
Then $\releff(G) = 1-\var_G(\est) / \var_{\giid}(\est)$ has
\begin{align*}
\releff(G) &= 1 - \viid(s)\inv \sum_m \left[ \viid(s^m)(1- \dispg(m)) + \vmatch(s^m) \dispg(m) \right] \\
&= \viid(s)\inv \sum_m \dispg(m)(\viid(s^m) - \vmatch(s^m)) \\
&= \sum_m \frac{\viid(s^m)}{\viid(s)} \dispg(m) \left( 1 - \frac{\vmatch(s^m)}{\viid(s^m)} \right) = \sum_m w_m \dispg(m) \matchcoeffk(s^m).
\end{align*}
This proves Theorem \ref{thm:eigenspace-decomposition}.
Finally, we prove an extra decomposition needed for Theorem \ref{thm:gaussian-efficiency}.
By Lemma~\ref{lem:var-components}, $\vmatch(s^m) = \viid(s^m) - (k-1)\inv c(s^m)$ and by Theorem \ref{thm:dispersion-decomposition}(a) we have $\dispg(m) = -(k-1)\evalm$, so we can write
\begin{align*}
n\var_G(\est) &= \sum_m \viid(s^m) + \dispg(m) \cdot (\vmatch(s^m) - \viid(s^m)) \\
&= \sum_m \viid(s^m) + \dispg(m) \cdot (-(k-1)\inv c(s^m)) \\
&= \sum_m \viid(s^m) + \evalm c(s^m) = \viid(s) + \sum_m \evalm \cdot c(s^m).
\end{align*}
This finishes the proof.
\end{proof}

\textbf{LHS Coupling.}
Next, we consider Examples \ref{ex:lhs-dispersion}, \ref{ex:lhs-analysis}, and \ref{ex:lhs-analysis-2}.
Let $G$ be the LHS coupling for canonical marginal $F=\unif[0,1]$.
Define $J(l) = [(l-1)/k, l/k)$ for $l \in [k]$ and $I(d) = \sum_l J(l) \one(d \in J(l))$.
Define the demeaned histogram space $\ehist = \{\phi \in \ltwonot: \phi(x) = \sum_l a_l \one(x \in J(l)) \}$, constants $E_1 = \{\phi: \phi(x) = c\}$, and remainder $\ecomp = \ehist^\perp$ (orthogonal complement in $\ltwonot$).
First, we show that Assumption \ref{assump:direct-sum} holds and characterize the projection operators.

\begin{lem}[LHS]\label{lem:lhs-operator}
Let $F = \unif[0, 1]$ and $G$ the LHS coupling.
The operator $(U_G \phi)(d) = E[\phi(D) | D \not \in I(d)]$.
The direct sum $\ltwonot = \ehist \oplus \ehist^\perp$ holds, eigenspaces of $U_G$ with eigenvalues $\eval = -(k-1)\inv$, and $0$, verifying Assumption \ref{assump:direct-sum}.
The projection is given by $\phist \phi(d) = E[\phi(D) | D \in I(d)] - \ef[\phi(D)]$.
\end{lem}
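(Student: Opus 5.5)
The plan is to compute the conditional law of $D_i$ given $D_j = d$ under LHS directly, and then read off the eigenstructure of $U_G$. Under LHS with $F = \unif[0,1]$, the bin indices $(\pi(i), \pi(j))$ for $i \neq j$ are a uniformly random ordered pair of distinct elements of $[k]$. Given the bins, $D_i$ and $D_j$ are independent and uniform on their bins. So, conditional on $D_j = d$ (equivalently, on $\pi(j)$ being the index of $I(d)$, since the within-bin position of $D_j$ carries no further information), $\pi(i)$ is uniform on the remaining $k-1$ bins and $D_i$ is uniform on that bin. Hence $D_i \mid D_j = d$ is uniform on $[0,1] \setminus I(d)$. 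This gives
\[
(U_G \phi)(d) = E[\phi(D) \mid D \notin I(d)] = \frac{k\,\ef[\phi(D)] - k\,E[\phi(D)\mid D \in I(d)]}{k-1},
\]
which is the first claim. I will simply use $E[\phi \mid D \in J(l)] = k\int_{J(l)}\phi$.

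Next I use this formula to identify the eigenspaces. For $\phi \in \ltwonot$ we get $U_G\phi(d) = -(k-1)^{-1} E[\phi(D)\mid D\in I(d)]$. If $\phi \in \ehist$, then $\phi$ is constant on each bin, so $E[\phi\mid D \in I(d)] = \phi(d)$ and $U_G \phi = -(k-1)^{-1}\phi$. If $\phi \in \ehist^\perp$, then $\phi$ is orthogonal to every bin indicator (each indicator demeaned lies in $\ehist$, and $\phi$ is mean zero), so every bin average vanishes and $U_G\phi = 0$.

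The direct sum $\ltwonot = \ehist \oplus \ehist^\perp$ then follows from the closed-subspace decomposition in a Hilbert space. Here $\ehist$ is finite-dimensional, of dimension $k-1$, hence closed. Because this decomposition consists of eigenspaces, Assumption \ref{assump:direct-sum} holds.

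Finally, for the projection I define $\psi(d) = E[\phi(D)\mid D\in I(d)] - \ef[\phi(D)]$. This $\psi$ is a mean-zero histogram, so $\psi \in \ehist$. I then check that $\phi - \psi$ has zero mean and zero average on each bin, so it is orthogonal to $\ehist$. By the characterization of $P_m$ in Theorem \ref{thm:dispersion-decomposition}(b), this gives $\phist\phi = \psi$. The main point needing care is the conditional-law step: I must justify that conditioning on the exact value $D_j = d$ reduces to conditioning on its bin. I would handle this by writing the joint density of $(D_i, D_j)$ explicitly as $k^2/(k(k-1))$ on pairs lying in distinct bins and $0$ otherwise. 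All remaining statements hold only up to $F$-null sets, namely the bin endpoints.
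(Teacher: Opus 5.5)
Your proof follows the paper's route: conditional law of $D_i$ given $D_j$, eigenvalue checks on $\ehist$ and $\ehist^\perp$, closedness of the finite-dimensional $\ehist$, then the projection formula. The one difference is that you get the conditional law from the explicit joint density $\frac{k}{k-1}$ on pairs in distinct bins. The paper instead conditions on the permutation $\pi$ and uses the tower law with $\sigma(\pi_1) \subseteq \sigma(D_1)$; your version is slightly more direct and settles the ``conditioning on an exact value'' point at once.

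There are two slips to fix.

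First, your displayed formula should read $(U_G\phi)(d) = (k-1)\inv\bigl(k\,\ef[\phi(D)] - E[\phi(D)\mid D\in I(d)]\bigr)$, without the factor $k$ on the second term. As written, it would give eigenvalue $-k/(k-1)$ on $\ehist$, which is impossible for a contraction. Your next line does use the correct version.

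Second, in the projection step, $\phi-\psi$ does not have zero mean or zero bin averages unless $\phi\in\ltwonot$; its mean and every bin average equal $\ef[\phi(D)]$. The check that actually holds is that $\phi-\ef[\phi(D)]-\psi$ has zero mean and zero bin averages. You then conclude either because constants are orthogonal to $\ehist$, or because the $\varf$-criterion in Theorem~\ref{thm:dispersion-decomposition}(b) is invariant to shifts.
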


\begin{proof}
First, we characterize the operator $U_G$.
Recall that $\Di = k\inv (\pii - 1 + U_i)$, where $\pi$ is a random permutation of $[k]$ and $U_i \sim \unif[0,1]$, with $\pi \indep U$.
By tower law $E[\phi(\dtwo) | \done] = E[E[\phi(\dtwo) | \done, \pi] | \done]$.
Since $(U_1, U_2, \pi)$ are jointly independent, we have $U_2 \indep U_1 | \pi$.
Moreover, $\done \in \sigma(U_1, \pi)$ and $\dtwo \in \sigma(U_2, \pi_2)$, so $\done \indep \dtwo | \pi$.
Then we conclude $E[\phi(\dtwo) | \done, \pi] = E[\phi(\dtwo) | \pi]$.
Next, we analyze $E[\phi(\dtwo) | \pi]$.
The distribution $\dtwo | \pi \sim k\inv (\pi_2 - 1) + \unif[0, k\inv]$.
Then $E[\phi(\dtwo) | \pi] = m(\pi_2)$ for discrete function $m(l) \equiv k \int_{(l-1)/k}^{l/k} \phi(x) dx$.
Then we have
\begin{align*}
E[\phi(\dtwo) | \done] &= E[E[\phi(\dtwo) | \pi] | \done] = E[m(\pi_2) | \done] = E[E[m(\pi_2) | U_1, \pi_1] | \done] \\
&= E[E[m(\pi_2) | \pi_1] | \done] = E[m(\pi_2) | \pi_1].
\end{align*}
The third equality by tower law, since $\done \in \sigma(U_1, \pi_1)$.
The fourth equality since $\pi \indep U$.
The last equality since $\pi_1 = f(\done)$ for $f(\done) = \lfloor k \done \rfloor + 1$, so $\sigma(\pi_1) \sub \sigma(\done)$.
By calculations in Lemma~\ref{lem:rs-distribution}, $\pi_2 | \pi_1$ is uniform on $[k] \setminus \{\pi_1\}$.
Then the expectation is $E[m(\pi_2) | \pi_1] = (k-1)^{-1} \sum_{l \neq \pi_1} m(l) = (k-1)^{-1} \sum_{l \neq \pi_1} k \int_{(l-1)/k}^{l/k} \phi(x) \, dx$.
Note $\pi_1 = \lfloor k \done \rfloor + 1$.
Set $I(d) = J(l)$ if $d \in J(l)$.
So $(U_G \phi)(d) = E[\phi(\dtwo) | \done=d]$:
\begin{align*}
(U_G \phi)(d) &= (k-1)^{-1} \sum_{l \neq \lfloor k d \rfloor + 1} k \int_{(l-1)/k}^{l/k} \phi(t) dt =  \frac{k}{k-1} \int_{I(d)^c} \phi(t) dt \\
&= E[\phi(D) | D \not \in I(d)] = (k-1)\inv(k \cdot \ef[\phi] - \ef[\phi(D) | D \in I(d)]).
\end{align*}
The final equality follows by tower law.
This finishes the characterization of $U_G$.

Next, we show the claimed decomposition $\ltwonot = \ehist \oplus \ehist^\perp$.
For $\phi \in \ehist$ we have $\ef[\phi] = 0$ by definition and clearly $\ef[\phi(D) | D \in I(d)] = \phi(d)$ so $(U_G\phi)(d) = -(k-1)\inv \phi(d)$ by the calculation above.
Then $\ehist$ is an eigenspace with $\eval = -(k-1)\inv$.
Finally, note that we can write $(U_G \phi)(d) = k(k-1)\inv \lf \phi, t_d \rf$ for $t_d(x) = \one(x \not \in I(d))$.
Since $t_d \in \linearspan(1) \oplus \ehist$, we have $\lf \phi, t_d \rf = 0$ for $\phi \in \ehist^\perp$, noting the orthocomplement is taken within $\ltwonot$.
Then $\ehist^\perp$ is an eigenspace with $\eval = 0$.
Note $\ehist$ is closed as a finite-dimensional linear subspace, so $\ltwonot = \ehist \oplus \ehist^\perp$, verifying Assumption \ref{assump:direct-sum}.

Finally, let $P\phi(x) = \ef[\phi(D) | D \in I(x)] - \ef[\phi(D)]$.
We claim $P = \phist$.
Let $\phi \in \ehist$, then since $\phi$ is piecewise constant on $J(l)$, $\ef[\phi(D) | D \in I(x)] - \ef[\phi(D)] = \phi(x) - 0 = \phi(x)$, so $P\phi = \phi$.
Also if $\phi = c$ then $P \phi = 0$.
Third, note if $\phi \perp \ehist$ and $\phi \in \ltwonot$, we have $\phi \perp \one(D \in J(l))$ for each $l$, so $\ef[\phi(D) | D \in I(x)] = 0$, and $\ef[\phi(D)] = 0$, giving $P\phi = 0$.
For any $\phi \in \elltwo(F) = \linearspan(1) \oplus \ehist \oplus \ehist^\perp$, we have $\phi = \ef[\phi] + \phist \phi + (\phi - \ef[\phi] - \phist \phi)$, where the last term lies in $\ehist^\perp$.
We have shown $P$ is a linear operator that fixes $\ehist$ and annihilates $\ehist^\perp \oplus \linearspan(1)$, so applying $P$ to equation above gives $P\phi = P \phist \phi = \phist \phi$ for any $\phi \in \elltwo(F)$.
This finishes the proof.
\end{proof}

\begin{lem} \label{lem:direct-sum-satisfied}
Assumption \ref{assump:direct-sum} is satisfied for any $G \in \transportsym(F)$ if $|\supp(F)| < \infty$.
If $F$ is continuous, then the assumption holds for the antithetic variates, Latin hypercube, rotation sampling, and Gaussian couplings in Section \ref{subsection:matching-couplings}.
\end{lem}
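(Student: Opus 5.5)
The proof splits into the two claims of Lemma \ref{lem:direct-sum-satisfied}, and for the continuous couplings it goes coupling by coupling. In each case we produce an explicit decomposition of $\ltwonot$ into closed, mutually orthogonal eigenspaces of $U_G$ whose direct sum is all of $\ltwonot$.

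\textbf{Finite support.} If $|\supp(F)| = R < \infty$, then $\elltwo(F)$ is isomorphic to $\mr^R$, so $\ltwonot$ is finite dimensional. By Lemma \ref{lem:operator-properties}, $U_G$ is a self-adjoint linear map on $\ltwonot$ (it preserves the mean zero subspace by the tower law). The finite-dimensional spectral theorem then gives an orthogonal eigenbasis. Grouping that basis by distinct eigenvalues yields finitely many eigenspaces $E_m$ with $\ltwonot = \oplus_m E_m$.

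\textbf{Continuous $F$.} By Remark \ref{rem:canonical-marginals}, composing with the transport map is an isometry $\phi \mapsto \phi\circ T$ from $\elltwo(F)$ into the canonical space. We therefore work with the canonical marginals: $\unif[0,1]$ for antithetic variates, LHS and RS, and $\normal(0,1)$ for the Gaussian copula.

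For LHS, the claim is exactly Lemma \ref{lem:lhs-operator}.

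For antithetic variates, $(U_G\phi)(d)=\phi(1-d)$. Hence $U_G$ is the reflection operator, with $U_G^2=I$. The projections $\phi\mapsto\frac12(\phi(d)\pm\phi(1-d))$, after demeaning the even part, decompose any $\phi\in\ltwonot$ into $\eeven\oplus\eodd$. These are the eigenspaces for $\pm1$, and they are closed by Lemma \ref{lem:operator-properties}.

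For RS, we first compute the law of $D_1 \mid D_2$, which is the step relying on the fact that $\pi_2$ given $\pi_1$ is uniform on the remaining indices (the computation referenced as Lemma \ref{lem:rs-distribution}). It gives
\[
(U_G\phi)(d)=\tfrac{1}{k-1}\sum_{l=1}^{k-1}\phi(d\oplus l/k).
\]
Writing $A\phi(d)=k^{-1}\sum_{l=1}^k\phi(d\oplus l/k)$, this becomes $U_G=\tfrac{1}{k-1}(kA-I)$. Now $A$ is an orthogonal projection: it is idempotent, and it is self-adjoint because Lebesgue measure is invariant under rotation. Its range within $\ltwonot$ is $E_c$. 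Thus $\ltwonot=E_c\oplus E_a$, with eigenvalues $1$ and $-1/(k-1)$ respectively.

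For the Gaussian copula, the construction in Example \ref{ex:gaussian} gives a bivariate normal pair with correlation $\rho=-1/(k-1)$. In that case $U_G$ is the Mehler operator, $U_G h_m=\rho^m h_m$ for the normalized Hermite polynomials. These polynomials form an orthonormal basis of $\elltwo(\normal(0,1))$ (completeness of polynomials under the Gaussian measure), so $\ltwonot=\oplus_{m\ge1}\linearspan(h_m)$.

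\textbf{Main obstacle.} The step needing the most care is the Gaussian case, which rests on two inputs. The first is the Mehler identity $\eg[h_m(Z_1)\mid Z_2=z]=\rho^m h_m(z)$. I would prove it via the generating function $e^{tx-t^2/2}=\sum_m t^m He_m(x)/m!$ together with $Z_1\mid Z_2 \sim \normal(\rho Z_2,1-\rho^2)$. The second is Hermite completeness, which I would cite as standard.

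A secondary check is needed for multivariate canonical marginals built coordinatewise, where one would tensor these decompositions. Since the lemma only claims the univariate couplings of Section \ref{subsection:matching-couplings}, I will restrict to $m=1$.
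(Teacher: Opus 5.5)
Your per-coupling computations are sound and match the paper's Lemmas \ref{lem:av-operator}, \ref{lem:rs-operator}, \ref{lem:lhs-operator} and \ref{lem:gaussian-operator}. The finite-support case is the same argument as the paper's: you apply the finite-dimensional spectral theorem directly instead of going through compactness.

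\textbf{The gap: the reduction to canonical marginals.} Write $R\phi = \phi\circ v$, with $v = F\inv$, or $v = F\inv\circ\Phi$ in the Gaussian case. Then $U_G = R^* T_H R$. Knowing only that $R$ is an isometry \emph{into} $\elltwo(Q)$ does not let you pull back the decomposition $L^2_0(Q) = \oplus_m W_m$.

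\begin{itemize}
\item Unless $R$ is onto, $(T_H R\phi)(u) = E_H[\phi(D_1)\mid U_2 = u]$ need not be a function of $v(u)$, because conditioning on $U_2$ is finer than conditioning on $D_2 = v(U_2)$.
\item So $U_G$ is only a compression of $T_H$. Its eigenspaces need not be pullbacks of the $W_m$, and need not even carry the same eigenvalues.
\item Your argument never uses that $F$ is continuous, and that is exactly the hypothesis this step requires.
\end{itemize}

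Here is a concrete failure when continuity is dropped. Take $F = \bern(1/2)$ and rotation sampling with $k=3$. Then $\ltwonot$ is spanned by $\phi(d) = 2d-1$. Since $U_2 = U_1 \oplus R$ with $R$ uniform on $\{1/3, 2/3\}$, one computes $\pg(D_1 = D_2) = 1/3$. Hence $U_G\phi = -\tfrac13\phi$, an eigenvalue the canonical RS operator does not have (its eigenvalues are $1$ and $-\tfrac12$).

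Remark \ref{rem:canonical-marginals} does not close this gap. It justifies computing $\var_G(\est)$ through the effective influence functions $\si\circ T$. It does not verify Assumption \ref{assump:direct-sum} for $U_G$ on $\ltwonot$, which is what the lemma asserts.

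\textbf{How to fix it.} The missing step is the paper's Lemma \ref{lem:canonical-space}. If $v$ is injective with a measurable inverse on its image, then:
\begin{itemize}
\item $R$ is unitary, with $R\inv\psi = \psi\circ v\inv$, and it maps $\ltwonot$ onto $L^2_0(Q)$;
\item $U_G = R\inv T_H R$;
\item the spaces $E_m = R\inv(W_m)$ are closed, mutually orthogonal eigenspaces of $U_G$, with the same eigenvalues, whose direct sum is $\ltwonot$.
\end{itemize}
Continuity of $F$ supplies the injectivity. Since $F(F\inv(u)) = u$ on $(0,1)$, $F$ is a measurable left inverse of $F\inv$, and $\Phi\inv\circ F$ is one for $F\inv\circ\Phi$. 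With this step inserted, your proof is complete.

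\textbf{Side comparison on the Gaussian case.} Your generating-function derivation of $E_H[h_m(Z_1)\mid Z_2 = z] = \rho^m h_m(z)$ is more elementary than the paper's argument, which truncates the Mehler kernel in $L^2(F\otimes F)$. It also survives the degenerate case $\rho = -1$, by parity of the Hermite polynomials, so it covers $k=2$ directly. The paper instead handles $k=2$ via Remark \ref{rem:gaussian-equivalence}. You should state explicitly that $k=2$ is included.
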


\begin{proof}[Proof of Lemma \ref{lem:direct-sum-satisfied}]
Consider the first statement.
We claim $U_G|_{\ltwonot}$ is compact.
To see this, let $B \sub \ltwonot$ be a bounded set.
We need to show $U_G B$ is relatively compact in $\ltwonot$.
Let $\supp(F) = \{d_1, \dots, d_M\}$.
Then $L^2(F) = \linearspan\{ \one(D=d_l) : l \in [M] \}$ is finite dimensional, so $\ltwonot \sub L^2(F)$ is also finite dimensional.
Then the Heine-Borel theorem applies, so the closure $\bar B$ is compact.
By Lemma \ref{lem:operator-properties}, $U_G$ is a linear contraction, hence bounded and continuous.
Then $U_G(B) \sub U_G(\bar B)$ compact, so $U_G(B)$ is contained in a compact set, so it is relatively compact.
Also by Lemma \ref{lem:operator-properties}, $U_G|_{\ltwonot}$ is self-adjoint on $\ltwonot$.
By the spectral theorem for compact self-adjoint operators, $\ltwonot = \oplus_{m \ge 1} E_m$, verifying Assumption \ref{assump:direct-sum}.

Next, consider the second statement.
Lemmas \ref{lem:av-operator}, \ref{lem:rs-operator}, \ref{lem:lhs-operator}, and \ref{lem:gaussian-operator} show that $L^2_0(Q) = \oplus_{m \ge 1} W_m$ for eigenspaces $W_m$ of the operator $T_H$, where $H = $ antithetic variates, rotation sampling, Latin hypercube, and Gaussian respectively with canonical marginals $Q = \normal(0, 1)$ in the Gaussian case and $Q = \unif[0, 1]$ otherwise.
For the Gaussian coupling, Lemma \ref{lem:gaussian-operator} covers $k \ge 3$. For $k = 2$, the Gaussian coupling coincides with the antithetic variates coupling after the rank transform (Remark \ref{rem:gaussian-equivalence}), so this case is covered by Lemma \ref{lem:av-operator}.
The distributions $(\Di)_{i=1}^k \sim G \in \transportsym(F)$ for all of these couplings satisfy the conditions of Lemma \ref{lem:canonical-space}, with $\Di = v(U_i)$ for $(U_i)_{i=1}^k \sim H$ with $v = F\inv$ for the uniform cases and $v = F\inv \circ \Phi$ for the Gaussian coupling.
We claim that if $F$ is continuous, then in each case $v$ is injective with measurable inverse on its image.
For injectivity, note that $F \circ F\inv = I$ on $(0, 1)$ since $F$ is continuous, so $F\inv$ is injective with left-inverse $F$ on its range.
By continuity, the left-inverses $v\inv = F$ and $v\inv = \Phi\inv \circ F$ are both measurable.
Then by Lemma \ref{lem:canonical-space}, Assumption \ref{assump:direct-sum} is satisfied for each coupling $G$.
This finishes the proof.
\end{proof}

\textbf{Canonical Spaces.}
Here, we develop the machinery about canonical spaces needed for Lemma \ref{lem:direct-sum-satisfied}.
Suppose for some $Q$ the space $L^2_0(Q) = \oplus_{m \ge 1} W_m$, where $W_m$ are eigenspaces of the operator $(T_H \phi)(u) = E_H[\phi(U_i) | U_j=u]$ for $i \neq j$ with $H \in \transportsym(Q)$, as in Assumption \ref{assump:direct-sum}.
Define the coupling $G \in \transportsym(F)$ by $(\Di)_{i=1}^k \sim G$ for $\Di = v(U_i)$ for $(U_i)_{i=1}^k \sim H$ and some $v: \supp(Q) \to \supp(F)$.
In particular, this implies $D = v(U) \sim F$ for $U \sim Q$.
Define the linear pushforward operator $R: \elltwo(F) \to \elltwo(Q)$ by $(R\phi)(u) = (\phi \circ v)(u)$.
The proof of the following lemma is a functional analysis exercise and is deferred to the supplementary online appendix for space.

\begin{lem}[Canonical Spaces]\label{lem:canonical-space}
The coupling operator has $U_G = R^* T_H R$.
If the map $v(\cdot)$ is injective with measurable inverse $v\inv$ on $v(\supp(Q))$, then $R^* = R\inv$ and $\ltwonot = \oplus_{m \ge 1} E_m$ for eigenspaces $E_m = R \inv(W_m)$ of $U_G$ with the same eigenvalues $\eval(E_m) = \eval(W_m)$.
In particular, Assumption \ref{assump:direct-sum} is satisfied for $G$.
\end{lem}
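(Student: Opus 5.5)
We must prove Lemma (Canonical Spaces): $U_G = R^* T_H R$, and under injectivity $R^*=R^{-1}$, with eigenspace decomposition transferred.

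The first step is to establish that $R$ is an isometry from $\elltwo(F)$ into $\elltwo(Q)$. Since $v(U)\sim F$ for $U\sim Q$, we have $\lag R\phi, R\psi\rag_Q = E_Q[\phi(v(U))\psi(v(U))] = \ef[\phi\psi]$. In particular $R$ is well defined on equivalence classes. It also maps constants to constants and $\ltwonot$ into $L^2_0(Q)$.

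The second step is the operator identity, which I will verify weakly. For $\phi,\psi\in\elltwo(F)$, using $D_i=v(U_i)$ together with the key-fact computation from the proof of Lemma~\ref{lem:operator-properties},
\[
\lag U_G\phi,\psi\rag_F = E_G[\phi(D_1)\psi(D_2)] = E_H[(R\phi)(U_1)(R\psi)(U_2)] = \lag T_H R\phi, R\psi\rag_Q = \lag R^* T_H R\phi,\psi\rag_F .
\]
Since $\psi$ is arbitrary, this gives $U_G=R^*T_HR$. Here $R^*$ is the Hilbert adjoint, which exists because $R$ is bounded; concretely, $R^*f(d)=E_Q[f(U)\mid v(U)=d]$.

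The third step, which I expect to be the main obstacle, is showing that $R$ is onto under injectivity. Given $f\in\elltwo(Q)$, set $\phi=f\circ v^{-1}$ on $v(\supp Q)$ and define $\phi$ arbitrarily elsewhere. This set carries full $F$-mass, and $\phi$ is measurable because $v^{-1}$ is measurable. Then $R\phi=f$ holds $Q$-a.e., since $v^{-1}(v(u))=u$. A little care is needed with null sets and with measurability of the image $v(\supp Q)$; one can use the fact that $D\in v(\supp Q)$ almost surely, or restrict attention to the completion. With surjectivity in hand, $R$ is a unitary isometry and $R^*=R^{-1}$. It also restricts to a unitary map between $\ltwonot$ and $L^2_0(Q)$, since it preserves means.

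Finally, $U_G = R^{-1}T_HR$ is unitarily equivalent to $T_H$. For $\phi\in E_m:=R^{-1}(W_m)$ we have $U_G\phi = R^{-1}T_H R\phi = \lambda_m R^{-1}R\phi=\lambda_m\phi$, so each $E_m$ is an eigenspace with the same eigenvalue. The $E_m$ are closed and mutually orthogonal, because unitary maps preserve both properties. Their orthogonal direct sum equals $R^{-1}\big(\oplus_m W_m\big)=R^{-1}(L^2_0(Q))=\ltwonot$, since a unitary map commutes with closed spans of orthogonal families. This verifies Assumption~\ref{assump:direct-sum} for $G$.
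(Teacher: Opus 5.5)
Your proposal is correct and follows essentially the same route as the paper: isometry of $R$, the weak identity $\langle U_G\phi,\psi\rangle_F = \langle T_H R\phi, R\psi\rangle_Q$, surjectivity of $R$ via composition with $v^{-1}$, and then transfer of the eigenspace decomposition through the resulting unitary map. The only differences are in packaging: the paper builds the right inverse $S=(\cdot)\circ v^{-1}$ explicitly and deduces $S=R^*$ from $R^*R=I$, and it obtains measurability of $v(\supp(Q))$ from the Lusin--Suslin theorem rather than from your completion argument.
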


\subsection{Proofs for Section \ref{section:coupling-analysis}}

First, consider rotation sampling (RS).
Recall that for $U \sim \unif[0, 1]$ and $\pi$ a random permutation of $\{1, \dots, k\}$ with $U \indep \pi$, we set $\Di = U \oplus k\inv \pi_i$.
Then $(\Di)_{i=1}^k \sim \grs$ is the RS coupling.
We begin with a lemma characterizing the bivariate marginal $(\Di, \Dj)$ for $i \not = j$.
Recall $a \oplus b \equiv a + b \pmod{1}$ for $a, b \in \mr$.

\begin{lem}\label{lem:rs-distribution}
Let $(\Di)_{i=1}^k \sim \grs$.
Then $(\done, \dtwo) \sim (V, V \oplus R)$, where $V \sim \unif[0, 1]$ and $R \sim  \unif\{1/k, \dots, (k-1)/k\}$ with $V \indep R$.
\end{lem}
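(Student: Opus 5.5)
The plan is to compute the joint law of $(\done, \dtwo)$ directly from the representation $\Di = U \oplus k\inv \pi_i$, with $U \sim \unif[0,1]$ independent of the uniform random permutation $\pi$. Set $V \equiv \done = U \oplus k\inv \pi_1$ and $R \equiv k\inv(\pi_2 - \pi_1) \pmod 1$. Then $\dtwo = U \oplus k\inv \pi_2 = V \oplus R$, since addition mod $1$ is associative and commutative. It therefore suffices to show three things: $V \sim \unif[0,1]$, $R \sim \unif\{1/k, \dots, (k-1)/k\}$, and $V \indep R$.

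\textbf{Law of $(\pi_1, \pi_2)$.} Because $\pi$ is uniform on $S_k$, the pair $(\pi_1, \pi_2)$ is uniform on ordered pairs of distinct elements of $[k]$. For any $a \neq b$ we have $P(\pi_1 = a, \pi_2 = b) = (k-2)!/k! = 1/(k(k-1))$. Hence $\pi_1$ is uniform on $[k]$, and $\pi_2 \mid \pi_1$ is uniform on $[k]\setminus\{\pi_1\}$. This is the fact already cited in the proof of Lemma~\ref{lem:lhs-operator}.

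\textbf{Law of $R$ and independence.} Conditional on $\pi_1 = a$, the map $b \mapsto (b - a) \bmod k$ is a bijection from $[k]\setminus\{a\}$ onto $\{1,\dots,k-1\}$. So $R \mid \pi_1 = a$ is uniform on $\{1/k,\dots,(k-1)/k\}$ for every $a$, which gives $R \indep \pi_1$ with the claimed marginal law. Next, $U$ is independent of $(\pi_1,\pi_2)$, hence of $(\pi_1, R)$. Conditional on $(\pi_1, R) = (a, r)$, the variable $V = U \oplus a/k$ is a rotation of a uniform variable on the circle $\mr/\mathbb Z$, and so is again $\unif[0,1]$. Its conditional law is $\unif[0,1]$ whatever the value of $(a,r)$, so $V \indep (\pi_1, R)$ and in particular $V \indep R$. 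To make this rigorous I will compute $P(V \le t, R = r) = \sum_a P(\pi_1 = a, R = r)\, P(U \oplus a/k \le t) = t\, P(R=r)$, using translation invariance of Lebesgue measure mod $1$.

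The only mildly delicate point is the rotation-invariance step. I will handle it by noting that for a fixed $c$, the map $x \mapsto x \oplus c$ preserves Lebesgue measure on $[0,1)$. Endpoint conventions, such as identifying $1$ with $0$, only affect null sets. The remaining steps are routine counting.
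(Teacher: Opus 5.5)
Your proof is correct and follows essentially the same route as the paper. Both arguments set $R = k\inv(\pi_2 - \pi_1) \bmod 1$ so that $\dtwo = \done \oplus R$, use rotation invariance of $\unif[0,1]$ together with $U \indep \pi$ to get $\done \sim \unif[0,1]$ independent of $R$, and use the bijection $b \mapsto (b-a) \bmod k$ from $[k]\setminus\{a\}$ onto $\{1,\dots,k-1\}$ to get the law of $R$. The only cosmetic difference is that the paper conditions on the whole permutation $\pi$, which gives $\done \indep \pi$ and hence $\done \indep R$ since $R \in \sigma(\pi)$, whereas you condition on $(\pi_1, R)$.
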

\begin{proof}

Observe that $\dtwo = U \oplus k\inv \pi_2 = U \oplus k\inv \pi_1 \oplus k\inv (\pi_2 - \pi_1) = \done \oplus R$ with $R \equiv k\inv (\pi_2 - \pi_1) \pmod 1$.
We claim that $\done \indep R$ and $\done \sim \unif[0, 1]$.
Note that $U + a \pmod 1 \sim \unif[0, 1]$ for any fixed $a \in \mr$.
Then $\done | \pi = U \oplus k\inv \pi_1 | \pi \sim \unif[0, 1]$ since $U \indep \pi$.
Then $\done \sim \unif[0, 1]$.
This also shows $\done \indep \pi$, so $\done \indep R$ since $R \in \sigma(\pi)$, proving the claim.
Then $(\done, \dtwo) = (\done, \done \oplus R) \sim (V, V \oplus R)$ with $V \indep R$ and $V \sim \unif[0, 1]$.
Finally, we show the distribution of $R$.
A calculation shows $\pi_2 | \pi_1$ is uniform on $[k] \setminus \{\pi_1\}$.
Define $f(\pi_2) = \pi_2 - \pi_1 \pmod k$.
This is a bijection from $[k] \setminus \{\pi_1\}$ to $\{1, \dots, k-1\}$.
Then $\pi_2 - \pi_1 \pmod k  | \pi_1 \sim \unif \{1, \dots, k-1\}$.
Then this also holds marginally, completing the proof.
\end{proof}

\begin{lem}[RS] \label{lem:rs-operator}
Let $F = \unif[0, 1]$ and $G$ the RS coupling.
Then $(U_G\phi)(d) = (k-1)\inv \sum_{l=1}^{k-1} \phi(d \oplus lk\inv)$.
Also, $\ltwonot = \ecyclic \oplus \eacyclic$ is a direct sum of eigenspaces of $U_G$ with eigenvalues $1$ on $\ecyclic$ and $-(k-1)\inv$ on $\eacyclic$.
In particular, Assumption \ref{assump:direct-sum} holds.
Finally, the projection operators $P_c$ and $P_a$ on $\elltwo(F)$ have $(P_c \phi)(d) \equiv k\inv \sum_{l=1}^k \phi(d \oplus lk \inv) - \ef[\phi(D)]$ and $(P_a \phi)(d) = \phi(d) - k\inv \sum_{l=1}^k \phi(d \oplus lk \inv)$.
\end{lem}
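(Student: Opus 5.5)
The proof has three parts: compute the operator using Lemma~\ref{lem:rs-distribution}, identify the two eigenspaces, and verify the projection formulas.

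\textbf{Operator.} By Lemma~\ref{lem:rs-distribution}, $(\done,\dtwo)\sim(V,V\oplus R)$ with $V\indep R$ and $R\sim\unif\{1/k,\dots,(k-1)/k\}$. Hence
\[
(U_G\phi)(d)=E[\phi(\dtwo)\mid \done=d]=E[\phi(d\oplus R)]=(k-1)\inv\sum_{l=1}^{k-1}\phi(d\oplus l/k).
\]
For the role of $i$ and $j$, exchangeability makes the choice irrelevant; alternatively, the law of $R$ is symmetric under $r\mapsto 1-r$.

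\textbf{Eigenspaces.} Define the cyclic averaging operator $(A\phi)(d)=k\inv\sum_{l=1}^k\phi(d\oplus l/k)$. Since the $l=k$ term is $\phi(d)$, we get
\[
U_G=(k-1)\inv(kA-I).
\]
We first show that $A$ is an orthogonal projection on $\elltwo(F)$. It is idempotent, because $A\phi$ is $1/k$-cyclic and averaging a cyclic function returns it. It is self-adjoint, because each shift $d\mapsto d\oplus l/k$ preserves Lebesgue measure, so $\lag \phi(\cdot\oplus l/k),\psi\rag_F=\lag\phi,\psi(\cdot\ominus l/k)\rag_F$, and the set of shifts is closed under inversion.

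The range of $A$ is exactly the set of $1/k$-cyclic functions. It contains all of them, since $A\phi=\phi$ for cyclic $\phi$, and every $A\phi$ is cyclic. The shifts also preserve the mean, so $\ef[A\phi]=\ef[\phi]$. Consequently $A$ maps $\ltwonot$ into itself, and its restriction to $\ltwonot$ is the orthogonal projection onto $\ecyclic$.

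It follows that $\ltwonot=\ecyclic\oplus\eacyclic$ with $\eacyclic=\ecyclic^\perp=\ker A\cap\ltwonot$. Both pieces are closed, being the range and kernel of a bounded projection. On $\ecyclic$ we have $A\phi=\phi$, so $U_G\phi=(k-1)\inv(k-1)\phi=\phi$, giving eigenvalue $1$. On $\eacyclic$ we have $A\phi=0$, so $U_G\phi=-(k-1)\inv\phi$. This verifies Assumption~\ref{assump:direct-sum}. As a consistency check, Theorem~\ref{thm:dispersion-decomposition}(a) then gives dispersions $-(k-1)$ on $\ecyclic$ and $1$ on $\eacyclic$, matching Section~\ref{subsection:rs}.

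\textbf{Projections.} Any $\phi\in\elltwo(F)$ decomposes as $\phi=\ef[\phi]+(A\phi-\ef[\phi])+(\phi-A\phi)$. The middle term lies in $\ecyclic$: it is cyclic and has mean $\ef[A\phi]-\ef[\phi]=0$. The last term has mean zero and satisfies $A(\phi-A\phi)=0$, so it lies in $\eacyclic$. By uniqueness of orthogonal decompositions in $\linearspan(1)\oplus\ecyclic\oplus\eacyclic$, this yields $P_c\phi=A\phi-\ef[\phi]$ and $P_a\phi=\phi-A\phi$, as claimed.

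The main point requiring care is showing that $A$ is self-adjoint and preserves the mean. Both follow from the invariance of $\unif[0,1]$ under rotations mod $1$. The remaining steps are algebra.
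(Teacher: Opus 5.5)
Your proof is correct and follows essentially the same route as the paper: compute $U_G$ from Lemma~\ref{lem:rs-distribution}, show that the cyclic average $A$ (the paper's $S$) is a self-adjoint idempotent using shift invariance of $\unif[0,1]$, and use $U_G=(k-1)\inv(kA-I)$ to read off the eigenvalues and projections. The only small difference is that you treat $A$ as the projection onto all $1/k$-cyclic functions in $\elltwo(F)$, so $\eacyclic=\ker A$ follows immediately. The paper instead works with the demeaned $P=S-\ef[\phi(D)]$ and gives a separate inner-product argument that $S$ annihilates $\eacyclic$.
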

\begin{proof}

First, we establish the direct sum and projection formulas.
For $\phi \in \elltwo(F)$, define $(P\phi)(d) \equiv k\inv \sum_{l=1}^k \phi(d \oplus lk\inv) - \ef[\phi(D)]$.
We claim that $\pcyclic = P$.
Write $(S\phi)(d) \equiv k\inv \sum_{l=1}^k \phi(d \oplus lk\inv)$, so that $P\phi = S\phi - \ef[\phi(D)]$.
First, we will show that $P\phi$ is cyclic.
For any $m \ge 1$, note
\[
(S\phi)(d \oplus m k\inv) = k\inv \sum_{l \in [k]} \phi(d \oplus lk\inv \oplus m k \inv) = k\inv \sum_{l \in [k]} \phi(d \oplus lk\inv) = (S\phi)(d).
\]
Then $S\phi$ is cyclic, so $P\phi = S\phi - \ef[\phi(D)]$ is as well.
Next, we show $P\phi$ is mean-zero.
Note $\ef[(S\phi)(D)] = k\inv \sum_{l=1}^k \ef[\phi(D \oplus lk\inv)] = \ef[\phi(D)]$, since $D \oplus lk\inv \sim \unif[0,1]$ whenever $D \sim \unif[0,1]$.
Then $\ef[(P\phi)(D)] = \ef[(S\phi)(D)] - \ef[\phi(D)] = 0$.
Then $P\phi \in \ecyclic$ for any $\phi \in \elltwo(F)$.
Next, we show idempotency, $P^2 = P$.
Note if $\phi \in \ecyclic$, then $\phi(d \oplus lk\inv) = \phi(d)$ for all $l$, so $(S\phi)(d) = \phi(d)$.
Also $\ef[\phi(D)] = 0$ since $\phi \in \ecyclic \sub \ltwonot$.
Then $(P\phi)(d) = \phi(d) - 0 = \phi(d)$.
Then for any $\phi \in \elltwo(F)$, $P \phi \in \ecyclic$, so $P(P\phi) = P\phi$, showing $P^2 = P$.
This also implies $\mathrm{Im}(P) = \ecyclic$.
Next, we show $P$ is self-adjoint.
For any $\phi, \psi \in \elltwo(F)$, we have $\lf P\phi, \psi \rf = \lf S\phi, \psi \rf - \ef[\phi(D)] \lf 1, \psi \rf = \lf S\phi, \psi \rf - \ef[\phi(D)] \ef[\psi(D)]$.
Similarly, $\lf \phi, P\psi \rf = \lf \phi, S\psi \rf - \ef[\psi(D)] \ef[\phi(D)]$, so it suffices to show $S$ is self-adjoint.
For $\phi, \psi \in \elltwo(F)$ and $a \ominus b = a-b \pmod 1$,
\begin{align*}
\lf S \phi, \psi \rf &= \int_0^1 (S\phi)(x) \psi(x) dx = \int_0^1 k\inv \sum_{l=1}^k \phi(x \oplus lk\inv) \psi(x) dx \\
&= k\inv \sum_{l=1}^k \int_0^1 \phi(u) \psi(u \ominus l k\inv )  du =  \int_0^1 \phi(u) k\inv \sum_{l=1}^k \psi(u \ominus l k\inv )  du.
\end{align*}
This equals $\int_0^1 \phi(u) (S\psi)(u) du = \lf \phi, S \psi \rf$.
Then $\lf P\phi, \psi \rf = \lf \phi, P\psi \rf$, so $P$ is self-adjoint on $\elltwo(F)$.
Since $P$ is a self-adjoint idempotent on $\elltwo(F)$ with image $\ecyclic$, it is the orthogonal projection onto $\ecyclic$ in $\elltwo(F)$.
Then $\ecyclic$ is closed in $\elltwo(F)$, hence also closed in $\ltwonot$.
Then the projection theorem for Hilbert spaces gives $\ltwonot = \ecyclic \oplus \eacyclic$, where $\eacyclic = \ecyclic^\perp$ is the orthocomplement within $\ltwonot$.
The projection formula $\pcyclic = P$ holds on $\elltwo(F)$.
Note also that $P$ annihilates $\eacyclic$, since for $\phi \in \eacyclic \sub \ltwonot$, $P\phi = S\phi - 0$, and $S\phi \in \ecyclic$ (since $P\phi \in \ecyclic$ and $\ef[\phi(D)] = 0$), so $\lf S\phi, S\phi \rf = \lf S^2 \phi, \phi \rf = \lf S\phi, \phi \rf = 0$ since $\phi \in \eacyclic \perp S\phi$.
Then $S\phi = P\phi = 0$ on $\eacyclic$.
We claim $I-S = \pacyclic$.
To see this, note $\phi = \ef[\phi] + \pcyclic\phi + \pacyclic \phi$.
It's clear $I-S$ annihilates $\linearspan(1) \oplus \ecyclic$.
Then it suffices to show $I-S$ fixes $\eacyclic$.
For $\phi \in \eacyclic$, we showed $S\phi = 0$ above, so $(I-S)\phi = \phi$.
This finishes our characterization of the projections.

\emph{Eigenspaces.}
Next, we characterize the eigenspaces of $U_G$.
Denote $U_G = U$.
Lemma~\ref{lem:rs-distribution} showed that bivariate marginal $(\done, \dtwo) \sim (V, V \oplus R)$, with $V \sim \unif[0, 1]$, $R \sim \unif\{1/k, \dots, (k-1)/k\}$ and $V \indep R$.
Then
\begin{align*}
(U\phi)(d) &= E[\phi(V \oplus R) | V=d] = E[\phi(d \oplus R)] = (k-1)\inv \sum_{l \in [k-1]} \phi(d \oplus lk\inv).
\end{align*}
The second equality since $V \indep R$.
This proves the first claim.
Since $\phi(d \oplus m k\inv) = \phi(d)$ for any $m \geq 1$ and $\phi \in \ecyclic$, then $U \phi(d) = (k-1)\inv \sum_{l=1}^{k-1} \phi(d) = \phi(d)$ and $\ecyclic$ is an eigenspace of $U$ with eigenvalue $1$.
Next, note the identity
\begin{align*}
(U \phi)(d) &= (k-1)\inv \sum_{l\in [k-1]} \phi(d \oplus lk \inv) = (k-1)\inv \big ( \sum_{l \in [k]} \phi(d \oplus lk \inv) - \phi(d) \big) \\
&= (k-1)\inv (k (S\phi)(d) - \phi(d)).
\end{align*}
If $\phi \in \eacyclic$, we showed $S\phi = 0$ above, so $U \phi = -(k-1)\inv \phi$, showing that $\eacyclic$ is an eigenspace with $\eval = -(k-1)\inv$.
This completes the proof.
\end{proof}

\begin{proof}[Proof of Theorem \ref{thm:rs-variance}]
Relative efficiency follows from Lemma~\ref{lem:rs-operator} and Theorem~\ref{thm:eigenspace-decomposition}.
For the marginal variance, by Corollary~\ref{cor:efficiency-convex} and Lemma \ref{lem:var-components}, we have $n \var_G(\est) = (1-\dispg(s^c)) \viid(s^c) + \dispg(s^c) \vmatch(s^c) + (1-\dispg(s^a)) \viid(s^a) + \dispg(s^a) \vmatch(s^a) = k \cdot \viid(s^c) - (k-1) \vmatch(s^c) + \vmatch(s^a) = k \cdot \viid(s^c) - (k-1) (\viid(s^c) - (k-1)\inv c(s^c)) + \vmatch(s^a) = \viid(s^c) + c(s^c) + \vmatch(s^a) = k \cdot \vg(s^c) + \vmatch(s^a)$.
\end{proof}

\begin{lem}[Gaussian]\label{lem:gaussian-operator}
Let $F = \normal(0, 1)$ and $G$ the Gaussian coupling with correlation $\rho = -(k-1)\inv$ and $k \ge 3$.
Write $U_G = U$
Let $(\hm)_{m \geq 0}$ be the normalized probabilist's Hermite polynomials.
Then $L^2(F) = \oplus_{m \geq 0} \linearspan(h_m)$ with $U \hm = \evalm \hm$ for $\evalm = (-1)^m (k-1)^{-m}$.
Restricting to $\ltwonot$ gives $\ltwonot = \oplus_{m \geq 1} \linearspan(h_m)$, verifying Assumption \ref{assump:direct-sum}.
\end{lem}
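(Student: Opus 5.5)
The argument rests on Mehler's formula. The coupling operator only depends on the bivariate marginal $(D_1, D_2)$. Under the Gaussian coupling this marginal is a standard bivariate normal with correlation $\rho = -(k-1)\inv$. For $k \ge 3$ we have $|\rho| < 1$, so the pair is nondegenerate. Writing $D_1 = \rho D_2 + \sqrt{1-\rho^2}\, Z$ with $Z \sim \normal(0,1)$ independent of $D_2$, we obtain
\[
(U\phi)(d) = E\big[\phi(\rho d + \sqrt{1-\rho^2} Z)\big],
\]
which is the Ornstein--Uhlenbeck (Mehler) semigroup at parameter $\rho$.

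\textbf{Step 1: eigenvalues.} I would check that $U h_m = \rho^m h_m$ using the generating function $\sum_m \frac{t^m}{m!} He_m(x) = e^{tx - t^2/2}$. Compute $E[\exp(t(\rho d + \sqrt{1-\rho^2}Z) - t^2/2)] = \exp(t\rho d - t^2\rho^2/2)$. This is the generating function evaluated at $(\rho t, d)$, and matching coefficients of $t^m$ gives $U He_m = \rho^m He_m$. Interchanging the expectation and the series is justified by dominated convergence, since the series converges absolutely with a Gaussian-integrable bound. Normalization does not affect eigenvalues, so $\lambda_m = \rho^m = (-1)^m (k-1)^{-m}$.

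\textbf{Step 2: completeness.} I would then invoke the standard fact that the normalized Hermite polynomials form an orthonormal basis of $L^2(\normal(0,1))$. This follows because polynomials are dense, which holds since the Gaussian has finite exponential moments: if $\phi \perp x^m$ for all $m$, then the Fourier transform of $\phi\, e^{-x^2/2}$ vanishes, so $\phi = 0$. Hence $L^2(F) = \oplus_{m\ge 0} \linearspan(h_m)$.

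\textbf{Step 3: restriction to mean-zero functions.} Since $h_0 = 1$ and $\ltwonot = h_0^\perp$, it follows that $\ltwonot = \oplus_{m\ge1}\linearspan(h_m)$, and each summand is an eigenspace of $U$. The eigenvalues $\rho^m$ are distinct for $|\rho| \in (0,1)$, so the decomposition is exactly the one required by Assumption \ref{assump:direct-sum}.

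\textbf{Main obstacle.} The only delicate point is Step 2's completeness claim together with the series interchange in Step 1. Both are classical, so I would cite them (e.g.\ \citet{thangavelu1993}) rather than reprove them. The restriction $k \ge 3$ enters only to guarantee $|\rho| < 1$, which keeps the conditional law nondegenerate and ensures $U$ is compact with eigenvalues tending to zero.
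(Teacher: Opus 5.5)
Your proof is correct, but it reaches the eigen-relation by a different route than the paper.

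\textbf{The paper's route.} It writes $U$ as an integral operator against the Mehler kernel, citing the expansion $K(x,y)=\sum_m \rho^m h_m(x)h_m(y)$, and truncates to $K_M$. It then uses $\sum_m \rho^{2m}<\infty$, which is where $k\ge 3$ enters, to get $K_M\to K$ in $L^2(F\otimes F)$, hence $U_M\to U$ in operator norm by the Hilbert--Schmidt bound. Since $U_M h_m=\rho^m h_m$ for $m\le M$, it concludes $U h_m=\rho^m h_m$. Completeness of the Hermite system is taken for granted.

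\textbf{Your route.} You use the representation $(U\phi)(d)=E[\phi(\rho d+\sqrt{1-\rho^2}\,Z)]$ together with the Hermite generating function. This reduces the eigen-relation to a Gaussian moment-generating-function computation, and in effect proves the spectral content of Mehler's formula rather than citing it. You also make explicit two points the paper leaves implicit: completeness of the Hermite basis (via density of polynomials), and the fact that distinct eigenvalues make each $\linearspan(h_m)$ a full eigenspace.

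\textbf{What each buys.} The paper's argument gives structure: $U$ is Hilbert--Schmidt, hence compact, in line with the spectral-theorem remark after Assumption~\ref{assump:direct-sum}. However, it genuinely needs $|\rho|<1$ for the kernel to be square-integrable. Your computation of $U h_m$ never uses $|\rho|<1$. At $\rho=-1$ it gives $U h_m=(-1)^m h_m$, so your argument would also cover $k=2$ directly, with the even and odd Hermite spans as the eigenspaces. The paper instead handles $k=2$ separately through the antithetic-variates lemma. In your proof, $k\ge 3$ is needed only for distinctness of the eigenvalues.

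For the dominated-convergence step, a concrete dominating bound is $\sum_m |t|^m|He_m(x)|/m!\le 2e^{|t||x|+t^2/2}$. This follows from $He_m(x)=E[(x+iY)^m]$ with $Y\sim\normal(0,1)$, and the bound is integrable under the law of $\rho d+\sqrt{1-\rho^2}\,Z$.
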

\begin{proof}
Denote $\rho = -(k-1)\inv$.
Then $\dtwo | \done = x \sim \normal(\rho x, (1-\rho^2))$, so the operator
\begin{align*}
&U\phi(x) = E[\phi(\dtwo) | \done=x] = \int_{\mr} \phi(y) \frac{1}{\sqrt{2\pi(1-\rho^2)}} \exp\left(-\frac{(y - \rho x)^2}{2(1-\rho^2)}\right) dy \\
&= \int_{\mr} \phi(y) \frac{1}{\sqrt{1-\rho^2}} \exp\left(-\frac{\rho^2(x^2+y^2) - 2\rho xy}{2(1-\rho^2)}\right) dF(y) \equiv \int_{\mr} \phi(y) K(y, x) dF(y).
\end{align*}
The third equality follows by algebra, using $dF(y) = (2 \pi)^{-1/2} e^{-y^2/2}$.
The function $K(x,y)$ is the Mehler kernel, with $K(x,y) = \sum_{m=0}^\infty \rho^m h_m(x) h_m(y)$.
For example, see \cite{thangavelu1993}.
Define the finite sum $K_M = \sum_{m=0}^M \rho^m h_m(x) h_m(y)$ and $U_M \phi(x) = \ef[K_M(x, Y)\phi(Y)]$.
For $\phi \in L_2(F)$, we have $|U_M \phi - U \phi|_F \leq |U_M - U|_{op} |\phi|_F$.
Moreover, since $U_M - U$ is a Hilbert-Schmidt operator with kernel $K_M - K$, we have $|U_M - U|_{op} \leq |K_M-K|_{L_2(F \otimes F)}$, where $(K-K_M)(x, y) = \sum_{m=M+1}^\infty \rho^m h_m(x) h_m(y)$.
Note that $e_m(x, y) = \hm(x) \hm(y)$ is an ON collection in $L_2(F \otimes F)$.
Moreover, $K = \sum_m \rho^m e_m$ with $\sum_m (\rho^m)^2 < \infty$ since $|\rho| < 1$ for $k \geq 3$.
Then by Riesz-Fischer, $K_M \to K$ in $L_2(F \otimes F)$, so that $|U_M - U|_{op} \to 0$.
Clearly $U_M h_m = \rho^m h_m$ for $m \leq M$ by orthonormality, so $U \hm = \lim_M U_M \hm = \lim_M \rho^m \hm = \rho^m \hm$.
Then we have shown $L_2(F) = \oplus_{m \geq 0} E_m$ for $E_m = \linearspan(h_m)$ with eigenvalue $\evalm = \rho^m = (-1)^m (k-1)^{-m}$.
Restricting to $\ltwonot$ gives $\ltwonot = \oplus_{m \geq 1} E_m$, since $E_0 = \linearspan(h_0) = \{\text{constants}\}$.
This completes the proof
\end{proof}

\begin{proof}[Proof of Theorem \ref{thm:gaussian-efficiency}]
From the proof of Theorem \ref{thm:eigenspace-decomposition}, we have the variance identity $n\var_G(\est) = \viid(s) + \sum_{m \ge 1} \evalm c(s^m)$.
By Lemma \ref{lem:gaussian-operator}, the eigenspaces correspond to Hermite polynomials with eigenvalues $\evalm = (-1)^m (k-1)^{-m}$.
By Lemma \ref{lem:var-components}, $c(s^m) = (k-1) \viid(s^m) \matchcoeffk(s^m)$.
Substituting these into the variance expression yields
\[
n\var_G(\est) = \viid(s) + \sum_{m \ge 1} (-1)^m (k-1)^{-(m-1)} \viid(s^m) \matchcoeffk(s^m).
\]
Then dividing by $n\var_{\giid}(\est) = \viid(s)$ and using weights $w_m = \viid(s^m) / \viid(s)$, we have
\[
\frac{\var_G(\est)}{\var_{\giid}(\est)} = 1 + \sum_{m \ge 1} (-1)^m (k-1)^{-(m-1)} w_m \matchcoeffk(s^m).
\]
Rearranging for relative efficiency $\releff(G) = 1 - \var_G(\est) / \var_{\giid}(\est)$ gives the result, noting $-(-1)^m = (-1)^{m-1}$.
\end{proof}

\begin{proof}[Proof of Corollary \ref{cor:gaussian-limit}]
We show $n \var_G(\est) = \vmatch(s^L) + \viid(s - s^L) + R_k(s)$ with remainder $|R_k(s)| \leq (k-1)^{-1} \viid(s - s^L)$.
Write $\si^m = \lf \si, \hm \rf \cdot \hm$ for the projection onto $\hm$.
By the proof of Theorem~\ref{thm:eigenspace-decomposition}, $n \var_G(\est) = \sum_m \viid(s^m) + \evalm c(s^m)$.
Then by Lemma \ref{lem:var-components}, since $\vmatch(s) = \viid(s) - (k-1)\inv c(s)$ we calculate
\begin{align*}
&n \var_G(\est) = \vmatch(s^1) + \sum_{m \ge 2} \left( \viid(s^m) + \eval_m c(s^m) \right) \\
&= \vmatch(s^1) + \viid(s - s^1) + \sum_{m \ge 2} (-1)^m (k-1)^{-m} c(s^m).
\end{align*}
The second equality uses $\viid(s - s^1) = \sum_{m \ge 2} \viid(s^m)$, which follows by the Parseval argument displayed later in this proof.
Define $R_k(s) \equiv \sum_{m \ge 2} (-1)^m (k-1)^{-m} c(s^m)$.
For the first two terms, note that $\si^1(D) = \lf h_1, \si \rf \cdot h_1(D) = \lf D, \si \rf \cdot D = \covf(D, \si(D)) \cdot D$ since $\ef[D] = 0$.
We have $\si^L(D) = a_i^* + b_i^* D$, where
\[
\bi^* = \argmin_b \min_a \varf(\si(D) - a - bD) = \argmin_b \varf(\si(D) - bD).
\]
The solution is $\bi^* = \cov_F(\si(D), D) = \lf h_1, \si \rf$.
Then we have shown $\si^L = \ai^* + \si^1$, so $\vmatch(s^1) = \vmatch(s^L)$ and $\viid(s - s^1) = \viid(s - s^L)$.
Summarizing, we showed that $n \var(\est) = \vmatch(s^L) + \viid(s - s^L) + R_k(s)$.
Next, we show the bound on $R_k(s)$.
We have $|c(s^j)| \leq (k-1) \viid(s^j)$.
To see this, recall $c(s^m) = n\inv \sum_g \sum_{i\neq j} \covf(\sig^m,\sjg^m)$.
By the triangle inequality, Cauchy-Schwarz and Young's inequality
\begin{align*}
&|c(s^m)| \le n\inv \sum_g \sum_{i\neq j} |\covf(\sig^m,\sjg^m)| \le n\inv \sum_g \sum_{i\neq j} (1/2)(\varf(\sig^m)+\varf(\sjg^m)) \\
&= n\inv (k-1) \sum_g  \sum_{i \in [k]} \varf(\sig^m) = (k-1) \en \varf(\si^m) = (k-1) \cdot \viid(s^m).
\end{align*}
Then by the triangle inequality, we have
\begin{align*}
&|R_k(s)| \le \sum_{m=2}^\infty (k-1)^{-m} |c(s^m)| \le \sum_{m=2}^\infty (k-1)^{-m} (k-1) \cdot \viid(s^m) \\
&= (k-1)^{-1} \sum_{m=2}^\infty (k-1)^{-(m-2)} \viid(s^m) \le (k-1)^{-1} \sum_{m=2}^\infty \viid(s^m) = (k-1)^{-1} \viid(s - s^L).
\end{align*}
The last equality $\viid(s - s^L) = \sum_{m=2}^\infty \viid(s^m)$ follows since
\begin{align*}
\viid(s - s^L) &= \en \varf\bigg(\sum_{m=2}^\infty \si^m\bigg) = \en \bigg|\sum_{m=2}^\infty \si^m\bigg|_F^2 \\
&= \en \sum_{m=2}^\infty |\si^m|_F^2 = \sum_{m=2}^\infty \en |\si^m|_F^2 = \sum_{m=2}^\infty \viid(s^m).
\end{align*}
The third equality is Parseval's theorem.
The second to last equality is Tonelli's Theorem.
This shows the claimed bound.
Finally, we derive the second (efficiency) display of the corollary.
Since $\si^L = \ai^* + \si^1$ differs from $\si^1$ by a unit-specific constant, we have $\viid(s^L) = \viid(s^1)$ and $\viid(s - s^L) = \viid(s - s^1)$, and Lemma~\ref{lem:variance-components-orthogonal} applied with the closed subspace $H = \linearspan(h_1) \sub \ltwonot$ gives $\viid(s - s^1) = \viid(s) - \viid(s^1)$.
Write $w_L = \viid(s^L) / \viid(s)$.
Dividing the expansion $n \var_G(\est) = \vmatch(s^L) + \viid(s - s^L) + R_k(s)$ by $n \var_{\giid}(\est) = \viid(s)$ gives
\[
1 - \frac{\var_G(\est)}{\var_{\giid}(\est)} = 1 - \frac{\vmatch(s^L)}{\viid(s)} - \bigg(1 - \frac{\viid(s^L)}{\viid(s)}\bigg) - \frac{R_k(s)}{\viid(s)} = w_L \cdot \matchcoeffk(s^L) - \frac{R_k(s)}{\viid(s)},
\]
where the second equality uses $w_L \cdot \matchcoeffk(s^L) = w_L (1 - \vmatch(s^L)/\viid(s^L)) = w_L - \vmatch(s^L)/\viid(s)$.
By the remainder bound above, $|R_k(s)| / \viid(s) \le (k-1)\inv \viid(s - s^L) / \viid(s) \le (k-1)\inv = O(k\inv)$.
This completes the proof.
\end{proof}

\begin{lem}[Antithetic Variates]\label{lem:av-operator}
Let $F = \unif[0, 1]$ and $G$ the antithetic variates coupling.
The coupling operator $U_G\phi(x) = E_G[\phi(D_1) | D_2=x] = \phi(1-x)$.
Also $\ltwonot = \eeven \oplus \eodd$, which are eigenspaces of $U_G$ with eigenvalues $1$ on $\eeven$ and $-1$ on $\eodd$.
In particular, Assumption \ref{assump:direct-sum} holds.
Define $S_e \phi \equiv (1/2)(\phi(x) + \phi(1-x)) - \ef[\phi(D)]$ and $S_o \phi \equiv (1/2)(\phi(x) - \phi(1-x))$.
Then $\peven \phi = S_e \phi$ and $\podd \phi = S_o \phi$.
\end{lem}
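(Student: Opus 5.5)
We follow the template of Lemma~\ref{lem:rs-operator}, in three steps: compute the bivariate law of the coupling, read off the operator, and then identify the two eigenspaces through their explicit projections.

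\textbf{Step 1: the operator.} Under antithetic variates with canonical marginal $F=\unif[0,1]$ and $k=2$, we have $(\done,\dtwo) = (U, 1-U)$ with $U\sim\unif[0,1]$. Hence $\done = 1-\dtwo$ almost surely. It follows that $E_G[\phi(\done)\mid \dtwo = x] = \phi(1-x)$ for $F$-almost every $x$, which gives $U_G\phi(x)=\phi(1-x)$. Call this reflection operator $Q$, so $(Q\phi)(x)=\phi(1-x)$.

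\textbf{Step 2: properties of $Q$.} The map $x\mapsto 1-x$ preserves Lebesgue measure on $[0,1]$, so $Q$ is an isometric involution, $Q^2=I$. A change of variables gives $\lf Q\phi,\psi\rf=\lf \phi,Q\psi\rf$, and it also shows that $Q$ preserves means, so $Q$ maps $\ltwonot$ to itself.

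\textbf{Step 3: projections and eigenspaces.} Define $S_e$ and $S_o$ as in the statement. Then $S_e\phi + S_o\phi = \phi - \ef[\phi]$. I will check the following in turn.
\begin{enumerate}[label=(\roman*), itemsep=0.1pt]
\item $S_e\phi$ is even and has mean zero.
\item $S_o\phi$ is odd and has mean zero. The mean of $\phi(1-D)$ equals the mean of $\phi(D)$, so the odd part integrates to $0$.
\item $S_e$ fixes $\eeven$, because an even mean-zero function has $\tfrac12(\phi+Q\phi)=\phi$.
\item $S_e$ annihilates odd functions and constants.
\end{enumerate}
Self-adjointness of $S_e$ on $\elltwo(F)$ follows from self-adjointness of $Q$, plus the fact that the constant-subtraction term is symmetric, exactly as in the RS proof. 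Hence $S_e$ is the orthogonal projection onto $\eeven$. In particular $\eeven$ is closed, so $\ltwonot = \eeven\oplus\eeven^\perp$.

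To identify $\eodd=\eeven^\perp$ with the odd functions, argue in both directions. An odd $\psi$ satisfies $\lf\psi,\phi\rf = \lf Q\psi,Q\phi\rf = -\lf\psi,\phi\rf$ for every even $\phi$, so odd functions lie in $\eeven^\perp$. Conversely, if $\phi\in\eeven^\perp$ then $S_e\phi=0$, so $\phi=S_o\phi$ is odd. The same computation gives $\podd = I - \peven - \ef[\cdot] = S_o$.

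Finally, $Q\phi=\phi$ on even functions and $Q\phi=-\phi$ on odd ones, so these are eigenspaces with eigenvalues $1$ and $-1$ respectively, and Assumption~\ref{assump:direct-sum} holds. The only delicate point is the almost-sure identity in Step 1 for the conditional expectation. Because $\done$ is a deterministic function of $\dtwo$, that identity is immediate, so I expect no real obstacle.
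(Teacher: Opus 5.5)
Your proposal is correct and follows essentially the same route as the paper. Both arguments rest on the decomposition $\phi = \ef[\phi] + S_e\phi + S_o\phi$ and on the fact that $x \mapsto 1-x$ preserves $F$, which makes even and odd functions orthogonal and gives $U_G\phi(x) = \phi(1-x)$; the only differences are that you certify $S_e = \peven$ via the self-adjoint idempotent criterion from the RS lemma (the paper instead checks directly that $\phi - S_e\phi \perp \eeven$), and that your two-sided identification of $\eeven^\perp$ with the odd functions is slightly more explicit than the paper's.
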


The proof is similar to those of Lemma~\ref{lem:rs-operator} (rotation sampling) and Lemma~\ref{lem:lhs-operator} (Latin hypercube), and is deferred to the supplementary online appendix for space.

\begin{lem}[RS Uniform Bound]\label{lem:rs-uniform-bound}
Let $F = \unif[0, 1]$ and $\ecyclic$ be the $1/k$-cyclic subspace.
Then for any $(\si(\cdot))_{i=1}^n \sub \elltwo(F)$ we have $k \cdot \viid(s^c) \le k \inv \en V_{[0, 1]}(\si)^2$.
In particular, $k \cdot \vg(s^c) \le k \inv \en V_{[0, 1]}(\si)^2$.
\end{lem}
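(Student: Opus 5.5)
The bound is linear in the units, so it suffices to prove it for a single function $\phi \in \elltwo(F)$ and then average over $i$. Since $\viid(s^c) = \en \varf(\pcyclic \si)$, the first claim reduces to the single-function inequality
\[
k \cdot \varf(\pcyclic \phi) \le k\inv V_{[0,1]}(\phi)^2,
\qquad\text{equivalently}\qquad
\varf(\pcyclic \phi) \le k^{-2} V_{[0,1]}(\phi)^2 .
\]
Applying this to each $\si$ and averaging gives $k \cdot \viid(s^c) \le k\inv \en V_{[0,1]}(\si)^2$.

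To prove the single-function bound, we use the closed form from Lemma~\ref{lem:rs-operator}: $(\pcyclic\phi)(d) = (S\phi)(d) - \ef[\phi(D)]$, where $(S\phi)(d) = k\inv\sum_{l=1}^k \phi(d \oplus l/k)$. The function $S\phi$ is $1/k$-periodic, so it is determined by its values on $J(1) = [0,1/k)$. For $d \in [0,1/k)$, the points $d \oplus l/k$ are exactly $d + (l-1)/k$ for $l \in [k]$, one in each bin $J(l)$. Also $\ef[\phi] = \ef[S\phi]$. Hence
\[
\varf(\pcyclic\phi) = \varf(S\phi) = k\int_0^{1/k} \big(S\phi(d) - \bar m\big)^2\,dd,
\qquad \bar m = k\int_0^{1/k} S\phi .
\]
This is bounded by $\sup_{d,d' \in [0,1/k)} |S\phi(d) - S\phi(d')|^2$. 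For $d,d' \in J(1)$, the triangle inequality gives
\[
|S\phi(d) - S\phi(d')| \le k\inv \sum_{l=1}^k \big|\phi(d + (l-1)/k) - \phi(d' + (l-1)/k)\big| \le k\inv \sum_{l=1}^k V_{J(l)}(\phi) \le k\inv V_{[0,1]}(\phi),
\]
where the middle step uses that both points lie in the same bin $J(l)$, and the last step uses additivity of total variation over the disjoint bins. Squaring gives $\varf(\pcyclic\phi) \le k^{-2}V_{[0,1]}(\phi)^2$, as required.

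For the second claim we use the identity $k\cdot\vg(s^c) = \viid(s^c) + c(s^c)$ from the proof of Theorem~\ref{thm:rs-variance}. Here $\vg$ is the within-group variance of group means, as in Lemma~\ref{lem:var-components}. The key observation is $c(s^c) \le (k-1)\viid(s^c)$, which follows from the Cauchy--Schwarz/Young bound shown in the proof of Corollary~\ref{cor:gaussian-limit}. Therefore $k\cdot\vg(s^c) \le k\cdot\viid(s^c)$, and the first claim finishes the argument.

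The main technical point is the measure-theoretic detail: $\phi$ is only an $\elltwo$ element, while total variation is defined pointwise. We will interpret $V$ for the given representative, for which the pointwise estimate holds at every point and hence almost everywhere. The half-open bin endpoints cause no trouble, since each bin's variation is bounded by the variation over its closure, and adjacent closed bins share only endpoints, so the bin variations still sum to at most $V_{[0,1]}(\phi)$.
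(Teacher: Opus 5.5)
Your proof is correct and takes essentially the same route as the paper: both use that the cyclic orbit $\{d \oplus l/k\}_{l=1}^k$ meets each bin exactly once, bound within-bin differences by $V_{J(l)}(\phi)$, sum using additivity of total variation over the bins, and then obtain the second claim from $k\cdot\vg(s^c)\le k\cdot\viid(s^c)$ (Lemma~\ref{lem:var-components}). The only cosmetic difference is that you bound the oscillation $\sup_{d,d'}|S\phi(d)-S\phi(d')|$, whereas the paper bounds $|S\phi(x)-\ef[\phi]|$ directly by writing $\ef[\phi]$ as an integral against a histogram built from the orbit points; the two bounds are equivalent.
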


\begin{proof}
Our strategy will be to show that $\si^c(x) = k^{-1} \sum_{l \in [k]} \si(x \oplus l/k) - \ef[\si]$ is of order $O(k\inv)$, uniformly over $x \in [0, 1]$.
Define sets $E_l = [(l-1)/k, l/k) \sub [0, 1]$ and $\mc S(x) = \{x \oplus l/k: l \in [k]\}$.
Clearly $|\mc S(x) \cap E_l| = 1$ for $l \in [k]$.
Let $t_l(x)$ denote this point and define the histogram approximation $f_i(x, t) = \sum_{l \in [k]} \one(t \in E_l) \cdot \si(t_l(x))$.
Then we have $\int_0^1 f_i(x, t) dt = k\inv \sum_{l \in [k]} \si(t_l(x)) = k\inv \sum_{l \in [k]} \si(x\oplus l/k)$.
The projection $\delta \si(x) \equiv \si^c(x)$ may be written
\begin{align*}
\delta \si(x) = \si^c(x) = k\inv \sum_{l \in [k]} \si(x\oplus l/k)-\ef[\si] = \int_0^1 [f_i(x, t) - \si(t)] dt
\end{align*}
Note $|f_i(x, t) - \si(t)| \leq \sum_{l \in [k]} \one(t \in E_l) |\si(t_l(x)) - \si(t)| \leq \sum_{l \in [k]} \one(t \in E_l) V_{E_l}(\si)$.
The last inequality follows since $t_l(x), t \in E_l$ on if $\one(t \in E_l) = 1$.
Then above
\begin{align*}
|\delta \si(x)| &\leq \int_0^1 |f_i(x, t) - \si(t)| dt \leq \int_0^1 \sum_{l \in [k]} \one(t \in E_l) V_{E_l}(\si) dt \\
&= k^{-1} \sum_{l \in [k]} V_{E_l}(\si) \leq k \inv V_{[0,1]}(\si).
\end{align*}
In particular, $\sup_{x \in [0, 1]} |\delta \si(x)| \leq k \inv V_{[0,1]}(\si)$.
Then the variance
\[
\varf(\si^c) = \ef[\delta \si(D)^2] \leq \sup_x |\delta \si(x)|^2 \leq k^{-2} V(\si)^2.
\]
Then $k \cdot \viid(s^c) = k \en \varf(\si^c) \leq k \inv \en V_{[0, 1]}(\si)^2$.
The second claim follows since $k \cdot \vg(s^c) \le k \cdot \viid(s^c)$ by Lemma \ref{lem:var-components}.
This finishes the proof.
\end{proof}

\begin{lem}[LHS Uniform Bound]\label{lem:lhs-uniform-bound}
Let $F = \unif[0, 1]$ and let $\ehist$ be the $k$-histogram subspace on bins $E_l = [(l-1)/k, l/k)$ for $l \in [k]$.
Let $\si^{hist} \equiv \phist \si$ be the $L^2(F)$ projection in
Equation~\eqref{eqn:histogram-projection-disp}.
Then for any $(\si(\cdot))_{i=1}^n \sub \elltwo(F)$, we have $\viid(s - s^{hist}) \le k\inv \en V_{[0, 1]}(\si)^2$.
\end{lem}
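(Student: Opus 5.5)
The plan is to bound the residual pointwise on each bin. By Lemma~\ref{lem:lhs-operator}, $\phist \si(x) = \ef[\si(D) \mid D \in I(x)] - \ef[\si(D)]$. So the residual $r_i \equiv \si - \si^{hist}$ satisfies
\[
r_i(x) = \si(x) - \ef[\si(D) \mid D \in I(x)] + \ef[\si(D)].
\]
Since $\viid$ is a variance, the constant $\ef[\si]$ drops out. Moreover, $\si(x) - \ef[\si(D)\mid D \in I(x)]$ has mean zero on each bin, hence overall. This gives
\[
\varf(\si - \si^{hist}) = \ef\bigl[(\si(D) - \ef[\si(D)\mid D \in I(D)])^2\bigr] = \sum_{l \in [k]} \int_{E_l} \bigl(\si(x) - \bar\sfn_{i,l}\bigr)^2 dx,
\]
where $\bar\sfn_{i,l} = k\int_{E_l}\si$ is the bin average.

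Next, I will bound each bin's deviation by the local variation. For $x \in E_l$, the average $\bar\sfn_{i,l}$ is an average of values $\si(t)$ with $t \in E_l$, so $|\si(x) - \bar\sfn_{i,l}| \le \sup_{t \in E_l}|\si(x) - \si(t)| \le V_{E_l}(\si)$. This is the same inequality used in Lemma~\ref{lem:rs-uniform-bound}. Plugging in gives
\[
\varf(\si - \si^{hist}) \le \sum_l k\inv V_{E_l}(\si)^2 .
\]

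The only slightly delicate step is aggregating the local variations. Here I use that total variation is additive over the disjoint bins, so $\sum_l V_{E_l}(\si) \le V_{[0,1]}(\si)$. It then follows that
\[
\sum_l V_{E_l}(\si)^2 \le \Bigl(\sum_l V_{E_l}(\si)\Bigr)^2 \le V_{[0,1]}(\si)^2 .
\]
Hence $\varf(\si - \si^{hist}) \le k\inv V_{[0,1]}(\si)^2$, and averaging over $i$ gives $\viid(s - s^{hist}) \le k\inv \en V_{[0,1]}(\si)^2$.

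The one technical caveat is that $\si \in \elltwo(F)$ is an equivalence class, while $V$ is defined pointwise. I would interpret $V_{[0,1]}(\si)$ as computed for the given representative, and note that the bound is vacuous when it is infinite. The sup bound holds for that representative, and the integral is unaffected by null sets.
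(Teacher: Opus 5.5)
Your proposal is correct and follows essentially the same argument as the paper. Both reduce to the mean-zero within-bin residual, bound it pointwise by $V_{E_l}(\si)$, and integrate to get $k\inv V_{E_l}(\si)^2$ per bin. Both then aggregate via $\sum_l a_l^2 \le (\sum_l a_l)^2$ and $\sum_l V_{E_l}(\si) \le V_{[0,1]}(\si)$; your remark about fixing a representative of $\si$ is a harmless refinement that the paper leaves implicit.
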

\begin{proof}
Fix a unit $i$ and write $\si = \si(\cdot)$.
For each bin $E_l$, define the within-bin mean $m_l \equiv E_F[\si(D) \mid D \in E_l] = k \int_{E_l} \si(t)  dt$.
By Equation~\eqref{eqn:histogram-projection-disp}, we have
$\si^{hist}(t) = \sum_{l \in [k]} m_l \one(t \in E_l) - \ef[\si(D)]$.
Define residual $r_i(t) \equiv \si(t) - \sum_{l \in [k]} m_l \one(t \in E_l)$, which differs from $\si - \si^{hist}$ by the constant $\ef[\si(D)]$, so that $\varf(\si - \si^{hist}) = \varf(r_i)$.
Clearly $\ef[r_i(D)] = 0$ so $\varf(r_i) = E_F[r_i(D)^2] = \int_0^1 r_i(t)^2 dt$.
Next, we calculate for $t \in E_l$
\[
|r_i(t)| = |\si(t) - m_l| \le k \int_{E_l} |\si(t) - \si(t')|  dt'
\le \sup_{u,v \in E_l} |\si(u) - \si(v)|
\le V_{E_l}(\si).
\]
Then, $\int_{E_l} r_i(t)^2 dt \le \int_{E_l} V_{E_l}(\si)^2  dt
= k\inv \, V_{E_l}(\si)^2$.
Summing over bins gives
\[
\varf(\si - \si^{hist})
= \int_0^1 r_i(t)^2 dt
\le k\inv \sum_{l \in [k]} V_{E_l}(\si)^2
\le k\inv \Big( \sum_{l \in [k]} V_{E_l}(\si) \Big)^2
\le k\inv V_{[0,1]}(\si)^2.
\]
The second inequality since $(\sum_l a_l)^2 \ge \sum_l a_l^2$ for $a_l \ge 0$, and the final inequality using $\sum_l V_{E_l}(\si) \le V_{[0,1]}(\si)$.
Finally, $\viid(s - s^{hist}) = \en \varf(\si - \si^{hist}) \le k\inv \en V_{[0,1]}(\si)^2$.
This finishes the proof.
\end{proof}

\begin{proof}[Proof of Theorem \ref{thm:dispersion-limits}]
For the first claim, consider $G = \text{RS}$.
By the dispersion formula in Section \ref{subsection:rs}, letting $w_c = \varf(P_c \phi) / \varf(\phi)$, we have
\begin{align*}
\dispg(\phi) &= w_a - (k-1) w_c = (1-w_c) - (k-1)w_c = 1 - k w_c.
\end{align*}
By Lemma~\ref{lem:rs-uniform-bound} applied with $n = k$ and $\si = \phi$ for all $i$, we have $k \varf(P_c \phi) \leq k\inv V_{[0, 1]}(\phi)^2$, where $V_{[0, 1]}(\phi)$ is the total variation of $\phi$ on $[0, 1]$.
Then $\sup_{\phi \in \mc H(b, \eps)} k w_c(\phi) \le b^2 / (k \eps) = o(1)$, so that $\inf_{\phi \in \mc H(b, \eps)} \dispg(\phi) = 1 + o(1)$, as claimed.
Next, consider $G = \text{LHS}$.
By Equation \eqref{eqn:lhs-dispersion-val}, we have $\dispg(\phi) = w_h = \varf(\phist \phi)/\varf(\phi) = 1 - \varf(\phi - \phist \phi)/\varf(\phi)$.
By Lemma \ref{lem:lhs-uniform-bound}, $\varf(\phi - \phist \phi) \le k \inv V_{[0, 1]}(\phi)^2$.
The conclusion follows.
Finally, consider the statement about the Gaussian coupling.
Let $n=k$ and $\sfn_i = \phi$ for all $i \in [k]$.
Then match quality $\matchcoeffk(\phi^L) = 1$ and weights $w_L = \varf(P_L \phi) / \varf(\phi)$.
By Equation \eqref{eqn:pure-dispersion} and Corollary \ref{cor:gaussian-limit}, we have
\[
\dispg(\phi) = 1 - \frac{\var_G(\est)}{\var_{\giid}(\est)} = w_L \cdot 1 + O(k\inv) = \frac{\varf(P_L \phi)}{\varf(\phi)} + o(1).
\]
This proves the second claim.
\end{proof}

\begin{proof}[Proof of Theorem~\ref{thm:structured-worst-case}]
First consider $G = $ LHS.
By Corollary \ref{cor:efficiency-convex}, $n\var_G(\est) = \vmatch(s^{hist}) + \viid(s - s^{hist})$.
Using the orthogonal decomposition $\ltwonot = \ehist \oplus \ehist^\perp$ and Lemma \ref{lem:variance-components-orthogonal}, we have $\vmatch(s^{hist}) = \vmatch(s) - \vmatch(s - s^{hist})$, so $n\var_G(\est) = \vmatch(s) + \viid(s - s^{hist}) - \vmatch(s - s^{hist})$.
Since $\vmatch(\cdot) \ge 0$ by Lemma~\ref{lem:var-components}, we obtain $n\var_G(\est) \le \vmatch(s) + \viid(s - s^{hist})$.
By Lemma~\ref{lem:lhs-uniform-bound}, $\viid(s - s^{hist}) \le k\inv \en[V_{[0,1]}(\si)^2]$.
Dividing by $n\var_{\giid}(\est) = \viid(s)$ and using $\matchcoeffk(s) = 1 - \vmatch(s)/\viid(s)$:
\[
1 - \frac{\var_G(\est)}{\var_{\giid}(\est)} \ge \matchcoeffk(s) - \frac{\en[V_{[0,1]}(\si)^2]}{k \cdot \viid(s)} = \matchcoeffk(s) - \frac{\etatv(s)}{k}.
\]
Next consider $G = $ RS.
By the proof of Theorem~\ref{thm:rs-variance}, $n\var_G(\est) = \vmatch(s^a) + k \cdot \vg(s^c)$.
Using the decomposition $\ltwonot = \ecyclic \oplus \eacyclic$ and Lemma \ref{lem:variance-components-orthogonal}, $\vmatch(s^a) = \vmatch(s) - \vmatch(s^c)$, so
\[
n\var_G(\est) = \vmatch(s) + k \cdot \vg(s^c) - \vmatch(s^c).
\]
Since $\vmatch(s^c) \ge 0$, we obtain $n\var_G(\est) \le \vmatch(s) + k \cdot \vg(s^c)$.
By Lemma \ref{lem:var-components}, $k \cdot \vg(s^c) \le k \cdot \viid(s^c)$, and by Lemma \ref{lem:rs-uniform-bound}, $k \cdot \viid(s^c) \le k\inv \en[V_{[0,1]}(\si)^2]$.
Dividing by $\viid(s)$ as before gives $1 - \var_G(\est)/\var_{\giid}(\est) \ge \matchcoeffk(s) - \etatv(s)/k$.
\end{proof}

\subsection{Proofs of Uniform Consistency Results}

\begin{proof}[Proof of Theorem~\ref{thm:uniform-consistency-rates}]
First, we show the main result in Equation \eqref{eqn:maxrate-sup-inf}.
We begin with an upper bound on the worst case variance, then show it is attained for some functions $\si$.
For eigenspace $E_m$ of $U_G$, let $s^m = (P_m \si)_{i=1}^n$ be the orthogonal projections and $\evalm$ the eigenvalue.
From the proof of Theorem~\ref{thm:eigenspace-decomposition}, we have the key equation $n\var_G(\est) = \viid(s) + \sum_{m \ge 1} \evalm \cdot c(s^m)$.
By Lemma~\ref{lem:var-components}, $-\viid(s^m) \le c(s^m) \le (k-1)\viid(s^m)$ for each $m$.
If $\evalm \ge 0$, then $\evalm c(s^m) \le (k-1)\evalm \viid(s^m)$.
If $\evalm < 0$, then $\evalm c(s^m) \le -\evalm \viid(s^m)$.
Recall from Theorem~\ref{thm:dispersion-decomposition} that $\dispg(m) = -(k-1)\evalm$.
Putting this together, we have
\[
\evalm \cdot c(s^m) \le \max(-\dispg(m), \dispg(m)/(k-1)) \cdot \viid(s^m).
\]
Then from the key equation above and using $\viid(s) = \sum_m \viid(s^m)$, which follows by Parseval's theorem as in the proof of Theorem~\ref{thm:eigenspace-decomposition}:
\begin{align*}
n\var_G(\est) &\le \sum_m [1 + \max(-\dispg(m), \dispg(m)/(k-1))] \cdot \viid(s^m).
\end{align*}
Next, we claim that $\inf_{m \ge 1} \dispg(m) = \infdispg$ and $\sup_{m \ge 1} \dispg(m) = \supdispg$.
Write $\phi \ne c$ for $\phi$ non-constant.
By Theorem~\ref{thm:dispersion-decomposition}, for $\sum_{m \ge 1} w_m = 1$, we have $\dispg(\phi) = \sum_{m \ge 1} w_m \dispg(m) \ge \inf_m \dispg(m)$.
Then $\inf_{\phi \ne c}\dispg(\phi) \ge \inf_m \dispg(m)$.
Conversely, picking $\phi_m \in E_m$ non-constant for $m \ge 1$, we have $\inf_{\phi \ne c} \dispg(\phi) \le \inf_m \dispg(\phi_m) = \inf_m \dispg(m)$.
This proves the claim for the inf, and the sup follows by symmetry.
Using this, we have $-\dispg(m) \le - \inf_m \dispg(m) = - \infdispg$ and $\dispg(m) \le \sup_m \dispg(m) = \supdispg$, so the above is
\begin{align*}
n\var_G(\est) &\le \sum_m [1 + \max(-\infdispg, \supdispg/(k-1))] \cdot \viid(s^m) \\
&= [1 + \max(-\infdispg, \supdispg/(k-1))] \cdot \viid(s).
\end{align*}
This establishes the upper bound.

Next we show that the bound is achieved.
\emph{Case 1:}
Suppose $-\infdispg \ge \supdispg/(k-1)$.
Choose $l$ with $\dispg(l) < \infdispg + \epsilon$.
Pick $\phi \in E_{l}$ with $\varf(\phi) = 1$ and set $\si = \phi$ for all $i \in [n]$.
Then $\viid(s) = 1$ and, since all influence functions are identical,
we have $c(s^l) = n\inv \sum_g \sum_{i \neq j} \covf(\phi, \phi) = (k-1)\varf(\phi) = (k-1)$ and $c(s^m) = 0$ for $m \ne l$ by Lemma \ref{lem:variance-components-orthogonal}.
Since only $s^{l}$ is nonzero, the key equation above gives
\[
n\var_G(\est) = 1 + (k-1)\eval_{l} = 1 - \dispg(l) > 1 - \infdispg - \epsilon.
\]
\emph{Case 2:}
Suppose $\supdispg/(k-1) > -\infdispg$.
Choose $l$ with $\dispg(l) > \supdispg - \epsilon$.
Pick $\phi \in E_{l}$ with $\varf(\phi) = 1$ and choose coefficients $\alpha_1, \ldots, \alpha_k \in \mr$ satisfying $\sum_{i=1}^k \alpha_i = 0$ and $k\inv \sum_{i=1}^k \alpha_i^2 = 1$.
Within each group $g$, set $\sig(d) = \alpha_i \phi(d)$ for the $i$th unit.
Then $\viid(s) = k\inv \sum_i \alpha_i^2 \varf(\phi) = 1$, and
\[
c(s) = k\inv \sum_{i \neq j} \alpha_i \alpha_j \varf(\phi) = k\inv \bigg[\bigg(\sum_i \alpha_i\bigg)^{\!2} - \sum_i \alpha_i^2 \bigg] = -1.
\]
Thus $n\var_G(\est) = 1 - \eval_{l} = 1 + \dispg(l)/(k-1) > 1 + \supdispg/(k-1) - \epsilon/(k-1)$.
Since $\epsilon > 0$ is arbitrary, the supremum equals $1 + \max(-\infdispg,\; \supdispg/(k-1))$.

Next, we prove the remaining claims of the theorem, together with the claims about the iid and RS couplings in Remark~\ref{rem:minimaxity}.
Recall from Section~\ref{subsection:dispersion} the dispersion bounds $-(k-1) \le \dispg(\phi) \le 1$.
For the upper bound, by \eqref{eqn:maxrate-sup-inf}, $\maxraten(G) = 1 + \max(-\infdispg,\; \supdispg/(k-1)) \le 1 + \max(k-1,\; 1/(k-1)) = 1 + (k-1) = k$.
For the lower bound, we show $\max(-\infdispg,\; \supdispg/(k-1)) \ge 0$.
If $\infdispg \le 0$, then $-\infdispg \ge 0$ and the conclusion is immediate.
If $\infdispg > 0$, then $\supdispg \ge \infdispg > 0$, so $\supdispg/(k-1) > 0$.
In either case $\maxraten(G) \ge 1$.
Minimaxity of $\giid$ is immediate since $\maxraten(\giid) = 1 + \max(0, 0) = 1$.
For continuous $F$, the transport $v$ of Lemma~\ref{lem:canonical-space} is injective, so by that lemma the operator $U_G$ has the same eigenvalues as the canonical operator $T_H$. We compute $\infdispg$ and $\supdispg$ from these eigenvalues below.
For $G = $ RS, Lemma~\ref{lem:rs-operator} gives eigenvalues $\eval = 1$ on $\ecyclic$ and $\eval = -(k-1)\inv$ on $\eacyclic$, so by Theorem~\ref{thm:dispersion-decomposition} the dispersions are $\dispg(\ecyclic) = -(k-1)$ and $\dispg(\eacyclic) = 1$.
Hence $\infdispg = -(k-1)$ and $\supdispg = 1$, giving $\maxraten(G) = 1 + \max(k-1,\; 1/(k-1)) = k$.
Finally, if $k = O(1)$, then $\maxraten(G) \le k = O(1)$ for any coupling sequence, so $\est$ is uniformly $\rootn$-consistent.

Finally, we prove the claims about the LHS and Gaussian couplings in Remark~\ref{rem:minimaxity}.
The argument above showed $\infdispg = \inf_m \dispg(m)$ and $\supdispg = \sup_m \dispg(m)$, with $\dispg(m) = -(k-1)\evalm$ by Theorem~\ref{thm:dispersion-decomposition}.
Substituting the eigenvalues of Lemma~\ref{lem:lhs-operator} gives $\infdispg = 0$ and $\supdispg = 1$ for LHS, and substituting the eigenvalues of Lemma~\ref{lem:gaussian-operator} gives $\infdispg = -(k-1)\inv$ and $\supdispg = 1$ for the Gaussian coupling.
In both cases Equation \eqref{eqn:maxrate-sup-inf} gives $\maxraten(G) = 1 + \max(-\infdispg, \supdispg/(k-1)) = k/(k-1) = O(1)$, so uniform $\rootn$-consistency holds for any sequence $k(n)$.
If $k \to \infty$, then $\maxraten(G) \to 1$, and since $\maxraten(G) \ge 1$ as shown above, both designs are asymptotically minimax.
\end{proof}

\begin{proof}[Proof of Theorem~\ref{thm:consistency-under-regularity}]
By Theorem~\ref{thm:structured-worst-case}, for $G \in \{\text{LHS}, \text{RS}\}$, we have $1 - \var_G(\est)/\var_{\giid}(\est) \ge \matchcoeffk(s) - k\inv \etatv(s)$.
Rearranging and using $n \var_{\giid}(\est) = \viid(s)$, we have $n \var_G(\est) \le \viid(s)(1 - \matchcoeffk(s) + k\inv \etatv(s))$.
For $s \in \sreg$, we have $\viid(s) \le 1$, $\matchcoeffk(s) \ge q_0$, and $\etatv(s) \le \bar \eta$, so $n \var_G(\est) \le 1 - q_0 + k\inv \bar \eta$.
Taking the supremum over $s \in \sreg$ completes the proof.
\end{proof}

\section{Asymptotics Proofs}

\subsection{Proof of Proposition~\ref*{prop:clt}}

\begin{proof}[{Proof of Proposition~\ref*{prop:clt}}]
For each matched group $g \in [n/k]$, define the centered scaled group sum $S_g = n^{-1} \sum_{i=1}^{k} [\sig(\Di) - \ef[\sig(D)]]$.
Since treatments are independent across matched groups, $\{S_g\}_{g=1}^{n/k}$ are independent with $\est - \thetan = \sum_{g=1}^{n/k} S_g$ and $\hksd_n^2 = \var_G(\est) = \sum_{g=1}^{n/k} \var_G(S_g)$.
We verify the Lindeberg condition: $\hksd_n^{-2} \sum_{g=1}^{n/k} \eg\bigl[S_g^2 \one(|S_g| > \varepsilon \hksd_n)\bigr] \to 0$ for every $\varepsilon > 0$.
By Markov's inequality, $\eg\bigl[S_g^2 \one(|S_g| > \varepsilon \hksd_n)\bigr] \leq \eg[S_g^4] / \varepsilon^2 \hksd_n^2$.
By the power mean inequality applied pointwise,
\begin{align*}
\eg[S_g^4]
= \frac{1}{n^4} \eg\Bigg[\bigg(\sum_{i=1}^{k} [\sig(\Di) - \ef[\sig(D)]]\bigg)^{\!4}\,\Bigg]
&\leq \frac{k^3}{n^4} \sum_{i=1}^{k} \ef\bigl[(\sig(D) - \ef[\sig(D)])^4\bigr] \\
&\leq \frac{16 k^3}{n^4} \sum_{i=1}^{k} \ef[\si(D)^4].
\end{align*}
The first inequality also uses that $\Di \sim F$ marginally under $G$.
The second uses $\ef[(\sig(D) - \ef[\sig(D)])^4] \leq 16 \ef[\sig(D)^4]$, which follows since $\ef[(X - \ef[X])^4]^{1/4} \leq \ef[X^4]^{1/4} + |\ef[X]| \leq 2 \ef[X^4]^{1/4}$ by Minkowski's and Jensen's inequalities.
Summing over all groups,
\[
\frac{1}{\hksd_n^2} \sum_{g=1}^{n/k} \eg\bigl[S_g^2 \one(|S_g| > \varepsilon \hksd_n )\bigr]
\leq \frac{16 k^3}{\varepsilon^2 n^4 \hksd_n^4} \sum_{i=1}^{n} \ef[\si(D)^4]
= \frac{16 k^3 M_{4,n}}{\varepsilon^2 n^3 \hksd_n^4}
\]
for $M_{4,n} = \en\bigl[\ef[\si(D)^4]\bigr]$.
By parts 1 and 2 of Assumption~\ref{assumption:clt}, $M_{4,n} = O(1)$ and $\hksd_n^2 = \Omega(n^{-1})$, so $1/\hksd_n^4 = O(n^2)$ and the bound above is $O(k^3 / n) \to 0$, using $k^3 / n \to 0$ by part 3.
The Lindeberg--Feller CLT can therefore be applied using the group sums $S_g$ as the independent variates.
\end{proof}

\begin{proof}[{Proof of Corollary~\ref*{cor:parametric-clt}}]
Write $c'(\wh\beta - \betan) = \en[c'\si(\Di)] + R_n$ with $R_n = \Op(n\inv)$.
The estimator $\en[c'\si(\Di)]$ has estimand $\en \ef[c'\si(D)] = 0$, so Proposition~\ref{prop:clt} applied with the influence functions $(c'\si)_{i=1}^n$ gives $\en[c'\si(\Di)] / \hksd_n \convd \normal(0, 1)$.
By part 2 of Assumption~\ref{assumption:clt}, $\hksd_n^2 = \Omega(n\inv)$, so $\hksd_n\inv = O(n^{1/2})$ and $R_n / \hksd_n = \Op(\negrootn) = \op(1)$.
The claim then follows from Slutsky's theorem.
\end{proof}

\subsection{Proof of Proposition~\ref*{prop:var-est-expectation-variance}}

\begin{proof}[{Proof of Proposition~\ref*{prop:var-est-expectation-variance}}]
\emph{Expectation.}
Because groups $g$ and $\pi(g)$ are independent (since $\pi$ has no fixed points), we have $\eg[\est_{g} \est_{\pi(g)}] = \eg[\est_{g}] \eg[\est_{\pi(g)}] = \theta_{g} \theta_{\pi(g)}$.
Therefore,
\begin{align*}
\eg\big[(\est_{g} - \est_{\pi(g)})^2\big]
&= \eg[\est_{g}^2] + \eg[\est_{\pi(g)}^2] - 2 \theta_{g} \theta_{\pi(g)} \\
&= \var_G(\est_{g}) + \var_G(\est_{\pi(g)}) + (\theta_{g} - \theta_{\pi(g)})^2.
\end{align*}
Summing over $g$ and using that $\pi$ is a permutation, $\sum_{g=1}^{n/k} \eg[(\est_{g} - \est_{\pi(g)})^2] = 2 \sum_{g=1}^{n/k} \var_G(\est_{g}) + \sum_{g=1}^{n/k} (\theta_{g} - \theta_{\pi(g)})^2$.
Since $\est = (n/k)^{-1} \sum_{g} \est_{g}$ and the groups are independent, $\hksd_n^2 = \var_G(\est) = (n/k)^{-2} \sum_{g=1}^{n/k} \var_G(\est_{g})$, so $\sum_{g} \var_G(\est_{g}) = (n/k)^2 \hksd_n^2$.
Substituting and using $\sum_{g} (\theta_{g} - \theta_{\pi(g)})^2 = (n/k) \Delta_n^2$,
\[
\eg[\widehat{\hksd}_n^2]
= \frac{k^2}{2n^2} \bigg[ 2 (n/k)^2 \hksd_n^2 + (n/k) \Delta_n^2 \bigg]
= \hksd_n^2 + \frac{k \Delta_n^2}{2n}.
\]
Dividing by $\hksd_n^2$ gives the expectation result.

\emph{Variance.}
Let $Q_g = (\est_{g} - \est_{\pi(g)})^2$.
Since $Q_g$ depends only on groups $g$ and $\pi(g)$, the terms $Q_g$ and $Q_h$ are independent unless $\{g, \pi(g)\} \cap \{h, \pi(h)\} \neq \emptyset$.
For each $g$, the dependent indices are $h \in \{g, \pi(g), \pi^{-1}(g)\}$, so
\[
\var_G(\widehat{\hksd}_n^2)
= \frac{k^4}{4n^4} \sum_{g = 1}^{n / k} \sum_{h \in \{g, \pi(g), \pi^{-1}(g)\}} \cov_G(Q_g, Q_h).
\]
By the Cauchy--Schwarz and AM--GM inequalities, $\cov_G(Q_g, Q_h) \leq [\var_G(Q_g) + \var_G(Q_h)] / 2$.
Each inner sum has at most three terms, and reindexing via the permutation $\pi$ preserves the sum, so $\var_G(\widehat{\hksd}_n^2) \leq (3k^4 / 4n^4) \sum_{g} \var_G(Q_g)$.
Next, $\var_G(Q_g) \leq \eg[Q_g^2] = \eg[(\est_{g} - \est_{\pi(g)})^4] \leq 8(\eg[\est_{g}^4] + \eg[\est_{\pi(g)}^4])$, where the last step uses the convexity of $x \mapsto |x|^4$.
Because $\pi$ is a permutation, $\sum_{g} (\eg[\est_{g}^4] + \eg[\est_{\pi(g)}^4]) = 2 \sum_{g} \eg[\est_{g}^4]$, so $\var_G(\widehat{\hksd}_n^2) \leq (12 k^4 / n^4) \sum_{g} \eg[\est_{g}^4]$.
By Jensen's inequality applied to $x \mapsto |x|^4$, we have $\eg[\est_{g}^4] = \eg[(k^{-1} \sum_{i=1}^{k} \sig(\Di))^4] \leq k^{-1} \sum_{i=1}^{k} \eg[\sig(\Di)^4]$.
Therefore,
\[
\var_G(\widehat{\hksd}_n^2)
\leq \frac{12 k^3}{n^4} \sum_{i = 1}^{n} \ef[\si(D)^4]
= \frac{12 k^3}{n^3} M_{4,n}.
\]
Dividing by $\hksd_n^4$ gives the variance result.
\end{proof}

\subsection{Proof of Proposition~\ref*{prop:ci}}

\begin{proof}[{Proof of Proposition~\ref*{prop:ci}}]
Let $T_n = (\est - \thetan) / \hksd_n$ and $R_n = \widehat{\hksd}_n / \hksd_n$.
By Proposition~\ref{prop:clt}, $T_n \convd \normal(0, 1)$, and by Proposition~\ref{prop:var-est-expectation-variance}, $\eg[R_n^2] \geq 1$ and $\var_G(R_n^2) = o(1)$.
Note that $\thetan \in \mathrm{CI}_{1 - \alpha}$ if and only if $|T_n| \leq z_{1 - \alpha/2}\, R_n$.
For any $\delta > 0$,
\begin{align*}
\pg(|T_n| \leq z_{1 - \alpha/2}\, R_n)
&\geq \pg(|T_n| \leq z_{1 - \alpha/2} \sqrt{1 - \delta},\; R_n^2 \geq 1 - \delta) \\
&\geq \pg(|T_n| \leq z_{1 - \alpha/2} \sqrt{1 - \delta}) - o(1).
\end{align*}
The second inequality follows since $\eg[R_n^2] \geq 1$, so Chebyshev's inequality gives $\pg(R_n^2 < 1 - \delta) \leq \var_G(R_n^2) / \delta^2 = o(1)$.
Taking limits and using $T_n \convd \normal(0, 1)$, we obtain $\liminf_{n \to \infty} \pg(\thetan \in \mathrm{CI}_{1 - \alpha}) \geq \pg(|Z| \leq z_{1 - \alpha/2} \sqrt{1 - \delta})$ for $Z \sim \normal(0, 1)$.
Since this holds for all $\delta > 0$, taking $\delta \to 0$ gives $\liminf_{n} \pg(\thetan \in \mathrm{CI}_{1 - \alpha}) \geq 1 - \alpha$.
\end{proof}

\subsection{Lemmas} \label{section:lemmas}

Define $\viid(s) \equiv \en \varf(\si)$ and $\vmatch(s) \equiv (2n(k-1))\inv \sum_g \sum_{i\neq j} \varf(\sig - \sjg)$.
Also let $\sgbar \equiv k\inv \sum_{i\in[k]} \sig$ and define $\vg(s) \equiv (n/k) \inv \sum_g \varf(\sgbar)$.
Also define $c(s) \equiv n\inv \sum_g \sum_{i\neq j} \covf(\sig,\sjg)$.
The proofs of the following lemmas are algebra exercises, omitted for space.

\begin{lem}[Variance Identities]\label{lem:var-components}
We have $\vmatch(s) = \viid(s) - (k-1)\inv c(s)$ and $k \cdot \vg(s) = \viid(s) + c(s)$.
Moreover, $k \vg(s) \le k \viid(s)$ and $\vmatch(s) \leq 2 \viid(s)$.
In particular, $-\viid(s) \le c(s) \le (k-1)\viid(s)$.
The match quality coefficient $\matchcoeffk(s) = (k-1)\inv c(s) / \viid(s)$.
Consequently, $\matchcoeffk(s) \in [-1/(k-1),\, 1]$.
Finally, for $k \ge 2$, we have $\vmatch(s) = (k-1)\inv (n/k)\inv \sum_g \sum_{i\in[k]} \varf(\sig - \sgbar )$.
\end{lem}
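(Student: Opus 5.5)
The statement is Lemma~\ref{lem:var-components}, a collection of algebraic identities for $\viid$, $\vmatch$, $\vg$ and $c$. All of them follow from expanding variances of sums and differences within a group $g$. Throughout, abbreviate $v_{ig} = \varf(\sig)$ and $c_{ijg} = \covf(\sig,\sjg)$.

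\textbf{Step 1: the two identities.} First expand
\[
\varf(\sig - \sjg) = v_{ig} + v_{jg} - 2c_{ijg}.
\]
Summing over ordered pairs $i \neq j \in [k]$ gives
\[
\sum_{i\neq j} \varf(\sig-\sjg) = 2(k-1)\sum_i v_{ig} - 2\sum_{i \neq j} c_{ijg}.
\]
Summing over $g$ and dividing by $2n(k-1)$ yields $\vmatch(s) = \viid(s) - (k-1)\inv c(s)$.

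Next expand $\varf(\sgbar) = k^{-2}\big(\sum_i v_{ig} + \sum_{i\neq j} c_{ijg}\big)$. Multiplying by $k$ and averaging over the $n/k$ groups, the normalization $k \cdot (n/k)\inv k^{-2} = n\inv$ gives $k\vg(s) = \viid(s) + c(s)$.

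\textbf{Step 2: the bounds.} Since $\vmatch \ge 0$ and $\vg \ge 0$, the two identities give $c(s) \le (k-1)\viid(s)$ and $c(s) \ge -\viid(s)$.

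The bound $k\vg \le k\viid$ uses Cauchy--Schwarz (equivalently, convexity of the variance): $\varf(\sgbar) \le k\inv\sum_i v_{ig}$. Averaging over groups gives $\vg \le \viid$, and hence $k\vg(s) \le k\viid(s)$.

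The bound $\vmatch \le 2\viid$ follows from $c(s) \ge -\viid(s)$ inserted into the first identity: $\vmatch \le \viid + (k-1)\inv\viid \le 2\viid$, using $k \ge 2$. Alternatively, it follows directly from $\varf(a-b) \le 2\varf(a) + 2\varf(b)$.

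\textbf{Step 3: match quality.} By definition $\matchcoeffk(s) = 1 - \vmatch(s)/\viid(s)$. Substituting the first identity gives $\matchcoeffk(s) = (k-1)\inv c(s)/\viid(s)$. Combining this with the bounds on $c(s)$ from Step 2 gives $\matchcoeffk(s) \in [-(k-1)\inv, 1]$. When $\viid(s) = 0$, the convention $\matchcoeffk(s) = 1$ covers the degenerate case.

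\textbf{Step 4: the final representation.} For the centered form, use the standard identity
\[
\sum_{i \neq j} \varf(\sig - \sjg) = 2k \sum_i \varf(\sig - \sgbar).
\]
This is the pairwise-difference form of the within-group sum of squares, applied in $\elltwo(F)$ after centering. Dividing by $2n(k-1)$ and writing $2k/(2n) = (n/k)\inv$ gives the stated formula.

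None of the steps is a real obstacle. The only care needed is in the bookkeeping of normalizations: $n\inv$ versus $(n/k)\inv$, and sums over ordered pairs versus unordered pairs.
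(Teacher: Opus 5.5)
Your proof is correct and follows essentially the same route as the paper: you expand $\varf(\sig - \sjg)$ and $\varf(\sgbar)$ to get the two identities, read off $-\viid(s) \le c(s) \le (k-1)\viid(s)$ from nonnegativity of $\vg$ and $\vmatch$, rearrange to get $\matchcoeffk(s)$ and its range, and use the pairwise-difference identity for the centered form (the paper verifies that identity by direct expansion rather than citing it). The only difference is ordering in Step 2: the paper first bounds $|c(s)| \le (k-1)\viid(s)$ via Cauchy--Schwarz and Young, whereas you get $\vg \le \viid$ by convexity of the variance and $\vmatch \le 2\viid$ from $c(s) \ge -\viid(s)$; both are valid.
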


\begin{lem}[Orthogonal Decomposition]\label{lem:variance-components-orthogonal}
Let $H$ be a closed linear subspace of $\ltwonot$ and let $\sih = P_H \si$ be the orthogonal projection of $\si$ onto $H$ in $\elltwo(F)$.
Then (1) $\viid(s) = \viid(\sh) + \viid(s - \sh)$, (2) $c(s) = c(\sh) + c(s - \sh)$, and (3) $\vmatch(s) = \vmatch(\sh) + \vmatch(s - \sh)$.
Finally, for $w_H = \viid(\sh) / \viid(s)$, we have
\begin{equation*}
\matchcoeffk(s) = w_H \cdot \matchcoeffk(\sh) + (1-w_H) \cdot \matchcoeffk(s - \sh).
\end{equation*}
\end{lem}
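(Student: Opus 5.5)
The statement is Lemma~\ref{lem:variance-components-orthogonal}. The plan is to reduce all three identities to bilinearity of the covariance plus one orthogonality fact. Write $\si = \sih + (\si - \sih)$, where $\sih = P_H \si \in H$. Since $H \sub \ltwonot$, the projection $P_H$ only sees the de-meaned part $\ri = \si - \ef[\si]$, so $\sih = P_H \ri$. The residual $\si - \sih = \ef[\si] + (\ri - P_H\ri)$ is a constant plus an element of $H^\perp$.

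The key observation is the following. For any units $i,j$ (same or different), $\covf(\sih, \sjg - \sjg^H) = \lf P_H \ri, \rjg - P_H \rjg \rf = 0$. The constant drops out of the covariance, and $P_H\ri \in H$ is orthogonal to $H^\perp$. This uses only that $H$ is closed, so the projection theorem applies.

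With this in hand, the identities follow by bilinearity.
\begin{enumerate}[label=(\arabic*), itemsep=0.1pt]
\item Taking $i = j$ gives $\varf(\si) = \varf(\sih) + \varf(\si - \sih)$. Averaging over $i$ gives $\viid(s) = \viid(\sh) + \viid(s - \sh)$.
\item For $i \neq j$ in group $g$, expand $\covf(\sig, \sjg)$ into four terms. The two cross terms vanish by the key observation, leaving $\covf(\sig^H,\sjg^H) + \covf(\sig - \sig^H, \sjg - \sjg^H)$. Summing over pairs and groups gives $c(s) = c(\sh) + c(s - \sh)$.
\item Apply Lemma~\ref{lem:var-components}, $\vmatch(s) = \viid(s) - (k-1)\inv c(s)$. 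This expresses $\vmatch$ linearly in $(\viid, c)$, so additivity follows from (1) and (2). Alternatively, apply (1)'s argument directly to the differences $\sig - \sjg$, whose projection is $\sig^H - \sjg^H$ by linearity of $P_H$.
\end{enumerate}

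For the match quality decomposition, start from $\matchcoeffk(s) = 1 - \vmatch(s)/\viid(s)$. By (3),
\[
\matchcoeffk(s) = 1 - \frac{\vmatch(\sh)}{\viid(s)} - \frac{\vmatch(s-\sh)}{\viid(s)}.
\]
Assume $\viid(\sh) > 0$ and $\viid(s - \sh) > 0$. Then write $\vmatch(\sh)/\viid(s) = w_H\,(1 - \matchcoeffk(\sh))$ and $\vmatch(s - \sh)/\viid(s) = (1-w_H)(1 - \matchcoeffk(s-\sh))$, using $1 - w_H = \viid(s - \sh)/\viid(s)$, which follows from (1). Substituting and collecting terms gives the convex combination.

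The main obstacle is the degenerate cases, which must be checked against the convention $\matchcoeffk \equiv 1$ when $\viid = 0$. If $\viid(\sh) = 0$, then every $\sig^H$ is zero in $\ltwonot$, so $\vmatch(\sh) = 0$ and $w_H = 0$. The term $w_H\,\matchcoeffk(\sh)$ vanishes and the formula still holds. The case $\viid(s - \sh) = 0$ is symmetric. The case $\viid(s) = 0$ makes $w_H$ undefined, so I would state the result for $\viid(s) > 0$, or adopt the reading $0/0 = 0$.
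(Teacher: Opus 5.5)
Your proposal is correct and follows the paper's proof essentially step for step. Orthogonality of the mean-zero projection $\sih \in H$ to the residual $\si - \sih$ kills the cross terms in (1) and (2), (3) follows from the identity $\vmatch(s) = \viid(s) - (k-1)\inv c(s)$ of Lemma~\ref{lem:var-components}, and the match-quality identity comes from dividing (3) by $\viid(s)$, with the degenerate cases handled separately; your caveat that $w_H$ is undefined when $\viid(s) = 0$ is a fair point the paper leaves implicit.
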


%% ======================= Supplementary Online Appendix =======================
%% Removed from the submission build to meet the page limit; this material is
%% posted as an online-only supplement. It is self-contained (restarts page
%% numbering and the theorem/equation counters). To compile the full paper WITH
%% the supplement, simply uncomment the \input line below.
\clearpage

\pagenumbering{arabic}\renewcommand{\thepage}{\arabic{page}}

\begin{center}
{\Large Supplementary Online Appendix to ``Coupling Designs for Randomized Experiments with Complex Treatments''}
\vskip 24pt
{\large Max Cytrynbaum and Fredrik S{\"a}vje}
\end{center}

This supplementary appendix is not intended for publication.

\setcounter{thm}{0}
\setcounter{equation}{0}

\section{Supplementary Results and Proofs}

\subsection{Details of the Savings-Monitor Application} \label{appendix:matching-detail}

This appendix collects details of the application of Section~\ref{subsection:savings-monitors-application}: its data-generating process and its two matching procedures.

\emph{Data-generating process.}
Each saver has two baseline covariates $X_i \sim N(0, I_2)$, and its three response coefficients are generated from them.
Writing $r$ for the regime's variance share, the intercept, centrality, and curvature coefficients are $\beta_{i0} = 0.2\,\zeta_{i0}$, $\beta_{i1} = 0.5 + 9\,\zeta_{i1}$, and $\beta_{i2} = 18\,\zeta_{i2}$, where $\zeta_{ij} = \sqrt{r}\, s_{ij} + \sqrt{1-r}\, e_{ij}$ mixes a unit-variance signal $s_{ij}$ (the covariate $X_{i1}$ for the centrality coefficient, $X_{i2}$ for the curvature coefficient, and their average for the intercept) with independent standard-normal noise $e_{ij}$.
By construction the covariates account for exactly a fraction $r$ of each coefficient's variance.
Equivalently, each covariate has correlation $\rho = \sqrt{r}$ with the coefficient it drives, so $\rho \approx 0.45$ in the weakly predictive regime ($r = 0.20$) and $\rho \approx 0.77$ in the highly predictive one ($r = 0.60$).
The treatment basis $t(d) = \big(1,\ 2(d_1 - \tfrac12),\ 4[(d_1 - \tfrac12)^2 + (d_2 - \tfrac12)^2]\big)$ collects an intercept, a centered linear centrality term, and a radial curvature term, the latter two scaled so they vary over comparable ranges on $[0,1]^2$.
The small intercept dispersion $0.2$ sits well below the sensitivity scale $9$, keeping savers similar in overall savings level, which no design can help estimate, so the informative object is the dose-response profile.
The curvature coefficient is twice the sensitivity scale, so the surface is more sharply curved and dispersing monitors over the feature space is more valuable.
The features of the drawn monitors are jittered by independent $N(0, 0.01^2)$ noise to break ties between households with identical network statistics.
Finally, the target marginal $F$ places a $10 \times 10$ grid over $[0,1]^2$ and splits equal mass among the monitors in each occupied cell, and the transport weights are fit by entropic optimal transport, annealing the regularization down to $10^{-3}$, at which the hard Laguerre map reproduces $F$ to within a few percent in total variation.

\emph{Matching step.}
The two procedures each produce a partition of savers into tuples of size $k$ and then apply the identical assignment step, so they differ only in which savers share a tuple.
Let $X_i \in \mr^2$ collect a saver's baseline covariates, standardized coordinate-wise.
Following the matched-tuples construction of \citet{cytrynbaum2023}, we sort savers along a boustrophedon (serpentine) space-filling curve through the covariate space and cut the sorted list into consecutive blocks of $k$, so that each tuple collects $k$ savers that are close in covariate space.
We omit the algorithm's optional balanced-$k$-means polishing refinement.
Under \emph{within-village} matching, this sort-and-block step is applied separately to the $64$ savers of each village, producing $64/k$ tuples per village whose members all reside in the same village.
Under \emph{pooled} matching, the step is applied once to all $n = 4800$ savers, producing $n/k$ tuples that may combine savers from different villages.
Pooled matching can only improve match quality, since the pool of candidates for each tuple is strictly larger, though its tuples no longer respect village boundaries.

\emph{Assignment step.}
Fix a matched tuple $(i_1, \dots, i_k)$.
The design draws one realization $(U_1, \dots, U_k)$ of the chosen uniform coupling on $[0,1]^2$ (Latin hypercube, shifted lattice, or scrambled digital net), exchangeably assigned to the tuple's members, and member $i_j$ receives the monitor $T_{v(i_j)}(U_j)$, where $v(i)$ denotes saver $i$'s village and $T_v$ is the semi-discrete optimal-transport (Brenier) map from $\unif[0,1]^2$ to village $v$'s catalog with target marginal $F$ (Section~\ref{subsection:transport}).
Draws are independent across tuples.
Because every village's map pushes the uniform distribution to (approximately) its catalog's marginal $F$, each saver's monitor has the same marginal distribution under both procedures, and the two procedures differ only in the joint distribution of assignments within tuples: within-village tuples disperse their monitors over a single village's catalog, while pooled tuples disperse the members' uniforms jointly and each member's uniform is realized in the member's own village.
Stratified assignment and iid randomization are as described in the main text.
Under iid each saver receives an independent uniform, which corresponds to $k = 1$.

\emph{Inference.}
The collapsed-strata variance estimator of Proposition~\ref{prop:ci} is computed on the matched tuples of the design in use (within-village or pooled), pairing adjacent tuples in the space-filling order, with the critical value from a $t$ distribution whose degrees of freedom equal half the total number of tuples, a finite-sample refinement of the proposition's normal quantile.

\subsection{Proofs for Section \ref*{section:overview}} \label{section:proofs-1}

\begin{proof}[Proof of Equation \eqref{eqn:relative-efficiency-toy}]
Under perfect matching $\yi(\cdot) = \yj(\cdot) = y(\cdot)$, the estimator is $\est = (1/2)(y(\Di) + y(\Dj))$ with $\thetatrue = \ef[y(D)]$.
Since $G_i = G_j = F$, both $y(\Di)$ and $y(\Dj)$ have variance $\varf(y(D))$, so $\var_G(\est)$ is
\begin{align*}
&= (1/4)\var_G(y(\Di) + y(\Dj)) = (1/4)[2\varf(y(D)) + 2\covg(y(\Di), y(\Dj))] \\
&= (1/2)\varf(y(D))[1 + \corrg(y(\Di), y(\Dj))].
\end{align*}
Under iid randomization, $\cov(y(\Di), y(\Dj)) = 0$, so $\var_{\giid}(\est) = (1/2)\varf(y(D))$.
Dividing through gives the result.
\end{proof}

\begin{proof}[Proof of Matched Pairs Equivalence]
Let $F = \bern(1/2)$, so $F\inv(u) = \one(u > 1/2)$.
Under the antithetic variates coupling, $U \sim \unif[0,1]$ and $\Di^* = F\inv(U)$, $\Dj^* = F\inv(1-U)$.
If $U < 1/2$, then $\Di^* = 0$ and $1-U > 1/2$ so $\Dj^* = 1$.
If $U > 1/2$, then $\Di^* = 1$ and $1-U < 1/2$ so $\Dj^* = 0$.
Since $U = 1/2$ occurs with probability zero, almost surely $\Di^* = 1 - \Dj^*$, and $(\Di^*, \Dj^*)$ is supported on $\{(0,1), (1,0)\}$ each with probability $1/2$.
This is exactly the matched pairs distribution.
\end{proof}

\subsection{Proofs for Section \ref*{section:coupling-designs}}\label{section:proofs-coupling-designs}

Throughout this subsection, we maintain the design setting of Section~\ref{section:coupling-designs}: treatments are assigned by a coupling design with $G \in \transportsym(F)$, drawn independently across the matched groups.

\begin{proof}[Proof of Lemma \ref{lem:generic-rootn-consistency}]
Since treatments are independent across groups and $\est = n\inv \sum_g \sum_{i \in [k]} \sig(\dig)$,
\[
n \var_G(\est) = \viid(s) + n\inv \sum_g \sum_{i \neq j \in [k]} \covg(\sig(\dig), \sjg(\djg)).
\]
By the fixed marginals $G_i = F$, each $\sig(\dig)$ has variance $\varf(\sig)$, so Cauchy-Schwarz and Young's inequality give $|\covg(\sig(\dig), \sjg(\djg))| \le \varf(\sig)^{1/2} \varf(\sjg)^{1/2} \le \tfrac{1}{2}(\varf(\sig) + \varf(\sjg))$.
Summing over the pairs, $n\var_G(\est) \le \viid(s) + (k-1)\viid(s) = k \viid(s)$ for any $G \in \transportsym(F)$.
Since $k = O(1)$ and $\viid(s) = O(1)$, we have $\var_G(\en[\si(\Di)]) = O(n\inv)$.
By Chebyshev's inequality, $\en[\si(\Di)] - \thetan = \Op(\negrootn)$.
\end{proof}

\begin{proof}[Proof of Identification in Example \ref{ex:blp}]
First, consider $\thetan$.
We have
\begin{align*}
\thetan &= \en\ef[\si(D)] = \ef[\ynbar(D) H(D)] = \varf(D)\inv \ef[\ynbar(D)(D - \ef[D])] \\
&= \varf(D)\inv \covf(\ynbar(D), D).
\end{align*}
Then by concentrating out, $\thetablp = \argmin_{\theta \in \mr^m} \min_{\alpha} \ef[(\ynbar(D) - \alpha - \theta'D)^2] = \argmin_{\theta \in \mr^m} \varf(\ynbar(D) - \theta'D)$.
The FOC gives $\varf(D)\thetablp = \covf(\ynbar(D), D)$, so $\thetablp = \varf(D)\inv \covf(\ynbar(D), D) = \thetan$.
\end{proof}

\begin{proof}[Proof of Proposition \ref{prop:ols-linearization}]

By Frisch-Waugh, the OLS coefficient $\wh \beta = \varn(\Di)\inv \covn(\Di, \yi)$.
Define $y_n = \en \ef[\yi(D)]$.
We have
\begin{align*}
\covn(\Di, \yi) &= \en[(\Di - \en[\Di])\yi] = \en[(\Di - \en[\Di])(\yi - y_n)] \\
&= \en[(\Di - \ef[D] + \ef[D] - \en[\Di])(\yi - y_n)] \\
&= \en[(\Di - \ef[D])(\yi - y_n)] + (\ef[D] - \en[\Di])(\en[\yi] - y_n)
\end{align*}
Write this as $\covn(\Di, \yi) = A_n + B_n$.
By Lemma \ref{lem:generic-rootn-consistency} and the conditions of Proposition \ref{prop:ols-linearization}, we have $\en[\Di] = \ef[D] + \Op(\negrootn)$ and $\en[\yi(\Di)] = y_n + \Op(\negrootn)$.
Then $B_n = \Op(n\inv)$.
\begin{align*}
A_n &= \en[(\Di - \ef[D])(\yi - y_n - \thetan'(\Di-\ef[D]))] \\
&+ \en[(\Di - \ef[D])(\Di-\ef[D])'] \thetan = A_n^1 + A_n^2
\end{align*}
Define the residual $\ei(d) = \yi(d) - y_n - \thetan'(d - \ef[D])$ and $\ri(d) = (d - \ef[D])\ei(d)$, so that $A_n^1 = \en[\ri(\Di)]$.
We claim $A_n^1 = \Op(\negrootn)$ and $A_n^2 = \varn(\Di) \thetan +  \Op(n\inv)$.
Note that for $A_n^1$,
\begin{align*}
\en \ef[(D-\ef[D])\ei(D)] &= \en \covf(D, \yi(D)) - \varf(D)\thetan = 0.
\end{align*}

To apply Lemma \ref{lem:generic-rootn-consistency} we show $\en \ef[|\ri(D)|_2^2]= O(1)$.
This is $\ef[|\ri(D)|_2^2] = \ef[|D-\ef[D]|_2^2 \ei(D)^2] \le (\ef[|D-\ef[D]|_2^4])^{1/2} (\ef[\ei(D)^4])^{1/2}$.
By the $c_r$ inequality, $\ef[\ei(D)^4] \lesssim \ef[\yi(D)^4] + y_n^4 + |\thetan|_2^4 \ef[|D-\ef[D]|_2^4]$,
so $\en \ef[\ri(D)^2] = O(1)$ under the conditions of Proposition \ref{prop:ols-linearization}.
Then $A_n^1 = \Op(\negrootn)$.
The proof for $A_n^2$ is similar.
Applying Lemma \ref{lem:generic-rootn-consistency} entrywise to the functions $d \mapsto (d - \ef[D])(d - \ef[D])'$, whose average second moments are bounded by the conditions of Proposition \ref{prop:ols-linearization}, and using $\en[\Di] = \ef[D] + \Op(\negrootn)$, we get $\varn(\Di) = \varf(D) + \Op(\negrootn)$.
Since $\varf(D) \succ 0$, the matrix $\varn(\Di)$ is invertible with probability approaching one, with $\varn(\Di)\inv = \varf(D)\inv + \Op(\negrootn)$.
Then overall
\[
\wh \beta = \varn(\Di)\inv[A_n^1 + A_n^2 + B_n] = \thetan + \varf(D)\inv A_n^1 + \Op(n\inv),
\]
using $A_n^1 = \Op(\negrootn)$ and $B_n = \Op(n\inv)$.
This proves the claim with influence function $\si(d) = \varf(D)\inv(d-\ef[D])\ei(d)$.
\end{proof}

We now consider the discrete choice setting of Example~\ref{ex:logit}, where outcomes are binary, $\yi(d) \in \{0, 1\}$.
Define the sample and population log-likelihood criteria
\begin{align*}
\objest(\beta) &= \en[\yi\log(L(\beta'\Di)) + (1-\yi)\log(1-L(\beta'\Di))] \\
\objn(\beta) &= \ef[\ynbar(D)\log(L(\beta'D)) + (1-\ynbar(D))\log(1-L(\beta'D))],
\end{align*}
so that the MLE $\wh \beta$ maximizes $\objest(\cdot)$ and the best logistic approximation $\betan$ is the maximizer of $\objn(\cdot)$.
As part of the setting, we take a maximizer $\betan$ to exist for each $n$, which rules out degenerate configurations such as perfect separation.

\begin{proof}[Proof of Proposition \ref{prop:logit-linearization}]

\emph{Step 1:} First, we show consistency.
We claim $\objest(\beta) = \objn(\beta) + \Op(\negrootn)$ for each fixed $\beta$.
To see this, fix $\beta$ and define $\ell_i(d) = \yi(d)\log L(\beta'd) + (1-\yi(d))\log(1-L(\beta'd))$, so that $\objest(\beta) = \en[\ell_i(\Di)]$ and $\objn(\beta) = \en \ef[\ell_i(D)]$.
By Lemma~\ref{lem:generic-rootn-consistency} it suffices to check $\viid(\ell) = O(1)$.
Note $|\log L(v)| \le |v| + \log 2$ and $|\log(1-L(v))| \le |v| + \log 2$.
Since $\yi(d) \in \{0, 1\}$, exactly one of the two terms in $\ell_i(d)$ is nonzero, so $|\ell_i(d)| \le |\beta'd| + \log 2$.
Then by Cauchy-Schwarz,
\begin{align*}
\ef[\ell_i(D)^2] \le \ef[(|\beta'D| + \log 2)^2] \lesssim 1 + |\beta|_2^2 \, \ef[|D|_2^2] = O(1)
\end{align*}
under the conditions of Proposition~\ref{prop:logit-linearization}, so $\viid(\ell) \le \en \ef[\ell_i(D)^2] = O(1)$.
This proves the claim.
Note that $\objest(\cdot)$ is concave.
Also observe that $|\nabla_\beta \objn(\beta)|_2 = |\ef[(\ynbar(D) - L(\beta'D))D]|_2 \le \ef[|D|_2] < \infty$ by the conditions of Proposition~\ref{prop:logit-linearization}.
Then $\objn(\beta)$ is uniformly Lipschitz in $n$, hence equicontinuous.
A suitably modified version of the convexity lemma in \cite{pollard1991asymptotics} then implies $\sup_{\beta \in K} |\objest(\beta) - \objn(\beta)| = \op(1)$ for any compact $K \sub \mr^m$.

We now prove $\wh \beta \convp \betan$.
Since $\sup_n |\betan|_2 < \infty$ by assumption, choose $R$ large enough that $B(\betan, 1) \sub \operatorname{int}(B_R)$ for all $n$, where $B_R = \{\beta : |\beta|_2 \le R\}$.
Let $\wh \beta_R = \argmax_{\beta \in B_R} \objest(\beta)$.
The ULLN on $B_R$ gives $\sup_{\beta \in B_R} |\objest(\beta) - \objn(\beta)| = \op(1)$.
Since $\wh \beta_R$ maximizes $\objest$ over $B_R$ and $\betan \in B_R$,
\begin{align*}
\objn(\betan) - \objn(\wh \beta_R) &\le [\objn(\betan) - \objest(\betan)] + [\objest(\wh \beta_R) - \objn(\wh \beta_R)] \\
&\le 2 \sup_{\beta \in B_R} |\objest(\beta) - \objn(\beta)| = \op(1).
\end{align*}
We claim for any $\eps > 0$, there exists $\eta > 0$
such that $\objn(\betan) - \objn(\beta) \ge \eta$ whenever $|\beta - \betan|_2 \ge \eps$ and $\beta \in B_R$.
To see this, $\objn(\betan) - \objn(\beta) = -\tfrac{1}{2}(\beta - \betan)'\nabla^2 \objn(\bar \beta)(\beta - \betan)$ for some $\bar \beta \in [\betan, \beta]$ by Taylor's theorem, using $\nabla \objn(\betan) = 0$.
Since $\bar \beta \in B_R$, we have $|\bar \beta' D| \le R|D|_2$, so $L'(\bar \beta'D) = L(\bar \beta'D)(1 - L(\bar \beta'D)) \ge L(R|D|_2)(1 - L(R|D|_2)) \equiv w_R(D) > 0$.
Then $-\nabla^2 \objn(\bar \beta) = \ef[L'(\bar \beta'D)DD'] \succeq \ef[w_R(D) DD'] \equiv M_R$.
Note that $M_R \succ 0$, since if $M_R v = 0$ for some $v \ne 0$, then $w_R(D)(v'D)^2 = 0$.
Since $w_R(D) > 0$, we have $v'D = 0$, contradicting $\ef[DD'] \succ 0$.
Also, $M_R$ does not depend on $n$.
Then $\objn(\betan) - \objn(\beta) \ge \tfrac{1}{2} \lambda_{\min}(M_R) |\beta - \betan|_2^2 \ge \tfrac{1}{2} \lambda_{\min}(M_R) \eps^2 \equiv \eta > 0$.
Then $P(|\wh \beta_R - \betan|_2 > \eps) \le P(\objn(\betan) - \objn(\wh \beta_R) > \eta) = o(1)$, so that $\wh \beta_R \convp \betan$.
Then $P(\wh \beta = \wh \beta_R) \ge P(|\wh \beta_R - \betan|_2 < 1) \to 1$, so $\wh \beta \convp \betan$.

\emph{Step 2:} Next we derive the influence function.
Define $\ri(d, \beta) = (\yi(d) - L(\beta'd))d$, so the MLE solves $\en[\ri(\Di, \wh \beta)] = 0$.
Observe that $\nabla_\beta \ri(d, \beta) = -L'(\beta'd) dd'$ where $L'(v) = L(v)(1-L(v))$.
Let $\beta_t \equiv \betan + t(\wh \beta - \betan)$ for $t \in [0, 1]$.
Applying the fundamental theorem of calculus to $t \mapsto \en[\ri(\Di, \beta_t)]$ and using $\en[\ri(\Di, \wh \beta)] = 0$ gives
\[
0 = \en[\ri(\Di, \betan)] - \wh J_n (\wh \beta - \betan), \quad \quad \wh J_n \equiv \int_0^1 \en[L'(\beta_t'\Di)\Di\Di'] \, dt.
\]
Since $|L''(v)| \le 1/4$, the mean value theorem gives $|L'(\beta_t'\Di) - L'(\betan'\Di)| \le \tfrac{1}{4} t \, |\wh \beta - \betan|_2 |\Di|_2$ for each $t \in [0, 1]$, so
\[
|\wh J_n - \en[L'(\betan'\Di)\Di\Di']|_{op} \le \tfrac{1}{4}|\wh \beta - \betan|_2 \cdot \en[|\Di|_2^3].
\]
By Jensen, $\en[|\Di|_2^3] \le (\en[|\Di|_2^4])^{3/4}$.
The latter is $\Op(1)$ by Markov's inequality, since $\eg\big[\en[|\Di|_2^4]\big] = \ef[|D|_2^4] < \infty$.
Working entry-wise with $z_i(d) \equiv L'(\betan'd)d_jd_l$, we have $\viid(z) \le \ef[D_j^2D_l^2] \le \ef[|D|_2^4] = O(1)$ since $L' \le 1$.
Since $k = O(1)$, this implies $\en[L'(\betan'\Di)\Di\Di'] = J_n + \Op(\negrootn)$ by Lemma \ref{lem:generic-rootn-consistency}.
Writing $\wh J_n = J_n + R_n$, the triangle inequality combines the last two bounds into
\begin{equation} \label{eqn:logit-jacobian-error}
|R_n|_{op} \le \tfrac{1}{4}|\wh \beta - \betan|_2 \cdot \en[|\Di|_2^3] + \Op(\negrootn) = |\wh \beta - \betan|_2 \cdot \Op(1) + \Op(\negrootn).
\end{equation}
In particular $R_n = \op(1)$, since $\wh \beta \convp \betan$ by Step~1.
Next we show $\en[\ri(\Di, \betan)] = \Op(\negrootn)$.
Since $\en \ef[\ri(D, \betan)] = 0$ by the first-order condition $\nabla \objn(\betan) = 0$, it suffices by Lemma~\ref{lem:generic-rootn-consistency} to check $\en \ef[\ri^j(D, \betan)^2] = O(1)$ for each component $\ri^j(d, \betan) = (\yi(d) - L(\betan'd))d_j$.
Since $\yi(d) \in \{0, 1\}$ implies $|\yi(d) - L(\betan'd)| \le 1$, we have $\ef[\ri^j(D, \betan)^2] \le \ef[D_j^2] = O(1)$.
Since $k = O(1)$, the claim follows.
Since $\sup_n |\betan|_2 < \infty$, there exists $R$ such that $|\betan'D| \le R|D|_2$ for all $n$, so $L'(\betan'D) \ge w_R(D) \equiv L(R|D|_2)(1-L(R|D|_2)) > 0$.
Then $J_n \succeq M_R \equiv \ef[w_R(D)DD'] \succ 0$ for all $n$, where $M_R \succ 0$ by $\ef[DD'] \succ 0$ as in Step~1.
In particular $|J_n\inv|_{op} \le \lambda_{\min}(M_R)\inv = O(1)$.
Since $\wh J_n = J_n + R_n$ is symmetric with $R_n = \op(1)$, Weyl's inequality gives $\lambda_{\min}(\wh J_n) \ge \lambda_{\min}(J_n) - |R_n|_{op} \ge \tfrac{1}{2}\lambda_{\min}(M_R)$ with probability approaching one, so $\wh J_n$ is invertible with $|\wh J_n\inv|_{op} \le 2\lambda_{\min}(M_R)\inv = \Op(1)$.
Then
\[
\wh \beta - \betan = \wh J_n\inv \en[\ri(\Di, \betan)] = \Op(1) \cdot \Op(\negrootn) = \Op(\negrootn).
\]
Feeding this rate back into Equation \eqref{eqn:logit-jacobian-error} gives $R_n = \Op(\negrootn)$.
Then by the resolvent identity $\wh J_n\inv = J_n\inv - J_n\inv R_n \wh J_n\inv$,
\[
|\wh J_n\inv - J_n\inv|_{op} \le |J_n\inv|_{op} \, |R_n|_{op} \, |\wh J_n\inv|_{op} = \Op(\negrootn).
\]
Therefore
\[
\wh \beta - \betan = J_n\inv \en[\ri(\Di, \betan)] + (\wh J_n\inv - J_n\inv) \en[\ri(\Di, \betan)] = \en[\si(\Di)] + \Op(n\inv),
\]
where $\si(d) = J_n\inv \ei(d) \cdot d$ for residual $\ei(d) = \yi(d) - L(\betan'd)$.
\emph{Step 3:} Finally, we show identification, i.e., that Equation \eqref{eqn:logit-kl-projection} holds.
Note that $\kl(p \| q) = p\log(p/q) + (1-p)\log((1-p)/(1-q))$, so we have $p \log q + (1-p)\log(1-q) = -\kl(p \| q) + h(p)$ where $h(p) = p\log p + (1-p)\log(1-p)$ does not depend on $q$.
Then
\begin{align*}
\betan &= \argmax_\beta \; \ef[\ynbar(D)\log L(\beta'D) + (1-\ynbar(D))\log(1-L(\beta'D))] \\
&= \argmin_\beta \; \ef[\kl(\ynbar(D) \| L(\beta'D))].
\end{align*}
This finishes the proof.
\end{proof}

\subsection{Proofs for Section \ref*{section:dispersion-match-quality}}

\begin{proof}[Proof of Proposition \ref{prop:dispersion-interpretation}]
Expanding the sample variance directly, we have
\begin{align*}
\vark(\phi(\Di)) &= k\inv \sum_{i \in [k]} \phi(\Di)^2 - (k(k-1))\inv \sum_{i \neq j \in [k]} \phi(\Di)\phi(\Dj).
\end{align*}
Since $G \in \transportsym(F)$, by fixed marginals and exchangeability,
\begin{align*}
\eg \vark(\phi(\Di)) &= \ef[\phi(D)^2] - \eg[\phi(\done)\phi(\dtwo)] \\
&= \ef[\phi(D)^2] - \ef[\phi(D)]^2 + \ef[\phi(D)]^2 - \eg[\phi(\done)\phi(\dtwo)] \\
&= \varf(\phi) - \covg(\phi(\done), \phi(\dtwo)).
\end{align*}
Then to finish the proof, observe
\begin{align*}
(k-1) \bigg (\frac{\eg \vark(\phi(\Di))}{\varf(\phi)} - 1 \bigg ) = -(k-1) \corrg(\phi(\done), \phi(\dtwo)) = \dispg(\phi).
\end{align*}
\end{proof}

\begin{proof}[Proof of Lemma \ref{lem:var-components}]
Consider the first identity.
We again expand the variance of the difference:
\begin{align*}
    \vmatch(s) &= (2n(k-1))\inv \sum_g \sum_{i \neq j \in [k]} \big( \varf(\sig) + \varf(\sjg) - 2\covf(\sig, \sjg) \big) \\
    &= (n(k-1))\inv \sum_g \big( (k-1)\sum_{i \in [k]} \varf(\sig) - \sum_{i \neq j \in [k]} \covf(\sig, \sjg) \big) \\
    &= n\inv \sum_g \sum_{i \in [k]} \varf(\sig) - (n(k-1))\inv \sum_g \sum_{i \neq j \in [k]} \covf(\sig, \sjg) \\
    &= \viid(s) - (n(k-1))\inv \sum_g \sum_{i \neq j \in [k]} \covf(\sig, \sjg).
\end{align*}
Consider the second identity.
We start from the variance of the group mean $\sgbar$:
\[
k^2 \varf(\sgbar) = \varf\sum_{i \in [k]} \sig = \sum_{i \in [k]} \varf(\sig) + \sum_{i \neq j \in [k]} \covf(\sig, \sjg).
\]
Summing over all groups $g$ and multiplying by $n\inv$ gives
\begin{align*}
    n\inv k^2 \sum_g \varf(\sgbar) &= n\inv \sum_g \sum_{i \in [k]} \varf(\sig) + n\inv \sum_g \sum_{i \neq j \in [k]} \covf(\sig, \sjg) \\
    &= \viid(s) + n\inv \sum_g \sum_{i \neq j \in [k]} \covf(\sig, \sjg).
\end{align*}
For the next inequality, note that
\begin{align*}
|c(s)| &= \bigg| n\inv \sum_g \sum_{i \neq j \in [k]} \covf(\sig, \sjg) \bigg| \le n\inv \sum_g \sum_{i \neq j \in [k]} |\covf(\sig, \sjg)| \\
&\le n\inv \sum_g \sum_{i \neq j \in [k]} (1/2)(\varf(\sig) + \varf(\sjg)) \\
&= n\inv \sum_g (k-1) \sum_{i \in [k]} \varf(\sig) = (k-1) \en \varf(\si) = (k-1) \viid(s).
\end{align*}
The second inequality is by Cauchy-Schwarz and Young's inequality.
Then $k \vg(s) = \viid(s) + c(s) \leq k \viid(s)$ and $\vmatch(s) = \viid(s) - (k-1)\inv c(s) \leq 2 \viid(s)$.
For the bounds on $c(s)$: the lower bound $c(s) \ge -\viid(s)$ follows from $k\vg(s) = \viid(s) + c(s) \ge 0$, and the upper bound $c(s) \le (k-1)\viid(s)$ from $\vmatch(s) = \viid(s) - (k-1)\inv c(s) \ge 0$.
The statement about match quality follows by rearranging the identity $\vmatch(s) = \viid(s) - (k-1)\inv c(s)$, noting $\matchcoeffk(s) = 1-\vmatch(s)/\viid(s)$.
Finally consider the last statement.
Expand
\begin{align*}
    \sum_{i \neq j \in [k]} \varf(\sig - \sjg) &= \sum_{i \neq j \in [k]} \big( \varf(\sig) + \varf(\sjg) - 2\covf(\sig, \sjg) \big) \\
    &= 2(k-1)\sum_{i \in [k]} \varf(\sig) - 2\sum_{i \neq j \in [k]} \covf(\sig, \sjg).
\end{align*}
Next, we use $ \sum_{i \in [k]} \varf(\sig) = \sum_{i \in [k]} \varf(\sig - \sgbar) + k \varf(\sgbar)$.
Also, from the definition of $\sgbar$, we have $k^2 \varf(\sgbar) = \sum_{i \in [k]} \varf(\sig) + \sum_{i \neq j \in [k]} \covf(\sig, \sjg)$.
Combining these gives
\begin{align*}
    \sum_{i \neq j \in [k]} \varf(\sig - \sjg) &= 2(k-1)\sum_{i \in [k]} \varf(\sig) - 2\big( k^2 \varf(\sgbar) - \sum_{i \in [k]} \varf(\sig) \big) \\
    &= 2k \sum_{i \in [k]} \varf(\sig) - 2k^2 \varf(\sgbar) \\
    &= 2k \big( \sum_{i \in [k]} \varf(\sig) - k \varf(\sgbar) \big) = 2k \sum_{i \in [k]} \varf(\sig - \sgbar).
\end{align*}
Summing over $g$ and multiplying by $(2n(k-1))\inv$ yields
\begin{align*}
    \vmatch(s) &= (2n(k-1))\inv \sum_g 2k \sum_{i \in [k]} \varf(\sig - \sgbar) \\
    &= (k-1)\inv (n/k)\inv \sum_g \sum_{i \in [k]} \varf(\sig - \sgbar).
\end{align*}
This proves the first identity.
\end{proof}

\subsubsection{Random Matching}

\begin{prop}[Random Matching]\label{prop:random-matching-baseline}
Suppose $\varf(\si) < \infty$ for $i \in [n]$.
Let $\match$ be a uniform random partition of $[n]$ into $n/k$ groups.
Define $\snbar(d) \equiv n\inv \sum_{i=1}^n \si(d)$.
\begin{equation}\label{eqn:random-matching-baseline}
E_{\match}[\matchcoeffk(s)] = \frac{n}{n-1}\cdot \frac{\varf(\snbar)}{\viid(s)} - \frac{1}{n-1}.
\end{equation}
\end{prop}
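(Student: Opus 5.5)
We use the representation from Lemma~\ref{lem:var-components}, $\matchcoeffk(s) = (k-1)\inv c(s)/\viid(s)$, where $c(s) = n\inv \sum_g \sum_{i\neq j \in [k]} \covf(\sig,\sjg)$. The denominator $\viid(s) = \en \varf(\si)$ does not depend on the matching $\match$, so by linearity
\[
E_\match[\matchcoeffk(s)] = \frac{E_\match[c(s)]}{(k-1)\viid(s)}.
\]
The problem therefore reduces to computing the expected within-group cross-covariance sum under a uniformly random partition.

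First, rewrite $c(s)$ as a sum over ordered pairs of distinct units, $c(s) = n\inv \sum_{i\neq j \in [n]} \one(i \sim j)\covf(\si,\sj)$, where $i\sim j$ means that $i$ and $j$ share a group. Under a uniform random partition into groups of size $k$, exchangeability gives $P(i\sim j) = (k-1)/(n-1)$ for every $i\neq j$: given the position of $i$, its $k-1$ groupmates are a uniform subset of the other $n-1$ units. Hence
\[
E_\match[c(s)] = \frac{k-1}{n(n-1)}\sum_{i\neq j}\covf(\si,\sj).
\]

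Next, evaluate the full off-diagonal sum through the population average $\snbar$. Since $n^2\varf(\snbar) = \sum_{i,j}\covf(\si,\sj) = n\viid(s) + \sum_{i\neq j}\covf(\si,\sj)$, we get
\[
\sum_{i\neq j}\covf(\si,\sj) = n^2\varf(\snbar) - n\viid(s).
\]
Substituting, the factor $k-1$ cancels, leaving
\[
E_\match[\matchcoeffk(s)] = \frac{n\varf(\snbar) - \viid(s)}{(n-1)\viid(s)} = \frac{n}{n-1}\cdot\frac{\varf(\snbar)}{\viid(s)} - \frac{1}{n-1},
\]
which is the claimed formula.

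The argument is almost entirely bookkeeping. The only step needing care is the pair probability $(k-1)/(n-1)$, which we justify by the symmetry of the uniform partition over units. We also need to handle the convention $\matchcoeffk \equiv 1$ when $\viid(s)=0$; we assume $\viid(s)>0$, as implicit in the displayed formula. Finiteness of all covariances follows from $\varf(\si)<\infty$ and Cauchy--Schwarz. The worst-case bound $E_\match[\matchcoeffk(s)]\ge -(n-1)\inv$ quoted in the main text then follows immediately from $\varf(\snbar)\ge 0$.
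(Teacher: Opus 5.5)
Your proposal is correct and follows the paper's proof essentially step by step. Both arguments use the representation $\matchcoeffk(s) = (k-1)\inv c(s)/\viid(s)$ from Lemma~\ref{lem:var-components}, rewrite $c(s)$ as an ordered-pair sum with $P_{\match}(i \sim j) = (k-1)/(n-1)$, and use the identity $\sum_{i\neq j}\covf(\si,\sj) = n^2\varf(\snbar) - n\viid(s)$; your remarks that $\viid(s)$ does not depend on $\match$ and on the $\viid(s)=0$ convention are sound additions.
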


\begin{proof}
Define $c(s) \equiv n\inv \sum_g \sum_{i\neq j \in [k]} \covf(\sig, \sjg)$.
By Lemma~\ref{lem:var-components}, the match coefficient $\matchcoeffk(s) = (k-1)\inv c(s) / \viid(s)$.
Then we calculate
\begin{align*}
E_{\match}[c(s)] &= n\inv E_{\match} \sum_{i\neq j \in [n]} \one(g(i)=g(j)) \covf(\si, \sj) \\
 &= n\inv  \sum_{i\neq j \in [n]} P_{\match}(g(i)=g(j)) \covf(\si, \sj) = \frac{k-1}{n(n-1)}  \sum_{i\neq j \in [n]} \covf(\si, \sj).
\end{align*}
The third equality since $P_{\match}(g(i)=g(j)) = (k-1) / (n-1)$ by a counting argument.
Then $E_{\match}[\matchcoeffk(s)] = n\inv (n-1)\inv \sum_{i\neq j \in [n]} \covf(\si, \sj) / \viid(s)$.
\begin{align*}
\varf(\snbar) = n^{-2} \sum_{i,j} \covf(\si, \sj) = n^{-2} \sum_{i} \varf(\si) + n^{-2} \sum_{i \neq j} \covf(\si, \sj).
\end{align*}
Rearranging gives $\sum_{i \neq j} \covf(\si, \sj) = n^2 \varf(\snbar) - n \viid(s)$.
Then
\begin{align*}
E_{\match}[\matchcoeffk(s)] &= \frac{1}{n(n-1)\viid(s)} (n^2 \varf(\snbar) - n \viid(s)) \\
&= \frac{n}{n-1} \frac{\varf(\snbar)}{\viid(s)} - \frac{1}{n-1}.
\end{align*}
This completes the proof.
\end{proof}

\subsubsection{Covariate Power} \label{appendix:covariate-power}

To formalize the relationship between match quality and covariate predictive power, suppose just for illustration in the remainder of this subsection that $(X_i, \si(\cdot)) \sim P$ iid for some fixed measure $P$.
This guarantees that the units in our experiment are ``typical'' relative to a fixed relationship $P$ between $X_i$ and $\si(\cdot)$.
Let $\mu(d, X_i) \equiv \ep[\si(d) | X_i]$ and the predictive power of covariates by
\begin{equation}
R^2_{s|X} \equiv \frac{\ep \varf(\mu(D, X_i))}{\ep \varf(\si(D))}.
\end{equation}
This measures how well $X_i$ predicts heterogeneity in influence functions $\si(d)$ across treatment levels, the predictive power for ``design heterogeneity.''
Note how this differs from the typical superpopulation $R^2$ coefficient, say in a linear regression.
For example, if $\mu(d, X_i) = \mu(X_i)$, then $R^2_{s|X} = 0$, even if $\mu(X_i)$ is highly predictive of $\si(\cdot)$ over the distribution $P$.

\begin{prop}[Covariate Power]\label{prop:match-quality-decomposition}
Let $(X_i, \si(\cdot)) \sim P$ iid with $\ep[\varf(\si)] > 0$ and $\ep[\ef[\si(D)^4]] < \infty$.
Suppose the matching $\match$ is $\sigma(\Xn)$-measurable.
Then if $k = O(1)$ as $n \to \infty$
\begin{equation} \label{eqn:match-quality-decomposition}
\matchcoeffk(s) = R^2_{s|X} \cdot \matchcoeffk(\mu) + \op(1).
\end{equation}
\end{prop}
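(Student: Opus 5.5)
The plan is to write match quality through the covariance form of Lemma~\ref{lem:var-components}, $\matchcoeffk(s) = (k-1)\inv c(s)/\viid(s)$, and to handle numerator and denominator separately.

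\textbf{Step 1: the denominator.} By the iid law of large numbers and $\ep[\ef[\si(D)^4]] < \infty$, we have $\viid(s) = \en\varf(\si) \convp \ep\varf(\si)$. Likewise $\viid(\mu) \convp \ep\varf(\mu(\cdot, X))$.

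\textbf{Step 2: decompose the influence functions.} Write $\si(d) = \mu(d, X_i) + \eps_i(d)$ with $\ep[\eps_i(d)\mid X_i] = 0$. Define $\mu_i = \mu(\cdot, X_i)$. Since covariance is bilinear,
\[
c(s) = c(\mu) + n\inv \sum_g \sum_{i\neq j\in[k]} \bigl[\covf(\mu_{ig}, \eps_{jg}) + \covf(\eps_{ig}, \mu_{jg}) + \covf(\eps_{ig}, \eps_{jg})\bigr].
\]

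\textbf{Step 3: the cross terms vanish.} Condition on $\Xn$. The matching is $\sigma(\Xn)$-measurable and the $\eps_i$ are independent across $i$ given $\Xn$, each with conditional mean zero. Hence every cross term has conditional mean zero: for fixed $i \neq j$, $\ep[\covf(\eps_i,\eps_j)\mid \Xn] = 0$ and $\ep[\covf(\mu_i,\eps_j)\mid\Xn]=0$, by Fubini over $D \sim F$.

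For the second moment, note that terms belonging to distinct unordered pairs are conditionally uncorrelated. The one exception is a $\mu$--$\eps$ term sharing the same $\eps_j$; there are at most $k$ such terms per unit, so they are controlled because $k = O(1)$. Each term has bounded second moment by Cauchy--Schwarz and the fourth-moment condition. The sum contains $O(nk)$ terms and is scaled by $n\inv$, so its conditional variance is $O(k^2/n) \to 0$. Chebyshev then gives $c(s) = c(\mu) + \op(1)$.

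\textbf{Step 4: assemble.} Combining the pieces,
\[
\matchcoeffk(s) = \frac{(k-1)\inv c(\mu)}{\viid(\mu)}\cdot\frac{\viid(\mu)}{\viid(s)} + \op(1) = \matchcoeffk(\mu)\, R^2_{s|X} + \op(1).
\]
This uses $\viid(\mu)/\viid(s) \convp R^2_{s|X}$ from Step 1 and $\ep\varf(\si) > 0$. The factor $\matchcoeffk(\mu)$ is bounded in $[-1/(k-1), 1]$ by Lemma~\ref{lem:var-components}, so multiplying the $\op(1)$ errors by it keeps them $\op(1)$. If $\viid(\mu) = 0$, then $R^2_{s|X} = 0$ and $c(\mu) = 0$, so the identity holds trivially.

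\textbf{Main obstacle.} The hardest part is Step 3. It needs conditional independence of the $\eps_i$ given all of $\Xn$, which holds because $(X_i, \si)$ are iid. It also needs a careful count of which terms are conditionally correlated. The $\covf(\eps,\eps)$ terms require second moments of products, which is where the fourth-moment assumption enters.
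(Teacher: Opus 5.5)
Your proposal is correct and follows the paper's proof essentially step for step. The shared steps are the covariance form $\matchcoeffk(s) = (k-1)^{-1}c(s)/\viid(s)$, the split $\si = \mu_i + \eps_i$, conditional mean zero of the cross and residual terms given $\Xn$ (using $\sigma(\Xn)$-measurability of the matching), conditional Chebyshev, the WLLN for $\viid(s)$ and $\viid(\mu)$, and boundedness of $\matchcoeffk(\mu)$; your second-moment count is actually more explicit than the paper's ``similarly one can show.''

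One tightening is needed. Because the groups are formed from the data, a single term need not have uniformly bounded second moment, so bound the total instead, for instance $\ep[\covf(\eps_i,\eps_j)^2 \mid \Xn] \le b(X_i)b(X_j) \le \tfrac12\bigl(b(X_i)^2 + b(X_j)^2\bigr)$ with $b(X_i) = \ep[\varf(\eps_i) \mid X_i]$; summed over the partition this is $(k-1)\sum_{r} b(X_r)^2 = \Op(n)$ under the fourth-moment condition, and the same argument handles the $\mu$--$\eps$ terms.
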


The term $\matchcoeffk(\mu)$ measures within-group match quality on the covariate features $\mu_i(\cdot) = \mu(\cdot, X_i)$, on average over $D \sim F$.
This proposition formalizes the intuition above, showing that, under slightly stronger assumptions, $\matchcoeffk(s)$ is increasing in the product of both $R^2_{s|X}$ and $\matchcoeffk(\mu)$.

\textbf{Practical Recommendations.}
Proposition \ref{prop:match-quality-decomposition} suggests that we want to measure baseline covariates $X_i$ with large predictive power $R^2_{s|X}$.
In particular, we want to match tightly on the features of those covariates $\mu(\cdot, X_i)$ most predictive of $\si(\cdot)$.
Of course, both of these objects are unknown at design time before the experimenter has access to data.
One could try to estimate them with a pilot, but the literature has raised concerns about the finite sample properties of pilot-based designs that try to estimate related quantities \citep{cai2024, cytrynbaum2023}.

\begin{proof}[Proof of Proposition \ref{prop:match-quality-decomposition}]
We claim that (1) $c(s) = c(\mu) + \opone$ for covariance $c(s) = n\inv \sum_g \sum_{i\neq j \in [k]} \covf(\sig, \sjg)$ and (2) $\viid(\mu) = \ep[\varf(\mui)] + \opone$ and $\viid(s) = \ep[\varf(\si)] + \opone$.
Under these assumptions, by Lemma \ref{lem:var-components}, $\matchcoeffk(\mu) = (k-1)\inv c(\mu) / \viid(\mu)$, so
\begin{align*}
\matchcoeffk(s) &= \frac{(k-1)\inv c(s)}{\viid(s)} = \frac{(k-1)\inv [c(\mu) + \opone]}{\viid(s)} \\
&= \frac{(k-1)\inv c(\mu)}{\viid(s)} + \opone = \frac{\viid(\mu)}{\viid(s)} \cdot \matchcoeffk(\mu) + \opone.
\end{align*}
The third equality since $\viid(s) \convp \ep[\varf(\si)] > 0$ by assumption.
Since $\matchcoeffk(\mu) \in [-(k-1)\inv, 1]$ is bounded and $\viid(\mu) / \viid(s) \convp R^2_{s|X}$ by the claims above and continuous mapping theorem, we conclude $\matchcoeffk(s) = R^2_{s|X} \cdot \matchcoeffk(\mu) + \opone$.

Consider claim (1).
We expand $\covf(\sig, \sjg) = \covf(\muig, \mujg) + \covf(\muig, e_{jg}) + \covf(e_{ig}, \mujg) + \covf(e_{ig}, e_{jg})$ for $i \neq j$.
We can decompose $c(s) = c(\mu) + A_n + B_n$, where $A_n$ collects the cross covariance terms $\covf(\muig, e_{jg})$ and $\covf(e_{ig}, \mujg)$, and $B_n$ collects the $\covf(e_{ig}, e_{jg})$ terms.
We show $A_n, B_n = \opone$.
To see this, note
\[
E[A_n | \Xn] = n\inv \sum_g \sum_{i \neq j \in [k]} \big[E[\covf(\muig, e_{jg})|\Xn] + E[\covf(e_{ig}, \mujg) | \Xn] \big].
\]
We have $\covf(\muig, e_{jg}) = \ef[\muig(D) e_{jg}(D)] - \ef[\muig(D)]\ef[e_{jg}(D)]$.
Recall matching $\match: [n] \to [k] \times [n/k]$ with $\match \in \sigma(\Xn)$.
Then $\ef[\muig(D) e_{jg}(D)] = \sum_{r,l} \one(\match(r)=ig,\, \match(l)=jg)\, \ef[\mu(X_r, D)\, e_l(D)]$.
Note $\one(\match(r)=ig,\, \match(l)=jg)$ and $\mu(X_r, d)$ are $\sigma(\Xn)$-measurable for each $d$.
Moreover, we claim $\ep[\ef[\mu(X_r, D)\, e_l(D)] | \Xn] = 0$ for each $r \neq l$.
It suffices to show $\ep[\ef[\mu(X_r, D)\, e_l(D)] \cdot h(\Xn)] = 0$ for any bounded measurable $h(\cdot)$.
To see this, note
\begin{align*}
&\ep\big[\ef[|\mu(X_r, D)\, e_l(D) \, h(\Xn)|]\big] \lesssim \ep\big[\ef[|\mu(X_r, D)\, e_l(D)|]\big] \\
&= \ef\big[\ep[|\mu(X_r, D)\, e_l(D)|]\big] \leq \ef\big[\ep[\mu(X_r, D)^2]^{1/2} \, \ep[e_l(D)^2]^{1/2}\big] \\
&\leq \ef\big[\ep[\mu(X_r, D)^2] + \ep[e_l(D)^2]\big] \leq \ef\big[\ep[\si(D)^2]\big] = \ep\big[\ef[\si(D)^2]\big] < \infty.
\end{align*}
The first inequality since $|h| \leq M$ for some $M < \infty$.
The first equality is Tonelli's theorem.
The second and third inequalities are Cauchy-Schwarz and Young's.
The fourth since $\ep[\mu(X_r, d)^2] = \ep[(\ep[\si(d) | X_r])^2] \leq \ep[\si(d)^2]$ by Jensen's inequality, and $\ep[e_l(d)^2] \leq \ep[\si(d)^2]$.
The last equality is Tonelli's theorem, and finiteness holds by assumption.
Then by Fubini's theorem,
\begin{align*}
\ep\big[\ef[\mu(X_r, D)\, e_l(D)] \cdot h(\Xn)\big] &= \ef\big[\ep[\mu(X_r, D)\, e_l(D) \cdot h(\Xn)]\big].
\end{align*}
For each fixed $d$, since $\mu(X_r, d)$ and $h(\Xn)$ are $\sigma(\Xn)$-measurable, the tower property gives $\ep[\mu(X_r, d)\, e_l(d) \cdot h(\Xn)] = \ep\big[\mu(X_r, d) \, h(\Xn) \, \ep[e_l(d) | \Xn]\big]$.
Since $(X_i, \si(\cdot)) \sim P$ iid, the residual $e_l(\cdot)$ satisfies $\ep[e_l(d) | \Xn] = \ep[e_l(d) | X_l] = 0$ for all $d$.
Hence the integrand is zero for all $d$, so $\ep[\ef[\mu(X_r, D)\, e_l(D)] \cdot h(\Xn)] = 0$.
Since $h(\cdot)$ was arbitrary, we conclude $\ep[\ef[\mu(X_r, D)\, e_l(D)] | \Xn] = 0$.
Then
\begin{align*}
&\ep[\ef[\muig(D) e_{jg}(D)] | \Xn] \\
&= \sum_{r,l} \one(\match(r)=ig,\, \match(l)=jg)\, \ep[\ef[\mu(X_r, D)\, e_l(D)] | \Xn] = 0.
\end{align*}
By the same argument, $\ep[\ef[\muig(D)] \ef[e_{jg}(D)] | \Xn] = 0$, since $\ef[e_{jg}(D)] = \sum_l \one(\match(l) = jg)\, \ef[e_l(D)]$ and $\ep[\ef[e_l(D)] | \Xn] = 0$.
Then we have shown that $\ep[\covf(\muig, e_{jg}) | \Xn] = 0$ for each $i \neq j \in [k]$.
By symmetry, $\ep[A_n | \Xn] = 0$.
Similarly, one can show $\varp(A_n | \Xn) = \Op(n\inv)$.
Then by conditional Chebyshev, $A_n = \Op(\negrootn)$.
The proof that $B_n = \Op(\negrootn)$ is similar.

Next, we show claim (2).
Consider $\viid(\mu) = \en[\varf(\mui)]$.
Note $\ep[\varf(\mui)] \leq \ep[\ef[\mui(D)^2]] = \ef[\ep[\mui(D)^2]] \leq \ef[\ep[\si(D)^2]] = \ep[\ef[\si(D)^2]] < \infty$.
The equalities are by Tonelli's theorem to exchange $\ep$ and $\ef$.
The second inequality since $\ep[\mui(d)^2] = \ep[(\ep[\si(d) | X_i])^2] \leq \ep[\si(d)^2]$ by Jensen's inequality.
Then by WLLN, $\viid(\mu) = \ep[\varf(\mui)] + \opone$.
Similarly, $\viid(s) = \en[\varf(\si)] = \ep[\varf(\si)] + \opone$.
This proves the claim.
\end{proof}

\subsection{Proofs for Section \ref*{section:efficiency-analysis}}

\begin{proof}[Proof of Lemma \ref{lem:canonical-space}]
First, we show $R$ is well-defined.
To see this, $E_Q[(R\phi)(U)^2] = E_Q[\phi(v(U))^2] = \ef[\phi(D)^2] < \infty$ for any $\phi \in \elltwo(F)$.
Next, claim $R: L^2(F) \to L^2(Q)$ is a linear isometry.
Linearity is clear.
For the isometry property,
\begin{align*}
\lf \phi, \psi \rf &= \ef[\phi(D)\psi(D)] = E_Q[\phi(v(U))\psi(v(U))] \\
&= E_Q[(R\phi)(U)(R\psi)(U)] = \lag R\phi, R\psi \rag_Q.
\end{align*}
In particular, $|R|_{op} = 1$, so $R$ is bounded and the adjoint $R^*$ exists by Riesz representation.
Note $R$ maps $\ltwonot$ into $L^2_0(Q)$, since $E_Q[(R\phi)(U)] = \ef[\phi(D)] = 0$ for $\phi \in \ltwonot$.
Next, we show $\lag R^* T_H R \phi, \psi \rag_F = \lag U_G \phi, \psi \rag_F$ for $\phi, \psi \in \elltwo(F)$.
\begin{align*}
\lag U_G \phi, \psi \rag &= \ef[U_G\phi(D)\psi(D)] = \eg[\phi(\done) \psi(\dtwo)] = \eh[(R\phi)(U_1)(R\psi)(U_2)] \\
&= E_H[E[(R\phi)(U_1)|U_2](R\psi)(U_2)] = E_Q[T_H (R\phi)(U) (R\psi)(U)] \\
&= \lag T_H R \phi, R \psi \rag_Q = \lag R^* T_H R \phi, \psi \rag_F.
\end{align*}
Observe that if $\lag a - b, \psi \rag = 0$ for any $\psi$, then $a=b$ by taking $\psi = a-b$.
By this observation, we have shown $R^* T_H R \phi = U_G \phi$ for each $\phi \in \elltwo(F)$, so $R^* T_H R = U_G$, proving the first claim.

Next, impose injectivity of $v(\cdot)$.
Note that $D = v(U)$, so $F(D \in \image(v)) = 1$.
Define $S: \elltwo(Q) \to \elltwo(F)$ by $(S \phi)(d) = \phi(v\inv(d))$ for $d \in \image(v)$ and $(S \phi)(d) = 0$ for $d \not \in \image(v)$.
Note that $\image(v)$ is measurable, since the image of a Borel set under an injective Borel-measurable map is Borel by the Lusin--Suslin theorem.
Then $S\phi$ is measurable, since $v\inv$ is measurable on $\image(v)$ by assumption.
Moreover, $\ef[(S\phi)(D)^2] = \ef[\phi(v\inv(D))^2] = E_Q[\phi(U)^2] < \infty$ for any $\phi \in \elltwo(Q)$, showing $S$ is well-defined.
Then we have $RS \phi(u) = (S \phi)(v(u)) = \phi(v\inv (v(u))) = \phi(u)$.
Then $RS = I$ on $\elltwo(Q)$.
In particular, $R$ is surjective and $S$ is injective.
Moreover, $R^* R = I$ on $\elltwo(F)$, so $R$ is injective.
Then $R^*RS = R^*$, so $S = R^*$.
Then $SR = RS = I$, so $R^* = S = R\inv$.
Clearly $R\inv$ is also an isometry, hence continuous.

We will show $\ltwonot = \oplus_{m \ge 1} E_m$.
Let $\phi \in E_m$.
Then $T_H R \phi = \evalm R \phi$, so $R\inv T_H R \phi = \evalm R\inv R \phi = \evalm \phi$, so $E_m$ is an eigenspace with eigenvalue $\evalm$.
Note $R\inv$ is continuous, since $R\inv$ is a linear isometry, hence a bounded linear map.
Each $W_m$ is closed by assumption, so $E_m = R\inv(W_m)$ is closed by continuity of $R\inv$.
Let $\phi \in \ltwonot$.
Then by assumption $R \phi = \sum_{m \ge 1} P^W_m R \phi = \lim_{l \to \infty} \sum_{m \ge 1}^l P^W_m R \phi$.
By linearity and continuity of $R\inv$, $\phi = \sum_{m \ge 1} R\inv P^W_m R \phi$.
Since $P^W_m R \phi \in W_m$, then $R\inv P^W_m R \phi \in R\inv(W_m) = E_m$.
Let $\phi \in E_m$ and $\psi \in E_l$ for $l \not = m$.
Then $R \phi \in W_m$ and $R \psi \in W_l$, so by isometry $\lag \phi, \psi \rag_F = \lag R \phi, R \psi \rag_Q = 0$ since $W_m \perp W_l$ by assumption.
This shows $E_m \perp E_l$ for $l \not = m$.
Putting this all together, we have shown $\ltwonot = \oplus_{m \ge 1} E_m$ with the required properties.
\end{proof}

\begin{proof}[Proof of Lemma \ref{lem:variance-components-orthogonal}]
For (1), note that $\sih \in H$ and $\si - \sih \perp H$ by definition of projection.
Since $H \sub \ltwonot$, we have $\ef[\sih] = 0$.
Then $\varf(\si) = \varf(\si - \sih + \sih) = \varf(\si - \sih) + 2 \covf(\si - \sih, \sih) + \varf(\sih)$.
The middle term $\covf(\si - \sih, \sih) = \ef[(\si - \sih)\sih] - \ef[\si - \sih]\ef[\sih] = \lf \si - \sih, \sih \rf = 0$ by orthogonality, using $\ef[\sih] = 0$.
The conclusion follows since $\viid(s) = \en \varf(\si)$.
Next consider the second claim.
We expand the covariance term:
\begin{align*}
\covf(\sig, \sjg) &= \covf\big(\sig - \sig^H + \sig^H, \sjg - \sjg^H + \sjg^H \big) \\
&= \covf(\sig^H, \sjg^H) \\
&\quad + \covf(\sig - \sig^H, \sjg - \sjg^H) + \covf(\sig^H, \sjg - \sjg^H) + \covf(\sig - \sig^H, \sjg^H).
\end{align*}
The cross term is $\covf(\sig^H, \sjg - \sjg^H) = \lf \sig^H, \sjg - \sjg^H \rf = 0$ since $\sig^H \in H$ with $\ef[\sig^H] = 0$ and $\sjg - \sjg^H \perp H$, and similarly for the second term $\covf(\sig - \sig^H, \sjg^H)$.
The claim now follows since $c(s) = n\inv \sum_g \sum_{i\neq j} \covf(\sig, \sjg)$.
Finally, the third claim follows from the identity $\vmatch(s) = \viid(s) - (k-1)\inv c(s)$ in Lemma~\ref{lem:var-components} and the additivity of $\viid(\cdot)$ and $c(\cdot)$ shown above.
For the final claim, write $w_\perp = \viid(s - \sh)/\viid(s) = 1 - w_H$ by (1).
If $\viid(\sh) = 0$ or $\viid(s - \sh) = 0$, then $w_H = 0$ or $w_\perp = 0$ respectively, and the corresponding $\vmatch$ term also vanishes since $\vmatch \le 2\viid$ by Lemma~\ref{lem:var-components}, so the claim holds trivially.
Otherwise, dividing (3) by $\viid(s)$ and using (1), $\frac{\vmatch(s)}{\viid(s)} = w_H \frac{\vmatch(\sh)}{\viid(\sh)} + w_\perp \frac{\vmatch(s - \sh)}{\viid(s - \sh)} = w_H(1 - \matchcoeffk(\sh)) + w_\perp(1 - \matchcoeffk(s - \sh))$.
Then $\matchcoeffk(s) = 1 - \vmatch(s)/\viid(s) = 1 - w_H - w_\perp + w_H \matchcoeffk(\sh) + w_\perp \matchcoeffk(s - \sh) = w_H \matchcoeffk(\sh) + w_\perp \matchcoeffk(s - \sh)$, using $w_H + w_\perp = 1$.
\end{proof}

\subsection{Proofs for Section \ref*{section:coupling-analysis}}

\begin{proof}[Proof of Lemma \ref{lem:av-operator}]
Let $\phi \in \elltwo(F)$, then clearly $\phi = S_e \phi + S_o \phi + \ef[\phi(D)]$ with $S_e \phi \in \eeven$ and $S_o \phi \in \eodd$ and $\ef[\phi(D)] \in \linearspan(1)$.
If $\phi \in \eeven$ then $S_e \phi = \phi$.
If $\phi \in \eodd$ then $\ef[\phi(D)] = -\ef[\phi(1-D)]$, so $\ef[(S_e \phi)(D)] = 0$.
Then $S_e \phi = (1/2)(\phi(x) + \phi(1-x)) = (1/2)(-\phi(1-x) + \phi(1-x)) = 0$.
Then for any $\phi \in \elltwo(F)$ and $\psi \in \eeven$,
\begin{align*}
\lf \phi - S_e \phi, \psi \rf &= \lf S_o \phi + \ef[\phi(D)], \psi \rf = \ef[(S_o \phi)(D)\psi(D)] \\
&= -\ef[(S_o \phi)(1-D)\psi(1-D)] = - \lf S_o \phi, \psi \rf
\end{align*}
Then we must have $\lf S_o \phi, \psi \rf = 0$.
The 2nd equality above follows since $\psi \in \eeven \perp \linearspan(1)$.
The 3rd equality using $S_o \phi \in \eodd$ and $\psi \in \eeven$.
The final equality since $D \sim 1-D$.
This shows $\phi - S_e \phi \perp \eeven$, so $S_e \phi = \peven \phi$.
A similar argument shows $S_o \phi = \podd \phi$.
This finishes our proof of the projection operators.
Next we show the direct sum.
By the above work, if $\phi \in \ltwonot$, then $\phi = S_e \phi + S_o \phi$, so $\ltwonot = \eeven + \eodd$.
Orthogonality follows by the same argument in the display above, so $\ltwonot = \eeven \oplus \eodd$.
Finally, we characterize the operator $U_G = U$.
We have $(D_1, D_2) = (Z, 1-Z)$ for $Z \sim \unif[0, 1]$.
Then $U\phi(x) = E_G[\phi(\done) | \dtwo=x] = E[\phi(Z) | 1-Z=x] = \phi(1-x)$.
For any $\phi \in \eeven$, we have $U\phi(x) = \phi(1-x) = \phi(x)$, so $\lambda = 1$.
For any $\phi \in \eodd$, we have $U\phi(x) = \phi(1-x) = -\phi(x)$, so $\lambda = -1$.
This finishes the proof.
\end{proof}

\begin{proof}[Proof of Theorem \ref{thm:av-variance}]
By Lemma \ref{lem:av-operator}, $\ltwonot = E_{even} \oplus E_{odd}$, a direct sum of eigenspaces of $U_G$ with dispersions $\dispg(\phi) = -1$ and $\dispg(\phi) = 1$ respectively.
Then the claim in Equation \eqref{eqn:av-efficiency} follows directly from Theorem \ref{thm:eigenspace-decomposition}.
Moreover, by Corollary \ref{cor:efficiency-convex} and the variance identities in Lemma \ref{lem:var-components}, we have
\begin{align*}
n \var_G(\est) &= 2 \viid(s^e) - \vmatch(s^e) + \vmatch(s^o) = 2 \viid(s^e) - [\viid(s^e) - c(s^e)] + \vmatch(s^o) \\
&= \viid(s^e) + c(s^e) + \vmatch(s^o) = 2 \vg(s^e) + \vmatch(s^o).
\end{align*}
This finishes the proof.
\end{proof}

\end{document}